\newtheorem{Def}{Definition}[section]
\newtheorem{Prop}[Def]{Proposition}
\newtheorem{Lem}[Def]{Lemma}
\newtheorem{Thm}[Def]{Theorem}
\newtheorem{Rmk}[Def]{Remark}
\newtheorem{Cor}[Def]{Corollary}
\newcommand{\N}{\mathbb{N}}
\newcommand{\R}{\mathbb{R}}
\newcommand{\C}{\mathbb{C}}
\newcommand{\A}{\mathcal{A}}
\newcommand{\D}{\mathscr{D}}
\newcommand{\E}{\mathcal{E}}
\newcommand{\T}{\mathcal{T}}
\newcommand{\KK}{\mathcal{K}}
\newcommand{\QQ}{\mathcal{Q}}
\newcommand{\Riesz}{\mathscr{R}}
\newcommand{\LL}{\mathcal{L}}
\renewcommand{\S}{\mathcal{S}}
\newcommand{\e}{\varepsilon}
\newcommand{\f}{\varphi}
\newcommand{\sLL}{\mathcal{L}^\uparrow_+}
\renewcommand{\O}{\mathcal{O}}
\newcommand{\HR}{\mathscr{H}}
\renewcommand{\L}{\mathscr{L}}
\newcommand{\RP}{\mathring{\R}}
\newcommand{\CGerm}{\mathscr{C}^\infty}
\newcommand{\CC}{C^\infty}
\newcommand{\CCsc}{C^\infty_{sc}}
\newcommand{\Obar}{\overline{O}}
\newcommand{\grad}{\text{grad}}
\newcommand{\df}{\, \text{d}}
\newcommand{\id}{\text{id}}
\newcommand{\ran}{\mathrm{ran}\,}
\newcommand{\RT}[1]{\text{Re}\left(#1\right)}
\newcommand{\IT}[1]{\text{Im}\left(#1\right)}
\newcommand{\supp}{\text{supp} \,}
\newcommand{\ssupp}{\text{sing} \, \supp}
\newcommand{\WF}{\text{WF}}
\newcommand{\Char}{\mathrm{Char}\,}
\newcommand{\Hom}{\textup{Hom}}
\newcommand{\sgn}{\text{sgn}}
\newcommand{\vol}{\text{vol}}
\newcommand{\dom}{\mathrm{dom}\,}
\newcommand{\res}{\mathrm{res}}
\newcommand{\lm}[2]{\mathop{\lim}\limits_{#1 \rightarrow #2}}
\newcommand{\Res}[2]{\mathop{\mathrm{Res}}\limits_{#1 = #2}}
\newcommand{\pd}[2]{\frac{\partial #1}{\partial #2}}
\newcommand{\td}[2]{\frac{\df #1}{\df #2}}
\newcommand{\tds}[3]{\left.\frac{\df #1}{\df #2}\right|_{#3}}
\newcommand{\boxx}{\raisebox{-0mm}{\text{\Large{$\Box$}}}}
\newcommand{\cat}[1]{\textup{\textsf{#1}}}
\newcommand{\CCR}{\mathrm{CCR}}
\newcommand{\braket}[2]{\left\langle #1,#2\right\rangle}
\newcommand{\bbraket}[2]{\left\langle\!\left\langle #1,#2\right\rangle\!\right\rangle}
\newcommand{\Braket}[2]{\big\langle #1,#2\big\rangle}
\numberwithin{equation}{section}
\title{Hadamard states for bosonic quantum field theory on globally hyperbolic spacetimes}
\author{
 Max Lewandowski \\
 Institute of Mathematics, Potsdam University\\
 Karl-Liebknecht-Str. 24-25, 14476 Potsdam, Germany\\
 E-mail: mlewando@uni-potsdam.de}
\begin{document}

\begin{spacing}{1.1}

\maketitle

\begin{abstract}
 According to Radzikowski's celebrated results, bisolutions of a wave operator on a globally hyperbolic spacetime are of Hadamard form iff they are given by a linear combination of distinguished parametrices $\frac{i}{2}\big(\widetilde{G}_{aF} - \widetilde{G}_{F} + \widetilde{G}_A - \widetilde{G}_R\big)$ in the sense of Duistermaat-H{\" o}rmander \cite{R1996, DH1972}. Inspired by the construction of the corresponding advanced and retarded Green operator $G_A,G_R$ as done in \cite{BGP2007}, we construct the remaining two Green operators $G_F, G_{aF}$ locally in terms of Hadamard series. Afterwards, we provide the global construction of $\frac{i}{2}\big(\widetilde{G}_{aF} - \widetilde{G}_{F}\big)$, which relies on new techniques like a well-posed Cauchy problem for bisolutions and a patching argument using \v{C}ech cohomology. This leads to global bisolutions of Hadamard form, each of which can be chosen to be a Hadamard two-point-function, i.e. the smooth part can be adapted such that, additionally, the symmetry and the positivity condition are exactly satisfied.

\end{abstract}

\section{Introduction}

\subsection{Quantum field theory on curved space times}

Quantum field theory on curved spacetimes is a semiclassical theory, which investigates the coupling of a quantum field with classical gravitation. It already predicts remarkable effects such as particle creation by the curved spacetime itself, a phenomenon most prominently represented by Hawking's evaporation of black holes \cite{H1975} and the Unruh effect \cite{F1973,D1975,U1976}. Due to their rather small, generally trivial, isometry group, curved spacetimes lack an invariant concept of energy, so a distinct vacuum and consequently the notion of particles turn out to be non-sensible in general curved spacetimes \cite{D1984, W1994}. This situation is best addressed by the algebraic approach to quantum field theory \cite{HK1964, D1980}, where first of all observables are introduced as elements of rather abstract algebras $\A(O)\subset\A(M)$ associated to spacetime regions $O\subset M$ in a local and covariant manner, and states will join the game only later as certain functionals on these algebras.\\
The categorical framework of locally covariant quantum field theory \cite{BFV2003} has established as a suitable generalization of the principles of AQFT to curved spacetimes. Instead of fixing a spacetime and its symmetries from the beginning, a whole category of spacetimes is considered with arrows given by certain isometric embeddings. The actual QFT is then represented by a covariant functor to the category of $C^*$-algebras and injective $C^*$-homomorphisms fulfilling adapted Haag-Kastler-axioms. For the category of spacetimes, we will adopt the setting of \cite{BG2011}:

\begin{samepage}
\begin{Def} \label{GlobHypGreen}
 The category \cat{GlobHypGreen} consists of objects $(M,E,P)$ and morphisms $(f,F)$, where
  \begin{itemize}
   \item $M$ is a globally hyperbolic Lorentzian manifold,
   \item $E$ is a finite-dimensional, real vector bundle over $M$ with a non-degenerate inner product,
   \item $P\colon \CC(M,E)\rightarrow \CC(M,E)$ is a formally self-adjoint, Green hyperbolic operator,
   \item $f\colon M_1 \rightarrow M_2$ time-orientation preserving, isometric embedding with $f(M_1)\subset M_2$ open and causally compatible,
   \item $F$ is a fiberwise isometric vector bundle isomorphism over $f$ such that $P_1,P_2$ are related via \mbox{$P_1\circ\res = \res\circ P_2$}, where $\res(\f):= F^{-1} \circ \f \circ f$ the restriction of $\f\in\CC(M_2,E_2)$ to $M_1$.
  \end{itemize}
\end{Def}
\end{samepage}

Because of their good causal and analytic properties (no causal loops, foliation by Cauchy hypersurfaces, well-posed Cauchy problem), globally hyperbolic Lorentzian manifolds have proven to be a reasonable model for curved spacetimes. As further data, we consider real vector bundles, which excludes for instance charged fields, and the large class of Green hyperbolic operators implying the existence of an advanced and retarded Green operator $G_A,G_R$ (see \cite{Baer2015} for a thorough discussion of these operators). For $G := G_A - G_R$, Theorem 3.5 of \cite{BG2011} provides the exact sequence
\begin{align} \label{ExactSequence}
 \{0\} \longrightarrow \D(M,E) \stackrel{P}{\longrightarrow} \D(M,E) \stackrel{G}{\longrightarrow} \CCsc(M,E) \stackrel{P}{\longrightarrow} \CCsc(M,E),
\end{align}

and hence, it leads to a covariant functor into the category of symplectic vector spaces with objects $(V,\sigma)$ essentially given by the solution space of the field equation
\begin{align} \label{SymplecticVectorSpace}
 V:= \D(M,E) \slash \ker G \cong \ker P\big|_{\CCsc}, \qquad \sigma\big([\f],[\psi]\big) := (G\f,\psi)_M.
\end{align}

$\D(M,E)$ represents test sections in $E$, more precisely smooth sections with compact support, and $\CCsc(M,E)$ those with only spacelike compact support, meaning that it is contained in the causal future and past of some compact subset of $M$. The $L^2$-product $(\cdot,\cdot)_M$ of test sections is induced by the non-degenerate inner product on $E$.\\
For a bosonic quantum field theory, we take the $\CCR$-representation of $(V,\sigma)$, i.e.\ a pair $(w,A)$ consisting of a $C^*$-algebra $A$ and a map $w$ from $V$ into the unitary elements of $A$ such that $A$ is generated as a $C^*$-algebra by $\{w(x)\}_{x\in V}$ and the Weyl relations hold:
\begin{align} \label{WeylRelations}
 w(x)w(y) = e^{-\frac{i}{2}\sigma(x,y)} w(x+y), \qquad x,y\in V.
\end{align}
This construction goes back to \cite{M1968} (see also section 4.2 of \cite{BGP2007} and 5.2.2.2 of \cite{BR2002}) and it is unique in an appropriate sense. Therefore, altogether, $(M,E,P) \mapsto \CCR\big(\D(M,E) \slash \ker G\big)$ provides the desired functor and Haag-Kastler's axioms are satisfied (Theorem 3.10 of \cite{BG2011})

\subsection{States, quasifree states and Hadamard states}

We introduce states in the theory as normed and positive functionals $\tau$ on $\CCR(V)$, where $\tau(a)$ can be thought of as the expectation value of the observable $a$ in the state $\tau$. The induced GNS-representation $(\pi_\tau, \HR_\tau,\Omega_\tau)$ provides the familiar framework of a state space $\HR_\tau$ with observables as bounded operators $\pi_\tau(a)$ and a cyclic vector $\Omega_\tau$ (see section 2.3 of \cite{BR2002} for details). Hence, the selection of a distinct vacuum is shifted to that of an algebraic state $\tau$. Particularly adapted to free quantum fields are the so-called quasifree states generated by 
\begin{align*} 
 \tau\big(w(x)\big) = e^{-\frac{1}{2}\eta(x,x)}, \qquad x\in V,
\end{align*}

for some scalar product $\eta$ on $V$, by which $(\pi_\tau, \HR_\tau,\Omega_\tau)$ is determined up to unitary equivalence. For these states, the unitary operators $\big\{\pi_\tau\big(w(tx)\big)\big\}_{t\in\R}$ constitute a strongly continuous one-parameter group and thus, field operators $\Phi_\tau(x)$ are given by the self-adjoint generators due to Stone's theorem. Furthermore, there is a dense domain $D_\tau\subset \HR_\tau$ such that $\ran\Phi_\tau(x) \subset D_\tau \subset \dom\Phi_\tau(x)$ for all $x$, so polynomials of field operators are well-defined on $D_\tau$. Hence, the Weyl relations (\ref{WeylRelations}) imply the familiar canonical commutator relations $\big[\Phi_\tau(x),\Phi_\tau(y)\big] = i\sigma(x,y)\, \id_{\HR_\tau}$ and for all $n\in\N$, the $n$-point-function of the state $\tau_n(x_1,\hdots,x_n) := \Braket{\Phi_\tau(x_1)\hdots\Phi_\tau(x_n)\Omega_\tau}{\Omega_\tau}_{\HR_\tau}$ represents a well-defined distribution (see section 4.2 of \cite{BG2011} for precise definitions and proofs). In particular, the two-point-function is of the form
\begin{align} \label{TwoPointFunction}
 \tau_2(x,y) = \eta(x,y) + \frac{i}{2}\sigma(x,y), \qquad x,y\in V,
\end{align}

so it reproduces $\eta$ and hence $\tau$. Indeed, we have $\tau_n=0$ for odd $n$, and for $n$ even, it is given by some polynomial in the elements of $\{\tau_2(x_i,x_j)\}_{i,j=1,\hdots,n}$. Thinking of $\tau_n$ as the propagation of the state of the field, the focus on quasifree states corresponds to the perception that this propagation is essentially given by independent one-particle-propagations, which legitimates them as the natural objects to look at when dealing with free quantum fields (see chapter 17 of \cite{DG2013} for an overview of quasifree states). Going back from $(V,\sigma)$ to $(M,E,P)$, the scalar product $\eta$ corresponds to a bidistribution $S\colon  \D(M,E)\times\D(M,E) \rightarrow \R$ with
\begin{align} \label{PropertiesTwoPointFunction}
 S[P\psi_1,\psi_2] = 0 = S[\psi_1,P\psi_2], \qquad S[\psi_1,\psi_2] = S[\psi_2,\psi_1], \qquad S[\psi,\psi]\geq 0.
\end{align}

With regard to (\ref{SymplecticVectorSpace}) and (\ref{TwoPointFunction}), a quasifree state is therefore determined by $S$ and $G$.\\
Despite all physically motivated restrictions on the field so far, there is still a huge variety of possible states, so we need to look for constraints also on this level. A reasonable demand would be renormalizability of $\tau_2$, most prominently represented by the expectation value of the energy momentum tensor, since products of distributions are in general ill-defined. In flat quantum field theory, this would be carried out by subtracting the vacuum expectation value setting us back to the problem of non-existence of a distinct vacuum. On the other hand, this procedure merely requires regularity of differences of expectation values, and indeed, J. Hadamard's theory of second order hyperbolic equations \cite{H1923} led to a family of bisolutions with fixed singular part in the sense that the difference of any two such bisolutions is smooth (see \cite{W1994} and, more recently, \cite{H2016} for details, as well as \cite{DF2008} for the concrete renormalization). Accordingly, a state is called a Hadamard state if its two-point-function has the Hadamard singularity structure, which, by now, has been shown to be invariant under Cauchy evolution \cite{FSW1978}. Futhermore, any globally hyperbolic spacetime admits a large class of pure Hadamard states \cite{FNW1981, SV2001}.\\ 
However, the first mathematically precise definition of the Hadamard singularity structure has been specified only in \cite{KW1991}, in which the authors also show that for a wide class of spacetimes the Hadamard property singles out an invariant quasifree state. Moreover, in any spatially compact spacetime ("closed universes"), all Hadamard states, more specifically their GNS representations, comprise one unitary equivalence class, which, for general spacetimes, suggests a certain "local equivalence" of all possible notions of a vacuum state \cite{V1994}. In addition, Hadamard states yield finite fluctuations for all Wick polynomials \cite{BF2000}, which makes them relevant also for the perturbative construction of interacting fields (see also \cite{HW2015, R2016, D2019} and references therein). Consequently, Hadamard states are by now considered a reasonable counterpart of Minkowski finite energy states and the Hadamard condition an appropriate generalization of the energy condition for Minkowski quantum field theory. Note that the replacement of a distinct vacuum state by a whole class of states somehow reflects the essence of general relativity: Just like there is no preferred coordinate system, the concept of vacuum and particles as absolute quantities has to be re-evaluated and eventually downgraded to one choice among many. \\
It was Radzikowski who showed that for the massive scalar field the global Hadamard condition is equivalent to a certain requirement on the wave front set of the two-point-function \cite{R1992, R1996}, namely
\begin{align} \label{HConditionWF}
 \WF(\tau_2) = \big\{(p,-\xi;q,\zeta)\in \big(T^*M\times T^*M\big)\backslash\{0\}~\big|~(p,\xi)\sim(q,\zeta), ~ \xi \text{ is future-directed}\big\},
\end{align}

where 
\begin{align} 
 \begin{split} \label{LightlikeRelation}
  (p,\xi)\sim(q,\zeta) \qquad \Longleftrightarrow \qquad \begin{array}{cl} & \exists \text{ lightlike geodesic } c\colon I\rightarrow M \enspace \text{and} \enspace t,t'\in I\colon \\[2mm] & c(t)=p, \enspace c(t')=q, \enspace \dot{c}(t)=\xi^\sharp, \enspace \dot{c}(t')=\zeta^\sharp. \end{array}
 \end{split}
\end{align}

Note that, unlike the criterion given in \cite{KW1991}, this is a local condition, which Sahlmann and Verch generalized to sections in general vector bundles \cite{SV2001}. It includes Hadamard states of the Dirac field in the sense of \cite{K1995, K2000, H2001}, which have been used, for instance, for a mathematical rigorous description of the chiral anomaly \cite{BS2016}. In addition, \cite{SVW2002} proposed an even more elegant characterization of the Hadamard property in terms of Hilbert space valued distributions $\f\mapsto \Phi_\tau[\f]\Omega_\tau\in\HR_\tau$, involving the GNS-representation induced by $\tau$. Also for non-quasifree states, one can formulate (\ref{HConditionWF}) as a constraint on the whole $n$-point-function, which is compatible with the special case of quasifree states \cite{S2010}. Moreover, in analytic spacetimes, this generalized Hadamard condition can be sharpened to a condition on the analytic wave front set, thereby implying the Reeh-Schlieder-property \cite{SVW2002}. Likewise, for non-globally hyperbolic spacetimes, there is a formulation of the Hadamard condition via restriction to globally hyperbolic subregions. Hadamard states have therefore been studied in connection with the Casimir effect and on anti-de Sitter spacetime (see \cite{DNP2014, DFM2018} and references therein). By using the weaker concept of Sobolev wave front sets, a definition of adiabatic states on globally hyperbolic spacetimes similar to (\ref{HConditionWF}) is given in \cite{JS2002}, thus implying that Hadamard states are adiabatic.\\
However, most importantly for the purpose of this work, the Hadamard condition in the form (\ref{HConditionWF}) allows us to employ the techniques of microlocal analysis provided by Duistermaat and H{\" o}rmander \cite{DH1972}. Soon after Radzikowski's work, Junker derived pure Hadamard states for the massive scalar field on spatially compact globally hyperbolic spacetimes, using a factorization of the Klein-Gordon operator by pseudo-differential operators \cite{J1996,J2002}. G\'erard, Wrochna et al. generalized this construction to a large class of spacetimes \cite{GOW2017} and even gauge fields \cite{GW2015}. Furthermore, they proved the existence of (not necessarily pure) Hadamard states \cite{GW2014} in a much more concrete manner than \cite{FNW1981}. See \cite{G2019} for a recent review of these techniques. \\ 
On the other hand, there have been further proposals for physically reasonable states like the Sorkin-Johnston-states \cite{AAS2012}, which in general lack the Hadamard property \cite{FV2012}. Nevertheless, a modification of their construction produces Hadamard states \cite{BF2014}. For a contemporary synopsis concerning preferred vacuum states on general spacetimes, the nature of the Hadamard property and this construction in particular, see also \cite{F2018}. Apart from these rather general prescriptions, Hadamard states have been constructed explicitly for a large variety of spacetimes with special (asymptotic) symmetries, and furthermore, well-established states have been tested for the Hadamard property (see the introduction sections of \cite{GW2014, GOW2017} and the references therein as well as section 8.4 of \cite{FV2015} and 2.4 of \cite{H2016} for an overview). 

\subsection{This work}

In his seminal work \cite{R1996}, Radzikowski already realized that a bidistribution $\widetilde{H}$ satisfies the Hadamard condition if and only if it is of the form
\begin{align} \label{TwoPointFunctionHadState}
 \widetilde{H}=\frac{i}{2}\big(\widetilde{G}_{aF} - \widetilde{G}_{F} + \widetilde{G}_A - \widetilde{G}_R\big)
\end{align}

with $\widetilde{G}_{aF}, \widetilde{G}_{F}, \widetilde{G}_A, \widetilde{G}_R$ the distinguished parametrices in the sense of Duistermaat-H{\" o}rmander (Theorem 6.5.3 in \cite{DH1972}). With $\Delta':= \big\{(p,\xi;p,-\xi)\big\}$ the primed diagonal, they are characterized by
\begin{align}
 \begin{split} \label{WFDistParametrices}
  \WF(\widetilde{G}_A)    = \Delta' \cup \big\{(p,\xi)\sim(q,-\zeta), ~ q\in J_+(p)\big\}, & \quad  \WF(\widetilde{G}_R) = \Delta' \cup \big\{(p,\xi)\sim(q,-\zeta), ~ q\in J_-(p)\big\}, \\[2mm]
  \WF(\widetilde{G}_F)    = \Delta' \cup \big\{(p,\xi)\sim(q,\zeta), ~ t>t' \big\}, & \quad  \WF(\widetilde{G}_{aF}) = \Delta' \cup \big\{(p,\xi)\sim(q,\zeta), ~ t<t' \big\}.
 \end{split}
\end{align}

However, he remarked that it is not clear how to prove that one may choose the smooth part such that (\ref{PropertiesTwoPointFunction}) is exactly satisfied \cite{R1992}, an issue closely related to the question, whether $\widetilde{G}_{aF}, \widetilde{G}_{F}$ can be chosen as actual Green operators, which has been already addressed in section 6.6 of \cite{DH1972}. Clearly, microlocal analysis proved to be a powerful tool for the investigation of singularities and indispensable for the results listed in the former paragraph. Nevertheless, for these remaining questions, the non-singular part of $\widetilde{H}$ is primarily concerned, so different techniques are required.\\
This work is dedicated to provide such techniques and resolves the question for wave operators on globally hyperbolic spacetimes. It therefore gives a further and more constructive existence proof of, not necessarily pure, Hadamard states than \cite{FNW1981}. By avoiding any kind of deformation argument, it covers situations involving, for example, analytic spacetimes or constraint equations as in general relativity, where such an argument is usually not applicable.\\
The starting point in chapter 3 is the local construction of $\widetilde{G}_{aF}, \widetilde{G}_{F}$ in terms of Hadamard series very much inspired by \cite{BGP2007}. In chapter 4, these local objects are globalized and finally patched together by introducing a well-posed Cauchy problem for bisolutions and techniques from \v{C}ech cohomology theory. In the final chapter 5, we check existence of a "positive and symmetric choice", that is the smooth part of any Hadamard bisolution can be chosen such that it obtains the properties of a two-point-function. The necessary preparation is given in chapter 2 as well as a proof for the symmetry of the Hadamard coefficients in the vector valued case alternative to \cite{K2019} in the Appendix.

\section{Preliminaries}

In this section, we provide basic notations and prove certain theorems needed in the later constructions. For any $d$-dimensional vector space with non-degenerate inner product $\bbraket{\cdot}{\cdot}$ of index $1$, we adopt the notations and conventions of \cite{BGP2007}, that is, for instance, the signature $(-,+,\hdots,+)$ and the squared distance $\gamma(x):=-\bbraket{x}{x}$. The two connected components $I_\pm$ of the set of timelike vectors $I:=\{\gamma(x)>0\}$ then determine a time-orientation, where we define the elements of $I_+ (I_-)$ to be future (past) directed. Correspondingly, we set $C_\pm := \partial I_\pm,~ J_\pm := \overline{I_\pm}$, whose non-zero elements we call "lightlike" and "causal", respectively. Leaving out "$\pm$" means the union of both components, i.e. $I:= I_+\cup I_-$ and similarly $C$ and $J$. Non-causal vectors are referred to as "spacelike".\\
For $(M,g)$ a $d$-dimensional time-oriented Lorentzian manifold and $p\in M$, we write $I_\pm^M(p),C_\pm^M(p)$ and $J^M_\pm(p)$ for the corresponding chronological/lightlike/causal future/past of $p$. These sets comprise all points that can be reached from $p$ via some timelike/lightlike/causal future/past directed differentiable curve, that is a curve with tangent vectors of the respective type at each point. For subsets $A\subset M$, we define $I_\pm^M(A):=\bigcup_{p\in A}I^M_\pm(p)$ and similarly $J^M_\pm(A)$. For the definitions of different types of subsets of $M$ like future/past compact, geodesically starshaped, convex, causally compatible, causal, Cauchy hypersurface etc., we refer to section 1.3 of \cite{BGP2007}. In this work, Cauchy hypersurfaces of $M$ are always assumed to be spacelike. \\
For $E$ some real or complex finite-dimensional vector bundle over $M$, the spaces of $C^k$-, $\CC$-, $\D$-sections in $E$ as well as distributional sections $\D(M,E,W)'$ with values in some finite-dimensional space $W$, including their (singular) support, convergence, order etc., are defined as in section 1.1 of \cite{BGP2007}. For $\df V$ the volume density induced by the Lorentzian metric, $F$ another vector bundle over $M$ and $P\colon \CC(M,E)\rightarrow\CC(M,F)$ some linear differential operator, the formally transposed operator $P^t\colon\CC(M,F^*) \rightarrow \CC(M,E^*)$ of $P$ is given by 
$$(P^t\f)[\psi] := \f[P\psi] = \int_M \f\big(P\psi\big) \df V, \qquad \psi\in\D(M,E),~\f\in\D(M,F^*).$$ 
If $E$ is equipped with a non-degenerate inner product $\braket{\cdot}{\cdot}$, which induces the $L^2$-product $(\cdot,\cdot)_M$ and the isomorphism $\Theta\colon E \rightarrow E^*$, we call $P$ formally self-adjoint if $(P\psi_1,\psi_2)_M = (\psi_1,P\psi_2)_M$ for all $\psi_1,\psi_2$, that is, $P=\Theta^{-1}P^t\Theta$. Furthermore, $P$ is a wave operator if its principal symbol is given by $\xi\mapsto g(\xi^\sharp,\xi^\sharp) \cdot \id_E,$ on $T^*M$ implying that wave operators are of second order (see section 1.5 of \cite{BGP2007} for details).\\
L. Schwartz' celebrated kernel theorem establishes a one-to-one-correspondence between sequentially continuous operators $\KK\colon\D(M,E^*)\rightarrow\D(M,E^*)'$, i.e. $\KK\f_j\rightarrow \KK\f$ if $\f_j\rightarrow\f$, and bidistributions $K\colon\D(M,E) \times\D(M,E^*) \rightarrow\R$ given by $K[\psi,\f] = (\KK\f)[\psi]$ and called Schwartz kernel of $\KK$. It is represented by a distributional section in the bundle $E^*\boxtimes E$ over $M\times M$, whose fibers we identify via
\begin{align} \label{FibersBoxBundle}
 (E^*\boxtimes E)_{(p,q)} = E^*_p\otimes E_q\cong \Hom(E^*_q,E^*_p), \qquad (p,q)\in M\times M.
\end{align}

\begin{Def} \label{DefParametrices}
 Let $P\colon\CC(M,E)\rightarrow\CC(M,E)$ be a linear differential operator. A linear and sequentially continuous operator $\QQ\colon \D(M,E^*)\rightarrow\CC(M,E^*)$ is called
 \begin{itemize}
  \item left parametrix for $P^t$ if $\QQ P^t\big|_\D-\id$ is smoothing,
  \item right parametrix for $P^t$ if $P^t\QQ-\id$ is smoothing,
  \item two-sided parametrix or just parametrix for $P^t$ if $\QQ$ is left and right parametrix for $P^t$,
  \item Green operator for $P^t$ if $\QQ P^t\big|_\D=P^t\QQ=\id$.
 \end{itemize}
 A bidistribution $Q\colon\D(M,E)\times\D(M,E^*)\rightarrow\R$ with $p\mapsto Q(p)[\f]\in\CC(M,E^*)$ for all $\f$ is called
 \begin{itemize}
  \item parametrix for $P$ at $p\in M$ if $P_{(2)}Q(p) - \delta_p \in \CC(M,E)$,
  \item fundamental solution for $P$ at $p\in M$ if $P_{(2)}Q(p) = \delta_p$.
 \end{itemize}
\end{Def}

Note that $\QQ$ is a parametrix for $P^t$ if and only if its Schwartz kernel is a parametrix for $P^t$ at all $p\in M$. Especially, it is a Green operator for $P^t$ if and only if $Q(p)$ is a fundamental solution for $P$ and $P^t_{(1)}\big(Q(\cdot)[\f]\big) = \f$ for all $p$ and $\f$.

\subsection{Cauchy problems}

A crucial feature of globally hyperbolic Lorentzian manifolds $M$ is a well-posed Cauchy problem for wave operators on smooth sections, i.e. for any Cauchy hypersurface $\Sigma$ with normal field $\nu$, which is timelike, and Cauchy data $f\in\CC(M,E), u_0,u_1 \in \CC(\Sigma,E)$, the Cauchy problem 
\begin{align} \label{CauchyProblem}
 \left\{\begin{array}{cl} Pu & = f, \\[2mm] u\big|_\Sigma & = u_0,  \\[2mm] \nabla_\nu u\big|_\Sigma & = u_1, \end{array} \right.
\end{align}

has a unique solution $u\in\CC(M,E)$, which is supported in $J\big(\supp u_0 \cup \supp u_1 \cup \supp f\big)$ and depends continuously on the data (see section 3.2 of \cite{BGP2007} and chapter 3 of \cite{BF2009}). For $F$ a vector bundle over some further globally hyperbolic Lorentzian manifold $N$, recall (\ref{FibersBoxBundle}) for the definition of the vector bundle $E\boxtimes F$ over $M\times N$. 

\begin{Thm} \label{CauchyProblemBisection}
 Let $M,N$ be globally hyperbolic Lorentzian manifolds with Cauchy hypersurfaces $\Sigma,\Xi$ and unit normal fields $\mu,\nu$. Furthermore, let $P,Q$ denote linear differential operators of second order acting on smooth sections in vector bundles $E,F$ over $M,N$, which admit well-posed Cauchy problems and only lightlike characteristic directions. Then, for all $u_i\in\CC(\Sigma\times\Xi,E\boxtimes F),~i=1,...,4,$ and \mbox{$f,g\in\CC(M\times N,E\boxtimes F)$} with $Qf=Pg$, there is a unique section $u\in\CC(M\times N,E\boxtimes F)$ solving
 \begin{align} \label{CauchyProblemBisolution}
  \left\{\begin{array}{cl} Pu & = f, \\[2mm]
                           Qu & = g, \\[2mm]
                           u\big|_{\Sigma\times\Xi} & = u_1, \\[2mm]
                           \nabla_\mu u\big|_{\Sigma\times\Xi} & = u_2, \\[2mm]
                           \nabla_\nu u\big|_{\Sigma\times\Xi} & = u_3, \\[2mm]
                           \nabla_\nu\nabla_\mu u\big|_{\Sigma\times\Xi} & = u_4. \end{array}\right.
 \end{align}
 Let $Z:=\big(\oplus^2\CC(M\times N,E\boxtimes F)\big)\oplus\big(\oplus^4\CC(\Sigma\times\Xi,E\boxtimes F)\big)$ and $X$ denote the subset of elements $(f,g,u_1,u_2,u_3,u_4)$ satisfying $Qf=Pg$. Then the map $(f,g,u_1,u_2,u_3,u_4) \longmapsto u$, which sends the Cauchy data to the unique solution $u$ of (\ref{CauchyProblemBisolution}), is a linear and continuous operator $X \longrightarrow \CC(M\times N,E\boxtimes F)$.
\end{Thm}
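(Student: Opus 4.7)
The plan is to reduce the coupled system (\ref{CauchyProblemBisolution}) to two successive single--variable Cauchy problems, exploiting that $P$ acts only on the first tensor factor of $E\boxtimes F$ and $Q$ only on the second. Consequently $[P,Q]=0$, and $Q$, $\nabla_\nu$ commute with any covariant differentiation in the $M$--direction, in particular with $\nabla_\mu$.

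Treating $p\in\Sigma$ as a parameter, I would first solve on $N$ the two Cauchy problems
\begin{align*}
 Qw_1 & = g\big|_{\Sigma\times N}, & w_1\big|_{\Sigma\times\Xi} & = u_1, & \nabla_\nu w_1\big|_{\Sigma\times\Xi} & = u_3, \\
 Qw_2 & = \nabla_\mu g\big|_{\Sigma\times N}, & w_2\big|_{\Sigma\times\Xi} & = u_2, & \nabla_\nu w_2\big|_{\Sigma\times\Xi} & = u_4,
\end{align*}
producing $w_1,w_2\in\CC(\Sigma\times N,E\boxtimes F)$ which will play the role of $u|_{\Sigma\times N}$ and $\nabla_\mu u|_{\Sigma\times N}$. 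Treating $q\in N$ as a parameter, I would then solve on $M$
\begin{align*}
 Pu = f, \qquad u\big|_{\Sigma\times N} = w_1, \qquad \nabla_\mu u\big|_{\Sigma\times N} = w_2,
\end{align*}
which directly yields a candidate $u\in\CC(M\times N,E\boxtimes F)$ satisfying the first, third and fourth line of (\ref{CauchyProblemBisolution}).

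To verify $Qu=g$, set $h:=Qu-g$. Then $Ph=QPu-Pg=Qf-Pg=0$ by $[P,Q]=0$ together with the compatibility $Qf=Pg$, while $h|_{\Sigma\times N}=Qw_1-g|_{\Sigma\times N}=0$ and, using $[Q,\nabla_\mu]=0$, $\nabla_\mu h|_{\Sigma\times N}=Qw_2-\nabla_\mu g|_{\Sigma\times N}=0$; uniqueness of the Cauchy problem for $P$ on $M$ therefore forces $h=0$. Uniqueness of $u$ follows from the same two--step argument applied to a homogeneous system with vanishing data, and linearity of the construction is manifest.

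The principal technical obstacle is upgrading these per--parameter solutions to jointly smooth objects on $\Sigma\times N$ and $M\times N$, respectively. Since $Q$ and $\nabla_\nu$ commute with any covariant derivative $\nabla_Y$ in the $M$--direction, $\nabla_Y w_i$ itself solves a Cauchy problem for $Q$ on $N$ with smooth data inherited from $\nabla_Y g$ and $\nabla_Y u_j$, so an induction on the differentiation order produces $w_i\in\CC(\Sigma\times N,E\boxtimes F)$; the analogous scheme applied in the $P$--step delivers $u\in\CC(M\times N,E\boxtimes F)$. Continuity of the composite solution map $X\longrightarrow\CC(M\times N,E\boxtimes F)$ in the Fr\'echet topology of smooth sections then reduces to composing the continuity statements for the individual well--posed Cauchy problems of $P$ and $Q$ recalled from Section 3.2 of \cite{BGP2007} and Chapter 3 of \cite{BF2009}.
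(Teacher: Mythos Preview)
Your approach is essentially identical to the paper's: both reduce to successive single-variable Cauchy problems (first for $Q$ on $\Sigma\times N$ to produce the data $w_1,w_2$ that the paper calls $h_0,h_1$, then for $P$ on $M\times N$), and both verify $Qu=g$ by showing $Qu-g$ has trivial Cauchy data for $P$. The only notable difference is in the continuity argument: the paper obtains it by applying the open mapping theorem to the manifestly continuous and bijective forward map $u\mapsto\big(Pu,Qu,u|_{\Sigma\times\Xi},\nabla_\mu u|_{\Sigma\times\Xi},\nabla_\nu u|_{\Sigma\times\Xi},\nabla_\nu\nabla_\mu u|_{\Sigma\times\Xi}\big)$ between Fr\'echet spaces, which sidesteps the need to track parameter-dependence through the individual solution operators.
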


\begin{proof}
 For all $q\in N$ and $h_0, h_1\in\CC(\Sigma\times N,E\boxtimes F)$, the Cauchy problem 
 \begin{align}\label{CPu1}
  \left\{\begin{array}{cl} Pu_q & = f(\cdot,q), \\[2mm] u_q\big|_\Sigma & = h_0(\cdot,q), \\[2mm] \nabla_\mu u_q\big|_\Sigma & = h_1(\cdot,q), \end{array} \right.
 \end{align}
 
 has a unique solution $u_q\in\CC(M,E\otimes F_q)$ depending smoothly on the data by well-posedness of (\ref{CauchyProblem}). Thus, it remains to determine $h_0, h_1$ from $u_1,u_2,u_3,u_4,g$ and to show that then $Qu=g$ is automatically fulfilled. For all \mbox{$\sigma\in\Sigma,\xi\in\Xi$}, we define smooth sections $h_0(\sigma,\cdot), h_1(\sigma,\cdot) \in \CC(N,E_\sigma\otimes F)$ and $h_0(\cdot,\xi),h_2(\cdot,\xi) \in \CC(M,E\otimes F_\xi)$ as solutions of
 \begin{align} 
  \begin{split} \label{CPInitialData}
   \left\{\begin{array}{cl} Ph_0(\cdot,\xi) & = f(\cdot,\xi), \\[2mm] h_0(\cdot,\xi)\big|_\Sigma & = u_1(\cdot,\xi), \\[2mm] \nabla_\mu h_0(\cdot,\xi)\big|_\Sigma & = u_2(\cdot,\xi), \end{array} \right.
   \qquad & \qquad 
   \left\{\begin{array}{cl} Ph_2(\cdot,\xi) & = \big(\nabla_\nu f\big)(\cdot,\xi), \\[2mm] h_2(\cdot,\xi)\big|_\Sigma & = u_3(\cdot,\xi), \\[2mm] \nabla_\mu h_2(\cdot,\xi)\big|_\Sigma & = u_4(\cdot,\xi), \end{array} \right. \\[2mm]
   \left\{\begin{array}{cl} Qh_0(\sigma,\cdot) & = g(\sigma,\cdot), \\[2mm] h_0(\sigma,\cdot)\big|_\Xi & = u_1(\sigma,\cdot), \\[2mm] \nabla_\nu h_0(\sigma,\cdot)\big|_\Xi & = u_3(\sigma,\cdot), \end{array} \right.
   \qquad & \qquad
   \left\{\begin{array}{cl} Qh_1(\sigma,\cdot) & = \big(\nabla_\mu g\big)(\sigma,\cdot), \\[2mm] h_1(\sigma,\cdot)\big|_\Xi & = u_2(\sigma,\cdot), \\[2mm] \nabla_\nu h_1(\sigma,\cdot)\big|_\Xi & = u_4(\sigma,\cdot). \end{array} \right.
  \end{split}
 \end{align}
 
 By adapting the proof of Proposition A.1 of \cite{FNW1981}, we obtain smooth sections $h_0,h_1,h_2$ in $E\boxtimes F$ over \mbox{$(M\times\Xi) \cup (\Sigma \times N), \Sigma \times N$} and $M\times\Xi$, respectively, and, following the same lines, $u(\cdot,q) := u_q$ depends smoothly on $q$. Hence, we found some $u\in\CC(M\times N,E\boxtimes F)$ solving (\ref{CPu1}), which yields the initial data of a solution of (\ref{CauchyProblemBisolution}):
 \begin{align*}
  u\big|_{\Sigma\times\Xi} = h_0\big|_{\Sigma\times\Xi} = u_1, \qquad & \nabla_\mu u\big|_{\Sigma\times\Xi} = h_1\big|_{\Sigma\times\Xi} = u_2, \\[2mm]
  \nabla_\nu u\big|_{\Sigma\times\Xi} = \nabla_\nu h_0\big|_{\Sigma\times\Xi} = u_3, \qquad & \nabla_\nu\nabla_\mu u\big|_{\Sigma\times\Xi} = \nabla_\nu h_1\big|_{\Sigma\times\Xi} = u_4.
 \end{align*} 
 Note that $P$ and $Q$ commute because they act on different factors of $M\times N$. Therefore, (\ref{CPu1}) and (\ref{CPInitialData}) imply that $Qu$ and $g$ satisfy the same Cauchy problem:
 $$\left\{\begin{array}{cl} & PQu = QPu = Qf = Pg, \\[2mm] & Qu\big|_{\Sigma\times N} = Qh_0 = g\big|_{\Sigma\times N}, \\[2mm] & \nabla_\mu Qu\big|_{\Sigma\times N} = Q \nabla_\mu u\big|_{\Sigma\times N} = Qh_1 = \nabla_\mu g\big|_{\Sigma\times N}, \end{array} \right.$$
 that is $Qu=g$ due to well-posedness. Clearly, this solution $u$ is unique since trivial Cauchy data in (\ref{CauchyProblemBisolution}) lead to trivial data in (\ref{CPInitialData}) and therefore in (\ref{CPu1}), which implies $u_q=0$ for all $q$ and hence $u=0$. The proof of stability follows the same lines as for (\ref{CauchyProblem}), that is Theorem 3.2.12 of \cite{BGP2007}: Obviously, the map
 \begin{align*}
  \Phi\colon \qquad \CC(M\times N,E\boxtimes F) & \longrightarrow Z \\[2mm]
                                         u & \longmapsto \big(Pu,Qu,u\big|_{\Sigma\times\Xi},\nabla_\mu u\big|_{\Sigma\times\Xi}, \nabla_\nu u\big|_{\Sigma\times\Xi}, \nabla_\nu\nabla_\mu u\big|_{\Sigma\times\Xi}\big)
 \end{align*}%
 is linear, injective and continuous, and since (\ref{CauchyProblemBisolution}) admits a solution $u$ for each data in $X$, the closed subset $X\subset Z$ is contained in $\ran \Phi$. Due to continuity of differential operators, the subspace $\Phi^{-1}(X)\subset \CC(M\times N, E\boxtimes F)$ is also closed, so we obtain a continuous and bijective map $\Phi\colon \Phi^{-1}(X) \rightarrow X$ between Fr\'echet spaces, whose inverse is continuous by the open mapping theorem.
\end{proof}

It seems that this procedure directly generalizes to operators of order $k$, for which derivatives up to order $k^2$ have to be provided as data. Furthermore, symmetry of the data is inherited by the solution:

\begin{Cor} \label{SymmetricBisolution}
 With regard to the assumptions of Theorem \ref{CauchyProblemBisection}, let $(N,\Xi,\nu)=(M, \Sigma, \mu),~F=E^*$ and $E$ be equipped with a non-degenerate inner product. Let $Q=P^t$, and for all $(p,q)\in M\times M$, assume
 \begin{align} \label{SymmetryInitialData}
  \begin{split}
   f(p,q) = \Theta_p^{-1} g(q,p)^t \Theta_q, \qquad u_2(\sigma_1,\sigma_2) = \Theta^{-1}_{\sigma_1} u_3(\sigma_2,\sigma_1)^t \Theta_{\sigma_2},\\[2mm]
   u_1(\sigma_1,\sigma_2) = \Theta^{-1}_{\sigma_1} u_1(\sigma_2,\sigma_1)^t \Theta_{\sigma_2}, \qquad u_4(\sigma_1,\sigma_2) = \Theta^{-1}_{\sigma_1} u_4(\sigma_2,\sigma_1)^t \Theta_{\sigma_2}
  \end{split}
 \end{align}
 with fiberwise transposition $~^t\colon \Hom(E_q, E_p)\rightarrow \Hom(E_p^*, E_q^*)$. Then the solution of (\ref{CauchyProblemBisolution}) satisfies 
 \begin{align} \label{SymmetrySolution}
  u(p,q)=\Theta_p^{-1} u(q,p)^t \Theta_q, \qquad p,q\in M.
 \end{align}
\end{Cor}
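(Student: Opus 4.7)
The plan is to apply the uniqueness statement in Theorem \ref{CauchyProblemBisection} to the section
$$\widetilde u(p,q):=\Theta_p^{-1}\,u(q,p)^t\,\Theta_q,$$
which is a smooth section of $E\boxtimes E^*$ since it arises from $u$ by composition with fibrewise bundle isomorphisms (the swap $(p,q)\mapsto(q,p)$, fibrewise transposition, and the musical isomorphisms induced by the inner product on $E$). My goal is to show that $\widetilde u$ solves exactly the same Cauchy problem (\ref{CauchyProblemBisolution}) as $u$, so that $\widetilde u=u$ on all of $M\times M$, which is precisely (\ref{SymmetrySolution}).

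For the differential equations three structural facts are combined. Formal self-adjointness of $P$ is equivalent to $\Theta P=P^t\Theta$, so $\Theta_p$ intertwines $P_{(1)}$ and $P^t_{(1)}$ (and analogously for $\Theta_q$ and the $(2)$-slot); differentiation in the second argument of $u$ agrees with differentiation in the first argument of $(p,q)\mapsto u(q,p)$; and fibrewise transposition commutes with differential operators, being a pointwise algebraic operation. Putting these together,
\begin{align*}
P_{(1)}\widetilde u(p,q)
=\Theta_p^{-1}\,P^t_{(1)}\bigl[u(q,p)^t\bigr]\,\Theta_q
=\Theta_p^{-1}\bigl[(P^t_{(2)}u)(q,p)\bigr]^t\Theta_q
=\Theta_p^{-1}\,g(q,p)^t\,\Theta_q
= f(p,q),
\end{align*}
where the third equality uses $Qu=g$ with $Q=P^t$ and the fourth is the first line of (\ref{SymmetryInitialData}). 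The analogous computation for $Q_{(2)}\widetilde u$, together with the equivalent form $g(p,q)=\Theta_p^{-1}f(q,p)^t\Theta_q$ of the symmetry relation (obtained by swapping $p\leftrightarrow q$ and transposing, using that the inner product and hence $\Theta$ is symmetric), yields $Q\widetilde u=g$. The compatibility $Qf=Pg$ is automatic from the hypotheses.

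The four initial conditions are verified pointwise on $\Sigma\times\Sigma$. Since the Levi-Civita connection preserves the inner product, $\nabla$ commutes with $\Theta,\Theta^{-1}$ and fibrewise transposition, and the swap interchanges the two normal derivatives: differentiating in the first argument of $\widetilde u$ corresponds to differentiating in the second argument of $u$ and vice versa. Consequently
$$\widetilde u\big|_{\Sigma\times\Sigma},\quad \nabla_\mu\widetilde u\big|_{\Sigma\times\Sigma},\quad \nabla_\nu\widetilde u\big|_{\Sigma\times\Sigma},\quad \nabla_\nu\nabla_\mu\widetilde u\big|_{\Sigma\times\Sigma}$$
evaluate respectively to $\Theta^{-1}u_1(\cdot,\cdot)^t\Theta$, $\Theta^{-1}u_3(\cdot,\cdot)^t\Theta$, $\Theta^{-1}u_2(\cdot,\cdot)^t\Theta$ and $\Theta^{-1}u_4(\cdot,\cdot)^t\Theta$ (with arguments appropriately swapped), and the four conditions in (\ref{SymmetryInitialData}) are exactly designed to reduce these to $u_1,u_2,u_3,u_4$. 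Uniqueness of the solution of (\ref{CauchyProblemBisolution}) from Theorem \ref{CauchyProblemBisection} then forces $\widetilde u=u$, which is the claim. The main, though essentially routine, obstacle is keeping track of how the swap, fibrewise transposition and $\Theta$ intertwine with $P$, $P^t$ and the normal derivatives; this bookkeeping is precisely what motivates the exact form of (\ref{SymmetryInitialData}), in which $u_2$ is paired with $u_3$ while $u_1,u_4$ are required to be self-symmetric.
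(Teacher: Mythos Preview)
Your argument is correct and is exactly the approach the paper has in mind: the corollary is stated without proof, and the intended justification is precisely to check that $\widetilde u(p,q):=\Theta_p^{-1}u(q,p)^t\Theta_q$ satisfies the same Cauchy problem (\ref{CauchyProblemBisolution}) as $u$ and then invoke the uniqueness part of Theorem \ref{CauchyProblemBisection}. One small terminological slip: when you write ``the Levi--Civita connection preserves the inner product'', the relevant connection here is the $P$-compatible connection $\nabla$ on $E$ (see (\ref{PCompConn})), not the Levi--Civita connection on $TM$; what you actually need (and are implicitly using) is that $\nabla$ is compatible with the bundle metric on $E$, so that $\Theta$ and fibrewise transposition are $\nabla$-parallel. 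With that understood, your bookkeeping is accurate.
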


Additionally, we investigate the propagation of a family of singular solutions from a neighborhood of $\Sigma$ to all of $M$ by applying the well-posed Cauchy problem for singular sections treated in \cite{BTW2015}. For that, we just have to ensure the existence of the restriction to $\Sigma$ by checking H{\" o}rmander's criterion.

\begin{Thm} \label{CauchyProblemSingularSolution}
 Let $M$ be a globally hyperbolic Lorentzian manifold, \mbox{$\Sigma\subset M$} a Cauchy hypersurface, $\pi\colon E\rightarrow M$ a real or complex vector bundle over $M$ and $P\colon \CC(M,E)\rightarrow \CC(M,E)$ a wave operator. Furthermore, let $O\subset M$ be relatively compact, and for all $p\in O$, let \mbox{$v(p)\in\D(M,E,E_p^*)'$} have spacelike compact support and only lightlike singular directions. Moreover, we assume \mbox{$p\mapsto v(p)[\f]\in\CC(M,E^*)$} for fixed $\f\in\D(M,E^*)$. Then the Cauchy problem
 \begin{align*} 
  \left\{\begin{array}{cl} Pu(p) & = 0, \\[2mm] u(p)\big|_\Sigma & = v(p)\big|_\Sigma, \\[2mm] \nabla_\nu u(p)\big|_\Sigma & = \nabla_\nu v(p)\big|_\Sigma, \end{array} \right.
 \end{align*}
 has a unique solution $u(p)\in\D(M,E,E^*_p)'$, which has spacelike compact support and provides a smooth section $p\mapsto u(p)[\f]$ for each \mbox{$\f\in\D(M,E^*)$}.
\end{Thm}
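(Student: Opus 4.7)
The plan is to reduce the theorem to the Cauchy problem for distributional sections of wave operators established in \cite{BTW2015}. That result supplies, given distributional Dirichlet and Neumann data on a spacelike Cauchy hypersurface, a unique distributional solution with spacelike compact support. The only missing ingredient in our setting is that the restrictions $v(p)\big|_\Sigma$ and $\nabla_\nu v(p)\big|_\Sigma$ exist as distributional sections on $\Sigma$ for each $p\in O$, and the heart of the argument is the microlocal check of H{\" o}rmander's restriction criterion that provides them.

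This check is straightforward. Since Cauchy hypersurfaces are assumed spacelike throughout this work, every nonzero element of the conormal bundle $N^*\Sigma$ is proportional to $\nu^\flat$ and therefore timelike. The hypothesis that $v(p)$ has only lightlike singular directions translates to $\WF(v(p)) \cap N^*\Sigma = \emptyset$, so H{\" o}rmander's pull-back theorem produces a well-defined restriction $v(p)\big|_\Sigma \in \D(\Sigma, E|_\Sigma, E_p^*)'$. Extending $\nu$ smoothly into a tubular neighborhood of $\Sigma$, the normal derivative $\nabla_\nu v(p)$ has wave front set contained in $\WF(v(p))$, so $\nabla_\nu v(p)\big|_\Sigma$ exists by the same argument, and the spacelike compactness of $\supp v(p)$ descends to compact support of both restrictions on $\Sigma$. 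Feeding these Cauchy data into \cite{BTW2015} yields for each $p\in O$ a unique $u(p) \in \D(M,E,E_p^*)'$ whose support, by distributional propagation of supports, is contained in $J^M\big(\supp v(p)|_\Sigma \cup \supp \nabla_\nu v(p)|_\Sigma\big)$ and hence spacelike compact.

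It remains to establish the smoothness of $p \mapsto u(p)[\f] \in \CC(M, E^*)$ for fixed $\f \in \D(M, E^*)$, which I expect to be the main obstacle. The natural route is a duality argument: set $w := G_R^t \f$, which lies in $\CC(M, E^*)$ with past-compact support and satisfies $P^t w = \f$, and apply Green's identity between $\Sigma$ and a Cauchy surface in the future of $\supp \f$. This rewrites $u(p)[\f]$ as a sum of pairings of $v(p)\big|_\Sigma$ and $\nabla_\nu v(p)\big|_\Sigma$ with the smooth compactly supported sections $w\big|_\Sigma$ and $\nabla_\nu w\big|_\Sigma$, which depend only on $\f$. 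The delicate point is that the hypothesis provides smoothness of $p \mapsto v(p)[\psi]$ only for test sections $\psi$ on $M$, whereas the boundary pairings involve distributions concentrated on $\Sigma$. Rigorising this step calls for working in the H{\" o}rmander space $\D'_\Gamma(M,E,E_p^*)$ with $\Gamma$ the lightlike cone, invoking sequential continuity of restriction and multiplication by smooth functions in its natural topology, and approximating the surface data by smooth compactly supported sections localised in a tubular neighborhood of $\Sigma$.
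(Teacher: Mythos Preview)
Your existence and uniqueness argument matches the paper's: both check H{\"o}rmander's restriction criterion (conormal directions to the spacelike $\Sigma$ are timelike, hence disjoint from the lightlike wave front set of $v(p)$) and then feed the resulting compactly supported Cauchy data into the Sobolev/finite-energy Cauchy theory of \cite{BTW2015}. The paper adds only the remark that compactly supported distributions sit in some $H^k_c$, which is what makes Corollary~14 of \cite{BTW2015} directly applicable.

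Where you diverge is the smooth dependence on $p$, and the gap you flag is real. Your Green's-identity route expresses $u(p)[\f]$ as surface pairings of $v(p)\big|_\Sigma$ and $\nabla_\nu v(p)\big|_\Sigma$ against fixed smooth data built from $G_R^t\f$; but the hypothesis only controls $p\mapsto v(p)[\psi]$ for $\psi\in\D(M,E^*)$, whereas these surface pairings evaluate $v(p)$ on the singular object $\iota_*\chi$. Passing to the limit in $\D_\Gamma'$ as you propose would require $p\mapsto v(p)$ to be smooth \emph{into} $\D_\Gamma'$ with its H{\"o}rmander topology, which is strictly stronger than the weak hypothesis you are given and does not follow from it without further argument. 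The paper bypasses this entirely by a commutation argument: the solution map $T\colon v(p)\mapsto u(p)$ (restrict to $\Sigma$, then evolve) is continuous and acts only in the second variable, so it commutes with any differential operator $D_{(1)}$ acting in the parameter $p$ (the paper invokes Proposition~A.1 of \cite{FNW1981} here). Thus $\big(D(u(\cdot)[\f])\big)(p)=\big(D_{(1)}u\big)(p)[\f]$ is obtained by applying $T$ to $\big(D_{(1)}v\big)(p)$, and continuity in $p$ follows from continuity of $T$ together with smoothness of $p\mapsto\big(D_{(1)}v\big)(p)[\f]=\big(D(v(\cdot)[\f])\big)(p)$, which is immediate from the hypothesis. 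Since this works for every $D$, smoothness of $p\mapsto u(p)[\f]$ follows. The point is that this route never tests $v(p)$ against surface-supported distributions and so stays entirely within the weak hypothesis.
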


\begin{proof}
 Let $t\colon M \rightarrow \R$ be a Cauchy time function on $M$ such that $\Sigma=t^{-1}(0)$ (Theorem 1.3.13 of \cite{BGP2007}). Therefore, the normal directions of $\Sigma$ are timelike and do not match the singular directions of $v$, so $v(p)\big|_\Sigma$ and $\nabla_\nu v(p)\big|_\Sigma$ are well-defined and compactly supported distributions on $\Sigma$ for all $p$ due to H{\" o}rmander's criterion \big((8.2.3) of \cite{H1990}\big). Recall that any compactly supported distribution lies in some Sobolev space $H_c^k$ (see e.g.\ (31.6) of \cite{T1967}), and hence, $v(p)\big|_\Sigma\in H_c^k(\Sigma,E_p^*\otimes E)$ and $\nabla_\nu v(p)\big|_\Sigma\in H^{k-1}_c(\Sigma,E_p^*\otimes E)$ for some $k\in\R$. Thus, for all $p$, Corollary 14 of \cite{BTW2015} provides a unique solution 
 $$u(p)\in C_{sc}^0\big(t(M),H^k(\Sigma_\cdot);E_p^*\otimes E\big) \cap C_{sc}^1\big(t(M),H^{k-1}(\Sigma_\cdot);E_p^*\otimes E\big),$$ 
 where this intersection is commonly referred to as the space of finite $k$-energy sections (see section 1.7 of \cite{BTW2015} for details about them). Moreover, the mapping of initial data to the solution is a linear homeomorphism, i.e. continuity of the restriction $v(p)\mapsto \big(v(p)\big|_\Sigma, \nabla_\nu v(p)\big|_\Sigma\big)$ implies continuity of the map of distributions $T\colon v(p)\mapsto u(p)$ for all $p$. \\ 
 For $D$ a differential operator, let $\big(D_{(1)}v\big)(p)$ denote the distribution $\f\mapsto \big(D(v(\cdot)[\f])\big)(p)$. Since $P$ and $D$ act on different factors, $\big(D_{(1)}v\big)(p)$ is linearly and continuously mapped to $\big(D_{(1)}u\big)(p)$, that is, $T$ commutes with $D_{(1)}$ (see the proof of Proposition A.1 in \cite{FNW1981}). In particular, the map 
 $$p\longmapsto\big(D_{(1)}v\big)(p)[\f] \longmapsto \big(D_{(1)}u\big)(p)[\f] = \big(D(u(\cdot)[\f])\big)(p), \qquad \f\in\D(M,E^*),$$ 
 is continuous due to smoothness of the first arrow. This holds for all differential operators $D$, which provides smoothness of $p\mapsto u(p)[\f]$ for fixed $\f$.
\end{proof}

\subsection{The prototype}

The Hadamard condition in the form (\ref{HConditionWF}) is a local condition and, moreover, the singularity structure of a bisolution is related to the corresponding differential operator essentially via its principal symbol. On these grounds, we start with the prototype setting $P=\Box$ on $M=\R^d_{\mathrm{Mink}},~d\geq 3$, since, from the viewpoint of the singularity structure of the solutions, this already incorporates the characteristic properties of the solutions for the general setting of wave operators on curved spacetimes. In order to obtain a decomposition like (\ref{TwoPointFunctionHadState}), we study fundamental solutions for $\Box$. It is not hard to check that these objects are invariant under the special orthochronous Lorentz group $\sLL$ and their singular support is cointained in the light cone $C$. This makes them directly comparable outside $C$ meaning that we need to understand $\sLL$-invariant distributions supported on $C$. \\
For $\RP^d := \R^d\backslash\{0\}$, we consider the submersions $\gamma_\pm := \gamma\big|_{J_\pm^c}$, where $J_\pm^c := \RP^d\backslash J_\pm$. Then the pullback $\gamma^*_\pm$ maps distributions on $\R$ to $\sLL$-invariant distributions on $J_\pm^c$ and moreover establishes a close connection to the well-known classification of distributions on $\R$ supported in $\{0\}$:

\begin{Thm}[Th\'eor\`eme 2 of \cite{M1954}] \label{T&Tpm}
 For any pair $T_\pm \in \D(\R)'$ with $T_+|_{\R_{<0}} = T_-|_{\R_{<0}}$, there is a \mbox{$\sLL$-invariant} distribution $T \in \D(\RP^d)'$ given by $T|_{J_\mp^c} := \gamma_\pm^*T_\pm$. Conversely, for any $\sLL$-invariant $T \in \D(\RP^d)'$, we find a pair $T_\pm \in \D(\R)'$ with $T_+|_{\R_{<0}} = T_-|_{\R_{<0}}$ such that $T|_{J_\mp^c} = \gamma_\pm^*T_\pm$.
\end{Thm}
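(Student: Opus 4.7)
The plan is to read the theorem as two halves of a single equivalence and to leverage one organising fact: $\gamma\colon \RP^d\to \R$ is a submersion, since $d\gamma_x=-2\bbraket{x}{\cdot}$ vanishes only at $x=0$, and $\sLL$ acts transitively on every level set of $\gamma_\pm$ inside $J_\pm^c$. Concretely, for $s>0$ the fibre $\gamma_\pm^{-1}(s)$ is the past/future mass shell, for $s<0$ it is the one-sheeted hyperboloid of spacelike vectors of squared norm $-s$, and for $s=0$ it is $C_\mp\setminus\{0\}$; in dimension $d\geq 3$ each of these is a single $\sLL$-orbit. The classical slicing/descent statement for group-invariant distributions along a submersion with transitive orbits then asserts that $\gamma_\pm^*$ is a bijection between $\D(\R)'$ and the $\sLL$-invariant distributions on $J_\pm^c$, and the theorem reduces to a gluing argument on top of this.

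For the existence direction, given $T_\pm$ with matching restrictions to $\R_{<0}$, I would set $T^{(\pm)}:=\gamma_\pm^*T_\pm\in \D(J_\pm^c)'$. Each is $\sLL$-invariant because $\gamma$ and its domain of definition are. Their common domain of comparison $J_+^c\cap J_-^c$ is precisely the set of nonzero spacelike vectors, on which $\gamma<0$; by hypothesis the two pieces agree there. Since $J_+\cap J_-=\{0\}$, the two open sets $J_\pm^c$ cover $\RP^d$, so a standard partition-of-unity gluing produces a distribution $T\in\D(\RP^d)'$ with $T|_{J_\pm^c}=T^{(\pm)}$; $\sLL$-invariance is inherited locally from the two pieces.

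For the converse, starting from an $\sLL$-invariant $T\in\D(\RP^d)'$, I would restrict it to each $J_\pm^c$ and apply the slicing bijection to extract a unique $T_\pm\in\D(\R)'$ with $\gamma_\pm^*T_\pm=T|_{J_\pm^c}$. On the spacelike overlap both $\gamma_\pm^*T_\pm$ equal $T|_{J_+^c\cap J_-^c}$ and arise as pullbacks along the same submersion $\gamma$ restricted to the spacelike region, so injectivity of the slicing map on distributions tested against functions supported in $\R_{<0}$ forces the compatibility $T_+|_{\R_{<0}}=T_-|_{\R_{<0}}$.

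The hard part is the slicing bijection itself, which is where I would invoke \cite{M1954} as a black box. Its content is that an $\sLL$-invariant distribution on $J_\pm^c$ is constant along the fibres of $\gamma_\pm$, hence factors uniquely through $\gamma_\pm$ as a distribution in the scalar parameter $s$. Locally one picks coordinates $(s,y)$ flattening the submersion, equips each fibre with a smoothly varying $\sLL$-invariant orbital measure $\mu_s$, and defines $T_\pm$ by pairing the invariant distribution against a fixed compactly supported transversal density. The delicate point is the behaviour at $s=0$, i.e.\ across the light cone, where the orbital measure degenerates and where the assumption $d\geq 3$ is essential: it keeps $C_\mp\setminus\{0\}$ a single connected $\sLL$-orbit (it would split into two half-rays for $d=2$), so that no additional matching condition needs to be imposed at the fibre over zero.
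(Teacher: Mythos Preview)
The paper does not supply its own proof of this statement; it is quoted verbatim as Th\'eor\`eme~2 of \cite{M1954}, and the only additional remark is that Meth\'ee's argument ``particularly demonstrates surjectivity of $(\gamma_\pm)_*$''. So there is no in-paper proof to compare against.

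Your outline is a correct reduction. The cover $\{J_+^c,J_-^c\}$ of $\RP^d$, the identification of the overlap with the spacelike region $\{\gamma<0\}$, the orbit analysis of the fibres of $\gamma_\pm$ (requiring $d\geq 3$ to keep $C_\mp\setminus\{0\}$ a single $\sLL$-orbit), and the partition-of-unity gluing are all sound. The converse via descent plus uniqueness on the overlap is also fine.

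What you call the ``slicing bijection'' --- that $\gamma_\pm^*$ is a bijection from $\D(\R)'$ onto the $\sLL$-invariant distributions on $J_\pm^c$ --- is precisely the substance of Meth\'ee's theorem, and the paper's remark about surjectivity of $(\gamma_\pm)_*$ is the dual formulation of its nontrivial half (injectivity of $\gamma_\pm^*$ is easy; surjectivity onto invariant distributions is the content). Since you invoke \cite{M1954} as a black box for this step, your proposal is not an independent proof but a faithful reorganisation of Meth\'ee's argument into ``local descent on each half-space, then glue''. That is perfectly acceptable given that the paper itself only cites the result; just be aware that the black box you invoke and the theorem you are proving come from the same source, so you have identified rather than bypassed the essential difficulty.
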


For that, it is important to note that the proof of Theorem \ref{T&Tpm} particularly demonstrates surjectivity of $(\gamma_\pm)_*$.

\begin{Thm}[Th\'eor\`eme 1 of \cite{M1954}] \label{LorentzInvariantPointDistribution}
 Any $\sLL$-invariant $T\in\D(\R^d)'$ with $\supp(T) \subset \{0\}$ is of the form \mbox{$T=\sum_{k=0}^\infty b_k \cdot \Box^k\delta_0$} with $b_k\neq0$ for only finitely many $k$.
\end{Thm}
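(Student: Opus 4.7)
The approach is to reduce the statement to a classification of $\sLL$-invariant polynomials via the Fourier transform.

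Since $\supp T\subset\{0\}$, the standard structure theorem for distributions supported at a single point yields a finite decomposition $T=\sum_{|\alpha|\leq N}c_\alpha\,\partial^\alpha\delta_0$ for some non-negative integer $N$ and coefficients $c_\alpha\in\R$. In particular $T$ is tempered, so I may take its Fourier transform and obtain $\hat T=p$ with $p(\xi)=\sum_{|\alpha|\leq N}c_\alpha(i\xi)^\alpha$ a polynomial on $\R^d$. Because $\sLL$ preserves the Minkowski form, the Fourier transform intertwines the $\sLL$-action on $\D(\R^d)'$ with the induced $\sLL$-action on polynomials; hence $p$ is itself $\sLL$-invariant. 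The entire statement therefore reduces to the claim that \emph{every $\sLL$-invariant polynomial on $\R^d$ is a polynomial in $\gamma$}.

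I would deduce this claim in two steps using distinguished subgroups of $\sLL$. The spatial rotations $SO(d-1)\subset\sLL$ act on $\vec{\xi}=(\xi^1,\dots,\xi^{d-1})$, and for $d\geq 3$ their polynomial invariants on the spatial factor are generated by $|\vec{\xi}|^2$, so one may write $p(\xi)=\tilde p(\xi^0,|\vec{\xi}|^2)$ for some $\tilde p\in\R[X,Y]$. Next, restrict $p$ to the $2$-plane $\{\xi^2=\dots=\xi^{d-1}=0\}$, which is preserved by boosts in the $01$-direction. In the light-cone coordinates $u=\xi^0+\xi^1,\,v=\xi^0-\xi^1$ such a boost acts diagonally by $(u,v)\mapsto(e^\phi u,e^{-\phi}v)$, so a monomial $u^iv^j$ transforms by the factor $e^{(i-j)\phi}$; invariance forces $i=j$, whence $\tilde p(\xi^0,(\xi^1)^2)=Q(uv)=Q((\xi^0)^2-(\xi^1)^2)$ for some $Q\in\R[X]$. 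This polynomial identity in $(\xi^0,\xi^1)\in\R^2$ upgrades to the global identity $\tilde p(a,b)=Q(a^2-b)$ in $(a,b)$, since two polynomials that agree on the Zariski-dense set $\{b\geq 0\}$ must coincide. Substituting back yields $p(\xi)=Q(\gamma(\xi))$ globally on $\R^d$.

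To finish, a direct Fourier computation gives $\widehat{\Box^k\delta_0}=C_k\gamma^k$ for nonzero constants $C_k$ (depending on the chosen Fourier conventions). Writing $Q(X)=\sum_k a_k X^k$ as a finite sum and setting $b_k:=a_k/C_k$, I obtain $T=\sum_k b_k\Box^k\delta_0$ with only finitely many nonzero $b_k$, as claimed. The main obstacle in this plan is the invariant-theoretic step --- more precisely, realising that the seemingly partial invariance information available (rotations together with a single boost family) is already strong enough to pin the polynomial down to a function of $\gamma$, which is what makes the explicit light-cone monomial computation the decisive ingredient and the assumption $d\geq 3$ essential for reducing the spatial dependence to $|\vec{\xi}|^2$.
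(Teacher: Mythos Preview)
Your argument is correct, but there is nothing to compare against: the paper does not prove this statement. It is quoted verbatim as Th\'eor\`eme~1 of \cite{M1954} and used as an input, without any proof sketch in the text. Your approach --- reducing via the Fourier transform to the classification of $\sLL$-invariant polynomials, then using $SO(d-1)$-invariance together with a single boost in light-cone coordinates --- is a standard and clean way to establish the result, and it is consistent with the paper's standing hypothesis $d\geq 3$ in the prototype section. One minor remark: your closing comment that $d\geq 3$ is ``essential'' slightly overstates things; in $d=2$ the group $\sLL$ is already generated by the $01$-boosts alone, so your light-cone argument applies directly without the rotation step. But within the paper's setting this is immaterial.
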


There are two immediate consequences: First, every $\sLL$-invariant distribution $T$ supported in $C$ has to be of the form
\begin{align} \label{LorentzInvariantDistribution}
 T = \sum_{k=0}^\infty \left(a^+_k\cdot \gamma_+^*\delta_0^{(k)} + a^-_k\cdot \gamma_-^*\delta_0^{(k)} + b_k\cdot\Box^k\delta_0\right),
\end{align}

where only finitely many coefficients $a_k^\pm,b_k$ are non-zero. Recall that $\gamma_\pm^*\delta_0^{(k)}[\f] = \big((\gamma_\pm)_*\f\big)^{(k)}(0)$ and the push-forward along a submersion is given by integration along the fibers. This leads to the second consequence, which is that every $\sLL$-invariant measure supported on $C_\pm$ is of the form $a\df\Omega^0_\pm + b\delta_0$ (section IX.8 of \cite{RS1975}), where \vspace{-5mm}

\begin{align} \label{OmegaMeasure}
 \df\Omega^0_\pm[\f] = \int_{\R^{d-1}} \frac{\f\big(\pm\|\hat{x}\|,\hat{x}\big)}{2\|\hat{x}\|} ~ \df\hat{x}, \qquad \f\in\D(\R^d).
\end{align}

The most known examples of $\sLL$-invariant supported on $C_\pm$ are certain Riesz distributions (not all of them) and in fact, there are no other: For all $\alpha\in\C$ with $\RT{\alpha}>d$, they are defined as continuous functions via
\begin{align} \label{RieszDef}
 R_\pm^\alpha(x) = \left\{ \begin{array}{cl} 2C(\alpha,d)\cdot \gamma(x)^{\frac{\alpha-d}{2}}, \qquad & x \in J_\pm, \\[2mm] 0, & \text{otherwise}, \end{array} \right., \qquad C(\alpha,d) = \frac{2^{-\alpha}\pi^{\frac{2-d}{2}}}{\Gamma\left(\frac{\alpha}{2}\right) \Gamma\left(\frac{\alpha-d}{2}+1\right)}.
\end{align}

One directly checks holomorphicity in $\alpha$ and calculates $\Box R^{\alpha+2}_\pm = R^\alpha_\pm$, which provides an extension as distributions to all of $\C$ via $R^\alpha_\pm := \Box^k R^{\alpha+2k}_\pm, ~k>\frac{d}{2}-\RT{\alpha}$. Moreover, as shown in section 13.2 of \cite{DK2010}, we have $R^0_\pm=\delta_0$ and \vspace{-3mm}
\begin{align} \label{RieszDelta}
 \gamma_\pm^*\delta_0^{(k)} = \frac{\Gamma\left(\frac{d}{2}-(k+1)\right)}{2^{2k+3-d}\pi^{\frac{d-2}{2}}} \cdot R^{d-2(k+1)}_\pm\Big|_{J_\mp^c}, \qquad k\in\N_0.
\end{align}

Therefore, due to (\ref{LorentzInvariantDistribution}), every $\sLL$-invariant distribution $T$ supported on $C$ is given by a linear combination of Riesz distributions:
\begin{align} \label{LorentzInvariantDistributionRiesz}
 T = \sum_{k=0}^\infty \left(\lambda_k^+\cdot R_+^{d-2(k+1)} + \lambda_k^-\cdot R_-^{d-2(k+1)} + \beta_k\cdot R_\pm^{-2k}\right).
\end{align}

The properties of the Riesz distributions combined with the uniqueness of $G_A,G_R$ reveal $R_\pm^2$ as the advanced and retarded fundamental solution for $\Box$ on $\R^d_{\mathrm{Mink}}$. In order to identify $G_F,G_{aF}$, we introduce the functions $(\gamma\pm i0)^\alpha$ for $\RT{\alpha}>0$, where $\pm i0$ stands for the direction from which the branch cut along $\R_{<0}$ is approached. A meromorphic extension to all of $\C$ as distributions is given by
\begin{align} \label{gammapmi0}
 (\gamma\pm i0)^\alpha = \frac{\Box^k (\gamma\pm i0)^{\alpha+k}}{4^k \prod_{j=1}^k (\alpha+j)\big(\alpha+j+\frac{d-2}{2}\big)},
\end{align}

which is in fact holomorphic on all of $\big\{\RT{\alpha}>-\frac{d}{2}\big\}$, and we calculate $\Res{\alpha}{-\frac{d}{2}} (\gamma\pm i0)^\alpha = (\mp i)^{d-1} \frac{\pi^{\frac{d}{2}}}{\Gamma\left(\frac{d}{2}\right)} \cdot \delta_0$ (see section III.3 of \cite{GS1967} for proofs). This already ensures all crucial properties:

\begin{Prop} \label{GammaPMi0Family} 
 The distributions (\ref{gammapmi0}) are symmetric and $\sLL$-invariant. Moreover, for all natural numbers $m<\frac{d}{2}$, we have
 \begin{align} \label{GammaLog}
  (\gamma\pm i0)^{-m} = \frac{(-1)^{m-1} ~ \Gamma\left(\frac{d}{2}-m\right)}{4^m ~ \Gamma(m) \Gamma\left(\frac{d}{2}\right)}\cdot\Box^m \log(\gamma\pm i0),
 \end{align}
 
 and for $d>2$, fundamental solutions for $\Box$ are given by
 \begin{align} \label{Spm} 
  S_\pm := (\pm i)^{d+1}\frac{\Gamma\left(\frac{d-2}{2}\right)}{4\pi^{\frac{d}{2}}} \cdot \left(\gamma \pm i0\right)^{\frac{2-d}{2}}.
 \end{align}
\end{Prop}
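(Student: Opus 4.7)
The plan is to treat the three claims in turn, leveraging the meromorphic structure of the holomorphic family $\alpha\mapsto(\gamma\pm i0)^\alpha$ together with the recursion packaged in (\ref{gammapmi0}).

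Symmetry and $\sLL$-invariance I would establish first at the function-theoretic level: for $\RT{\alpha}>0$ the distribution is represented by a locally integrable function of $x$ depending only on $\gamma(x)$, which is manifestly invariant under $x\mapsto -x$ and under $\sLL$, the latter by the definition of the orthochronous Lorentz group as the isometries preserving $\bbraket{\cdot}{\cdot}$. Since $\Box$ commutes with both operations, the defining recursion (\ref{gammapmi0}) propagates these properties to the entire domain of holomorphy; equivalently, both are closed conditions in $\D(\R^d)'$ preserved by the analytic continuation in $\alpha$.

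For the logarithmic identity (\ref{GammaLog}), I would Taylor-expand the family at $\alpha=0$, writing $(\gamma\pm i0)^\alpha=1+\alpha\log(\gamma\pm i0)+O(\alpha^2)$. Specializing (\ref{gammapmi0}) to $k=1$ yields the recursion
\begin{align*}
 \Box(\gamma\pm i0)^\alpha = 4\alpha\Big(\alpha+\tfrac{d-2}{2}\Big)(\gamma\pm i0)^{\alpha-1},
\end{align*}
which iterates to a product formula for $\Box^m(\gamma\pm i0)^\alpha$ that carries an overall factor of $\alpha$ from the $j=0$ term in the iterated product. Applying $\Box^m$ to the Taylor expansion and matching the coefficients linear in $\alpha$ then identifies $\Box^m\log(\gamma\pm i0)$ as an explicit constant multiple of $(\gamma\pm i0)^{-m}$; the constant rearranges into the prefactor claimed in (\ref{GammaLog}) via $\prod_{j=0}^{m-1}\big(\tfrac{d-2}{2}-j\big)=\Gamma(d/2)/\Gamma(d/2-m)$ and $\prod_{j=1}^{m-1}(-j)=(-1)^{m-1}\Gamma(m)$.

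The fundamental-solution claim (\ref{Spm}) follows from the same recursion at $\alpha=(2-d)/2$, where the factor $\alpha+\tfrac{d-2}{2}$ has a simple zero while $(\gamma\pm i0)^{\alpha-1}$ carries the leading simple pole of the family at $\alpha-1=-\tfrac{d}{2}$. I would regularize by $\alpha=\tfrac{2-d}{2}+\varepsilon$ and let $\varepsilon\to 0$: the vanishing factor $\varepsilon$ pairs with $(\gamma\pm i0)^{-d/2+\varepsilon}$ to pick out the stated residue $(\mp i)^{d-1}\pi^{d/2}/\Gamma(d/2)\cdot\delta_0$, which exhibits $\Box(\gamma\pm i0)^{(2-d)/2}$ as an explicit multiple of $\delta_0$. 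Normalizing by the prefactor in (\ref{Spm}) and simplifying with $\Gamma(d/2)=\tfrac{d-2}{2}\Gamma\big(\tfrac{d-2}{2}\big)$ and the phase identity $(\pm i)^{d+1}(\mp i)^{d-1}=-1$ yields $\Box S_\pm=\delta_0$. This last step is the only delicate point: the nominal product $0\cdot\infty$ at $\alpha=(2-d)/2$ must be read off from the analytic structure of the family rather than from a pointwise manipulation, whereas the invariance statement and the logarithmic identity are essentially formal once the meromorphic continuation and the recursion are in place.
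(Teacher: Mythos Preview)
Your proposal is correct and follows essentially the same route as the paper. The only cosmetic difference is in the logarithmic identity: you match the coefficient of $\alpha$ in the Taylor expansion of $\Box^m(\gamma\pm i0)^\alpha$ at $\alpha=0$, whereas the paper writes $(\gamma\pm i0)^{-m}=\lim_{\alpha\to -m}(\gamma\pm i0)^\alpha$ via (\ref{gammapmi0}) with $k=m$ and resolves the resulting $0/0$ by taking $\tds{}{\alpha}{\alpha=-m}(\gamma\pm i0)^{\alpha+m}=\log(\gamma\pm i0)$; this is the same computation viewed from the other side of the recursion. For (\ref{Spm}) both arguments extract $\Res{\alpha}{-d/2}(\gamma\pm i0)^\alpha$ by pairing the simple zero of $\alpha+\tfrac{d}{2}$ against the simple pole of the family, exactly as you describe.
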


\begin{samepage}
\begin{proof}
 Symmetry and $\sLL$-invariance is obvious for $\RT{\alpha}>0$ and thus on all of $\C$, it follows from meromorphicity and the identity theorem. Holomorphicity on $\big\{\RT{\alpha}>-\frac{d}{2}\big\}$ ensures $(\gamma\pm i0)^{-m} = \lm{\alpha}{-m} (\gamma\pm i0)^\alpha$, which leads to
 \begin{align*}
   (\gamma\pm i0)^{-m} = \frac{\Box^m\tds{}{\alpha}{\alpha=-m}(\gamma\pm i0)^{\alpha+m}}{4^m \left(\frac{d}{2}-m\right) \hdots \frac{d-2}{2} ~ (-1)^{m-1}\prod^{m-1}_{j=1}(m-j)} = \frac{(-1)^{m-1} ~ \Gamma\left(\frac{d}{2}-m\right)}{4^m ~ \Gamma(m) \Gamma\left(\frac{d}{2}\right)}\cdot\Box^m \log(\gamma\pm i0).
 \end{align*}
 Particularly at $\alpha=\frac{2-d}{2}$, it implies
 \begin{gather*}
  \Box\left(\gamma \pm i0\right)^{\frac{2-d}{2}} = \lm{\alpha}{-\frac{d}{2}}\Box\left(\gamma \pm i0\right)^{\alpha+1} = 4\cdot\frac{2-d}{2} \underbrace{\lm{\alpha}{-\frac{d}{2}} \left(\alpha+\frac{d}{2}\right) \left(\gamma \pm i0\right)^\alpha}_{=\Res{\alpha}{-\frac{d}{2}} (\gamma \pm i0)^\alpha} = (\mp i)^{d+1} \frac{4\pi^{\frac{d}{2}}}{\Gamma\left(\frac{d-2}{2}\right)} \cdot \delta_0. \qedhere
 \end{gather*}%
\end{proof}
\end{samepage}

\pagebreak

Let us now come to the two-point-function in the prototype case, which is characterized by Wightman's axioms and d'Alembert's equation. In particular, the Hadamard condition corresponds to the spectral condition given by some constraint on the Fourier transform. Due to the symmetry conditions implied by these axioms, the two-point-function is completely determined by some $\sLL$-invariant distribution $W$ on $\RP^d$, which solves $\Box W=0$. The spectral condition then leads to the general form $W=a\df\widecheck{\Omega}_0^+,~a\in\R,$ (\cite{RS1975, S2000}) and we choose $a=(2\pi)^{\frac{2-d}{2}}$. Here $\widehat{f}$ denotes the Fourier transformation with $\bbraket{\cdot}{\cdot}$ used in the exponential, i.e. ordinary Fourier transformation in space and inverse Fourier transformation in time direction $e_0$, and $\widecheck{f}$ the corresponding inverse operation. The maps
\begin{align} \label{DeltaAnalyt}
 \Delta^\pm: \qquad \T_\pm \longrightarrow \C, \qquad z \longmapsto \pm \frac{i}{(2\pi)^{d-1}} \int_{C_\mp} e^{i\bbraket{z}{p}} \df\Omega^\mp_0(p)
\end{align}

are analytic in the complex forward/backward tube $\T_\pm := \left\{z\in\C^d \, \big| \, \IT{z}\in I_\pm\right\}$, they are $\sLL$-invariant and related via $\Delta^+(z) = -\Delta^-(-z),~z\in T_+$. Hence, for each $z\in \T_\pm$, we obtain $\Delta^\pm(z)=\Delta^\pm(x\pm i\e e_0)=:\Delta_\e(x)$ for some $x\in\R^d,~\e>0$, and one directly calculates $W=i\lm{\e}{0}\Delta_\e^-$ in the distributional sense, that is
\begin{align} \label{IntFormelW}
 W(x) = \frac{1}{(2\pi)^{d-1}}\lm{\e}{0}\int_{C_+} e^{i\bbraket{x-i\e e_0}{p}} \df\Omega^+_0(p).
\end{align}

Due to the inverse Cauchy Schwarz inequality, $\gamma$ maps $\T_\pm$ to $\C\backslash\R_{\geq0}$, and hence $\sqrt{\gamma(z)}\in\C\backslash\R$ for all $z\in \T_\pm$. Let $\sigma\colon \T_\pm \rightarrow \mathbb{H}_\pm:=\{\pm\IT{z}>0\}$ denote the square root of $\gamma$ with appropriately chosen sign such that again $\sLL$-invariance yields $\Delta^\pm(z) = \Delta^\pm\big(\sigma(z)e_0\big), ~ z\in \T_\pm$. In particular, $z=x\pm i\e e_0$ is mapped to
\begin{align} \label{GammaEpsilon}
 \sigma(x\pm i\e e_0) = \sgn(x_0) \sqrt{\gamma_\e^\pm(x)}, \qquad\qquad \gamma_\e^\pm(x) & := \gamma(x \pm i\e e_0) = \gamma(x) - \e^2 \pm 2i\e x_0,
\end{align}

which lets us evaluate the integrals $\Delta^\pm_\e$ and thus derive a local expression for $W$:
 \begin{align} \label{Deltapm}
  \Delta^\pm_\e = \pm\frac{i\Gamma\left(\frac{d-2}{2}\right)}{4\pi^{\frac{d}{2}}} \cdot \big(-\gamma^\pm_\e\big)^{\frac{2-d}{2}} \qquad \Longrightarrow \qquad W = \frac{\Gamma\left(\frac{d-2}{2}\right)}{4\pi^{\frac{d}{2}}} \cdot \lm{\e}{0}\big(-\gamma^-_\e\big)^{\frac{2-d}{2}}.
 \end{align}

It is a well-known procedure to extract the advanced and retarded fundamental solution $\Delta^A$ and $\Delta^R$: \\
We extract the causal propagator $\Delta^C := -2iW^a$ from the antisymmetric part of $W$, which turns out to be supported only in $J = \big\{\gamma(x)\geq 0\big\}$ and is a solution of d'Alembert's equation as well since $W^a = \Delta^+ + \Delta^-$. H\"ormander's criterion allows us to multiply $\Delta^C$ with the step function $H_0$ with respect to the time coordinate and integration by parts reveals $\Box\big(H_0\cdot\Delta^C\big)=-\delta_0$. Therefore, the aforesaid fundamental solutions are given by
\begin{align} \label{AvRetPropBox}
 \Delta^A := \left(1-H_0\right) \cdot \Delta^C, \qquad \Delta^R := -H_0 \cdot \Delta^C.
\end{align}

Due to their support properties, they are unique, so we directly conclude $\Delta^A = R^2_-$ and $\Delta^R = R^2_+$.\\
Following Radzikowski's results, the remaining two fundamental solutions are given by 
\begin{align} \label{ExtractFaF}
  \Delta^{F}  = iW + \Delta^A, \qquad \qquad \Delta^{aF} = -iW + \Delta^R,
\end{align}

for which one directly calculates $\Delta^F=S_-$ and $\Delta^{aF}=S_+$ outside of $C$. It follows that the distributions $\Delta^F-S_-$ and $\Delta^{aF}-S_+$ are symmetric and $\sLL$-invariant solutions of d'Alembert's equation. Moreover, their support is contained in $C$ and they are homogeneous of degree $2-d$, so they have to vanish also on the light cone according to our prior characterization (\ref{LorentzInvariantDistributionRiesz}). Combining both equations (\ref{ExtractFaF}) leads to the following form of the Wightman distribution, that is the Hadamard two-point-function in the prototype case:
\begin{align} \label{HadamardPrototype}
 W = \frac{i}{2}\big(S_+ - S_- + R^2_- - R^2_+\big).
\end{align}

\section{Local Construction}

\subsection{Families of Riesz-like distributions}

We proceed with the local construction of Hadamard bidistributions in a setting $(M,E,P)$ as in Definition \ref{GlobHypGreen} with $P$ a wave operator. Inspired by the local construction of the advanced and retarded parametrices for $P$ in \cite{G1988, BGP2007}, we introduce families of distributions similar to the Riesz distributions (\ref{RieszDef}) but containing $S_\pm$ instead, so for $\RT{\alpha}>0$, we introduce the distributions
\begin{align} \label{DefL}
 L^\alpha_\pm := C(\alpha,d) \cdot (\gamma \pm i0)^{\frac{\alpha-d}{2}}, \qquad\qquad C(\alpha,d) := \frac{2^{-\alpha} \pi^{\frac{2-d}{2}}}{\Gamma\left(\frac{\alpha}{2}\right)\Gamma\left(\frac{\alpha-d}{2}+1\right)}.
\end{align}

Very similar to $R^\alpha_\pm$, we have $\Box L^{\alpha+2}_\pm=L^\alpha_\pm$ leading to a holomorphic extension to all of $\C$ as distributions, for which then, moreover, analogous relations hold:

\begin{Prop} \label{PropertiesL}
 For all $\alpha\in\C$, we have
 \begin{itemize}
  \item[(1)] $\gamma \cdot L^\alpha_\pm = \alpha(\alpha-d+2) L^{\alpha+2}_\pm$,
  \item[(2)] $\grad \gamma \cdot L^\alpha_\pm = 2\alpha ~ \grad L^{\alpha+2}_\pm$,
  \item[(3)] $\Box L^{\alpha+2}_\pm = L_\pm^\alpha$,
  \item[(4)] $L^{d-2n}_\pm=0, \enspace n\in \N$,
  \item[(5)] $L^{d+2n}_+ = L^{d+2n}_-, \enspace n\in\N_0$,
  \item[(6)] if $\RT{\alpha}>0$, then $L^\alpha_\pm$ are distributions of order at most $\kappa_d:=2\cdot \left\lceil\frac{d}{2}\right\rceil$.
 \end{itemize}
\end{Prop}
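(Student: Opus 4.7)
My plan is to verify each identity first on the half-plane $\{\RT{\alpha}>d\}$, where $L^\alpha_\pm$ coincides with the classical continuous function $C(\alpha,d)(\gamma\pm i0)^{(\alpha-d)/2}$, and then to extend to all of $\C$ via the identity theorem for holomorphic families of distributions. A preparatory computation using $\Gamma(z+1)=z\Gamma(z)$ in both Gamma factors of $C(\alpha,d)$ gives the ratio identity
\begin{align*}
 \frac{C(\alpha,d)}{C(\alpha+2,d)} = \alpha(\alpha-d+2),
\end{align*}
which will serve as the main algebraic engine for (1)--(3).

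For (1) I would multiply $\gamma$ into $L^\alpha_\pm$ to raise the exponent by one and then absorb the difference of $C$-factors via the ratio identity. Property (2) is the same move applied after differentiating $(\gamma\pm i0)^{(\alpha+2-d)/2}$ on $\{\gamma\neq 0\}$: the factor $\grad\gamma$ is produced by the chain rule, and the prefactor $(\alpha+2-d)/2$ is again swallowed by the ratio identity. For (3) I would specialize (\ref{gammapmi0}) at $k=1$, which after rearranging reads $\Box(\gamma\pm i0)^{\beta+1} = 4(\beta+1)(\beta+\tfrac{d}{2})(\gamma\pm i0)^\beta$; plugging in $\beta=(\alpha-d)/2$ and combining once more with the ratio identity yields $\Box L^{\alpha+2}_\pm = L^\alpha_\pm$. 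In each of (1)--(3) both sides are holomorphic families in $\alpha$ on all of $\C$, so agreement for $\RT{\alpha}>d$ forces agreement everywhere by the identity theorem.

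Properties (4)--(6) then follow fairly quickly. For (4), iterate (3): since $L^d_\pm = C(d,d)\cdot (\gamma\pm i0)^0 = C(d,d)$ is a nonzero constant distribution, $L^{d-2n}_\pm = \Box^n L^d_\pm = 0$ for every $n\in\N$. For (5), at $\alpha=d+2n$ the exponent $(\alpha-d)/2=n$ is a non-negative integer, so $(\gamma\pm i0)^n=\gamma^n$ is a polynomial that is insensitive to the $\pm i0$ prescription, giving $L^{d+2n}_+ = L^{d+2n}_-$. For (6), I would specialize (\ref{gammapmi0}) to $k:=\lceil d/2\rceil$ so that for $\RT{\alpha}>0$ the exponent $(\alpha-d)/2+k$ has positive real part; then $(\gamma\pm i0)^{(\alpha-d)/2+k}$ is a continuous function and $\Box^k$ applied to a continuous function yields a distribution of order at most $2k = \kappa_d$. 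The most delicate step is (4): at $\alpha=d-2n$ both $C(\alpha,d)$ and $(\gamma\pm i0)^{(\alpha-d)/2}$ can carry singular behaviour in $\alpha$, so one must rely on the prior fact that their product $L^\alpha_\pm$, extended via (3), is an entire family of distributions, before one may safely conclude that $\Box^n$ applied to the constant $C(d,d)$ is genuinely zero rather than an ill-defined $0\cdot\infty$.
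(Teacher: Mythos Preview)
Your proposal is correct and follows essentially the same route as the paper: for (1)--(3) and (6) the paper simply refers to the analogous Proposition 1.2.4 in \cite{BGP2007}, which is exactly the ``verify for large $\RT{\alpha}$ and extend holomorphically'' argument you outline, and for (4) and (5) the paper gives precisely your arguments ($L^{d-2n}_\pm = \Box^n L^d_\pm = \Box^n C(d,d) = 0$ and $(\gamma\pm i0)^n = \gamma^n$). One small remark on (6): rather than invoking (\ref{gammapmi0}) with its potentially vanishing denominator, it is cleaner to iterate the already-proved (3) directly, writing $L^\alpha_\pm = \Box^k L^{\alpha+2k}_\pm$ with $k=\lceil d/2\rceil$, so that $L^{\alpha+2k}_\pm = C(\alpha+2k,d)(\gamma\pm i0)^{(\alpha+2k-d)/2}$ is manifestly continuous for $\RT{\alpha}>0$ without any cancellation to check.
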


\begin{proof}
 The proofs for (1) - (3) and (6) are similar to Proposition 1.2.4 of \cite{BGP2007}.
 (4): Due to (3), integration by parts yields
 $$L^{d-2n}_\pm[\f] = L^d_\pm\big[\Box^n\f\big] = C(d,d) \int_{\R^d}\big(\Box^n\f\big)(x)\df x =0, \qquad \f \in \D(\R^d),~n\in\N.$$
 (5): Follows from $(\gamma\pm i0)^n=\gamma^n$ for all $n\in\N_0$.
\end{proof}

\vspace{-2mm} With regard to (5), we omit the "$\pm$" if $\alpha=d+2n$. A crucial property of the Riesz distributions is $R_\pm^0=\delta_0$, which, due to (4), fails to be true for $L_\pm^\alpha$ in the even-dimensional case. However, it holds for odd $d$ because in that case, $\Gamma\left(\frac{d-2}{2}\right)\cdot\Gamma\left(\frac{4-d}{2}\right) = \frac{\pi}{\sin\left(\frac{d-2}{2}\pi\right)} = (-1)^{\frac{d+1}{2}}\pi$ leads to $L^2_\pm=S_\pm$. In even dimensions, we need a further family that is "more singular at even numbers", so for $\alpha\in\C\backslash\{\hdots,d-2,d,d+2,\hdots\}$, we introduce
\begin{align} \label{DefLTildeL}
 \widetilde{L}^\alpha_\pm := \frac{(\pm i)^{d-\alpha-1}}{\sin\left(\frac{d-\alpha}{2}\pi\right)} \cdot L^\alpha_\pm = \widetilde{C}(\alpha,d) \cdot (\gamma \pm i0)^{\frac{\alpha-d}{2}}, \qquad \widetilde{C}(\alpha,d):=\frac{(\pm i)^{d-\alpha-1}\Gamma\left(\frac{d-\alpha}{2}\right)}{2^\alpha~\pi^{\frac{d}{2}} ~\Gamma\left(\frac{\alpha}{2}\right)}.
\end{align}

Indeed, the zeros of $L^\alpha_\pm$ and the poles of the prefactor in (\ref{DefLTildeL}) compensate, and hence, $\widetilde{L}^\alpha_\pm$ exist as distributions for all $\alpha=d-2n,~n\in\N$, i.e. $\alpha\mapsto\widetilde{L}^\alpha_\pm[\f]$ are meromorphic functions with simple poles at $\alpha=d,d+2,\hdots$ for fixed $\f\in\D(\R^d)$. From the definition follows that (1), (2), (3), (6) of Proposition \ref{PropertiesL} also hold for $\widetilde{L}^\alpha_\pm$ and in addition, we have $\widetilde{L}^2_\pm = S_\pm$ and at all non-pole-integers
\begin{align} \label{LTildeInteger}
 \widetilde{L}^\alpha_\pm = \left\{\begin{array}{cl} L^\alpha_\pm, \qquad & d-\alpha \text{ odd}, \\[2mm] \pm\frac{i}{\pi}L^d \Box^n\log(\gamma\pm i0), \qquad & \alpha= d-2n,~n\in\N. \end{array}\right.
\end{align}

\begin{Rmk} \label{FundSolDimTwo}
 Also for $d=1$ and $d=2$, (\ref{DefL}) - (\ref{LTildeInteger}) provide the respective fundamental solutions: 
 $$L^2(x)=\frac{|x|}{2}, \qquad \qquad \widetilde{L}^2_\pm = \pm\frac{i}{2\pi}\log(\gamma\pm i0),$$
 such that $\widetilde{L}^0_\pm=\delta_0$ also in these cases.
\end{Rmk}

Following section 1.4 of \cite{BGP2007}, we now transfer the families $\{L_\pm^\alpha\}_{\alpha\in\C}, \{\widetilde{L}_\pm^\alpha\}_{\alpha\in\C\backslash\{d,d+2,\hdots\}}$ locally to $M$. For $p\in M$ and $\Omega\subset M$ geodesically starshaped with respect to $p$, let 
$$\Gamma_p(q) := \gamma\big(\exp_p^{-1}(q)\big), \qquad \mu_p(q) := \big|\det\big(\df\exp_p\big)\big|_{\exp_p^{-1}(q)}\big|, \qquad q\in\Omega,$$ 
denote the squared Lorentz distance to $p$ and the distortion function (section 1.3 of \cite{BGP2007}), which is related to the van-Vleck-Morette-determinant via $\Delta=\frac{1}{\mu}$. On $\Omega$, we define
\begin{align} \label{DefLPatch}
 L_\pm^\Omega(\alpha,p)[\f] := L_\pm^\alpha\left[(\mu_p \f) \circ \exp_p\right], \qquad \f\in\D(\Omega),
\end{align}

which provides holomorphic maps $\alpha\mapsto L_\pm^\Omega(\alpha,p)[\f]$ and, analogously, meromorphic maps \mbox{$\alpha\mapsto \widetilde{L}_\pm^\Omega(\alpha,p)[\f]$} with simple poles at $\alpha=d,d+2,\hdots$.

\begin{Prop} \label{PropLDomain}
 Let $p\in M$ and $\Omega\subset M$ be geodesically starshaped with respect to $p$. Then, for all $\alpha\in\C$, we have:
 \begin{itemize}
  \item[(1)] For $\RT{\alpha}>d$, the maps $p\mapsto L^\Omega_\pm(\alpha,p)$ are continuous on $\Omega$ and given by
  \begin{align} \label{ContLPatch}
   L^\Omega_\pm(\alpha,p) = C(\alpha,d) \left(\Gamma_p \pm i0\right)^{\frac{\alpha-d}{2}}.
  \end{align}
  \item[(2)] $\Gamma_p \cdot L^\Omega_\pm(\alpha,p) = \alpha(\alpha-d+2) \cdot L^\Omega_\pm(\alpha+2,p)$,
  \item[(3)] $\grad \,\Gamma_p \cdot L^\Omega_\pm(\alpha,p) = 2\alpha ~ \grad L^\Omega_\pm(\alpha+2,p)$,
  \item[(4)] $\Box L^\Omega_\pm(\alpha+2,p) = \left(\frac{\Box\Gamma_p -2d}{2\alpha}+1\right) \cdot L^\Omega_\pm(\alpha,p), \quad \alpha \neq 0$.
  \item[(5)] For $\RT{\alpha}>0$, (\ref{DefLPatch}) yield distributions of order at most $\kappa_d$. Moreover, there is an open neighborhood $U$ of $p$ and some $C>0$ such that
             $$\left|L^\Omega_\pm(\alpha,q)[\f]\right| \leq C \cdot \|\f\|_{C^{\kappa_d}(\Omega)}, \qquad q\in U, ~ \f\in\D(\Omega).$$
  \item[(6)] Let $U \subset \Omega$ be an open neighborhood of $p$ such that $\Omega$ is geodesically starshaped with respect to all $q\in U$. Furthermore, let $\RT{\alpha}>0$ and $V \in C^{\kappa_d+k}(U \times \Omega)$ such that $\supp V(q,\cdot)\subset\Omega$ is compact for all $q\in U$. Then $q \mapsto L^\Omega_\pm(\alpha,q)[V(q,\cdot)] \in C^k(U)$.
  \item[(7)] For all $\f \in C^k_c(\Omega)$, the map $\alpha \mapsto L^\Omega_\pm(\alpha,p)[\f]$ is holomorphic on $\left\{\RT{\alpha} > d-2\lfloor\frac{k}{2}\rfloor\right\}$.
 \end{itemize}
 On the domain of holomorphicity of $\widetilde{L}^\Omega_\pm(\alpha,p)$, the statements (1) - (7) remain true, when we replace $L^\Omega_\pm(\alpha,p)$ and $C(\alpha,d)$ by $\widetilde{L}^\Omega_\pm(\alpha,p)$ and $\widetilde{C}(\alpha,d)$.
 \begin{itemize}
  \item[(8)] For $d-\alpha$ an odd integer, we have $L^\Omega_\pm(\alpha,p)= \widetilde{L}^\Omega_\pm(\alpha,p)$.
  \item[(9)] $\widetilde{L}_\pm^\Omega(0,p)=\delta_p$.
  \item[(10)] For all $n\in \N$, we have $L^\Omega_\pm(d-2n,p)=0$ and $L^\Omega_+(d+2n-2,p)=L^\Omega_-(d+2n-2,p)$.
 \end{itemize}
\end{Prop}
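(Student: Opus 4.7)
The plan is to transfer each property from the flat distributions of Proposition \ref{PropertiesL} to the curved setting through the exponential chart. Since by definition $L^\Omega_\pm(\alpha,p)[\f]=L^\alpha_\pm\big[(\mu_p\f)\circ\exp_p\big]$, any identity involving $L^\alpha_\pm$ on $\R^d$ translates into one for $L^\Omega_\pm(\alpha,p)$ once one correctly accounts for the distortion $\mu_p$ and for the relation between the flat objects $\gamma,\grad\gamma,\Box$ and their curved analogues $\Gamma_p,\grad\Gamma_p,\Box$.

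For (1), a change of variables $x=\exp_p^{-1}(q)$ in the absolutely convergent integral representation of $L^\alpha_\pm$ with $\RT{\alpha}>d$ cancels the Jacobian against the prefactor $\mu_p$ and yields the closed form $C(\alpha,d)(\Gamma_p\pm i0)^{(\alpha-d)/2}$; continuity in $p$ is inherited from the smooth dependence of $\Gamma_p$ and $\mu_p$ on the basepoint. Statement (2) is immediate from Proposition \ref{PropertiesL}(1) because $\Gamma_p\cdot L^\Omega_\pm(\alpha,p)[\f]=L^\alpha_\pm\big[\gamma\cdot(\mu_p\f)\circ\exp_p\big]$. For (3) I would invoke the Gauss Lemma, which identifies $\grad\Gamma_p$ with the image under $d\exp_p$ of the radial Euler field on $T_pM$, so that the flat identity $\grad\gamma\cdot L^\alpha_\pm=2\alpha\grad L^{\alpha+2}_\pm$ pulls back verbatim. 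Statement (4) is then obtained by commuting $\Box$ through $\exp_p$ and collecting terms; the correction factor $\frac{\Box\Gamma_p-2d}{2\alpha}$ is produced precisely by the derivatives of $\mu_p$, exactly mirroring the classical computation carried out in \cite{BGP2007} for the Riesz family.

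The order bound in (5) reduces to the flat estimate of Proposition \ref{PropertiesL}(6) because the seminorm $\|(\mu_q\f)\circ\exp_q\|_{C^{\kappa_d}}$ is dominated by $C\|\f\|_{C^{\kappa_d}(\Omega)}$ with $C$ locally uniform in $q$. Statement (6) then follows by differentiating under the distribution: one shows inductively that derivatives in $q$ of $L^\Omega_\pm(\alpha,q)[V(q,\cdot)]$ equal $L^\alpha_\pm$ applied to finite sums of products of derivatives of $\mu_q$, $\exp_q$ and $V$, each of which is continuous in $q$ by (5). For (7), the key tool is Proposition \ref{PropertiesL}(3), which lets me rewrite $L^\Omega_\pm(\alpha,p)[\f]=L^{\alpha+2m}_\pm\big[\Box^m\big((\mu_p\f)\circ\exp_p\big)\big]$; choosing $m=\lfloor k/2\rfloor$ for $\f\in C^k_c(\Omega)$ produces a $C^{k-2m}_c$-function as argument and puts $\alpha+2m$ into the region $\RT{\alpha+2m}>d$, where the integrand is continuous in $\alpha$, so the right-hand side is manifestly holomorphic on $\{\RT{\alpha}>d-2\lfloor k/2\rfloor\}$.

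Statements (8) and (10) are immediate algebraic consequences of the corresponding flat identities --- (8) because the prefactor in (\ref{DefLTildeL}) reduces to $1$ when $d-\alpha$ is odd, and (10) because $L^{d-2n}_\pm=0$ and $L^{d+2n}_+=L^{d+2n}_-$ transfer through (\ref{DefLPatch}) verbatim. For (9) I would first establish $\widetilde L^0_\pm=\delta_0$ in all dimensions (directly from (\ref{LTildeInteger}) and Remark \ref{FundSolDimTwo} in low dimensions, and via $\Box\widetilde L^2_\pm=\widetilde L^0_\pm$ combined with $\widetilde L^2_\pm=S_\pm$ in general) and then use $\mu_p(p)=|\det(d\exp_p)_0|=1$ to conclude $\widetilde L^\Omega_\pm(0,p)[\f]=\mu_p(p)\f(p)=\f(p)$. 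The main obstacle throughout is (3)--(4), where the curvature of $M$ genuinely enters through $\mu_p$ and $\grad\Gamma_p$ and the bookkeeping between the flat identities and the correction terms must be carried out carefully; every other item is either a purely algebraic transfer or a standard analytic argument.
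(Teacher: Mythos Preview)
Your proposal is correct and follows essentially the same approach as the paper: the paper simply refers to Proposition 1.4.2 of \cite{BGP2007} for (1)--(7) and derives (8), (9), (10) from (\ref{LTildeInteger}), from $\widetilde{L}^0_\pm=\delta_0$ together with $\mu_p(p)=1$, and from Proposition \ref{PropertiesL}, respectively---which is precisely the skeleton you have fleshed out. Your sketch of the \cite{BGP2007} argument (change of variables for (1), Gauss Lemma for (3), the $\mu_p$-correction producing the $\Box\Gamma_p-2d$ term in (4), and the $\Box^m$-shift for (7)) is accurate.
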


\begin{proof}
 The proofs of (1) - (7) are similar to those of Proposition 1.4.2 of \cite{BGP2007} and by employing the respective properties of $\widetilde{L}^\alpha_\pm$, they also apply here. Moreover, (8) follows directly from (\ref{LTildeInteger}), (10) from \ref{PropertiesL} and (9) from $\widetilde{L}^0_\pm=\delta_0$ and $\mu_p(p)= \det(\id_{T_pM}) = 1$.
\end{proof}

\begin{Cor}
 Proposition \ref{PropLDomain} leads to the following relations:
\begin{itemize}
 \item For $d$ odd, we have $L_\pm^\Omega(0,p)=\delta_p$.
 \item For $\Omega$ moreover convex, i.e. starshaped w.r.t. all $p\in\Omega$, $\alpha\in\C$ and $u\in\D(\Omega\times\Omega)$, we have
 \begin{align} \label{LSymmetry}
  \int_\Omega L_\pm^\Omega(\alpha,p)[u(p,\cdot)] \df V(p) = \int_\Omega L^\Omega_\pm(\alpha,q)[u(\cdot,q)] \df V(q)
 \end{align}
 and similarly for $\widetilde{L}^\Omega_\pm(\alpha)$.
\end{itemize}
\end{Cor}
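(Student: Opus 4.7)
The first claim follows immediately by combining two items of Proposition~\ref{PropLDomain}: for odd $d$ the number $d-\alpha$ at $\alpha=0$ is an odd integer, so item~(8) gives $L^\Omega_\pm(0,p)=\widetilde L^\Omega_\pm(0,p)$, and item~(9) identifies the right-hand side with $\delta_p$.

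For the second claim, my plan is a two-step analytic continuation argument resting on a single geometric input: on a convex $\Omega$ the squared geodesic distance is symmetric, $\Gamma_p(q)=\Gamma_q(p)$. Indeed, if $c\colon[0,1]\to\Omega$ denotes the unique geodesic from $p$ to $q$, then $\exp_p^{-1}(q)=\dot c(0)$ and $\exp_q^{-1}(p)=-\dot c(1)$, and since $\gamma(\dot c(t))$ is constant along the geodesic while $\gamma$ is even in its argument, $\Gamma_p(q)=\gamma(\dot c(0))=\gamma(-\dot c(1))=\Gamma_q(p)$. In the first step I would verify the identity for $\RT\alpha>d$, where item~(1) represents $L^\Omega_\pm(\alpha,p)$ by the continuous function $C(\alpha,d)(\Gamma_p\pm i0)^{(\alpha-d)/2}$. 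Both sides of the claim then reduce to the same classical double integral of a continuous integrand against $u\in\D(\Omega\times\Omega)$, and the equality follows from Fubini together with the symmetry above. The analogous computation for $\widetilde L^\Omega_\pm$ is valid on the same half-plane, avoiding the poles at $\alpha=d,d+2,\ldots$.

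In the second step I would extend by (meromorphic) continuation in $\alpha$. For fixed $u\in\D(\Omega\times\Omega)$, item~(7) makes $\alpha\mapsto L^\Omega_\pm(\alpha,p)[u(p,\cdot)]$ entire for each $p$, and item~(6) yields smoothness in $p$ together with compact support inherited from $\pi_1(\supp u)$. To transport holomorphicity past the $p$-integration I would combine Morera's theorem with Fubini on contour integrals, the interchange justified by a locally uniform estimate $\sup_{p\in\Omega}|L^\Omega_\pm(\alpha,p)[u(p,\cdot)]|$ on compact subsets of the $\alpha$-plane. Item~(5) supplies this on $\{\RT\alpha>0\}$; an iteration of the recursion~(4) followed by integration by parts against $u(p,\cdot)$ reduces any other $\alpha$ to that regime. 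Both sides are then holomorphic in $\alpha$ and agree on $\{\RT\alpha>d\}$, so the identity theorem concludes. The same argument applies to $\widetilde L^\Omega_\pm$, now yielding meromorphic functions agreeing on a half-plane and therefore on $\C\setminus\{d,d+2,\ldots\}$.

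The main technical obstacle I anticipate is exactly this uniform estimate outside $\{\RT\alpha>0\}$: the recursion~(4) carries an apparent pole at $\alpha=0$ in the factor $\tfrac{\Box\Gamma_p-2d}{2\alpha}+1$, which must be navigated carefully when iterating, even though it is absent from the underlying distributional family by the holomorphic extension that defines them.
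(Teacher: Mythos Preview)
Your proposal is correct and follows the standard route; the paper itself gives no proof, treating the corollary as immediate from Proposition~\ref{PropLDomain} (the analogous statement for Riesz distributions is Lemma~1.4.3 of \cite{BGP2007}, whose proof is precisely your argument: verify the identity for $\RT\alpha>d$ via item~(1), Fubini and the symmetry $\Gamma_p(q)=\Gamma_q(p)$, then analytically continue). Your concern about the apparent pole in the recursion~(4) is a non-issue: rather than iterating~(4) on the manifold, simply return to the definition~\eqref{DefLPatch}, where $L^\Omega_\pm(\alpha,p)[\f]=L^{\alpha+2k}_\pm\big[\Box^k\big((\mu_p\f)\circ\exp_p\big)\big]$ with the \emph{flat} d'Alembertian, which carries no such pole and gives the required locally uniform bounds in $(p,\alpha)$ directly.
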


\subsection{The Hadamard series}

Let $E$ be a real vector bundle over $M$, $\Omega\subset M$ geodesically starshaped with respect to some $p\in \Omega$ and $P\colon  \CC(M,E) \rightarrow \CC(M,E)$ a wave operator. Adopting the approach pursued in section 5.2 of \cite{G1964}, we start the deduction of local expressions for Feynman and anti-Feynman parametrices for $P$ at $p$ by taking the following ansatz of a formal Hadamard series:

\begin{align} \label{HadamardSeries} 
 \L_\pm(p) := \left\{\begin{array}{cl}
                      \mathop{\sum}\limits_{k=0}^\infty U_p^k L^\Omega_\pm(2k+2,p) + \mathop{\sum}\limits_{k=\frac{d-2}{2}}^\infty W_p^k L^\Omega(2k+2,p), \quad & d \text{ odd},\\[5mm]
                      \mathop{\sum}\limits_{k=0}^{\frac{d-4}{2}} U_p^k\widetilde{L}^\Omega_\pm(2k+2,p) \pm \frac{i}{\pi} \mathop{\sum}\limits_{k=\frac{d-2}{2}}^\infty \big(U_p^k\log (\Gamma_p \pm i0) + W_p^k\big)L^\Omega(2k+2,p), \quad & d \text{ even},
                     \end{array}\right.
\end{align}

with coefficients $U_p^k,W_p^k \in \CC\left(\Omega,E_p^*\otimes E\right)$ yet to be determined. For $\f\in\D(\Omega,E^*)$, we identify $U_p^k\f, W_p^k\f$ with $E_p^*$-valued test functions (see section 2.1 of \cite{BGP2007}), so $\L_\pm(p)$ is (formally) understood as a distribution mapping $\D(\Omega,E^*)$ to the complexified fiber $E_p^*\otimes_\R\C$. Similar to the procedure in chapter 2 of \cite{BGP2007}, we determine $U_p^k,W_p^k$ by formally demanding $P\L_\pm(p) = \delta_p$, which, by imposing the initial condition $U_p^0(p)=\id_{E_p^*}$ and by means of (\ref{PCompConn}) and Corollary \ref{TechnicalLemma}, leads to the transport equations
\begin{align} 
 2k~PU_p^{k-1} & = \nabla_{\grad \Gamma_p}U_p^k - \left(\frac{1}{2}\Box\Gamma_p-d+2k\right)U_p^k, \qquad k\in\N_0, \label{TranspEqU}\\[2mm]
 2k~PW_p^{k-1} & = \left\{\begin{array}{cl}
                            \nabla_{\grad\Gamma_p}W_p^k - \left(\frac{1}{2}\Box\Gamma_p-d+2k\right)W_p^k, \qquad & k+\frac{1}{2}\in\N,~k \geq \frac{d}{2}, \\[2mm]
                            \nabla_{\grad\Gamma_p}W_p^k - \left(\frac{1}{2}\Box\Gamma_p +2k-d\right) W_p^k + \frac{2k~PU_p^{k-1}}{k-\frac{d-2}{2}} - 2 U_p^k, \quad & k\in\N,~k\geq \frac{d}{2}.
                          \end{array}\right. \label{TranspEqW}
\end{align}

For the full calculation, we refer to section A.2 in the appendix of this work.

\begin{Rmk}
 Note that there is no constraint on $W_p^{\frac{d-2}{2}}$, which is therefore free to choose. Hence, even if (\ref{HadamardSeries}) converges, the requirement $P\L_\pm(p)=\delta_p$ determines $\L_\pm(p)$ only up to smooth solutions $\mathop{\sum}\limits_{k=0}^\infty c_{k,d}(W_p^k - \widetilde{W}_p^k)~\Gamma^k$ with $W_p^k, \widetilde{W}_p^k$ arising from different choices of $W_p^{\frac{d-2}{2}}$.
\end{Rmk}

\begin{Prop} \label{HadamardCoefficients}
 Let $O\subset\Omega$ be a non-empty domain such that $\Omega$ is geodesically starshaped with respect to all $p\in O$. For any $W_{\frac{d-2}{2}}\in\CC(O\times\Omega,E^*\boxtimes E)$, there are unique and smooth solutions of (\ref{TranspEqU}) and (\ref{TranspEqW}) given by
 {\allowdisplaybreaks\begin{align}
   U_0(p,q) & = \frac{\Pi_q^p}{\sqrt{\mu(p,q)}}, \nonumber \\[2mm] 
   U_k(p,q) & = -kU_0(p,q) \int_0^1 t^{k-1} U_0\big(p,\phi_{pq}(t)\big)^{-1} \big(P_{(2)}U_{k-1}\big)\big(p,\phi_{pq}(t)\big) \df t, \qquad k\geq 1, \label{Uk} \\[2mm]
   W_k(p,q) & = -kU_0(p,q) \int_0^1 t^{k-1} U_0\big(p,\phi_{pq}(t)\big)^{-1} \widehat{W}_{k-1}\big(p,\phi_{pq}(t)\big) \df t, \qquad k \geq \frac{d}{2}. \label{Wk}
 \end{align}}%
 $\Pi^p_q\colon  E_p \rightarrow E_q$ denotes the $\nabla$-parallel transport, $\phi_{pq}\colon  [0,1] \rightarrow \Omega$ the unique geodesic connecting $p,q$ (\ref{GeodesicMap}) and
 \begin{align*} 
  \widehat{W}_{k-1} := \left\{\begin{array}{cl}
                                      P_{(2)}W_{k-1}, \qquad & k+\frac{1}{2}\in\N,~k \geq \frac{d}{2}, \\[2mm]
                                      P_{(2)}\left(W_{k-1} - \frac{U_{k-1}}{k-\frac{d-2}{2}}\right) + \frac{U_k}{k}, \quad & k\in\N,~k\geq \frac{d}{2}.
                                     \end{array}\right.
 \end{align*}
\end{Prop}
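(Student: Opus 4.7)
The proof proceeds by reducing the transport equations, which are first-order linear PDEs on $\Omega$ with singular behaviour at $p$, to ordinary differential equations along the radial geodesics $\phi_{pq}\colon[0,1]\to\Omega$ emanating from $p$. The key geometric ingredient is that, along such a geodesic, the vector field $\grad\Gamma_p$ is (up to a factor) the tangent field: concretely, if one parametrizes $q_t:=\phi_{pq}(t)=\exp_p\big(t\exp_p^{-1}(q)\big)$, then $\grad\Gamma_p\big(q_t\big)=2t\,\dot\phi_{pq}(t)$, so that for a section $V\in\CC(\Omega,E_p^*\otimes E)$ one has
\begin{equation*}
 \big(\nabla_{\grad\Gamma_p}V\big)\big(\phi_{pq}(t)\big)=2t\,\frac{\df}{\df t}\big(V\circ\phi_{pq}\big)(t),
\end{equation*}
after trivialising the $E$-factor by the parallel transport $\Pi^p_{\phi_{pq}(t)}$ along $\phi_{pq}$. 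Combined with the standard identity $\tfrac{1}{2}\Box\Gamma_p-d=\tfrac{1}{2}\,\tfrac{\df}{\df t}\log\mu_p$ along $\phi_{pq}$ (cf.\ Proposition~2.3.1 of \cite{BGP2007}), each transport equation becomes a scalar first-order linear ODE in $t$ for the parallel-transported coefficient.

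First I would treat the equation for $U_0$, i.e.\ (\ref{TranspEqU}) with $k=0$. The above reduction turns it into a homogeneous ODE whose solution, subject to the initial condition $U_0(p,p)=\id_{E_p^*}$, is read off directly as $U_0(p,q)=\Pi_q^p/\sqrt{\mu(p,q)}$; one verifies this by differentiating and using $\tfrac{\df}{\df t}\mu_p\circ\phi_{pq}=\mu_p\circ\phi_{pq}\cdot(\Box\Gamma_p-2d)\circ\phi_{pq}/(2t)$. Next, for $k\geq 1$, I would use $U_0$ as the integrating factor: setting $U_k=U_0\cdot V_k$ with $V_k$ parallel-transported along $\phi_{pq}$ absorbs the $(\tfrac{1}{2}\Box\Gamma_p-d)$-term and the inhomogeneous equation (\ref{TranspEqU}) collapses to
\begin{equation*}
 \frac{\df}{\df t}\big(t^k V_k\big)(t)=t^{k-1}\cdot U_0\big(p,\phi_{pq}(t)\big)^{-1}\big(P_{(2)}U_{k-1}\big)\big(p,\phi_{pq}(t)\big)\cdot(-k),
\end{equation*}
whose integration from $0$ to $1$, combined with the regularity forcing $V_k(0)$-terms to vanish, yields the formula (\ref{Uk}). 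The recursion for $W_k$ ($k\geq d/2$) is handled by the same device, now taking $W_{(d-2)/2}$ as freely prescribed initial datum and the source term~$\widehat{W}_{k-1}$ as constructed in the statement; uniqueness in each step follows from the fact that the homogeneous ODE has only the trivial solution compatible with regularity at $t=0$.

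Smoothness in $(p,q)$ will be verified by induction on $k$: $U_0$ is smooth because $\exp$, parallel transport, and the van Vleck--Morette determinant $\mu$ are smooth on the geodesically starshaped set $O\times\Omega$ (with $\mu>0$ there), and each inductive step expresses $U_k$ (resp.\ $W_k$) as an integral over the compact interval $[0,1]$ of a smooth function of $(p,q,t)$, so differentiation under the integral is justified. Uniqueness on $O\times\Omega$ follows from the uniqueness of the ODE solutions along every radial geodesic, together with the fact that these geodesics foliate $\Omega$ from each $p\in O$.

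The main technical obstacle is the careful bookkeeping of the reduction to ODEs in the bundle-valued setting: one has to verify that $\grad\Gamma_p(q_t)=2t\dot\phi_{pq}(t)$ and the $\Box\Gamma_p$-identity hold in the precise form needed, interpret $\nabla_{\grad\Gamma_p}$ acting on sections of $E_p^*\boxtimes E$ in a way compatible with the parallel transport trivialization, and ensure that the integrable singularities at $t=0$ (from the $t^{k-1}$-factor) are compensated so that no boundary term survives; this is precisely the content of (\ref{PCompConn}) and Corollary~\ref{TechnicalLemma} referenced in the derivation of the transport equations, and the case distinction in~$\widehat{W}_{k-1}$ reflects the extra $(-U_{k-1}/(k-\tfrac{d-2}{2})+U_k/k)$-correction arising from the logarithmic term in (\ref{HadamardSeries}) for even $d$.
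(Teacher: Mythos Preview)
Your proposal is correct and follows precisely the method underlying the paper's proof: the paper simply observes that (\ref{TranspEqU}) and, for half-integer $k$, (\ref{TranspEqW}) coincide with the transport equations (2.3) of \cite{BGP2007}, invokes Proposition~2.3.1 there for the formulas and smoothness, and remarks that for integer $k$ the same argument applies with $P_{(2)}W_{k-1}$ replaced by $\widehat{W}_{k-1}$. What you have written is a faithful expansion of that cited argument---the radial-geodesic reduction, the use of $U_0$ as integrating factor, and the inductive smoothness check are exactly the content of Proposition~2.3.1 of \cite{BGP2007}.
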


\begin{proof}
 \vspace{-1mm} The transport equations (\ref{TranspEqU}) and for half-integer $k$ also (\ref{TranspEqW}) coincide with (2.3) of \cite{BGP2007}. Therefore, $U_k$ and for $k+\frac{1}{2}\in\N$ also $W_k$ are the Hadamard coefficients given by (\ref{Uk}) and (\ref{Wk}) due to Proposition 2.3.1 of \cite{BGP2007}. For integer $k$, we can apply the same proof for $W_k$ with $PW_p^{k-1}$ replaced by $\widehat{W}_p^{k-1}$ everywhere, for which the same procedure then leads to (\ref{Wk}).
\end{proof}

\begin{Cor} \label{SymmetricHCoeffLem}
 For $\Omega$ convex, $E$ equipped with a non-degenerate inner product and $P$ formally self-adjoint, we have symmetry of $U_k$ for all $k\in\N_0$ and, in case of symmetric $W_{\frac{d-2}{2}}$, of $W_k$ for half-integer $k\geq\frac{d}{2}$ in the sense of (\ref{SymmetricHCoeff}).
\end{Cor}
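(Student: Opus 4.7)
The plan is to argue by induction on $k$ using the explicit integral representations (\ref{Uk}) and (\ref{Wk}). For the base case, $U_0(p,q)=\Pi^p_q/\sqrt{\mu(p,q)}$, and the desired symmetry $U_0(p,q)=\Theta_p^{-1}U_0(q,p)^t\Theta_q$ reduces to two geometric facts. First, the connection $\nabla$ entering $P$ is metric-compatible with $\braket{\cdot}{\cdot}$ (a consequence of formal self-adjointness applied to the principal-part decomposition of a wave operator), so parallel transport is isometric in the sense $(\Pi^p_q)^t\Theta_q=\Theta_p\Pi^q_p$. Second, the distortion function is symmetric, $\mu(p,q)=\mu(q,p)$: by convexity of $\Omega$ the geodesic $\phi_{pq}$ can be reversed to $\phi_{qp}$ with the same absolute Jacobian determinant.

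For the inductive step I would introduce $V_k(p,q):=\Theta_p^{-1}U_k(q,p)^t\Theta_q$ and show that $V_k$ solves the transport equation (\ref{TranspEqU}) in $q$ with the same initial condition $V_0(p,p)=\id_{E_p^*}$; by the uniqueness statement in Proposition \ref{HadamardCoefficients} this forces $V_k=U_k$. The geometric inputs are $\Gamma_p(q)=\Gamma_q(p)$ and the observation that $\grad_q\Gamma_p$ is proportional to the tangent vector of $\phi_{pq}$ at its endpoint, so $\nabla_{\grad\Gamma_p}$ translates into differentiation along the reversed geodesic $\phi_{qp}$. The algebraic input is formal self-adjointness $P=\Theta^{-1}P^t\Theta$, which converts $P_{(2)}U_{k-1}(p,q)$ into the $\Theta$-conjugate of $P_{(1)}$ applied to $U_{k-1}(q,p)$; the induction hypothesis then rewrites this in terms of $V_{k-1}=U_{k-1}$, so both sides of (\ref{TranspEqU}) match. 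The same induction applies to $W_k$ for half-integer $k\geq d/2$ (the case relevant in odd dimensions), where (\ref{TranspEqW}) has the same shape as (\ref{TranspEqU}) with $\widehat{W}_{k-1}=P_{(2)}W_{k-1}$, and the symmetric choice of $W_{(d-2)/2}$ supplies the base case; the integer-$k$ branch is excluded from the corollary precisely because the extra coupling to $U_k$ there would require a separate analysis.

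The main obstacle I anticipate is the bookkeeping in the inductive step, specifically verifying the commutation identity
\[
 P_{(2)}\bigl[\Theta_p^{-1}U_{k-1}(q,p)^t\Theta_q\bigr]=\Theta_p^{-1}\bigl[P_{(1)}U_{k-1}\bigr](q,p)^t\Theta_q,
\]
which requires moving the pointwise $\Theta$-conjugation past both the zeroth-order and the second-order parts of $P$ and invokes metric compatibility of $\nabla$ once more. Once this identity is established, the substitution $t\mapsto 1-t$ in the geodesic integral of (\ref{Uk}) reverses $\phi_{pq}$ to $\phi_{qp}$ and, combined with the base-case symmetry of $U_0$, turns the integrand obtained for $V_k$ into the one defining $U_k$, closing the induction.
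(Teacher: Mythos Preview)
Your inductive strategy has a genuine gap. The substitution $t\mapsto 1-t$ in (\ref{Uk}) does reverse the geodesic, $\phi_{qp}(t)=\phi_{pq}(1-t)$, but it also turns the weight $t^{k-1}$ into $(1-t)^{k-1}$, which is not the weight appearing in the formula for $U_k(p,q)$. This asymmetry is intrinsic: the weight $t^{k-1}$ reflects the specific first-order structure of (\ref{TranspEqU}), and the transport equation for $U_k(q,\cdot)$ governs derivatives of $U_k(q,p)$ in $p$, not in $q$. So when you form $V_k(p,q)=\Theta_p^{-1}U_k(q,p)^t\Theta_q$ and try to verify (\ref{TranspEqU}) in the variable $q$, the only differential information at your disposal is in the wrong slot. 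Neither the integral formula nor the transport equation gives you direct access to $\nabla_{\grad_q\Gamma_p}V_k(p,q)$, and the commutation identity you single out as the ``main obstacle'' is not the real difficulty---it is this variable mismatch.

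The paper takes a completely different, indirect route (Appendix~A). It first observes that on a convex domain the advanced and retarded Green operators for a formally self-adjoint $P$ satisfy $G_\pm=\Theta G_\mp^t\Theta^{-1}$ by uniqueness; comparing the Hadamard series (\ref{RieszParametrix}) for the two sides shows that $\sum_k\big(\widetilde U_k(p,\cdot)-\widetilde V_k(\cdot,p)^t\big)R^\Omega_\pm(2k+2,p)$ is smooth (Proposition~\ref{DiffRieszSmoothProp}). Because $R^\Omega_\pm(2,p)$ is singular precisely on $C_\pm^\Omega(p)$, this forces $U_0(p,q)=V_0(q,p)^t$ for lightlike related $p,q$, and then inductively $U_k(p,q)=V_k(q,p)^t$ on the light cone. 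To pass from the light cone to all of $\Omega\times\Omega$, the paper approximates the metric by analytic metrics (Moretti's argument, Propositions~\ref{MetricApproxProp} and~\ref{AnalytAppr}), uses analyticity of the coefficients (Proposition~\ref{AnalyticHadamardCoefficients}) and the identity theorem, and handles the parity of $d$ by adding flat factors (\ref{HCoeffProduct}) via Lemma~\ref{EqLightlikeEverywhere}. The half-integer $W_k$ case then follows because those $W_k$ satisfy the same recursion as the $U_k$.
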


\begin{Rmk}
 Note that in the odd-dimensional case, $W_p^{\frac{d-2}{2}}=0$ leads to $W_p^k=0$ for all $k$, whereas, as a consequence of the coupling with $U_p^k$, for even dimensions, we have $W_p^k\neq0$, in general.\\
 More remarkably, the $W_k$'s fail to be symmetric in the even-dimensional case, i.e. for integer $k$, even if we chose $W_{\frac{d-2}{2}}$ to be symmetric. This phenomenon is closely related to the conformal trace anomaly, which indeed does not occur in odd-dimensional spacetimes \cite{W1978, W1994, DF2008}.
\end{Rmk}

\subsection{Local parametrices and Hadamard bidistributions}

From now on, let $\Omega\subset M$ always denote a convex domain, i.e. geodesically starshaped with respect to all $p\in\Omega$. We referred to (\ref{HadamardSeries}) as formal since in general, the series do not converge, and we merely employed it in order to extract the transport equations (\ref{TranspEqU}) - (\ref{TranspEqW}). Nevertheless, it leads to left parametrices by some well-known procedure \cite{F1975, G1988, BGP2007}, which smoothly cuts off $\L_\pm(p)$ away from its singular support and leads to convergent series on relatively compact domains. Due to the derivatives arising from the cut-off, this results in left parametrices for $P$ at $p$ rather than fundamental solutions. To be more precise, for $N>\frac{d}{2}$, some sequence $\{\e_k\}_{k\geq N}\subset(0,1]$ and $\sigma \in \D\big([-1,1],[0,1]\big)$ with $\sigma\big|_{\left[-\frac{1}{2},\frac{1}{2}\right]} \equiv 1$, we define
\begin{align} \label{Parametrices}
 \widetilde{\L}_\pm(p) := \left\{\begin{array}{cl}
                                  \mathop{\sum}\limits_{k=0}^\infty \widetilde{U}_p^k L^{\Omega}_\pm(2k+2,p) + \mathop{\sum}\limits_{k=\frac{d-2}{2}}^\infty\widetilde{W}_p^k L^{\Omega}(2k+2,p), \qquad & d \text{ odd}, \\[5mm]
                                  \mathop{\sum}\limits_{k=0}^{\frac{d-4}{2}} U_p^k \widetilde{L}^\Omega_\pm(2k+2,p) \pm \frac{i}{\pi} \mathop{\sum}\limits_{k=\frac{d-2}{2}}^\infty \big(\widetilde{U}_p^k \log(\Gamma_p \pm i0) + \widetilde{W}_p^k\big) L^\Omega(2k+2,p), \quad & d \text{ even},
                                 \end{array} \right.
\end{align}
where
\begin{align} \label{HadCoeffTilde}
 \widetilde{U}_k := \left\{\begin{array}{cl} U_k, \qquad & k<N, \\[2mm] \left(\sigma \circ \frac{\Gamma}{\e_k}\right) \cdot U_k, \qquad & k\geq N, \end{array}\right. \qquad \widetilde{W}_k := \left\{\begin{array}{cl} W_k, \qquad & k<N, \\[2mm] \left(\sigma \circ \frac{\Gamma}{\e_k}\right) \cdot W_k, \qquad & k\geq N. \end{array}\right.
\end{align}

\begin{Prop} \label{ParametricesProp}
 For any relatively compact domain $O\subset\Omega$ and any smooth choice of $W_{\frac{d-2}{2}}$, there is a sequence $\{\e_k\}_{k\geq N}\subset(0,1]$ such that (\ref{Parametrices}) yield well-defined distributions for all $p\in\Obar$, and
 \begin{itemize}
  \item[(i)] $\ssupp \left(\widetilde{\L}_\pm(p)\right) \subset C(p)$,
  \item[(ii)] $P\widetilde{\L}_\pm(p) = \delta_p + K_\pm(p,\cdot)$ with $K_\pm \in \CC(\Obar\times \Obar, E^*\boxtimes E)$,
  \item[(iii)] $p\mapsto\widetilde{\L}_\pm(p)[\f] \in \CC(\Obar,E^*)$ for all $\f \in \D(O,E^*)$,
  \item[(iv)] they are of order at most $\kappa_d$,
  \item[(v)] there is a constant $C>0$ such that $\big|\widetilde{\L}_\pm(p)[\f]\big|\leq C\|\f\|_{C^{\kappa_d}(O,E^*)}$ for all $p\in\Obar$ and $\f\in\D(O,E^*)$.
 \end{itemize}
\end{Prop}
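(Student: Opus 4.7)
The plan is to follow the blueprint of Proposition 2.5.1 in \cite{BGP2007}, where the analogous result is established for the advanced/retarded parametrices built from Riesz distributions, and to adapt the convergence argument to the more singular families $L^\Omega_\pm$ and $\widetilde{L}^\Omega_\pm$. The underlying mechanism is that for $k$ sufficiently large the distribution $L^\Omega_\pm(2k+2,p)$ is represented by a $C^{m(k)}$-function with $m(k)\to\infty$, as one sees from the explicit form (\ref{DefL}) combined with the differential relations (2)-(4) of Proposition \ref{PropLDomain}. Consequently, each summand of (\ref{Parametrices}) with $k\geq N$ can be paired against a test section and estimated in terms of a fixed $C^{\kappa_d}$-norm of the cut-off coefficient $\widetilde U^k_p$ (resp.\ $\widetilde W^k_p$) times the test section.

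The first step is to fix the relatively compact subdomain $O\subset\Omega$. Using the estimates from Proposition \ref{PropLDomain}(5),(6) together with the Leibniz rule applied to $\sigma\circ(\Gamma/\e_k)$, one obtains on $\Obar$ bounds of the schematic form
$$\bigl\|\widetilde{U}_k\bigr\|_{C^m(\Obar\times\Obar)} \leq c_m\cdot\e_k^{-m}\cdot\bigl\|U_k\bigr\|_{C^m(\Obar\times\Obar)},$$
and analogously for $\widetilde W_k$, with the right-hand sides finite by smoothness of $U_k$ and $W_k$ (Proposition \ref{HadamardCoefficients}). Choosing $\e_k\downarrow 0$ by a diagonal construction, small enough that $\e_k$ compensates the factors $\e_k^{-m}$ for all $m\leq k-N$, forces absolute and uniform convergence of the series defining $\widetilde{\L}_\pm(p)[\f]$ together with every finite number of its derivatives in $p$. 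This simultaneously yields (iii), (iv) and (v).

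Property (i) then follows immediately: on any open subset of $\Obar$ disjoint from $C(p)$, each factor $L^\Omega_\pm(2k+2,p)$ is smooth by the analytic continuation of (1) of Proposition \ref{PropLDomain}, so the uniformly convergent series represents a $\CC$-section there. For (ii), applying $P$ term by term and using the transport equations (\ref{TranspEqU}), (\ref{TranspEqW}) together with (2)-(4) of Proposition \ref{PropLDomain} telescopes into $\delta_p$ from the leading $k=0$ coefficient $U^0_p(p)=\id_{E^*_p}$, plus error terms coming from the action of $P$ on the cut-offs $\sigma\circ(\Gamma_p/\e_k)$. Since $\sigma\circ(\Gamma_p/\e_k)\equiv 1$ on a neighborhood of $C(p)$, these errors are supported in $\{\e_k/2\leq|\Gamma_p|\leq\e_k\}$, where $L^\Omega_\pm(2k+2,p)$ is already smooth, and the same $\e_k$-estimates as above ensure that their sum converges to a smooth kernel $K_\pm\in\CC(\Obar\times\Obar, E^*\boxtimes E)$.

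The main technical obstacle is the simultaneous bookkeeping needed to obtain convergence not just distributionally at each fixed $p$ but uniformly in $p$ together with all $p$-derivatives, as required by (iii) and by the smoothness of $K_\pm$. The $p$-dependence enters both through the Hadamard coefficients $U_k, W_k$ and through the cut-off $\sigma\circ(\Gamma_p/\e_k)$, whose derivatives blow up like $\e_k^{-m}$; organizing the diagonal choice of $\e_k$ so that this blow-up is tamed at every finite order simultaneously, while retaining compatibility between the two families in the even-dimensional case (where the log-term $\log(\Gamma_p\pm i0)$ introduces extra singularities that must also be handled), is the technical heart of the argument, and is precisely the role played by the diagonal cut-off scales in \cite{BGP2007}.
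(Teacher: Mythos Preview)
Your approach is correct and is essentially the one the paper takes: invoke the arguments of \cite{BGP2007} (more precisely Lemmas 2.4.1--2.4.4 and Proposition 2.4.6 there, rather than Proposition 2.5.1) for the $L^\Omega_\pm$- and $W_k$-parts, and supply additional estimates only for the logarithmic term $\log(\Gamma_p\pm i0)$ in even dimensions, which the paper carries out in its appendix. One point of presentation deserves sharpening: your displayed bound $\|\widetilde U_k\|_{C^m}\leq c_m\,\e_k^{-m}\|U_k\|_{C^m}$ by itself gets \emph{worse} as $\e_k\to 0$, so ``choosing $\e_k$ small enough to compensate $\e_k^{-m}$'' is not what is happening; the compensation comes from the factor $(\Gamma_p\pm i0)^{k-\frac{d-2}{2}}$ carried by $L^\Omega_\pm(2k+2,p)$, which on $\supp\sigma(\Gamma/\e_k)\subset\{|\Gamma_p|\leq\e_k\}$ contributes an extra $\e_k^{\,k-\frac{d-2}{2}}$ and hence a net positive power of $\e_k$ once $k\geq N+m$, after which one chooses $\e_k$ small to make the whole term $\leq 2^{-k}$.
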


The proofs of Lemma 2.4.1 - 2.4.4 of \cite{BGP2007} only employ smoothness of the Hadamard coefficients and $\sigma\left(\frac{\Gamma(p,q)}{\e_k}\right)=0$ if $|\Gamma(p,q)|\geq\e_k$, so replacing there $R^\Omega_\pm$ by $L^\Omega_\pm$ proves the Proposition in the odd-dimensional case. Similarly, we obtain convergence in $\CC$ of the $W_k$-part in even dimensions. However, for the logarithmic terms, we have to adapt the corresponding estimates, which is of purely technical nature and therefore removed to the Appendix. Considering $\widetilde{\L}_\pm,K_\pm$ as Schwartz kernels, we extract the corresponding operators
\begin{align}
 \widetilde{\LL}_\pm\colon \qquad & \D(O,E^*)  \longrightarrow \CC(\Obar,E^*), \qquad \f \longmapsto \big(p\mapsto\widetilde{\L}_\pm(p)[\f]\big), \label{LTildeOperator}\\[2mm]
 \KK_\pm\colon            \qquad & C^0(O,E^*) \longrightarrow \CC(\Obar,E^*), \qquad u  \longmapsto \left(p\mapsto\int_{\Obar} K_\pm(p,q)u(q) \df V(q)\right), \label{KIntOp}
\end{align}

which are bounded due to compactness of $\Obar$. 

Let $E$ be equipped with some non-degenerate inner product and $P$ be formally self-adjoint, which implies symmetry of the Hadamard coefficients (Theorem \ref{HadCoeffSymm}). The aim of the rest of the section is to show that then, for all choices involved in Proposition \ref{ParametricesProp}, the corresponding operators (\ref{LTildeOperator}) represent anti-Feynman and Feynman parametrices for $P^t$ in the sense of Duistermaat-H{\" o}rmander.

\begin{Cor} \label{DiffParametricesSmoothing}
 Let $\widetilde{\LL}_\pm$ and $\widetilde{\LL}'_\pm$ be the operators (\ref{LTildeOperator}) arising from two different choices of $N,W_{\frac{d-2}{2}},O$ and $\{\e_k\}_{k\in\N}$. Then $\widetilde{\LL}_\pm - \widetilde{\LL}'_\pm$ is a smoothing operator on $O\cap O'$.
\end{Cor}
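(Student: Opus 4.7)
The plan is to perform a term-by-term comparison of the two cutoff Hadamard series on the intersection $O\cap O'$, to show that every non-vanishing contribution to the difference is smooth, and then to invoke the $\CC$-convergence of the tails built into Proposition~\ref{ParametricesProp} (together with the analogous even-dimensional estimates referred to in its proof) to conclude that the Schwartz kernel of $\widetilde{\LL}_\pm - \widetilde{\LL}'_\pm$ is smooth on $(O\cap O')\times(O\cap O')$. A detour via $P(\widetilde{\LL}_\pm - \widetilde{\LL}'_\pm) = \KK_\pm - \KK'_\pm$ (smoothing, by the parametrix property) would not by itself suffice, since $P$ is hyperbolic rather than elliptic, so I work directly on the level of the series.

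The first observation is that the genuinely singular range $k<\tfrac{d-2}{2}$ is untouched by either cutoff: the smoothing modification $\sigma(\Gamma/\e_k)$ acts only for $k\geq N > \tfrac{d}{2} > \tfrac{d-2}{2}$. Moreover, Proposition~\ref{HadamardCoefficients} shows that the coefficients $U_k$, and in the odd-dimensional case also the half-integer $W_k$, are uniquely determined by $U_0 = \Pi/\sqrt{\mu}$ and the transport equations, independently of the free datum $W_{\frac{d-2}{2}}$. Hence the two singular heads of $\widetilde{\L}_\pm$ and $\widetilde{\L}'_\pm$ are identically equal and cancel in the difference.

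All remaining differences live in the range $k\geq\tfrac{d-2}{2}$. The key geometric input is that $L^\Omega(2k+2,p)$ is a smooth section on $\Omega\times\Omega$ for every such $k$: by Proposition~\ref{PropertiesL}(5), $(\gamma\pm i0)^n$ reduces to the polynomial $\gamma^n$ for $n\in\N_0$, and the same applies to $L^\Omega_\pm(2k+2,p)$ in odd dimensions. The smooth differences $W_k-W'_k$ arising from different choices of $W_{\frac{d-2}{2}}$ therefore multiply smooth factors and contribute smoothly. The cutoff-mismatch contributions carry a factor $(\sigma(\Gamma/\e_k) - \sigma(\Gamma/\e'_k))U_k$, which vanishes on the open neighborhood $\{|\Gamma|<\tfrac{1}{2}\min(\e_k,\e'_k)\}$ of the light cone because $\sigma\equiv 1$ near the origin. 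Multiplied by the only surviving potentially-singular factor in the even-dimensional case, $\log(\Gamma\pm i0)$ -- whose singular support lies exactly in $\{\Gamma=0\}$ -- and by the smooth $L^\Omega(2k+2,p)$, it still produces a smooth product.

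It remains to promote the pointwise smoothness of each summand to smoothness of the sum. This is precisely what the choices of $\{\e_k\}$ and $\{\e'_k\}$ in Proposition~\ref{ParametricesProp} secure: both tails converge in every $C^r$-seminorm on $\overline{O}$ respectively $\overline{O'}$, so their difference converges in $\CC$ on $\overline{O\cap O'}$, and the Schwartz kernel of $\widetilde{\LL}_\pm - \widetilde{\LL}'_\pm$ is smooth as required. I expect the main obstacle to lie not in any single estimate but in the bookkeeping of the even-dimensional case, where logarithmic singularities, cutoff thresholds, and convergence estimates interact simultaneously; the whole argument pivots on the single geometric observation that cutoff differences are supported away from the light cone and therefore neutralise every log-type singularity they multiply.
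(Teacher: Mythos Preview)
Your proposal is correct and follows essentially the same approach as the paper. The paper's proof is a three-line affair: it observes that $\supp(\sigma_k-\sigma_k')\cap\Gamma^{-1}(0)=\emptyset$ (your ``cutoff differences are supported away from the light cone'') and then defers the $C^\infty$-convergence of the resulting series to Lemma~2.4.3 of \cite{BGP2007} and Lemma~\ref{ParametricesReg}. You have unpacked the same argument in more detail---explicitly separating the uncutoff head, the $W_{\frac{d-2}{2}}$-dependent contributions (which multiply smooth $L^\Omega(2k+2,p)$), and the cutoff-mismatch terms---but the logical skeleton is identical.
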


\begin{proof}
 The Schwartz kernels of these differences are given by the bidistributions
 $$(p,q) \longmapsto \big(\widetilde{\L}_\pm(p) - \widetilde{\L}'_\pm(p)\big)(q),$$
 which are smooth by Lemma 2.4.3 of \cite{BGP2007} and Lemma \ref{ParametricesReg} since $\supp(\sigma_k-\sigma_k')\cap\Gamma^{-1}(0)=\emptyset$ for all $k$.
\end{proof}

Note that in terms of the operators (\ref{LTildeOperator}), (\ref{KIntOp}), Proposition \ref{ParametricesProp} (iii) reads $\widetilde{\LL}_\pm P^t = \id + \KK_\pm$ with $\KK_\pm$ smoothing. Hence, $\widetilde{\LL}_\pm$ are left parametrices for $P^t$ and due to formal self-adjointness of $P$, they also provide right parametrices:

\begin{Prop} \label{ApprFundSolSymCor}
 For $P$ formally self-adjoint, the operators $\widetilde{\LL}_\pm$ define two-sided parametrices for $P^t$.
\end{Prop}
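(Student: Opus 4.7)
The goal is to upgrade the left parametrix identity $\widetilde{\LL}_\pm P^t = \id + \KK_\pm$ from Proposition \ref{ParametricesProp}(ii) to a right parametrix identity $P^t \widetilde{\LL}_\pm = \id + \KK'_\pm$ with $\KK'_\pm$ smoothing. My plan is to exploit formal self-adjointness $\Theta P = P^t \Theta$ together with the symmetry of the Hadamard coefficients, and to invoke Corollary \ref{DiffParametricesSmoothing} in order to pass to a convenient choice of the free datum $W_{(d-2)/2}$.

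As a first step, I would fix a \emph{symmetric} choice of $W_{\frac{d-2}{2}}$, for instance $W_{\frac{d-2}{2}}=0$. By Corollary \ref{DiffParametricesSmoothing}, any other admissible choice differs from this one by a smoothing operator on the common domain, and smoothing perturbations preserve both the left and the right parametrix property. Hence it suffices to establish the right parametrix property for this specific choice.

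For the symmetric choice, Corollary \ref{SymmetricHCoeffLem} provides fiberwise transposition symmetry of all Hadamard coefficients $U_k$ and $W_k$. The remaining factors entering the kernel $\widetilde{\L}_\pm(p,q)$ in (\ref{Parametrices}) -- namely the scalar bidistributions $L^\Omega_\pm, L^\Omega, \widetilde{L}^\Omega_\pm$ and the logarithmic terms $\log(\Gamma_p \pm i0)$, as well as the cutoffs $\sigma\circ(\Gamma/\e_k)$ -- are symmetric functions of $(p,q)$ on the convex domain $\Omega$, using $\Gamma_p(q) = \Gamma_q(p)$ and the identity (\ref{LSymmetry}). Consequently the full Schwartz kernel $\widetilde{\L}_\pm(p,q)$ inherits the transposition symmetry appropriate to sections of $E^*\boxtimes E$ under (\ref{FibersBoxBundle}), and by Proposition \ref{ParametricesProp}(ii) the smooth remainder $K_\pm$ inherits the same relation.

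I would then use $\Theta P = P^t \Theta$ to transpose the kernel identity $P_{(2)}\widetilde{\L}_\pm(p,\cdot) = \delta_p + K_\pm(p,\cdot)$: applying the symmetry of $\widetilde{\L}_\pm$ and $K_\pm$ swaps the two arguments and converts the second-variable action of $P$ into a first-variable action of $P^t$, yielding $P^t_{(1)}\widetilde{\L}_\pm(\cdot,q) = \delta_q + K'_\pm(\cdot,q)$ with $K'_\pm$ smooth. Translated back to operators, this reads $P^t\widetilde{\LL}_\pm = \id + \KK'_\pm$ on $\D(O,E^*)$, finishing the argument for the symmetric choice and therefore, via the first step, for every choice admitted by Proposition \ref{ParametricesProp}. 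The main technical obstacle I foresee is the bookkeeping of the fiber identifications in (\ref{FibersBoxBundle}) together with the interplay of transposition and the isomorphism $\Theta$: I need to verify that ``transpose in the kernel'' matches exactly with ``swap of $P_{(2)}$ and $P^t_{(1)}$'', since any misplaced dualization would invalidate the passage from left to right parametrix.
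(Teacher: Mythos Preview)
Your overall strategy --- use the symmetry of the Schwartz kernel to transpose the left parametrix identity into a right parametrix identity --- is exactly the route the paper takes. The gap is in your second step: Corollary \ref{SymmetricHCoeffLem} does \emph{not} give symmetry of all $W_k$. It only asserts symmetry of $W_k$ for half-integer $k\geq\tfrac{d}{2}$, i.e.\ in odd dimension. In even dimension the Remark immediately following that corollary states explicitly that the $W_k$ \emph{fail} to be symmetric even for a symmetric choice of $W_{\frac{d-2}{2}}$. Hence your claim that the full kernel $\widetilde{\L}_\pm$ is symmetric under $(p,q)\mapsto(q,p)$ breaks down for even $d$, and the transposition argument as written does not go through.

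The fix is precisely what the paper does: split $\widetilde{\LL}_\pm = \mathcal{U}_\pm + \mathcal{W}$, where $\mathcal{U}_\pm$ is the part built from the $\widetilde{U}_k$'s (and, in even $d$, the logarithmic and $\widetilde{L}^\Omega_\pm$-terms) and $\mathcal{W}$ collects the $\widetilde{W}_k$-contributions. The point is that the $\widetilde{W}_k$-series is a convergent power series in $\Gamma$ and hence represents a \emph{smooth} section of $E^*\boxtimes E$, so $\mathcal{W}$ is a smoothing operator regardless of whether the $W_k$ are symmetric. Thus $\widetilde{\LL}_\pm = \Theta\widetilde{\LL}_\pm^t\Theta^{-1} + (\mathcal{W}-\Theta\mathcal{W}^t\Theta^{-1})$ with the last term smoothing, and then the transposition you describe yields $P^t\widetilde{\LL}_\pm = \id + \text{(smoothing)}$. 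Once you make this observation, your appeal to Corollary \ref{DiffParametricesSmoothing} becomes unnecessary: no special choice of $W_{\frac{d-2}{2}}$ is needed.
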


\begin{proof}
 We just have to show that $\widetilde{\LL}_\pm$ yield right parametrices. From the symmetry properties of $L^\Omega_\pm(\alpha)$ and $\widetilde{U}_k$ ((\ref{LSymmetry}) and Theorem \ref{HadCoeffSymm}) directly follows 
 \begin{align} \label{Lfsadj}
  \int_O L^\Omega_\pm(2k+2,p)\big[\big(\widetilde{U}_k(p,\cdot)\f\big)\big(\psi(p)\big)\big] \df V(p) = \int_O L^\Omega_\pm(2k+2,q)\big[\widetilde{U}_k(q,\cdot)\Theta\psi\big(\Theta^{-1}_q\f(q)\big)\big] \df V(q)
 \end{align}
 for all $\f\in\D(O,E^*),\psi\in\D(O,E)$ and $k\in\N_0$. This works analogously for the logarithmic and $\widetilde{L}^\Omega_\pm$-terms in (\ref{Parametrices}). Furthermore, the series involving the coefficients $\widetilde{W}_k$ are given by convergent power series $\sum_{j=0}^\infty a_j \Gamma^j$, which yield smooth sections in $E^*\boxtimes E$. Altogether, we obtain the decomposition $\widetilde{\LL}_\pm = \mathcal{U}_\pm + \mathcal{W}$ with $\mathcal{U}_\pm$ representing the symmetric $\widetilde{U}_k$-part, i.e.\ $\mathcal{U}_\pm^t = \Theta^{-1}\mathcal{U}_\pm\Theta$, and $\mathcal{W}$ the smooth $\widetilde{W}_k$-part. Hence, $\widetilde{\LL}_\pm$ is symmetric up to smoothing in the sense
 \begin{align} \label{SymmetryUpToSmooth}
  \widetilde{\LL}_\pm = \Theta\widetilde{\LL}_\pm^t\Theta^{-1} + \underbrace{\mathcal{W} - \Theta\mathcal{W}^t\Theta^{-1}}_{\text{smoothing}},
 \end{align}
 from which we directly deduce the claim:
 \begin{align*}
  P^t\widetilde{\LL}_\pm & = P^t\big(\Theta\widetilde{\LL}_\pm^t\Theta^{-1} + \mathcal{W} - \Theta\mathcal{W}^t\Theta^{-1}\big) = \Theta P \widetilde{\LL}_\pm^t\Theta^{-1} + P^t\mathcal{W}^t - P^t\Theta\mathcal{W}^t\Theta^{-1} \\[2mm]
                        & = \Theta \big(\widetilde{\LL}_\pm P^t\big)^t\Theta^{-1} + \Theta P \big(\Theta^{-1}\mathcal{W} - \mathcal{W}^t\Theta^{-1}\big) = \id + \underbrace{\Theta \KK_\pm^t\Theta^{-1} + \Theta P \big(\Theta^{-1}\mathcal{W} - \mathcal{W}^t\Theta^{-1}\big)}_{\text{smoothing}}. \qedhere
 \end{align*}
\end{proof}

If $P$ is formally self-adjoint, then so is the operator 
\begin{align} \label{LDiff}
 \widetilde{\LL} := \frac{i}{2}(\widetilde{\LL}_+ - \widetilde{\LL}_-)
\end{align}

due to symmetry of its Schwartz kernel $\widetilde{\L} = \frac{i}{2}\big(\widetilde{\L}_+ - \widetilde{\L}_-\big)$ by (\ref{Lfsadj}).\\
So far, we found two-sided parametrices $\widetilde{G}_\pm, \widetilde{\LL}_\pm$ given by Hadamard series (\ref{RieszParametrix}), (\ref{Parametrices}), and \cite{SV2001} actually proved equivalence of the Hadamard condition (\ref{HConditionWF}) and that the bidistribution is given by a certain Hadamard series. This latter condition therefore allows us to express (\ref{TwoPointFunctionHadState}) in terms of $\widetilde{G}_\pm, \widetilde{\LL}_\pm$ by directly comparing the corresponding Hadamard series. More precisely, we confirm that $\frac{i}{2}\big(\widetilde{\L}_+ - \widetilde{\L}_- + \widetilde{G}_+ - \widetilde{G}_-\big)$ is a Hadamard bidistribution, which, up to smooth errors, moreover is a bisolution with the right antisymmetric part. By examining a further linear combination of parametrices, it will follow that $\widetilde{\L}_\pm$ represent a Feynman and an anti-Feynman parametrix.

\begin{Prop} \label{HadamardStructure}
 Let $O\subset \Omega$ be relatively compact, $P$ formally self-adjoint and $\widetilde{\Riesz}_\pm,\widetilde{\L}_\pm$ the bidistributions given by (\ref{RieszParametrix}) and (\ref{Parametrices}). Then, for
 \begin{align} \label{ApprHadamard}
  \widetilde{H} := \frac{i}{2}\big(\widetilde{\L}_+ - \widetilde{\L}_- + \widetilde{\Riesz}_- - \widetilde{\Riesz}_+\big),
 \end{align}
 the sections $P^t_{(1)}\widetilde{H},P_{(2)}\widetilde{H}$ are smooth, the antisymmetric part of $\widetilde{H}$ is given by $\frac{i}{2}\big(\widetilde{\Riesz}_- - \widetilde{\Riesz}_+\big)$ and $\widetilde{H}$ has the Hadamard singularity structure (\ref{HConditionWF}). Furthermore, 
 \begin{align} \label{SmoothLinComb}
  \widetilde{\L}_+ + \widetilde{\L}_- - \widetilde{\Riesz}_- - \widetilde{\Riesz}_+ \in \CC(O\times O, E^*\boxtimes E).
 \end{align}
\end{Prop}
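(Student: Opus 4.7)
All four assertions rest on the fact that both $\widetilde{\L}_\pm$ and $\widetilde{\Riesz}_\pm$ are parametrices built from the \emph{same} Hadamard coefficients $U_k, W_k$ of Proposition \ref{HadamardCoefficients}, symmetric by Corollary \ref{SymmetricHCoeffLem}. The smoothness of $P^t_{(1)}\widetilde{H}$ and $P_{(2)}\widetilde{H}$ is the easiest step and follows at once from Proposition \ref{ParametricesProp}(ii), applied in both arguments thanks to Proposition \ref{ApprFundSolSymCor}: the $\delta_p$-contributions from $P_{(2)}\widetilde{\L}_+$ and $P_{(2)}\widetilde{\L}_-$ cancel in the difference $\widetilde{\L}_+ - \widetilde{\L}_-$, and likewise for $\widetilde{\Riesz}_- - \widetilde{\Riesz}_+$, so only the smooth remainders survive.

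For the decomposition into (anti)symmetric parts I would treat the two pairs separately. In $\widetilde{\L}_\pm$ the $\pm$-dependence sits entirely in the building blocks $L^\Omega_\pm$, $\widetilde{L}^\Omega_\pm$ and in the branches $\log(\Gamma_p \pm i0)$, all of which are symmetric in $(p,q)$ as bidistributions by (\ref{LSymmetry}); together with the symmetry of $U_k$ and of $W_k$ at half-integer indices this forces $\widetilde{\L}_+ - \widetilde{\L}_-$ to be symmetric, with any failure caused by even-dimensional integer-$k$ $\widetilde{W}_k$-terms confined to the smooth part. On the other hand, the advanced/retarded construction of $\widetilde{\Riesz}_\pm$ gives the duality $\widetilde{\Riesz}_+(p,q) = \Theta_p^{-1}\widetilde{\Riesz}_-(q,p)^t\Theta_q$, so that $\widetilde{\Riesz}_- - \widetilde{\Riesz}_+$ is exactly antisymmetric. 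Reading off the symmetric and antisymmetric components of $\widetilde{H}$ then yields the claim.

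For the Hadamard singularity structure I would invoke the Sahlmann--Verch characterization \cite{SV2001}: modulo smooth, a bisolution satisfies (\ref{HConditionWF}) iff it is locally of the Hadamard-series form (\ref{HadamardSeries}) with the correct $\pm i0$-prescription. Since $\widetilde{H}$ is by construction of this form and $P^t_{(1)}\widetilde{H}, P_{(2)}\widetilde{H}$ are smooth by the first step, the wave front condition holds.

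The technical core, and the main obstacle, is the smoothness of (\ref{SmoothLinComb}). Because both sums use identical Hadamard coefficients, the question reduces to comparing the symmetric building-block sums $L^\Omega_+(\alpha) + L^\Omega_-(\alpha) - R^\Omega_+(\alpha) - R^\Omega_-(\alpha)$ (and their $\widetilde{L}^\Omega_\pm$- and logarithmic counterparts in even dimension). In the Minkowski prototype $R^\alpha_+ + R^\alpha_-$ equals $2C(\alpha,d)\gamma^{(\alpha-d)/2}$ on $\{\gamma>0\}$ and vanishes on $\{\gamma<0\}$, whereas $L^\alpha_+ + L^\alpha_- = 2C(\alpha,d)\,\RT{(\gamma+i0)^{(\alpha-d)/2}}$ reproduces the same value on $\{\gamma>0\}$ and contributes only a prefactor $\cos\!\big(\pi\tfrac{\alpha-d}{2}\big)|\gamma|^{(\alpha-d)/2}$ on $\{\gamma<0\}$; this cosine vanishes precisely when $d-\alpha$ is odd. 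For the remaining even cases the leftover piece is the one captured by (\ref{GammaLog})--(\ref{LTildeInteger}) and has to be matched against the logarithmic contributions of $\widetilde{\Riesz}_\pm$ in its Hadamard expansion. After transfer to $\Omega$ via $\exp_p$, multiplication by the shared coefficients and summation over $k$, the singular contributions cancel in pairs and only a convergent power series in $\Gamma$ remains, which is smooth. The delicate point throughout is the even-dimensional bookkeeping of logarithmic and integer-$k$ $\widetilde{W}_k$-terms, for which the odd-dimensional case offers no template, since there the cosine factor already kills everything off the lightcone.
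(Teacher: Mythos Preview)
Your first two steps (smoothness of $P_{(2)}\widetilde H,\,P^t_{(1)}\widetilde H$ and the identification of the antisymmetric part) match the paper's argument exactly. For the Hadamard wave front set you are also on the right track in invoking \cite{SV2001}; the paper does the same but carries out the matching explicitly, which is not entirely trivial.

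The substantive divergence, and the place where your plan has a genuine gap, is the smoothness of $\widetilde{\L}_+ + \widetilde{\L}_- - \widetilde{\Riesz}_+ - \widetilde{\Riesz}_-$. The paper's device is the factorisation (\ref{Kjk}): since $R^\Omega_\pm(2k+2)$, $L^\Omega_\pm(2k+2)$ and $\widetilde L^\Omega_\pm(2k+2)$ all differ from their $j$-th member only by the \emph{same} smooth factor $K_{k,j,d}\,\Gamma^{k-j}$, the entire Hadamard series collapses to a smooth power series in $\Gamma$ multiplied by a single ``seed'' distribution --- $A^\Omega(2)$ in odd dimension, and $A^\Omega(2),A^\Omega(d)$ in even dimension (see (\ref{ASingularParts})). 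These two seeds are then shown to be smooth directly: $A^\Omega(2)$ by the prototype identity (\ref{ExtractFaF}), which gives $S_+ + S_- = R^2_+ + R^2_-$ exactly, and $A^\Omega(d)$ by an explicit computation yielding the constant $-iC(d,d)$. The same factorisation is what makes the Hadamard-form check tractable, reducing it to comparing $H^\Omega(2),H^\Omega(d)$ with Sahlmann--Verch's $G^{(1)},G^{(2)}$.

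Your term-by-term comparison on $\{\gamma>0\}$ and $\{\gamma<0\}$ does not, as written, control the light cone $\{\gamma=0\}$: two distributions agreeing off $C$ can still differ by something supported on $C$ (cf.\ (\ref{LorentzInvariantDistributionRiesz})). This can be repaired by an analytic-continuation argument in $\alpha$, but you do not say so. More seriously, in even dimension you speak of matching against ``the logarithmic contributions of $\widetilde{\Riesz}_\pm$'' --- but $\widetilde{\Riesz}_\pm$ is built purely from $R^\Omega_\pm$ and has no logarithmic terms; the logarithms live only in $\widetilde{\L}_\pm$. What actually happens is that the $\log(\Gamma+i0)-\log(\Gamma-i0)$ piece from $\widetilde{\L}_+ + \widetilde{\L}_-$ (note the signs: in (\ref{Parametrices}) the $\log$-block carries $\pm\frac{i}{\pi}$, so the \emph{sum} gives the difference of logarithms) combines with $R^\Omega_+(d)+R^\Omega_-(d)$ to produce the constant $A^\Omega(d)$. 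Without the factorisation (\ref{Kjk}) this cancellation is hard to see, which is why the paper's reduction to $A^\Omega(2),A^\Omega(d)$ is the cleaner route.
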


\begin{proof}
 The first two claims are immediate conclusions from $\widetilde{\Riesz}_\pm,\widetilde{\LL}_\pm$ being two-sided parametrices and $\widetilde{\L}$ being symmetric, so we proceed with the Hadamard singularity structure.\\
 Let $k,j\in\N_0$ such that $k\geq j$, and for even $d$, let either $j,k\leq\frac{d-2}{2}$ or $j,k>\frac{d-2}{2}$. Then
 \begin{align} \label{Kjk}
  \frac{R^\Omega_\pm(2k+2)}{R^\Omega_\pm(2j+2)} = \frac{L^\Omega_\pm(2k+2)}{L^\Omega_\pm(2j+2)} = \frac{\widetilde{L}^\Omega_\pm(2k+2)}{\widetilde{L}^\Omega_\pm(2j+2)} = \underbrace{\frac{C(2k+2,d)}{C(2j+2,d)}}_{=:K_{k,j,d}\neq 0} ~ \Gamma^{k-j}
 \end{align}
 
 due to (\ref{DefLTildeL}) and Proposition \ref{PropLDomain}. $R^\Omega_\pm(\alpha)$ denote the Riesz distributions (\ref{RieszDistrDomain}), which are considered as bidistributions in the canonical way. Define 
 \begin{align} 
  \begin{split}\label{HSingularParts}
   H^\Omega(2) & := \frac{i}{2}\big(\widetilde{L}^\Omega_+(2) - \widetilde{L}^\Omega_-(2) + R^\Omega_-(2) - R^\Omega_+(2)\big), \\[2mm]
   H^\Omega(d) & := -\frac{L^\Omega(d)}{2\pi}\big(\log(\Gamma+i0) + \log(\Gamma-i0)\big) + \frac{i}{2}\big(R^\Omega_-(d) - R^\Omega_+(d)\big)
  \end{split}
 \end{align}
 
 such that the Hadamard series (\ref{ApprHadamard}) takes the form
 \begin{align} \label{HTildeSeries}
  \widetilde{H} = \left\{\begin{array}{cl} H^\Omega(2) \mathop{\sum}\limits_{k=0}^\infty K_{k,0,d}\cdot\widetilde{U}_k\Gamma^k, \qquad & d \text{ odd}, \\[2mm]
                                           H^\Omega(2) \mathop{\sum}\limits_{k=0}^{\frac{d-4}{2}} K_{k,0,d}\cdot\widetilde{U}_k\Gamma^k + H^\Omega(d) \mathop{\sum}\limits_{k=\frac{d-2}{2}}^\infty K_{k,\frac{d-2}{2},d}\cdot\widetilde{U}_k\Gamma^{k-\frac{d-2}{2}}, \qquad & d \text{ even}.
                         \end{array} \right.
 \end{align}
 We show that this is of Hadamard form in the sense of Definition 5.1 in \cite{SV2001}, where mostly the notations and conventions of \cite{G1988} are adopted. In particular, the Hadamard coefficients $U_{(k)}$ used in \cite{SV2001} are related with $U_k$ via $2^k k! \cdot U_{(k)} = U_k$ (see Remark 2.3.2 of \cite{BGP2007}). In addition, with the notation $(\alpha,k) := 2^k\cdot\frac{\Gamma\left(\frac{\alpha}{2}+k\right)}{\Gamma\left(\frac{\alpha}{2}\right)}$, we find $(2j+2,k-j)\cdot(2j+4-d,k-j)=K_{k,j,d}^{-1}$ as defined in (\ref{RecursionRiesz}), and hence,
 \begin{align*}
  K_{k,0,d}\cdot U_k             & = \frac{2^k\cdot k! \cdot U_{(k)}}{2^k\cdot k!\cdot (4-d,k)} = \frac{U_{(k)}}{(4-d,k)}, \\[2mm] 
  K_{k,\frac{d-2}{2},d}\cdot U_k & = \frac{2^k\cdot k! \cdot \Gamma\left(\frac{d}{2}\right) \cdot U_{(k)}}{4^{k-\frac{d-2}{2}}\cdot k! \cdot \Gamma\left(k+2-\frac{d}{2}\right)} = \frac{\left(2,\frac{d-2}{2}\right)}{2^{k+d-2}\cdot \left(k-\frac{d-2}{2}\right)!} \cdot U_{(k)}.
 \end{align*}
 
 For all $n\in\N$, we choose $N\geq n+\left\lceil\frac{d}{2}\right\rceil$ in (\ref{HadCoeffTilde}), so the series in (\ref{HTildeSeries}) truncated at $k=n+\left\lceil\frac{d}{2}\right\rceil$ coincide with $U,V^{(n)},T^{(n)}$ given in Appendix A.1 of \cite{SV2001}. The remainder term is then of regularity $C^n$ and corresponds to $H^{(n)}$ in Definition 5.1 of \cite{SV2001}. \\
 It remains to identify the singular terms (\ref{HSingularParts}) with $G^{(1)},G^{(2)}$ given by (5.3) in \cite{SV2001} up to some global factor, which is $-2$ in the odd- and $2\cdot(-1)^{\frac{d}{2}}$ in the even-dimensional case. Moreover, note that for the squared Lorentzian distance in the definition of $G^{(1)},G^{(2)}$, the convention $s=-\Gamma$ is used, whereas in Appendix A.1, it is $s=\Gamma$. \\

 Let $p,q\in O$. By definition of $\widetilde{L}^\Omega_\pm(\alpha,p)$ and $R^\Omega_\pm(\alpha,p)$ as well as (\ref{ExtractFaF}), we have $H^\Omega(2,p) = \big(\exp_p\big)_*W$ with $W$ Wightman's solution for $(\R^d_{\mathrm{Mink}},\Box)$. Then (\ref{Deltapm}) leads to
 $$H^\Omega(2,p)(q) = \frac{\Gamma\left(\frac{d-2}{2}\right)}{4\pi^{\frac{d}{2}}}\lm{\e}{0} \big(-\Gamma(p,q)+2i\e \cdot q^0 + \e^2\big)^{\frac{2-d}{2}}$$
 
 in the distributional sense with $q^0=\big(\exp_p^{-1}(q)\big)^0$. Since $\Gamma\left(\frac{d-2}{2}\right)\Gamma\left(1-\frac{d-2}{2}\right) = (-1)^{\frac{d+1}{2}}\pi$ for odd $d$, this coincides with $G^{(1)}$. Furthermore, one directly calculates $H^\Omega(d)-G^{(2)}=0$ away from $\Gamma^{-1}(0)$. For $\phi_{pq}(t):=\exp_p\big(t\,\exp_p^{-1}(q)\big)$ with $t\in[0,1]$, we obtain $\Gamma\big(p,\phi_{pq}(t)\big) = t^2\cdot\Gamma(p,q)$. Similar to (\ref{GammaEpsilon}), we set \mbox{$\Gamma_\e^\pm(p,\cdot) := \gamma_\e^\pm \circ \exp_p^{-1}$}, which yields $\Gamma_\e^\pm\big(p,\phi_{pq}(t)\big) = t^2 \cdot \Gamma_{\frac{\e}{t}}^\pm\big(p,q\big)$ for all $t\in(0,1]$, and hence,
 $$G^{(2)}\big(p,\phi_{pq}(t)) = -\frac{L^\Omega(d)}{\pi} \lm{\e}{0}\log\big(-\Gamma_\e^\pm\big(p,\phi_{pq}(t)\big)\big) = -\frac{2L^\Omega(d)}{\pi} \cdot \log t + G^{(2)}(p,q).$$
 On the other hand, since $R_\pm^\Omega(d), L^\Omega(d)$ are homogeneous distributions of degree $0$, (\ref{HSingularParts}) provides 
 $$H^\Omega(d,p)\big(\phi_{pq}(t)\big) = -\frac{L^\Omega(d)}{2\pi}\big(2\log t + 2\log t\big) + H^\Omega(d,p)(q) = -\frac{2L^\Omega(d)}{\pi} \cdot \log t + H^\Omega(d,p)(q).$$
 Of course, both expressions have to be understood in the distributional sense. Since $\Omega$ is diffeomorphic to \mbox{$\exp_p^{-1}(\Omega)\subset T_pM$} for all $p\in\Omega$, their difference corresponds to a $\sLL$-invariant distributions on Minkowski space $\R^d$, which is supported on the light cone and homogeneous of degree $0$. Therefore, by the classification (\ref{LorentzInvariantDistributionRiesz}), it has to vanish everywhere and thus, Theorem 5.8 of \cite{SV2001} ensures that $\widetilde{H}$ is of Hadamard form in the sense of (\ref{HConditionWF}).\\ 
 It remains to show (\ref{SmoothLinComb}). According to (\ref{HSingularParts}), we define
 \begin{align} 
  \begin{split}\label{ASingularParts}
   A^\Omega(2) & := \frac{i}{2}\big(\widetilde{L}^\Omega_+(2) + \widetilde{L}^\Omega_-(2) - R^\Omega_-(2) - R^\Omega_+(2)\big), \\[2mm]
   A^\Omega(d) & := -\frac{L^\Omega(d)}{2\pi}\big(\log(\Gamma+i0) - \log(\Gamma-i0)\big) - \frac{i}{2}\big(R^\Omega_-(d) + R^\Omega_+(d)\big)
  \end{split}
 \end{align}
 such that for (\ref{SmoothLinComb}), we obtain the expression (\ref{HTildeSeries}) with $H^\Omega(2), H^\Omega(d)$ replaced by $A^\Omega(2), A^\Omega(d)$ and it suffices to show smoothness of the bidistributions (\ref{ASingularParts}). For $A^\Omega(2)$, this follows directly from the definitions of $\widetilde{L}^\Omega_\pm(2), R^\Omega_\pm(2)$ as pullbacks of $S_\pm, R^2_\pm$ along a diffeomorphism and (\ref{ExtractFaF}). On the other hand, one directly calculates that $A^\Omega(d)$ is given by the constant $-iC(d,d)$.
\end{proof}

\vspace{-2mm} Since $\widetilde{\L}_\pm,\widetilde{\Riesz}_\pm$ are determined merely up to smooth sections, without loss of generality, we regard (\ref{SmoothLinComb}) as the equality
\begin{align} \label{EqualitySums}
 \widetilde{\L}_+ + \widetilde{\L}_- = \widetilde{\Riesz}_- - \widetilde{\Riesz}_+.
\end{align}

Now we can deduce from Theorem 5.1 of \cite{R1996} and section 6.6 of \cite{DH1972} that the operators $\widetilde{\LL}_\pm$ represent anti-Feynman and Feynman parametrices for $P^t$ in the sense of (\ref{WFDistParametrices}).

\subsection{Local fundamental solutions and Hadamard bisolutions}

It remains to construct actual local bisolutions $S^O$ for $P$ from the parametrices $\widetilde{\LL}_\pm$ following the lines of section 2.4 of \cite{BGP2007}. By Proposition \ref{ParametricesProp} (ii), we have $\widetilde{\L}_\pm P^t\big|_{\D(O,E^*)} = \id + \KK_\pm$, and hence, fundamental solutions are obtained by inverting the operators $\id + \KK_\pm$. Indeed, if $\vol(\Obar) \cdot \|K_\pm\|_{C^0(\Obar\times \Obar)} < 1$, that is, for $O$ chosen "small enough", (\ref{KIntOp}) provides isomorphisms $\id+\KK_\pm\colon C^l(\Obar,E^*) \longrightarrow C^l(\Obar,E^*)$ for all $l\in\N_0$ with bounded inverses given by the Neumann series
 \begin{align} \label{NeumannSeries}
  (\id + \KK_\pm)^{-1} = \sum_{j=0}^\infty (-\KK_\pm)^j.
 \end{align}
This means that all $C^l$-norms of the series exist, which follows from compactness of $\Obar$ and smoothness of $K_\pm$. The full proof coincides with the one of Lemma 2.4.8 of \cite{BGP2007}. In the following, we restrict to such small domains: 

\begin{Def} \label{AdmissibleSubsets}
 We call a relatively compact and causal subdomain $O$ of $\Omega$ admissible if Proposition \ref{ParametricesProp} provides parametrices $\widetilde{\LL}_\pm$ via (\ref{LTildeOperator}) such that the smooth Schwartz kernel $K_\pm$ of $\widetilde{\LL}_\pm P^t - \id$ fulfills 
 \begin{align} \label{SmallDomain}
  \vol(\Obar) \cdot \|K_\pm\|_{C^0(\Obar\times \Obar)} < 1.
 \end{align}
\end{Def}

More precisely, $O$ is admissible if there is a choice of $\{\e_k\}_k$ and $W_{\frac{d-2}{2}}$ such that (\ref{SmallDomain}) holds for the corresponding $K_\pm$. Lemma 2.4.8 of \cite{BGP2007} shows that for $O$ admissible, the corresponding operators $\widetilde{\LL}_\pm P^t = \id+\KK_\pm$ are isomorphisms with bounded inverses (\ref{NeumannSeries}).

\begin{Prop}
 For any admissible $O$, the operators 
 \begin{align*} 
  \widetilde{\S}^O_\pm:=(\id+\KK_\pm)^{-1}\widetilde{\LL}_\pm\colon \qquad \D(O,E^*) \rightarrow \CC(\Obar,E^*)
 \end{align*}
 fulfill $\widetilde{\S}^O_\pm P^t\big|_{\D(O,E^*)}=\id$, and hence, the distributions $\widetilde{S}_\pm^O(p),~p\in O,$ given by
 \begin{align} \label{TrueFundSols}
  \widetilde{S}_\pm^O(p)[\f] = \big((\id+\KK_\pm)^{-1}\widetilde{\LL}_\pm\f\big)(p) ,\qquad \f\in\D(O,E^*),
 \end{align}
 yield fundamental solutions for $P$ at $p$. Furthermore, $\QQ_\pm := (\id+\KK_\pm)^{-1} - \id$ are smoothing operators.
\end{Prop}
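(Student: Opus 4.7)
The plan has three steps: derive the operator identity $\widetilde{\S}^O_\pm P^t\big|_{\D(O,E^*)}=\id$, translate it into the fundamental-solution property for the associated Schwartz kernel, and then verify that $\QQ_\pm$ has a smooth Schwartz kernel.

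First, combining Proposition \ref{ParametricesProp} (ii) in its operator form $\widetilde{\LL}_\pm P^t\big|_\D = \id+\KK_\pm$ with the invertibility of $\id+\KK_\pm$ provided by admissibility through the Neumann series (\ref{NeumannSeries}), the first identity is a direct computation:
\begin{align*}
 \widetilde{\S}^O_\pm P^t\big|_\D = (\id+\KK_\pm)^{-1}\widetilde{\LL}_\pm P^t\big|_\D = (\id+\KK_\pm)^{-1}(\id+\KK_\pm) = \id.
\end{align*}
Passing from this operator identity to its Schwartz kernel yields, for every $\f\in\D(O,E^*)$ and $p\in O$,
\begin{align*}
 \big(P_{(2)}\widetilde{S}^O_\pm(p)\big)[\f] = \widetilde{S}^O_\pm(p)[P^t\f] = \big(\widetilde{\S}^O_\pm P^t\f\big)(p) = \f(p) = \delta_p[\f],
\end{align*}
so $P_{(2)}\widetilde{S}^O_\pm(p)=\delta_p$ and $\widetilde{S}^O_\pm(p)$ is a fundamental solution for $P$ at $p$ in the sense of Definition \ref{DefParametrices}.

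For the smoothing property of $\QQ_\pm$ I would rewrite
\begin{align*}
 \QQ_\pm = (\id+\KK_\pm)^{-1}-\id = -(\id+\KK_\pm)^{-1}\KK_\pm
\end{align*}
and exploit that $\KK_\pm$ has the smooth Schwartz kernel $K_\pm\in\CC(\Obar\times\Obar,E^*\boxtimes E)$ guaranteed by Proposition \ref{ParametricesProp} (ii). Expanding $(\id+\KK_\pm)^{-1}$ via the Neumann series produces the iterated kernels
\begin{align*}
 K_\pm^{*(j+1)}(p,q) = \int_{\Obar^j} K_\pm(p,q_1)\cdots K_\pm(q_j,q)\df V(q_1)\cdots\df V(q_j),
\end{align*}
each of which lies in $\CC(\Obar\times\Obar,E^*\boxtimes E)$. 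Since $\KK_\pm$ also maps $C^0(\Obar,E^*)$ continuously into every $C^l(\Obar,E^*)$ by differentiating under the integral onto $K_\pm$, the geometric-series control in $C^0$ from the admissibility bound (\ref{SmallDomain}) can be upgraded to convergence of $\sum_{j\geq 1}(-K_\pm)^{*j}$ in every $C^l(\Obar\times\Obar,E^*\boxtimes E)$, yielding a smooth kernel for $\QQ_\pm$.

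The first two steps are essentially bookkeeping once admissibility has been invoked; the main obstacle is the third, because (\ref{SmallDomain}) only provides a $C^0$-operator bound on $\KK_\pm$, whereas smoothness of the Neumann-series kernel requires joint convergence in every $C^l$-norm. I would handle this by absorbing any derivative in either variable onto the leftmost or rightmost factor $K_\pm(p,q_1)$, respectively $K_\pm(q_j,q)$, in the iterated integral and controlling the remaining $\Obar^j$-integral by the geometric-series bound, closely paralleling the argument of Lemma 2.4.8 of \cite{BGP2007}.
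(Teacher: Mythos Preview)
Your proof is correct and follows essentially the same route as the paper: the operator identity and the fundamental-solution property are derived identically, and for the smoothing claim both you and the paper use the factorization $\QQ_\pm = -(\id+\KK_\pm)^{-1}\KK_\pm$. The only cosmetic difference is that the paper dispatches the last step in one line by citing continuity of $(\id+\KK_\pm)^{-1}$ on $\CC(\Obar,E^*)$ (established via Lemma 2.4.8 of \cite{BGP2007}) rather than re-deriving the $C^l$-convergence of the iterated kernels, but your more explicit argument is exactly what underlies that continuity statement.
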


\begin{proof}
 The first claim follows from $\widetilde{\LL}_\pm P^t=\id+\KK_\pm$. Moreover, Proposition \ref{ParametricesProp} and Lemma 2.4.10 of \cite{BGP2007}, with $\widetilde{\Riesz}_\pm(\cdot)[\f]$ and $F^\Omega_\pm(\cdot)[\f]$ replaced by $\widetilde{\LL}_\pm\f$ and $\widetilde{\S}_\pm^O\f$, show that (\ref{TrueFundSols}) yield fundamental solutions. Finally, (\ref{NeumannSeries}) directly yields $\QQ_\pm=(\id+\KK_\pm)^{-1}\circ \KK_\pm$, which is smoothing since $\KK_\pm$ is, and $(\id+\KK_\pm)^{-1}$ is a continuous map $\CC(M,E^*) \rightarrow \CC(M,E^*)$.
\end{proof}

From now on, let $E$ be always equipped with some non-degenerate inner product, $P$ formally self-adjoint and $O$ admissible.

\begin{Prop} \label{FundSolParametrixSmooth}
 The operators $\widetilde{\S}^O_\pm - \widetilde{\LL}_\pm$ are smoothing.
\end{Prop}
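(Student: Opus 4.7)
The plan is to rewrite $\widetilde{\S}^O_\pm - \widetilde{\LL}_\pm$ using the Neumann series together with the \emph{right} parametrix property of $\widetilde{\LL}_\pm$, so that the only non-smooth factor is sandwiched between two smoothing operators, and then to check that $\widetilde{\LL}_\pm$ composed with a smoothing operator still has smooth Schwartz kernel.

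Concretely, from $\widetilde{\S}^O_\pm = (\id+\KK_\pm)^{-1} \widetilde{\LL}_\pm$ and $(\id+\KK_\pm)^{-1} = \id + \QQ_\pm$ we first obtain
$$\widetilde{\S}^O_\pm - \widetilde{\LL}_\pm \;=\; \QQ_\pm \widetilde{\LL}_\pm \;=\; -(\id+\KK_\pm)^{-1} \KK_\pm \widetilde{\LL}_\pm.$$
By Proposition \ref{ApprFundSolSymCor}, $\widetilde{\LL}_\pm$ is a two-sided parametrix for $P^t$, so there is a smoothing operator $\KK_\pm'$ with $P^t\widetilde{\LL}_\pm = \id + \KK_\pm'$. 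Combining this with $\KK_\pm = \widetilde{\LL}_\pm P^t - \id$ on $\D(O,E^*)$ gives
$$\KK_\pm \widetilde{\LL}_\pm \;=\; \widetilde{\LL}_\pm P^t \widetilde{\LL}_\pm - \widetilde{\LL}_\pm \;=\; \widetilde{\LL}_\pm\bigl(\id + \KK_\pm'\bigr) - \widetilde{\LL}_\pm \;=\; \widetilde{\LL}_\pm \KK_\pm',$$
and hence $\widetilde{\S}^O_\pm - \widetilde{\LL}_\pm = -(\id+\KK_\pm)^{-1} \widetilde{\LL}_\pm \KK_\pm'$.

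The main step is then to show that $\widetilde{\LL}_\pm \KK_\pm'$ has smooth Schwartz kernel. Writing $K_\pm' \in \CC(\Obar\times \Obar, E^*\boxtimes E)$ for the kernel of $\KK_\pm'$, this amounts to checking joint smoothness in $(p,q)$ of $S(p,q) := \widetilde{\L}_\pm(p)\bigl[K_\pm'(\cdot,q)\bigr]$. For fixed $q$, smoothness in $p$ follows directly from Proposition \ref{ParametricesProp}(iii) applied to the test section $K_\pm'(\cdot,q)\in\D(O,E^*)$. For joint smoothness I will invoke the obvious variant of Proposition \ref{PropLDomain}(6) carrying an extra parameter: since the estimate of Proposition \ref{ParametricesProp}(v) gives a uniform $C^{\kappa_d}$-bound on the family $\{\widetilde{\L}_\pm(p)\}_{p\in\Obar}$, and $K_\pm'$ is $\CC$ in both arguments, differentiation in $q$ can be passed through the distribution $\widetilde{\L}_\pm(p)$ to any order and the joint continuity is preserved. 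This is the only technical point that requires care and is, I expect, the main obstacle in writing out the argument in full; conceptually it is the standard statement that a smoothing operator turns a finite-order distribution-valued kernel into a $\CC$-kernel on its range side.

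Once $\widetilde{\LL}_\pm \KK_\pm'$ has a smooth Schwartz kernel $S \in \CC(\Obar\times\Obar,E^*\boxtimes E)$, the final factor $(\id+\KK_\pm)^{-1} = \id + \QQ_\pm$ acts boundedly on every $C^l(\Obar,E^*)$ and $\QQ_\pm$ is itself smoothing, so applying it in the $p$-slot to $S(\,\cdot\,,q)$ yields
$$T(p,q) \;=\; S(p,q) + \int_{\Obar} Q_\pm(p,p')\,S(p',q)\,\df V(p'),$$
which is again jointly smooth by smoothness of $Q_\pm$ and $S$ together with the uniform bounds on $\Obar$. Hence $\widetilde{\S}^O_\pm - \widetilde{\LL}_\pm$ has a smooth Schwartz kernel, which is exactly the statement that this operator is smoothing.
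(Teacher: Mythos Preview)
Your algebraic rewriting $\KK_\pm\widetilde{\LL}_\pm = \widetilde{\LL}_\pm\KK_\pm'$ and the subsequent kernel computation both suffer from the same domain problem: $\widetilde{\LL}_\pm$ is only defined on $\D(O,E^*)$, i.e.\ on sections with compact support \emph{inside} $O$, whereas the objects you feed into it live in $\CC(\Obar,E^*)$. Concretely, when you write $\KK_\pm(\widetilde{\LL}_\pm\f) = (\widetilde{\LL}_\pm P^t - \id)(\widetilde{\LL}_\pm\f)$ you are using the identity $\KK_\pm = \widetilde{\LL}_\pm P^t - \id$ outside its established domain, since $\widetilde{\LL}_\pm\f \in \CC(\Obar,E^*)$ need not vanish near $\partial O$; the same issue recurs when you form $\widetilde{\LL}_\pm\KK_\pm'$ and when you assert $K_\pm'(\cdot,q)\in\D(O,E^*)$. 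The smoothing kernel $K_\pm'$ is only known to be smooth on $\Obar\times\Obar$, so $K_\pm'(\cdot,q)$ is a smooth section on $\Obar$, not a test section on $O$. Without an extension of $\widetilde{\LL}_\pm$ to $\CC(\Obar,E^*)$ (which the paper never provides and which is not obvious for a distributional kernel), the composition $\widetilde{\LL}_\pm\KK_\pm'$ is simply not defined, and your main step cannot get off the ground.

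The paper's proof avoids this entirely by keeping the factors in the original order $\QQ_\pm\widetilde{\LL}_\pm$ and arguing by duality: boundedness of $\widetilde{\LL}_\pm\colon\D(O,E^*)\to\CC(\Obar,E^*)$ gives, by transposition, a continuous extension $\widetilde{\LL}_\pm\colon\E(O,E^*)'\to\D(O,E^*)'$, and since $\QQ_\pm$ has a smooth kernel it extends to a continuous map $\D(O,E^*)'\to\CC(\Obar,E^*)$. The composition then maps $\E(O,E^*)'$ continuously to $\CC(\Obar,E^*)$, which is exactly the statement that it is smoothing. This is a two-line functional-analytic argument that neither needs the right-parametrix identity nor any pointwise kernel manipulation; your detour through $\widetilde{\LL}_\pm\KK_\pm'$ is not only longer but introduces the domain obstruction above.
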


\begin{proof}
 Note that $\widetilde{\S}^O_\pm - \widetilde{\LL}_\pm = \QQ_\pm\widetilde{\LL}_\pm$. Since $\QQ_\pm, \widetilde{\LL}_\pm$ are bounded and $\QQ_\pm$ has a smooth Schwartz kernel, they extend to bounded maps
 $$\QQ_\pm\colon\quad \D(O,E^*)' \rightarrow \CC(\Obar,E^*), \qquad \qquad \widetilde{\LL}_\pm\colon\quad \E(O,E^*)' \rightarrow \D(O,E^*)'.$$
 Hence, $\QQ_\pm\widetilde{\LL}_\pm\colon \E(O,E^*)'\rightarrow\CC(\Obar,E^*)$ are bounded and therefore smoothing.
\end{proof}

It follows that $\widetilde{\S}_\pm^O$ yield anti-Feynman and Feynman parametrices for $P^t$ on $O$. Moreover, their Schwartz kernels determine a real-valued bidistribution via
\begin{align} \label{RealValuedBidistribution}
 \widetilde{S}^O[\psi,\f] := \frac{i}{4}\left(\widetilde{S}_+^O[\psi,\f] - \widetilde{S}_-^O[\psi,\f] - \overline{\widetilde{S}_+^O[\psi,\f]} + \overline{\widetilde{S}_-^O[\psi,\f]}\right), \qquad \psi\in\D(O,E),~\f\in\D(O,E^*),
\end{align}

which has the right singularity structure and is a solution for $P$ in the second argument, meaning $\WF\big(\widetilde{S}^O~\big) = \WF\big(\widetilde{\L}~\big)$ and $P_{(2)}\widetilde{S}^O = 0$. 

\begin{Prop} \label{ExHadBisol}
 Let $M$ be a globally hyperbolic Lorentzian manifold, $\pi\colon E\rightarrow M$ a real vector bundle with non-degenerate inner product and $P\colon \CC(M,E)\rightarrow\CC(M,E)$ a formally self-adjoint wave operator. Furthermore, let $O\subset M$ be admissible and $\widetilde{\L}$ denote the bidistribution given by Proposition \ref{ParametricesProp} and (\ref{LDiff}). Then there is a bisolution $S^O\colon \D(O,E)\times\D(O,E^*) \rightarrow\R$ for $P$ with $\WF\big(S^O\big) = \WF\big(\widetilde{\L}~\big)$.
\end{Prop}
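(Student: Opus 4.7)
The plan is to start from the real-valued bidistribution $\widetilde{S}^O$ defined in~\eqref{RealValuedBidistribution} and correct it by a smooth term so as to promote it to a genuine bisolution without altering its wave front set. The paragraph preceding the proposition already records that $\WF(\widetilde{S}^O)=\WF(\widetilde{\L})$ and $P_{(2)}\widetilde{S}^O=0$; the second fact holds because $P_{(2)}\widetilde{S}^O_\pm(p)=\delta_p$ is real and therefore cancels out in the skew-Hermitian combination~\eqref{RealValuedBidistribution}. Only the first argument needs to be adjusted.

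First I verify that $P^t_{(1)}\widetilde{S}^O$ is smooth. By Proposition~\ref{ApprFundSolSymCor}, formal self-adjointness of $P$ turns $\widetilde{\LL}_\pm$ into two-sided parametrices for $P^t$, so $P^t_{(1)}\widetilde{\L}_\pm-\delta_{\mathrm{diag}}$ is smooth. Combined with Proposition~\ref{FundSolParametrixSmooth}, which gives smoothness of $\widetilde{S}^O_\pm-\widetilde{\L}_\pm$, this yields $P^t_{(1)}\widetilde{S}^O_\pm=\delta_{\mathrm{diag}}+\text{smooth}$. The real delta contributions then cancel pairwise in the combination defining $\widetilde{S}^O$, leaving
\[
  f:=P^t_{(1)}\widetilde{S}^O\in\CC(O\times O,E^*\boxtimes E)
\]
real-valued and smooth.

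Next, I apply Theorem~\ref{CauchyProblemBisection} on the convex (hence globally hyperbolic) ambient $\Omega\supset\Obar$, with the roles of $P,Q$ played by $P^t$ on $E^*$ in the first factor and by $P$ on $E$ in the second. Because $P^t_{(1)}$ and $P_{(2)}$ commute and $P_{(2)}\widetilde{S}^O=0$, the compatibility condition $P_{(2)}f=P^t_{(1)}\cdot 0$ holds automatically. After extending $f$ smoothly to all of $\Omega\times\Omega$ by a cut-off, I solve the Cauchy problem with first-slot source $f$, second-slot source $0$, and vanishing Cauchy data on a Cauchy hypersurface of $\Omega$. The unique resulting bisection $h\in\CC(\Omega\times\Omega,E^*\boxtimes E)$ satisfies $P^t_{(1)}h=f$ and $P_{(2)}h=0$ on $O\times O$; reality of $h$ follows from uniqueness of the Cauchy problem, since all the data are real.

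Setting $S^O:=\widetilde{S}^O-h\big|_{O\times O}$ concludes the construction: $P^t_{(1)}S^O=0=P_{(2)}S^O$ shows that $S^O$ is a bisolution for $P$, and $\WF(S^O)=\WF(\widetilde{S}^O)=\WF(\widetilde{\L})$ because $h$ is smooth. The main obstacle I anticipate is ensuring that Theorem~\ref{CauchyProblemBisection} really applies in the present setting—i.e.\ guaranteeing a globally hyperbolic neighborhood of $\Obar$ equipped with a Cauchy hypersurface, for which the convex $\Omega$ is the natural candidate—together with the careful bookkeeping of the bundle $E^*\boxtimes E$ under the simultaneous appearance of $P^t$ and $P$, both of which are handled uniformly via the isomorphism $\Theta$ induced by the inner product on $E$.
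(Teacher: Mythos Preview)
Your overall strategy is sound and gives a slightly different construction than the paper's. The paper does not solve an inhomogeneous bisection problem for a correction $h$; instead, for each fixed $\f$ it solves the \emph{ordinary} homogeneous Cauchy problem $P^t\big(S^O(\cdot)[\f]\big)=0$ on $O$ with initial data taken from $\widetilde{S}^O(\cdot)[\f]$, and only afterwards invokes Theorem~\ref{CauchyProblemBisection} to verify that $\widetilde{\L}-S^O$ is smooth. Your route---producing a smooth $h$ with $P^t_{(1)}h=f$ and $P_{(2)}h=0$ via Theorem~\ref{CauchyProblemBisection} and setting $S^O=\widetilde{S}^O-h$---is equally valid and arguably more direct, since the bisolution property and the wave front set statement fall out in one step.

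There is, however, a genuine technical flaw in the execution you flagged yourself. The convex domain $\Omega$ need not be globally hyperbolic (geodesic convexity does not imply this), so Theorem~\ref{CauchyProblemBisection} cannot be applied there. Worse, extending $f=P^t_{(1)}\widetilde{S}^O$ to $\Omega\times\Omega$ by a cut-off destroys the compatibility condition $P_{(2)}f=0$ required by the theorem, since derivatives of the cut-off hit $f$. The remedy is simply to work on $O\times O$ itself: by Definition~\ref{AdmissibleSubsets} an admissible $O$ is causal, hence globally hyperbolic in its own right (Lemma~A.5.8 of \cite{BGP2007}, exactly as the paper notes). On $O\times O$ the data $f\in\CC(O\times O,E^*\boxtimes E)$, $g=0$ already satisfy $P_{(2)}f=P^t_{(1)}P_{(2)}\widetilde{S}^O=0$, and Theorem~\ref{CauchyProblemBisection} applies with any Cauchy hypersurface $\Sigma$ of $O$ and vanishing initial data. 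No extension is needed, and the rest of your argument goes through verbatim.
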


\begin{proof}
 Since $O$ is admissible, we obtain fundamental solutions $\widetilde{S}_\pm^O(p)$ at each $p\in \Obar$, and furthermore, (\ref{RealValuedBidistribution}) provides $p\mapsto\widetilde{S}^O(p)[\f]\in\CC(\Obar,E^*)$ for all $\f\in\D(O,E^*)$. Moreover, as a causal subdomain of a globally hyperbolic Lorentzian manifold, $O$ is globally hyperbolic on its own right (Lemma A.5.8 of \cite{BGP2007}). Hence, for $\Sigma$ a Cauchy hypersurface of $O$ with unit normal field $\nu$, there is a unique smooth solution of
 \begin{align*} 
  \left\{\begin{array}{cl} P^t \big(S^O(\cdot)[\f]\big) & = 0, \\[2mm] S^O(\cdot)[\f]\big|_\Sigma & = \widetilde{S}^O(\cdot)[\f]\big|_\Sigma, \\[2mm] \nabla_\nu \big(S^O(\cdot)[\f]\big)\big|_\Sigma & = \nabla_\nu\widetilde{S}^O(\cdot)[\f]\big|_\Sigma. \end{array}\right.
 \end{align*} 
 
 By continuous dependence on the Cauchy data, $S^O(p)$ defines an $E_p^*$-valued distribution for all $p\in O$. Furthermore, $S^O(\cdot)[P^t\f]=0$ for all $\f$ since it satisfies the trivial Cauchy problem. \\
 It remains to check the wave front set, i.e. smoothness of $D^O:=\widetilde{\L}-S^O$. Since $S^O, \widetilde{S}^O$ and $\widetilde{\L}$ yield parametrices for $P$, the sections given by $P_{(2)}D^O$, $P^t_{(1)}D^O$ and $\widetilde{\L}-\widetilde{S}^O$ are smooth, and hence, $D^O$ is the solution of a Cauchy problem with smooth Cauchy data, which is smooth by Theorem \ref{CauchyProblemBisection}.
\end{proof}

Altogether, any choice of parametrices $\widetilde{\L}_\pm$ in the sense of Proposition \ref{ParametricesProp} leads to a bisolution $S^O$ with singularity structure given by $\frac{i}{2}(\widetilde{G}_{aF}-\widetilde{G}_F)$ in the sense of (\ref{WFDistParametrices}), so we constructed bisolutions with the Hadamard singularity structure on every $O\times O$:

\begin{Thm} \label{LocalHadamardBisolutionThm}
 Let $M$ be a globally hyperbolic Lorentzian manifold, $O\subset M$ an admissible domain, $\pi\colon E\rightarrow M$ a real vector bundle with non-degenerate inner product and $P\colon \CC(M,E)\rightarrow\CC(M,E)$ a formally self-adjoint wave operator. For $S^O$ the bisolution given by Proposition \ref{ExHadBisol} and $G_\pm$ the advanced and retarded Green operator on $O$, the bisolution $H^O := S^O + \frac{i}{2}\big(G_+ - G_-\big)$ is of Hadamard form.
\end{Thm}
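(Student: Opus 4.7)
The plan is to compare $H^O$ with the auxiliary bidistribution
\[
 \widetilde{H} := \tfrac{i}{2}\bigl(\widetilde{\L}_+ - \widetilde{\L}_- + \widetilde{\Riesz}_- - \widetilde{\Riesz}_+\bigr) = \widetilde{\L} + \tfrac{i}{2}\bigl(\widetilde{\Riesz}_- - \widetilde{\Riesz}_+\bigr)
\]
from Proposition~\ref{HadamardStructure}, which is already known to satisfy the Hadamard wave front condition~(\ref{HConditionWF}). Since the Hadamard property depends only on the wave front set, it will suffice to prove that $H^O - \widetilde{H}$ has a smooth Schwartz kernel on $O\times O$; then $\WF(H^O) = \WF(\widetilde{H})$ and the claim follows.

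Expanding using $\widetilde{\L} = \tfrac{i}{2}(\widetilde{\L}_+ - \widetilde{\L}_-)$ one obtains
\[
 H^O - \widetilde{H} = \bigl(S^O - \widetilde{\L}\bigr) + \tfrac{i}{2}\bigl((G_+ - G_-) - (\widetilde{\Riesz}_- - \widetilde{\Riesz}_+)\bigr),
\]
and I would treat the two summands in turn. Smoothness of the first is essentially already established inside the proof of Proposition~\ref{ExHadBisol}: by Proposition~\ref{FundSolParametrixSmooth} applied to $\widetilde{S}^O_\pm - \widetilde{\L}_\pm$, the bidistribution $\widetilde{\L} - S^O$ solves a Cauchy problem with smooth Cauchy data and with smooth right-hand sides under both $P_{(2)}$ and $P^t_{(1)}$, so Theorem~\ref{CauchyProblemBisection} forces $S^O - \widetilde{\L} \in \CC(O\times O, E^*\boxtimes E)$.

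For the second summand I would repeat the Neumann-series construction of Section~3.4 with the Riesz parametrices $\widetilde{\Riesz}_\pm$ in place of $\widetilde{\LL}_\pm$: on the admissible domain $O$ this produces genuine advanced and retarded fundamental solutions for $P$ differing from $\widetilde{\Riesz}_\pm$ only by smoothing operators, by the direct analogue of Proposition~\ref{FundSolParametrixSmooth}. Uniqueness of advanced and retarded Green operators subject to their support conditions (cf.~\cite{BGP2007}) then identifies these fundamental solutions with the restrictions of $G_\mp$ to $O$, the pairing being dictated by the prototype identifications $\Delta^A = R^2_-$, $\Delta^R = R^2_+$ from Section~2.2. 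Consequently $(G_+ - G_-) - (\widetilde{\Riesz}_- - \widetilde{\Riesz}_+)$ is smooth, and together with the first step this yields smoothness of $H^O - \widetilde{H}$.

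That $H^O$ is in fact a bisolution follows automatically, since $S^O$ is one by Proposition~\ref{ExHadBisol} and $P_{(2)}(G_+ - G_-) = 0 = P^t_{(1)}(G_+ - G_-)$ is immediate from the defining identities of the Green operators together with formal self-adjointness of $P$. The step requiring most care, I expect, is the sign-bookkeeping just described: lining up the paper's $\pm$-conventions for the Riesz parametrices with those for the advanced/retarded Green operators, consistently with the prototype decomposition, since the wrong pairing would destroy smoothness of that term. Once this pairing is correctly installed, everything reduces to smoothing results already in place in the preceding sections.
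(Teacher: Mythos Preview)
Your proposal is correct and follows essentially the same route as the paper: compare $H^O$ with the $\widetilde{H}$ of Proposition~\ref{HadamardStructure}, split the difference as $(S^O-\widetilde{\L}) + \tfrac{i}{2}\bigl((G_+-G_-)-(\widetilde{\Riesz}_--\widetilde{\Riesz}_+)\bigr)$, and show each piece is smooth. The paper's proof is just more terse on the second summand: rather than re-running the Neumann-series construction for the Riesz parametrices, it simply invokes the already-recorded fact (stated after~(\ref{RieszOperator}) and referenced to Proposition~2.5.1 of \cite{BGP2007}) that $G_\pm-\widetilde{G}_\pm$ is smoothing. Your route works too, but note that ``admissible'' in Definition~\ref{AdmissibleSubsets} only bounds the error kernel for $\widetilde{\LL}_\pm$, not for the Riesz side, so if you insist on redoing the Neumann series you should either remark that the same smallness can be arranged simultaneously, or---cleaner---just cite the \cite{BGP2007} result as the paper does.
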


\begin{proof}
 Due to smoothness of $S^O-\widetilde{\L}$ and $G_\pm-\widetilde{G}_\pm$, we obtain smoothness of $H^O-\widetilde{H}$, so Proposition \ref{HadamardStructure} ensures the Hadamard property of $H^O$.
\end{proof}

\begin{Rmk}
 For $l\in\N_0$ and $p\in \Omega$, let $\L_\pm^{N+l}(p)$ denote the formal Hadamard series (\ref{HadamardSeries}) with all terms $k\geq N+l$ removed and $\L^{N+l} := \frac{i}{2}\big(\L^{N+l}_+ - \L^{N+l}_-\big)$. Then Lemma 2.4.2 of \cite{BGP2007} with $R^\Omega_\pm$ replaced by $L^\Omega_\pm$ and Lemma \ref{ParametricesReg} for the logarithmic part show that 
 $(p,q) \mapsto \big(S^O(p) - \L^{N+l}(p)\big)(q)$ represents a $C^l$-section over $O\times O$. This and the corresponding result for $\widetilde{G}_\pm$ (Proposition 2.5.1 in \cite{BGP2007}) ensure that $H^O$ is given by a Hadamard series up to terms of arbitrarily high regularity.
\end{Rmk}

\section{Global Hadamard two-point-functions}

So far, we constructed Hadamard bisolutions on products of certain small patches $O\times O$, and in this final section, the construction of global bisolutions $S$, which locally coincide with those $S^O$ up to smooth bisolutions and thus inherit their singularity structure, is tackled. Assuming $E$ to be Riemannian and the validity of Theorem 6.6.2 of \cite{DH1972} for sections in $E$, we furthermore prove the existence of a smooth bisolution $u$ such that $S+u$ is symmetric and positive, i.e. a Hadamard two-point-function.

\subsection{Global construction of symmetric bisolutions} \label{GlobalConstruction}

For $M$ globally hyperbolic, we fix a Cauchy hypersurface $\Sigma\subset M$ and two locally finite covers \mbox{$\O:=\{O_i\}_{i\in I}$}, \mbox{$\O':=\big\{O_i'\big\}_{i\in I}$} of it by admissible subsets of $M$ with $\overline{O}_i\subset O_j'$ if and only if $i=j$. Without loss of generality, we assume that $O_i\cap\Sigma$ is a Cauchy hypersurface of $O_i$ for all $i$. Then $N:=\bigcup_{i\in I}O_i$ yields a causal normal neighborhood of $\Sigma$ in the sense of Lemma 2.2 of \cite{KW1991}. By paracompactness of $M$ and the Hopf-Rinow-Theorem, we find an exhaustion $\{A_m\}_{m\in\N}$ of $I$ by finite subsets such that the relatively compact sets $N_m := \bigcup_{i\in A_m}O_i$ exhaust $N$ and every compact subset of $N$ is contained in some $N_m$. Besides that, causality of $O$ implies $O\subset D(O) = D(O\cap\Sigma)$ and therefore,
\begin{align} \label{NjGlobalHyperbolic}
 N_m \subset \bigcup_{i\in A_m} D(O_i\cap\Sigma) \subset D\left(\bigcup_{i\in A_m}O_i\cap\Sigma\right) = D(N_m\cap\Sigma),
\end{align}

where $D$ stands for the Cauchy development of the respective set. It follows that every inextendible causal curve in $N_m$ meets $N_m\cap\Sigma$ exactly ones, so $N_m\cap\Sigma$ is a Cauchy hypersurface of $N_m$, i.e.\ $N_m$ is globally hyperbolic. In addition, for all $i\in I$, we choose the corresponding local bisolutions $S^{O_i'},S^{O_i}$ obtained by Theorem \ref{LocalHadamardBisolutionThm} such that $S^{O_i'}\big|_{O_i\times O_i} = S^{O_i}$.

\begin{Prop} \label{PropFirstGlob}
 For each $O\in\O$, there is a bisolution $\widehat{S}^O$ on $O\times M$ satisfying $\widehat{S}^O\big|_{O\times O} = S^O$.
\end{Prop}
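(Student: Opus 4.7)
The plan is to extend $S^O$ to $O\times M$ by propagating the slightly larger bisolution $S^{O'}$ (which restricts to $S^O$ on $O\times O$) through a singular Cauchy problem in the $M$-variable. First I would fix a cutoff $\chi\in\CC(M)$ with $\chi\equiv 1$ on an open neighborhood of $\overline{O}$ and $\supp\chi\subset O'$, and for $p\in O$ define $v(p):=\chi\cdot S^{O'}(p)$, extended by zero outside $O'$. Because $S^{O'}(p)$ has Hadamard singularity structure and $\chi$ is smooth with compact support in $O'$, each $v(p)$ is a compactly (hence spacelike compactly) supported distribution on $M$ with only lightlike singular directions, and $p\mapsto v(p)[\f]$ is smooth on $O$ since $S^{O'}$ is a bisolution on $O'\times O'$.

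The next step is to apply Theorem \ref{CauchyProblemSingularSolution} to $v$, which yields a unique $\widehat{S}^O(p)\in\D(M,E,E^*_p)'$ with $P_{(2)}\widehat{S}^O(p)=0$ on $M$, with Cauchy data $\widehat{S}^O(p)|_\Sigma=v(p)|_\Sigma$ and $\nabla_\nu\widehat{S}^O(p)|_\Sigma=\nabla_\nu v(p)|_\Sigma$, and smooth dependence $p\mapsto\widehat{S}^O(p)[\f]\in\CC(O,E^*)$ for every $\f\in\D(M,E^*)$. By the Schwartz kernel theorem this family assembles into a bidistribution on $O\times M$. To verify $\widehat{S}^O|_{O\times O}=S^O$, note that $\chi\equiv 1$ (hence $\nabla_\nu\chi\equiv 0$) on a neighborhood of $O\cap\Sigma$, so the Cauchy data of $\widehat{S}^O(p)|_O$ on the Cauchy hypersurface $O\cap\Sigma$ of the globally hyperbolic subdomain $O$ agree with those of $S^O(p)=S^{O'}(p)|_O$; a second application of Theorem \ref{CauchyProblemSingularSolution}, this time on $O$ itself, then forces equality.

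The bisolution property in the first argument will follow from one more uniqueness argument. Set $H(p):=P^t_{(1)}\widehat{S}^O(p)$, which inherits spacelike compact support and lightlike singular directions from $\widehat{S}^O(p)$. Since $P^t_{(1)}$ and $P_{(2)}$ act on disjoint variables and therefore commute, $P_{(2)}H(p)=P^t_{(1)}P_{(2)}\widehat{S}^O(p)=0$. The initial data on $\Sigma$ are $H(p)|_\Sigma=\chi\cdot P^t_{(1)}S^{O'}(p)|_\Sigma=0$ and $\nabla_\nu H(p)|_\Sigma=0$, because $\chi$ does not depend on $p$ and $S^{O'}$ is a bisolution on $O'\times O'$. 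Uniqueness in Theorem \ref{CauchyProblemSingularSolution} then forces $H(p)\equiv 0$, so $\widehat{S}^O$ is a bisolution in both arguments.

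The main technical hurdle is justifying, at the distributional level, that $P^t_{(1)}$ commutes with restriction to $\Sigma$ and with $P_{(2)}$; this is legitimate because $P^t_{(1)}$ and $P_{(2)}$ act on disjoint variables and because the wave front sets of the relevant distributions contain only lightlike directions, transverse to the spacelike hypersurface $\Sigma$, so that H{\" o}rmander's restriction criterion applies. A minor auxiliary point is that extending $v(p)$ by zero across $\partial O'$ introduces no spurious wave front set, since $\chi$ and hence $v(p)$ vanish on a neighborhood of that boundary.
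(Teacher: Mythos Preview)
Your proposal is correct and follows essentially the same route as the paper: cut off $S^{O'}$ by a bump $\chi$ supported in $O'$, propagate via the singular Cauchy problem of Theorem~\ref{CauchyProblemSingularSolution} to obtain $\widehat{S}^O$, then verify $\widehat{S}^O|_{O\times O}=S^O$ and the bisolution property in the first slot by uniqueness arguments. The only cosmetic difference is that for the restriction equality the paper cites Theorem~\ref{CauchyProblemBisection} while you invoke Theorem~\ref{CauchyProblemSingularSolution} on $O$; your choice is arguably the more apposite one, since at that stage the difference $\widehat{S}^O|_{O\times O}-S^O$ is not yet known to be smooth and what is really needed is distributional uniqueness for the wave equation.
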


\begin{proof}
 Let $O'\in\O'$ such that $\Obar\subset O'$, and $\chi\in\D(O')$ with $\chi\big|_{O}=1$. Then $\chi S^{O'}(p)$ is a well-defined distribution with spacelike compact support on $M$ for all $p\in O$, since $\chi S^{O'}(p)[\f]=S^{O'}(p)[\chi\f],$ \mbox{$\f\in\D(M,E^*)$}. With regard to Theorem \ref{CauchyProblemSingularSolution}, we define $\widehat{S}^{O}(p) \in \D(M,E,E^*_p)'$ as the unique solution of 
 \begin{align} \label{DefSHut}
  \left\{\begin{array}{cl} P\widehat{S}^{O}(p) & = 0, \\[2mm] \widehat{S}^{O}(p)\big|_\Sigma & = \chi S^{O'}(p)\big|_\Sigma, \\[2mm] \nabla_\nu \widehat{S}^{O}(p)\big|_\Sigma & = \nabla_\nu \big(\chi S^{O'}(p)\big)\big|_\Sigma, \end{array} \right.
 \end{align}
 which moreover depends smoothly on $p$ in the sense $p\mapsto\widehat{S}^{O}(p)[\f]\in\CC(O,E^*)$ for fixed $\f\in\D(M,E^*)$. Furthermore, global hyperbolicity of $O$ ensures $\widehat{S}^{O}\big|_{O\times O}=S^{O}$ by Theorem \ref{CauchyProblemBisection}, since the difference solves the trivial Cauchy problem on $O\times O$. \\
 Let $T(p)[\f]:=P^t\big(\widehat{S}^{O}(\cdot)[\f]\big)(p)$ and hence $T(p) \in \D(M,E,E^*_p)'$ for all $p\in O$. It follows that $PT(p)=0$, and $T(p)[\f] = 0 = T(p)[\nabla_\nu\f]$ if $\supp\f\subset O$, which leads to $T(p)\big|_\Sigma= \nabla_\nu T(p)\big|_\Sigma=0$. Consequently, it satisfies the trivial Cauchy problem, so we have $T(p)=0$, that is, $\widehat{S}^{O}$ represents a bisolution.
\end{proof}

This definition of $\widehat{S}^O$ is independent of the choice of $\chi$ in an appropriate sense: Let $\widetilde{\chi}\in\D(O')$ be another cut-off with $\widetilde{\chi}\big|_{O}=1$ and corresponding bisolution $\widetilde{S}^{O}$. Then $D := \widehat{S}^{O}-\widetilde{S}^{O}$ is a bisolution with Cauchy data on $(O\cap\Sigma)\times\Sigma$ given by $(\chi-\widetilde{\chi}) S^{O'}$. Recall that $\ssupp S^{O'} \subset \Gamma^{-1}(0) \cap (O'\times O')$, so causality of $O$ yields $\ssupp S^{O'}(p)\big|_\Sigma \subset \big(C^M(p)\cap O'\cap\Sigma\big) \subset O$ for all $p\in O$, and hence, $\ssupp \chi S^{O'}\big|_{O\cap\Sigma\times \Sigma}$ is contained in $O\times O$. Since $D$ satisfies the trivial Cauchy problem on $O\times O$, i.e.\ $D\big|_{O\times O}=0$, it is a smooth bisolution by Theorem \ref{CauchyProblemBisection}. Therefore, $\widehat{S}^O$ and $\widetilde{S}^O$ differ merely by some smooth bisolution.


\begin{center}
 \begin{pspicture}
  \psellipse(5,5)(3.3,2.3)
  \psellipse(5,5)(3,2)
  \pscurve(0,4.4)(2.02,4.8)(5,5.2)(7.98,5.1)(10,5)
  \pscurve[linecolor=red](2.02,4.8)(5,5.2)(7.98,5.1)
  \psline[linestyle=dashed](0,1)(9,10)
  \psline[linestyle=dashed](10,1)(1,10)
  \psdot(5,6)
  \psdot[linecolor=blue](4.13,5.1)
  \psdot[linecolor=blue](5.8,5.2)
  \uput[90](5,6){$p$}
  \uput{1.7}[40](5,5){$O$}
  \uput{2.8}[40](5,5){$O'$}
  \uput[0](10,5){$\Sigma$}
  \uput{0.4}[30]{7.5}(2,4.8){\red $D(p)=0$}
  \uput[0](9,10){$C^M(p)$}
  \psline[arrows=->,linecolor=blue](4.9,2)(4.13,5.05)
  \psline[arrows=->,linecolor=blue](5.1,2)(5.8,5.15)
  \uput[-90](5,2){\blue $\ssupp \widehat{S}^{O'}(p)\big|_\Sigma$}
 \end{pspicture}
\end{center}

Next, we prove the existence of a compatible choice of bisolutions $\{\widehat{S}^{O_i}\}_{i\in I}$, meaning that they coincide on the overlaps $O_i\cap O_j,~i,j\in I$. In this way, these compatible bisolutions assemble to a well-defined object on $N\times M$. The tools for such a procedure are provided by \v{C}ech cohomology theory, for which we give a brief and purposive overview. For an introduction to this subject with details and proofs, we refer to section 5.33 of \cite{W1983}.\\
On $N\times M$, let $\CGerm$ denote the sheaf given by the germs of the smooth sections in $E^*\boxtimes E$ (see Example 5.2 in \cite{W1983}). For the open cover $\O^M:=\{O_i\times M\}_{i\in I}$ of $N\times M$, the $n$-simplices correspond to the non-empty ($n+1$)-times intersections 
$$O^M_{i_0\hdots i_n} := \big(O_{i_0} \cap\hdots\cap O_{i_n}\big)\times M, \qquad i_0,\hdots, i_n\in I,$$ 
with $n+1$ faces $\big\{O^M_{i_0\hdots\hat{i}_k \hdots i_n}\big\}_{k=0,\hdots, n}$ obtained by leaving out one $O_i$ in the intersection, respectively. An \mbox{$n$-cochain} is a map that assigns to each non-empty $O^M_{i_0\hdots i_n}$ a section of $\CGerm$ over $O^M_{i_0\hdots i_n}$, which we identify with the elements of $\CC\big(O^M_{i_0\hdots i_n},E^*\boxtimes E\big)$. The space of $n$-cochains is denoted by $\mathcal{C}^n(\O^M,\CGerm)$, where $\mathcal{C}^n :=\{0\}$ if $n<0$, and the coboundary operator is defined by
$$\partial_n\colon \quad \mathcal{C}^n(\O^M,\CGerm) \longrightarrow \mathcal{C}^{n+1}(\O^M,\CGerm), \quad (\partial_n f_n)\big(O^M_{i_0\hdots i_{n+1}}\big) := \sum_{k=0}^{n+1} (-1)^k \cdot f_n\big(O^M_{i_0\hdots\hat{i}_k \hdots i_{n+1}}\big)\big|_{O^M_{i_0\hdots i_{n+1}}}.$$
It follows that $\partial_{n+1} \circ\partial_n=0$ for all $n\in\N_0$ and we set $H^n(\O^M,\CGerm) := \frac{\ker\partial_n}{\ran\partial_{n-1}}$. These modules are trivial for all $n\in\N$ by some well-known construction (e.g. p. 202 in \cite{W1983}), employing that $\CGerm$ admits a partition of unity subordinate to the locally finite cover $\O^M$:

\begin{Lem} \label{TrivialCechCohomology}
 For all $n\in\N$, we have 
 $$H^n(\O^M,\CGerm)=\{0\}.$$
\end{Lem}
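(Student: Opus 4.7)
The plan is to follow the classical partition-of-unity argument showing that the \v{C}ech cohomology of a fine sheaf vanishes in positive degrees. Since $M$ is paracompact and the cover $\O = \{O_i\}_{i\in I}$ of $N = \bigcup_{i\in I} O_i$ is locally finite, I would fix a smooth partition of unity $\{\chi_i\}_{i\in I}$ with $\supp \chi_i \subset O_i$ and $\sum_i \chi_i \equiv 1$ on $N$. The pullbacks $\chi_i \circ \pr_1$ along the first-factor projection $\pr_1\colon N \times M \to N$ then form a partition of unity subordinate to $\O^M$, which is exactly the fineness property that the statement alludes to.

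Given $n \geq 1$ and a cocycle $f_n \in \ker \partial_n$, the strategy is to exhibit an explicit primitive $g_{n-1} \in \mathcal{C}^{n-1}(\O^M, \CGerm)$ via
\begin{equation*}
 g_{n-1}\big(O^M_{i_0 \cdots i_{n-1}}\big) := \sum_{j \in I} (\chi_j \circ \pr_1) \cdot f_n\big(O^M_{j i_0 \cdots i_{n-1}}\big),
\end{equation*}
where each summand is extended by zero from $O^M_{j i_0 \cdots i_{n-1}}$ to $O^M_{i_0 \cdots i_{n-1}}$. Local finiteness of $\O$ ensures that at most finitely many summands are non-zero near any given point, while smoothness of the extension-by-zero is guaranteed because $\chi_j$ vanishes together with all its derivatives outside $O_j$. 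Hence $g_{n-1}$ is a well-defined element of $\mathcal{C}^{n-1}(\O^M, \CGerm)$.

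The central step is then to compute
\begin{equation*}
 (\partial_{n-1} g_{n-1})\big(O^M_{i_0 \cdots i_n}\big) = \sum_{k=0}^n (-1)^k \sum_{j \in I} (\chi_j \circ \pr_1) \cdot f_n\big(O^M_{j i_0 \cdots \hat{i}_k \cdots i_n}\big)
\end{equation*}
and to combine it with the cocycle relation $\partial_n f_n = 0$ evaluated on the $(n+1)$-simplex indexed by $(j, i_0, \ldots, i_n)$, which rearranges to
\begin{equation*}
 f_n\big(O^M_{i_0 \cdots i_n}\big) = \sum_{k=0}^n (-1)^k f_n\big(O^M_{j i_0 \cdots \hat{i}_k \cdots i_n}\big).
\end{equation*}
Multiplying this identity by $\chi_j \circ \pr_1$, summing over $j$, and invoking $\sum_j \chi_j \equiv 1$ then yields $(\partial_{n-1} g_{n-1})(O^M_{i_0 \cdots i_n}) = f_n(O^M_{i_0 \cdots i_n})$, so $\ker \partial_n \subset \ran \partial_{n-1}$ and hence $H^n(\O^M, \CGerm) = \{0\}$ for every $n \geq 1$.

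No serious obstacle appears: the only subtle point is that $(\chi_j \circ \pr_1) \cdot f_n(O^M_{j i_0 \cdots i_{n-1}})$ really does extend smoothly by zero across the boundary of $O_j \times M$ inside $O^M_{i_0 \cdots i_{n-1}}$, which is handled by the standard fact that a smooth function vanishing to infinite order on the complement of its open support patches smoothly with zero. This is precisely the role played by the partition of unity, and it is the reason the argument goes through verbatim for any fine sheaf on a paracompact space.
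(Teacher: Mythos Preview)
Your proposal is correct and follows essentially the same partition-of-unity argument as the paper: your primitive $g_{n-1}$ is precisely the paper's $h_n(f_n)$, with $O^M_{j i_0\cdots i_{n-1}} = O^M_j \cap O^M_{i_0\cdots i_{n-1}}$. The only cosmetic difference is that the paper phrases the computation as a chain-homotopy identity $h_{n+1}\partial_n + \partial_{n-1}h_n = \id$ valid for arbitrary cochains and then specializes to cocycles, whereas you invoke the cocycle condition directly; the content is identical.
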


\begin{proof}
 By choice of $\O$, the cover $\O^M$ is locally finite. Let $\{\chi_i\}_{i\in I}$ denote a partition of unity subordinate to $\O^M$ and $f_n\in\mathcal{C}^n(\O^M,\CGerm)$. Then, for each $i\in I$, the smooth section $\chi_i f_n\big(O^M_i\cap O^M_{i_0\hdots i_{n-1}}\big)$ is supported in $O^M_i\cap O^M_{i_0\hdots i_{n-1}}$, and thus, via extension by zero, we consider it as an element of $\CC\big(O^M_{i_0\hdots i_{n-1}},E^*\boxtimes E\big)$. In this way, we obtain homomorphisms $h_n\colon  \mathcal{C}^n(\O^M,\CGerm) \rightarrow \mathcal{C}^{n-1}(\O^M,\CGerm)$ via
 $$h_n(f_n)\big(O^M_{i_0\hdots i_{n-1}}\big) := \sum_{i\in I} \chi_i  f_n\big(O^M_i\cap O^M_{i_0\hdots i_{n-1}}\big) \in \CC\big(O^M_{i_0\hdots i_{n-1}},E^*\boxtimes E\big),$$
 which satisfy
 \begin{align*}
  \big(h_{n+1} (\partial_n f_n)\big)\big(O^M_{i_0\hdots i_n}\big) & = \sum_{i\in I} \chi_i \cdot \partial_n f\big(O^M_i\cap O^M_{i_0\hdots i_n}\big) \\[2mm]
                                                                       & = \sum_{i\in I} \chi_i f_n\big(O^M_{i_0\hdots i_n}\big) + \sum_{i\in I}\sum_{k=0}^n (-1)^{k+1} \cdot \chi_i f_n\big(O^M_i\cap O^M_{i_0\hdots \hat{i}_k \hdots i_n}\big)\big|_{O^M_{i_0\hdots i_n}} \\[2mm]
                                                                       & = f_n\big(O^M_{i_0\hdots i_n}\big) - \big(\partial_{n-1} h_n(f_n)\big)\big(O^M_{i_0\hdots i_n}\big).
 \end{align*}
 Hence, $f_n\in \ker \partial_n$ implies $f_n = \partial_{n-1} h_n(f_n)$, that is, $f_n \in \ran \partial_{n-1}$. 
\end{proof}

\begin{Lem} \label{LemCech}
 For all $i\in I$, there is a bisolution $h_i\in\CC(O_i\times M,E^*\boxtimes E)$ such that
 \begin{align*} 
  \big(\widehat{S}^{O_i} + h_i\big)\big|_{O^M_{ij}} = \big(\widehat{S}^{O_j} + h_j\big)\big|_{O^M_{ij}}, \qquad i,j \in I.
 \end{align*}
\end{Lem}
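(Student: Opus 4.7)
The strategy is to verify that the pairwise differences $f_{ij}:=\widehat{S}^{O_j}\big|_{O_{ij}^M}-\widehat{S}^{O_i}\big|_{O_{ij}^M}$ are smooth bisolutions on $O_{ij}^M=(O_i\cap O_j)\times M$, to split the associated \v{C}ech $1$-cocycle by Lemma \ref{TrivialCechCohomology}, and then to correct the resulting smooth $0$-cochain into one consisting of bisolutions via an inhomogeneous Cauchy problem on $N\times M$.

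The smoothness of $f_{ij}$ is the first and most delicate step. Since $\widehat{S}^{O_\ell}\big|_{O_\ell\times O_\ell}=S^{O_\ell}$ is Hadamard by construction, and since any two Hadamard bisolutions on the common admissible domain $O_i\cap O_j$ differ by a smooth bisolution, $f_{ij}$ is already smooth on the ``diagonal slab'' $(O_i\cap O_j)\times(O_i\cap O_j)$. To extend smoothness to all of $(O_i\cap O_j)\times M$, I would invoke propagation of singularities: $\WF(f_{ij})\subset\WF(\widehat{S}^{O_j})\cup\WF(\widehat{S}^{O_i})$ lies in the Hadamard relation (\ref{HConditionWF}), so any putative singular covector $(p,q;\xi,-\zeta)$ must sit on a null geodesic $\gamma$ from $p\in O_i\cap O_j$ to $q\in M$. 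Sliding $q$ back along $\gamma$ to a point $q'$ near $p$, hence inside $O_i\cap O_j$, and then applying H\"ormander's theorem in the second factor, propagates the smoothness of $f_{ij}$ from $(p,q')$ to $(p,q)$; thus $\WF(f_{ij})=\emptyset$ and $f_{ij}\in\CC(O_{ij}^M,E^*\boxtimes E)$.

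With smoothness established, $\{f_{ij}\}$ becomes a \v{C}ech $1$-cocycle in $\mathcal{C}^1(\O^M,\CGerm)$, the cocycle identity $f_{ij}+f_{jk}+f_{ki}=0$ on $O_{ijk}^M$ being immediate from the definitions. Lemma \ref{TrivialCechCohomology} therefore yields smooth sections $h_i^{(0)}\in\CC(O_i\times M,E^*\boxtimes E)$ with $f_{ij}=h_j^{(0)}-h_i^{(0)}$ on $O_{ij}^M$. These are typically not bisolutions, since the partition-of-unity construction behind Lemma \ref{TrivialCechCohomology} does not commute with $P$.

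To repair this, note that because $f_{ij}$ is a bisolution, the sections $P_{(1)}h_i^{(0)}$ and $P_{(2)}h_i^{(0)}$ agree on overlaps $O_{ij}^M$ and thus glue to globally defined smooth sections $g_1,g_2\in\CC(N\times M,E^*\boxtimes E)$ satisfying the compatibility $P_{(2)}g_1=P_{(1)}g_2$ because $P_{(1)}$ and $P_{(2)}$ commute. The causal normal neighborhood $N$ of $\Sigma$ is globally hyperbolic with Cauchy hypersurface $\Sigma$ in its own right, so Theorem \ref{CauchyProblemBisection} applied to $N\times M$ with source $(g_1,g_2)$ and trivial Cauchy data on $\Sigma\times\Sigma$ produces $v\in\CC(N\times M,E^*\boxtimes E)$ satisfying $P_{(1)}v=g_1$ and $P_{(2)}v=g_2$. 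Setting $h_i:=h_i^{(0)}-v\big|_{O_i\times M}$ then yields the required smooth bisolutions with $h_j-h_i=f_{ij}$ on $O_{ij}^M$. I expect the smoothness argument for $f_{ij}$ off the diagonal slab to be the principal obstacle; the cohomological splitting and the subsequent bisolution correction via Theorem \ref{CauchyProblemBisection} are then essentially formal.
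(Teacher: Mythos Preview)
Your overall strategy is sound and close to the paper's, but the two proofs diverge in both the smoothness step and the bisolution--correction step.

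\textbf{Smoothness of $f_{ij}$.} Here your sketch has a genuine gap. You assert that $\WF(\widehat{S}^{O_i})$ lies in the Hadamard relation on all of $O_i\times M$, but at this point in the paper only $\widehat{S}^{O_i}\big|_{O_i\times O_i}=S^{O_i}$ has a known wave front set; nothing yet controls $\WF(\widehat{S}^{O_i})$ over $O_i\times(M\setminus O_i)$. Your propagation argument then collapses: without the Hadamard relation you cannot conclude that a singular covector $(p,q;\xi,-\zeta)\in\WF(f_{ij})$ has $p$ and $q$ on the \emph{same} null geodesic, so sliding $q$ back ``towards $p$'' is unjustified --- the bicharacteristic through $(q,\zeta)$ may well hit $\Sigma$ far outside $O_i\cap O_j$. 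The paper circumvents this entirely: it fixes an exhaustion $\{N_m\}$ of the causal normal neighborhood $N$, invokes Proposition~\ref{ParametricesProp} to produce parametrices $\widetilde{\L}^m$ on $N_m$, and writes
\[
h_{ij}\big|_{O_{ij}\times N_m}=\big(\widehat{S}^{O_i}-\widetilde{\L}^m\big)-\big(\widehat{S}^{O_j}-\widetilde{\L}^m\big),
\]
each bracket being smooth by Propositions~\ref{ExHadBisol} and~\ref{PropFirstGlob}. This gives smoothness on $O_{ij}\times N$, and then Theorem~\ref{CauchyProblemBisection} (the smooth Cauchy problem for bisections) propagates it to $O_{ij}\times M$. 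No global wave front statement for $\widehat{S}^{O_i}$ is needed.

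\textbf{Correction to bisolutions.} Your approach here is genuinely different from the paper's and works. You glue the obstructions $P_{(1)}h_i^{(0)},\,P_{(2)}h_i^{(0)}$ to global sections $g_1,g_2$ on $N\times M$ and solve one inhomogeneous problem via Theorem~\ref{CauchyProblemBisection}. The paper instead treats each $i$ separately: it takes $\widetilde{h}_i$ as Cauchy data on $(O_i\cap\Sigma)\times\Sigma$, solves the \emph{homogeneous} problem of Theorem~\ref{CauchyProblemBisection} on $O_i\times M$ to obtain a bisolution $h_i$, and then checks $h_j-h_i=h_{ij}$ on overlaps by observing that both sides are bisolutions with identical Cauchy data on $(O_{ij}\cap\Sigma)\times\Sigma$. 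Your global route is arguably cleaner; the paper's local route avoids having to verify that $N$ itself is globally hyperbolic. Either is acceptable once the smoothness of $f_{ij}$ is properly established.
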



\begin{proof}
 For $i,j\in I$, we consider the bisolution $h_{ij} := \widehat{S}^{O_i}\big|_{O^M_{ij}} - \widehat{S}^{O_j}\big|_{O^M_{ij}}$. For all $m\in \N$, Proposition \ref{ParametricesProp} provides parametrices $\widetilde{\L}_\pm^m$ on the relative compact domains $N_m$ such that for $\widetilde{\L}^m := \frac{i}{2}\big(\widetilde{\L}_+^m-\widetilde{\L}_-^m\big)$, Propositions \ref{ExHadBisol} and \ref{PropFirstGlob} yield
 \begin{align} \label{hAlphaBetaSmooth}
  h_{ij}\big|_{O_{ij}\times N_m} = \underbrace{\widehat{S}^{O_i} - \widetilde{\L}^m}_{\in\CC} - \big(\underbrace{\widehat{S}^{O_j} - \widetilde{\L}^m}_{\in\CC}\big) \in \CC\big(O_{ij}\times N_m, E^*\boxtimes E\big).
 \end{align}
 
 Such $\widetilde{\L}^m$ exist for all $m$ and $\{N_m\}_{m\in\N}$ exhausts $N$, so we have smoothness on $O_{ij}\times N$. Furthermore, as $O_{ij}$ is causal and $N$ a neighborhood of a Cauchy hypersurface, $h_{ij}$ fulfills a Cauchy problem with smooth Cauchy data and hence is smooth on all of $O^M_{ij}$ by Theorem \ref{CauchyProblemBisection}.\\
 Therefore, recalling the identification of sections of $\CGerm$ with smooth sections in $E^*\boxtimes E$, the map $f_1\colon O^M_{ij} \mapsto h_{ij}$ represents a \v{C}ech-1-cochain, which moreover is a cocycle, since
 \begin{align*}
  (\partial_1 f_1)(O^M_{ijk}) & = h_{jk}\big|_{O_{ijk}^M} - h_{ik}\big|_{O_{ijk}^M} + h_{ij}\big|_{O_{ijk}^M} \\[2mm]
                                          & = \widehat{S}^{O_j}\big|_{O_{ijk}^M} - \widehat{S}^{O_k}\big|_{O_{ijk}^M} - \widehat{S}^{O_i}\big|_{O_{ijk}^M} + \widehat{S}^{O_k}\big|_{O_{ijk}^M} + \widehat{S}^{O_i}\big|_{O_{ijk}^M} - \widehat{S}^{O_j}\big|_{O_{ijk}^M}\\[2mm] 
                                          & = 0
 \end{align*}
 
 for all $i,j,k\in I$. Thus, Lemma \ref{TrivialCechCohomology} ensures the existence of $f_0\colon  O^M_i\mapsto \widetilde{h}_i\in\CC(O^M_i, E^*\boxtimes E)$ such that $\partial_0 f_0=f_1$, and hence,
 \begin{align*} 
  h_{ij} = f_1(O^M_{ij}) = \partial_0 f_0(O^M_{ij}) = f_0(O^M_j)\big|_{O^M_{ij}} - f_0(O^M_i)\big|_{O^M_{ij}} = \widetilde{h}_j\big|_{O^M_{ij}} - \widetilde{h}_i\big|_{O^M_{ij}}, \qquad i,j\in I.
 \end{align*}
 
 Recall that $O_i \cap \Sigma$ is a Cauchy hypersurface of $O_i$ for all $i\in I$ and thus, each $\widetilde{h}_i$ determines a bisolution $h_i\in\CC\big(O^M_i,E^*\boxtimes E\big)$ via Theorem \ref{CauchyProblemBisection}. On the other hand, due to causality of $O_{ij}$, we have a well-posed Cauchy problem on $O^M_{ij}$, and consequently, $h_j\big|_{O^M_{ij}} - h_i\big|_{O^M_{ij}} = h_{ij}$, since their Cauchy data coincide. This proves the claim:
 \begin{gather*}
  \big(\widehat{S}^{O_i}+h_i\big)\big|_{O^M_{ij}} = \big(\widehat{S}^{O_i}+h_j - h_{ij}\big)\big|_{O^M_{ij}} = \big(\widehat{S}^{O_j}+h_j\big)\big|_{O^M_{ij}}. \qedhere
 \end{gather*}
\end{proof} 

For a partition of unity $\{\chi_i\}_{i\in I}$ subordinate to $\O^M$, a well-defined bisolution on $N\times M$ is given via
\begin{align} \label{BisolutionN}
 \widehat{S}^N[\psi,\f] := \sum_{i\in I} \big(\widehat{S}^{O_i} + h_i\big)[\chi_i\psi,\f], \qquad \psi\in\D(N,E),~\f\in\D(M,E^*).
\end{align}

Since $\O^M$ a locally finite cover, for each $\psi$, only finitely many summands are non-zero. Moreover, due to Lemma (\ref{LemCech}), this definition does not depend on the choice of the partition, and for all $i$, we directly read off from (\ref{BisolutionN}) that
\begin{align} \label{CompN}
 \widehat{S}^N\big|_{O_i\times M} - \widehat{S}^{O_i} \in \CC(O_i\times M,E^*\boxtimes E).
\end{align}

Hence, two different constructions of such a bisolution on $N\times M$ differ only by a smooth bisolution.

\begin{Prop} \label{PropGlobalBisolution}
 There is a bisolution $S\colon \D(M,E) \times \D(M,E^*)\rightarrow\R$ such that 
  \begin{align} \label{CompM}
  S\big|_{O_i\times O_i}-S^{O_i} \in \CC\big(O_i\times O_i,E^*\boxtimes E\big), \qquad i\in I.
 \end{align}
\end{Prop}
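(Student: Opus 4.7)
The strategy is to extend the bisolution $\widehat{S}^N$, currently defined on $N \times M$, to a bisolution $S$ on $M \times M$ by propagating it in the first variable via the well-posed smooth Cauchy problem on $M$. The compatibility (\ref{CompM}) will then follow immediately from (\ref{CompN}) together with Proposition \ref{PropFirstGlob}.

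First I would recast $\widehat{S}^N$ as integration against a smooth kernel in the first variable. Using Lemma \ref{LemCech}, the value $u_\f(p) := (\widehat{S}^{O_i}+h_i)(p)[\f]$ is independent of the choice of $i$ with $p\in O_i$, so the partition-of-unity formula (\ref{BisolutionN}) collapses to
\begin{align*}
 \widehat{S}^N[\psi,\f] = \int_N \langle \psi(p), u_\f(p)\rangle \, \df V(p), \qquad u_\f \in \CC(N,E^*).
\end{align*}
Smoothness of $u_\f$ comes from Theorem \ref{CauchyProblemSingularSolution} applied to each $\widehat{S}^{O_i}$ together with smoothness of $h_i$, and $u_\f$ depends continuously on $\f$. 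Moreover, the bisolution property of $\widehat{S}^N$ forces $P^t u_\f = 0$ on $N$, and crucially $u_{P\f_0} \equiv 0$ on $N$ for every $\f_0 \in \D(M,E^*)$.

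Next, I would apply the well-posed Cauchy problem for $P^t$ on $M$ with Cauchy hypersurface $\Sigma$ to produce, for each $\f$, the unique $v_\f \in \CC(M,E^*)$ with $P^t v_\f = 0$ and Cauchy data $(u_\f|_\Sigma, \nabla_\nu u_\f|_\Sigma)$. Since each $O_i$ is globally hyperbolic with Cauchy hypersurface $O_i \cap \Sigma$, uniqueness of smooth solutions on $O_i$ forces $v_\f|_{O_i} = u_\f|_{O_i}$ for every $i$. Continuous dependence on the Cauchy data combined with continuity of $\f \mapsto u_\f$ yields continuity of $\f \mapsto v_\f$ from $\D(M,E^*)$ into $\CC(M,E^*)$, so that
\begin{align*}
 S[\psi,\f] := \int_M \langle \psi(p), v_\f(p)\rangle \, \df V(p), \qquad \psi \in \D(M,E), ~ \f \in \D(M,E^*),
\end{align*}
defines a bidistribution on $M \times M$.

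It remains to verify the bisolution property and the compatibility. For the first slot $P^t_{(1)} S = 0$ is immediate from $P^t v_\f = 0$; for the second slot $P_{(2)} S = 0$ one uses that $u_{P\f_0} \equiv 0$ yields trivial Cauchy data, hence $v_{P\f_0} = 0$ by uniqueness on $M$. From $v_\f|_{O_i} = u_\f|_{O_i}$ we read off $S|_{O_i \times M} = \widehat{S}^N|_{O_i \times M}$ as bidistributions, and combining this with (\ref{CompN}) and $\widehat{S}^{O_i}|_{O_i \times O_i} = S^{O_i}$ from Proposition \ref{PropFirstGlob} yields (\ref{CompM}). The subtle point is preserving the second-slot bisolution property under the extension; this works precisely because the bisolution property of $\widehat{S}^N$ delivers the identically vanishing section $u_{P\f_0}$, not merely a solution of the homogeneous equation, so that well-posedness forces the global extension $v_{P\f_0}$ to vanish on all of $M$.
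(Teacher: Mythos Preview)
Your proposal is correct and follows essentially the same route as the paper: extend $\widehat{S}^N$ in the first variable by solving the homogeneous Cauchy problem for $P^t$ with data on $\Sigma$, and deduce the second-slot bisolution property from the fact that $\widehat{S}^N(\cdot)[P^t\f]$ already vanishes on $N$, hence has trivial Cauchy data. One small notational slip: since $\f_0\in\D(M,E^*)$ and the second-slot condition $P_{(2)}S=0$ unpacks as $S[\psi,P^t\f_0]=0$, your $u_{P\f_0}$ should read $u_{P^t\f_0}$; conceptually nothing changes.
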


\begin{proof} 
 Let $\widehat{S}^N$ be the bisolution on $N\times M$ defined by (\ref{BisolutionN}) and recall that $N$ is an open neighborhood of $\Sigma$. For all $\f\in\D(M,E^*)$, we define $S(\cdot)[\f]$ as the unique solution of
 $$\left\{\begin{array}{cl} P^t\big(S(\cdot)[\f]\big) & =0, \\[2mm] S(\cdot)[\f]\big|_\Sigma & = \widehat{S}^N(\cdot)[\f]\big|_\Sigma, \\[2mm] \nabla_\nu \big(S(\cdot)[\f]\big)\big|_\Sigma & = \nabla_\nu\widehat{S}^N(\cdot)[\f]\big|_\Sigma. \end{array} \right.$$
 This yields a smooth section, which leads to a bisolution since $\widehat{S}^N(\cdot)[P^t\f]=0$, and hence, $S(\cdot)[P^t\f]$ solves the trivial Cauchy problem. Furthermore, we have $S\big|_{O_i\times O_i} = \widehat{S}^N\big|_{O_i\times O_i}$, so (\ref{CompM}) follows from (\ref{CompN}) and Proposition \ref{PropFirstGlob}.
\end{proof}

\begin{Cor} \label{CorSymGlobalBisolution}
 There is a smooth bisolution $u\in\CC(M\times M,E^*\boxtimes E)$ such that 
 $$(S-u)[\psi_1,\Theta\psi_2] = (S-u)[\psi_2,\Theta\psi_1], \qquad \psi_1,\psi_2\in\D(M,E).$$
\end{Cor}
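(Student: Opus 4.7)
The plan is to realize $u$ as half of the $\Theta$-twisted antisymmetric part of $S$. Consider the bidistribution
$$A[\psi_1,\psi_2] := S[\psi_1,\Theta\psi_2] - S[\psi_2,\Theta\psi_1], \qquad \psi_1,\psi_2\in\D(M,E),$$
which is automatically a bisolution for $P$, since $P=\Theta^{-1}P^t\Theta$ and $S$ is. If $A$ is given by a smooth section of $E^*\boxtimes E$ on $M\times M$, then defining $u\in\CC(M\times M,E^*\boxtimes E)$ by $u[\psi,\f] := \tfrac{1}{2}\,A[\psi,\Theta^{-1}\f]$ produces a smooth bisolution such that
$$(S-u)[\psi_1,\Theta\psi_2] = \tfrac{1}{2}\big(S[\psi_1,\Theta\psi_2]+S[\psi_2,\Theta\psi_1]\big)$$
is manifestly symmetric in $\psi_1,\psi_2$, which is the claim. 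The whole task thus reduces to proving that $A$ is smooth on $M\times M$.

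Local smoothness of $A$ on each $O_i\times O_i$ is immediate from what has been done. By (\ref{CompM}) we have $S|_{O_i\times O_i}-S^{O_i}\in\CC(O_i\times O_i,E^*\boxtimes E)$, and by Proposition \ref{FundSolParametrixSmooth} combined with the decomposition (\ref{SymmetryUpToSmooth}), $S^{O_i}$ is $\Theta$-twisted symmetric modulo a smooth section: the $\widetilde U$-parts of the parametrices $\widetilde{\LL}_\pm$ used in its construction are exactly symmetric by (\ref{Lfsadj}) and Theorem \ref{HadCoeffSymm}, while the $\widetilde W$-parts and the Neumann corrections $\QQ_\pm=(\id+\KK_\pm)^{-1}-\id$ are smoothing. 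Hence $A|_{O_i\times O_i}\in\CC(O_i\times O_i,E^*\boxtimes E)$ for every $i\in I$.

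To extend this to smoothness of $A$ on $M\times M$, I would apply Theorem \ref{CauchyProblemBisection} to the bisolution $A$: it suffices to verify that its four Cauchy data on $\Sigma\times\Sigma$ are smooth. The previous step already gives this on the open cover $\{(O_i\cap\Sigma)\times(O_i\cap\Sigma)\}_{i\in I}$ of the diagonal of $\Sigma\times\Sigma$, but the off-diagonal pairs $(\sigma,\tau)\in(O_i\cap\Sigma)\times(O_j\cap\Sigma)$ with $i\neq j$ cannot be treated by any single local comparison with an $S^{O_k}$, and I expect this to be the main obstacle. The route of resolution I would pursue is a \v{C}ech-type patching modelled on Section \ref{GlobalConstruction}: the local symmetric representatives $v^{O_i} := \tfrac{1}{2}\big(S^{O_i}+(S^{O_i})^\star\big)$, with $(\cdot)^\star$ denoting the $\Theta$-twisted transpose $T^\star(p,q):=\Theta^{-1}_p T(q,p)^t\Theta_q$, form a 0-cochain of symmetric bisolutions on $\{O_i\times O_i\}_{i\in I}$; their pairwise differences on overlaps are smooth, since $\widetilde{\LL}^{O_i}_\pm-\widetilde{\LL}^{O_j}_\pm$ is smoothing by Corollary \ref{DiffParametricesSmoothing} and this smoothness is propagated by the Cauchy problem to $S^{O_i}-S^{O_j}$. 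The vanishing of $H^1$ from Lemma \ref{TrivialCechCohomology} then corrects this cochain to a globally compatible symmetric bisolution $v$ on a neighborhood of $\Sigma\times\Sigma$ with $S-v$ smooth there, and a final application of Theorem \ref{CauchyProblemBisection} propagates this to $M\times M$, yielding the desired $u=S-v$.
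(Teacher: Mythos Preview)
Your definition of $u$ as half the $\Theta$-twisted antisymmetric part of $S$ is exactly what the paper does, and the reduction to proving smoothness of $A$ on $M\times M$ is correct. You also correctly isolate the off-diagonal points of $\Sigma\times\Sigma$ as the only nontrivial issue.

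The gap is in your proposed resolution. A \v{C}ech argument on the cover $\{O_i\times O_i\}_{i\in I}$ produces a compatible symmetric object only on $\bigcup_i(O_i\times O_i)$, which is merely a neighbourhood of the \emph{diagonal} in $N\times N$ and not a neighbourhood of $\Sigma\times\Sigma$. For $(\sigma,\tau)\in\Sigma\times\Sigma$ lying in disjoint $O_i,O_j$ there is no element of your cover containing $(\sigma,\tau)$, so the patched $v$ is simply undefined there and you still cannot read off smooth Cauchy data for $A$. To make the \v{C}ech route work you would have to first extend each $v^{O_i}$ to $O_i\times M$ as in Proposition~\ref{PropFirstGlob} and then patch on $\{O_i\times M\}$, i.e.\ essentially redo the whole construction of $S$ a second time with symmetrised local data.

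The paper avoids this entirely. It uses the parametrices $\widetilde{\L}^m$ on the larger relatively compact sets $N_m$ already introduced in (\ref{hAlphaBetaSmooth}); these satisfy $\iota\widetilde{\L}^m=\widetilde{\L}^m$ by (\ref{Lfsadj}), and $S|_{N_m\times N_m}-\widetilde{\L}^m$ is smooth. Then
\[
2u\big|_{N_m\times N_m}=\big(S-\widetilde{\L}^m\big)-\iota\big(S-\widetilde{\L}^m\big)\in\CC(N_m\times N_m,E^*\boxtimes E),
\]
so $u$ is smooth on each $N_m\times N_m$, hence on $N\times N$, and a single application of Theorem~\ref{CauchyProblemBisection} propagates smoothness to $M\times M$. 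The off-diagonal points are handled automatically because $\widetilde{\L}^m$ lives on all of $N_m\times N_m$, not just on products of small patches; no second \v{C}ech step is needed.
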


\begin{proof}
 For $(\iota S)[\psi_1,\Theta\psi_2] := S[\psi_2,\Theta\psi_1]$, let $u := \frac{1}{2}(S - \iota S)$. It follows that $S-u = \iota(S-u)$ and we show that $u$ is smooth. For all $m\in\N$, let $\widetilde{\L}^m$ be given as in (\ref{hAlphaBetaSmooth}), i.e.\ $\widetilde{\L}^m = \iota\widetilde{\L}^m$ due to symmetry and $\widetilde{S}^N\big|_{N_m\times N_m} - \widetilde{\L}^m$ smooth by Proposition \ref{FundSolParametrixSmooth}. Therefore, $u$ is smooth on $N_m\times N_m$ for all $m$:
 $$2u\big|_{N_m\times N_m} = \widehat{S}^{N_m} - \iota\widehat{S}^{N_m} + \widetilde{\L}^m - \widetilde{\L}^m = \widehat{S}^{N_m} - \widetilde{\L}^m - \iota\big(\widehat{S}^{N_m} - \widetilde{\L}^m\big)$$
 and thus on $N\times N$. Since $u$ is a bisolution and $N$ a neighborhood of $\Sigma$, Theorem \ref{CauchyProblemBisection} ensures smoothness on all of $M\times M$.
\end{proof}

\begin{Thm} \label{SymHadBisol}
 Let $M$ be a globally hyperbolic Lorentzian manifold, $\pi\colon E\rightarrow M$ a real vector bundle with non-degenerate inner product over $M$ and $P\colon \CC(M,E)\rightarrow\CC(M,E)$ a formally self-adjoint wave operator. Furthermore, let $G_\pm$ denote the advanced and retarded Green operator for $P^t$ and $S$ the symmetric bisolution given by Proposition \ref{PropGlobalBisolution} and Corollary \ref{CorSymGlobalBisolution}. Then
 \begin{align} \label{GlobalHadamard}
  H := S + \frac{i}{2}(G_+ - G_-)
 \end{align}
 is a Hadamard bisolution, and a Feynman and an anti-Feynman Green operator for $P^t$ is determined by
 \begin{align} \label{FeynmanSchwartzKernels}
  G_F = iS + \frac{1}{2}(G_++G_-), \qquad \qquad G_{aF} = -iS + \frac{1}{2}(G_+ + G_-).
 \end{align}
\end{Thm}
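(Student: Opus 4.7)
The plan is to address the three assertions separately, all reducing to facts already in hand. The algebraic statements are essentially immediate: since $P^t S = S P^t = 0$ by Proposition \ref{PropGlobalBisolution} and Corollary \ref{CorSymGlobalBisolution}, and $P^t G_\pm = G_\pm P^t = \id$, direct cancellation gives
\[
 P^t G_F = i P^t S + \tfrac{1}{2}\bigl(P^t G_+ + P^t G_-\bigr) = \id, \qquad G_F P^t = \id,
\]
and analogously for $G_{aF}$; the same kind of computation shows that $H$ is a bisolution. Only the wave front properties of $H$, $G_F$ and $G_{aF}$ require genuine work.

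For the Hadamard form of $H$, I would exploit that (\ref{HConditionWF}) is a purely local condition on the wave front set. On the admissible cover $\{O_i\}_{i\in I}$ of the causal normal neighborhood $N$ of $\Sigma$ constructed in Section~\ref{GlobalConstruction}, Proposition \ref{PropGlobalBisolution} gives $S|_{O_i\times O_i} - S^{O_i} \in \CC$, while the restrictions of the global $G_\pm$ to test sections supported in $O_i$ agree with their local counterparts modulo smoothing, by uniqueness of Green operators together with causality of $O_i$. Hence $H|_{O_i\times O_i}$ differs from the local Hadamard bisolution $H^{O_i}$ of Theorem \ref{LocalHadamardBisolutionThm} by a smooth section, which shows that $H$ has Hadamard wave front set throughout $N \times N$. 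Since $H$ is a bisolution and $\Sigma$ is a Cauchy hypersurface, propagation of the Hadamard property under Cauchy evolution \cite{FSW1978} then extends the Hadamard form to all of $M \times M$.

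To identify the wave front sets of $G_F$ and $G_{aF}$, I would invoke Theorem~6.5.3 of \cite{DH1972}, which produces distinguished parametrices $\widetilde{G}_F^{\mathrm{DH}}, \widetilde{G}_{aF}^{\mathrm{DH}}$ for $P^t$, unique up to smoothing and with wave front sets as in (\ref{WFDistParametrices}). Radzikowski's characterization applied to the Hadamard bidistribution $H$ yields $\tfrac{i}{2}\bigl(\widetilde{G}_{aF}^{\mathrm{DH}} - \widetilde{G}_F^{\mathrm{DH}}\bigr) = S$ modulo $\CC$, in agreement with the defining identity $G_{aF} - G_F = -2iS$. Since both $G_F$ and $\widetilde{G}_F^{\mathrm{DH}}$ are parametrices for $P^t$ (Green operators being parametrices), their difference is a bisolution modulo smoothing, and a direct microlocal computation of $\WF\bigl(iS + \tfrac{1}{2}(G_+ + G_-)\bigr)$ via H\"ormander's calculus---using the known wave front sets of $G_\pm$ and the Hadamard wave front set of $S = H - \tfrac{i}{2}(G_+ - G_-)$---identifies $\WF(G_F)$ with the Feynman and $\WF(G_{aF})$ with the anti-Feynman wave front set. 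Combined with the Green operator property, this pins $G_F$ and $G_{aF}$ down as the sought Green operators.

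The main obstacle is the microlocal bookkeeping in this last step: the ingredients $S$ and $G_\pm$ all carry singularities on the same lightlike characteristic directions, and one must carefully disentangle their future- and past-directed parts to see how they recombine into the asymmetric Feynman/anti-Feynman wave front sets of (\ref{WFDistParametrices}). The sign and orientation conventions are already subtle in the prototype (\ref{HadamardPrototype}) on Minkowski space, so I would handle this step by reducing to the prototype via a local chart argument.
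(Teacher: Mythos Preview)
Your argument for the Hadamard property of $H$ is essentially the paper's: reduce to the local Hadamard bisolutions on patches in a causal normal neighbourhood $N$ of $\Sigma$, then propagate. Two small points: the paper works with the exhaustion $N_m$ and the parametrices $\widetilde{\L}^m$ defined on all of $N_m\times N_m$, which covers off-diagonal pairs $(p,q)\in N\times N$ that need not lie in a common $O_i$; your patch-by-patch check on $O_i\times O_i$ alone does not obviously do this. Also, \cite{FSW1978} treats the scalar case; for sections in $E$ you want Theorem~5.8 of \cite{SV2001}, which is what the paper cites.

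For the Feynman identification your route is more circuitous than necessary, and the step you flag as the ``main obstacle'' is a genuine gap as written. Knowing $\WF(S)$ and $\WF(G_\pm)$ only gives $\WF\big(iS+\tfrac12(G_++G_-)\big)\subset\WF(S)\cup\WF(G_+)\cup\WF(G_-)$; H{\"o}rmander's calculus does not tell you which singularities cancel in a sum, so no ``direct microlocal computation'' is available here without further input. The missing ingredient is precisely the Duistermaat--H{\"o}rmander relation $\widetilde{G}_F+\widetilde{G}_{aF}=\widetilde{G}_A+\widetilde{G}_R$ modulo $\CC$ (section~6.6 of \cite{DH1972}, or (\ref{EqualitySums}) in the paper). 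Once you have that together with your $S=\tfrac{i}{2}\big(\widetilde{G}_{aF}-\widetilde{G}_F\big)$ modulo $\CC$ and $G_\pm=\widetilde{G}_{A/R}$ modulo $\CC$, pure algebra gives $iS+\tfrac12(G_++G_-)=\widetilde{G}_F$ modulo $\CC$, and no wave-front-set bookkeeping is needed. The paper short-circuits all of this by observing that Radzikowski's Theorem~5.1 already delivers $\pm iH+G_\pm$ as Feynman/anti-Feynman parametrices, the scalar proof carrying over verbatim because the singular part of the Hadamard series is governed by the scalar distributions $L^\Omega_\pm, R^\Omega_\pm$.
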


\begin{proof}
 For each $m\in\N$, let $\widetilde{\L}^m$ be given as in (\ref{hAlphaBetaSmooth}). It follows that $\WF\big(\widetilde{\L}^m\big) = \WF(\widetilde{G}_{aF} - \widetilde{G}_F)$ in the sense of (\ref{WFDistParametrices}) from Proposition \ref{FundSolParametrixSmooth}, and moreover, we have
 $$S\big|_{N_m\times N_m} - \widetilde{\L}^m \in \CC(N_m\times N_m, E^*\boxtimes E)$$
 by Propositions \ref{ExHadBisol} and \ref{PropFirstGlob} as well as (\ref{CompN}). This holds for all $m$ and hence, $H$ is of Hadamard form in a causal normal neighborhood $N$ of $\Sigma$ due to Proposition \ref{HadamardStructure}. Therefore, $H$ is globally Hadamard by Theorem 5.8 of \cite{SV2001} or, to be more precise, by (i) of the subsequent Remark.\\ 
 For (\ref{FeynmanSchwartzKernels}) note that the proof in the scalar case, given by Theorem 5.1 of \cite{R1996} and section 6.6 of \cite{DH1972}, exclusively employs the singularity structure of the parametrices. In our case, this is still determined by scalar distributions $L^\Omega_\pm, R^\Omega_\pm$ and thus stays unaffected when multiplying with smooth Hadamard coefficients. Hence, a Feynman and an anti-Feynman parametrix for $P^t$ are given by
 $$\pm iH + G_\pm = \pm i S +\frac{1}{2}(G_++G_-),$$ 
 which are even Green operators since $G_\pm$ are and $S$ is a bisolution.
\end{proof}

\subsection{Positivity}

It remains to show that $S$ can be chosen as a positive bisolution meaning that there is some smooth, symmetric bisolution $u$ such that $(S+u)[\f,\Theta\f]\geq0$ for all $\f\in\D(M,E)$. For this, we need a bundle-valued version of Theorem 6.6.2 of \cite{DH1972}, so we restrict to a certain class of operators, for which this result holds.\\
Let $M$ be a smooth manifold, $\pi\colon E\rightarrow M$ a real or complex vector bundle over $M$ with non-degenerate inner product and $P\colon \CC(M,E)\rightarrow\CC(M,E)$ a properly supported pseudodifferential operator. For the definitions of $P$ being of real principal type in $M$, pseudo-convexity of $M$ with respect to $P$ and the bicharacteristic relation $C_P$ of $P$, we adopt Definition 3.1 of \cite{D1982} as well as Definition 6.3.2 and (6.5.2) of \cite{DH1972}, respectively. Assuming those properties for $M$ and $P$, according to Theorem 6.5.3 of \cite{DH1972}, there are distinguished parametrices $\widetilde{Q}_{C_P\backslash\Delta},\widetilde{Q}_\emptyset$ associated to the respective components of $C_P\backslash\Delta$, where $\Delta$ denotes the diagonal in $\Char P \times \Char P$. For $P$ a wave operator, they correspond to Feynman and anti-Feynman parametrices, respectively.

\begin{Def} \label{PosPropType}
 Let $M$ be a smooth manifold, $\pi\colon E\rightarrow M$ a real or complex vector bundle with non-degenerate inner product and $P\colon \CC(M,E)\rightarrow\CC(M,E)$ a formally self-adjoint, properly supported pseudodifferential operator of real principal type in $M$ such that $M$ is pseudo-convex with respect to $P$. Then $P$ is called of positive propagator type if there exists some $f\in\CC(M\times M, E^*\boxtimes E)$ such that the bidistribution $T:=\frac{i}{2}\big(\widetilde{Q}_{C_P\backslash\Delta} - \widetilde{Q}_\emptyset\big)+f$ satisfies
 $$T[\psi,\Theta\psi]\geq 0, \qquad \psi\in\D(M,E).$$
\end{Def}

Note that $f$ is not demanded to be unique and in general, a positive propagator type operator will have many such sections. Theorem 6.6.2 of \cite{DH1972} states that every such $P$ is of positive propagator type for $E$ the trivial line bundle $M\times\R$. Note that the proof of this theorem employs positivity of $\frac{i}{2}(\widetilde{G}_{aF} - \widetilde{G}_F)$ for the directional derivatives $D_n:=-i\pd{}{x_n},~n=0,\hdots,d-1,$ on $\CC(\R^d)$, and by applying certain operators, allowing one to keep track of the singularity structure of the corresponding parametrices, the general case is reduced to $D_n$. Eventually, positivity holds up to smooth functions since there is no way to control this smooth part in terms of the singularity structure. However, in the setting of Definition \ref{GlobHypGreen} with $E$ assumed to be Riemannian, we can choose the same ansatz and basically the same procedure. This strongly suggests the assumption that wave operators acting on smooth sections in some general Riemannian vector bundle over a globally hyperbolic Lorentzian manifold are of positive propagator type. On the other hand, by Proposition 5.6 of \cite{SV2001}, the Hadamard bisolutions fail to be positive if the inner product on $E$ is not positive definite. Hence, anticipating the result of this section, wave operators acting on sections in a non-Riemannian vector bundle over a globally hyperbolic Lorentzian manifold are not of positive propagator type.\\
Assuming $P$ to be of positive propagator type, we need to show that $f$ can be actually chosen as a symmetric bisolution. It turns out that the existence of a pair $G_F,G_{aF}$, a well-posed Cauchy problem and
\begin{align} \label{CharBiCharP}
 \begin{split}
  \Char(P) & = \big\{(p,\xi)\in T^*M\backslash\{0\}~\big|~g_p(\xi^\sharp,\xi^\sharp)=0\}, \\[2mm]
  C_P      & = \big\{(p,\xi;q,\eta)\in \big(T^*M\times T^*M\big)\backslash\{0\}~\big|~(p,\xi)\sim(q,\zeta)\big\}.
 \end{split}
\end{align}

is sufficient. For $\Sigma$ some Cauchy hypersurface of $M$, the idea is to use $f$ as initial data on $\Sigma\times\Sigma$ in order to determine a smooth bisolution $u$ via Theorem \ref{CauchyProblemBisection} and then following the lines of section 3.3 of \cite{GW2015}.\\
Let $\iota\colon  \Sigma \hookrightarrow M$ be the embedding map and $\rho:= \big(\iota^*, \iota^* \circ \nabla_\nu\big)$ the corresponding pullback to the initial data on $\Sigma$, i.e.\
\begin{align} \label{Rho}
 \rho\colon \qquad \CC(M,E) \rightarrow \CC(\Sigma,E\oplus E), \qquad u\longmapsto\big(u\big|_\Sigma, \nabla_\nu u\big|_\Sigma\big).
\end{align}

Clearly, $\rho$ is surjective and we have $\rho\big(\CCsc(M,E)\big)=\D(M,E\oplus E)$. Furthermore, for any differential operator $P$ with well-posed Cauchy problem, $\rho$ yields a bijection $\ker P \rightarrow \CC(\Sigma,E\oplus E)$. The transposed map $\rho^t$ is related to the pushforward along the embedding, which creates singular directions orthogonal to the embedded (spacelike) hypersurface. More precisely, according to Proposition 10.21 of \cite{DK2010}, $\iota_*\f$ corresponds to $\f\delta_\Sigma$ for any \mbox{$\f\in\CC(\Sigma,E)$}, and hence, $\rho^t$ is a map 
\begin{align*} 
 \rho^t\colon \qquad \CC(\Sigma,E^*\oplus E^*) \longrightarrow \D_{N^*\Sigma}(M,E^*)'.
\end{align*}

$\D_\Gamma'$ denotes the distributions with wave front set contained in the closed cone $\Gamma\subset T^*M\backslash\{0\}$, and we refer to section 8.2 of \cite{H1990} for precise definitions and properties of these spaces. Due to H\"ormander's criterion \big((8.2.3) of \cite{H1990}\big), we can pull back a distribution along $\iota$ if its wave front set does not contain the orthogonal directions mentioned above. Hence, for all closed cones $\Gamma\subset T^*M\backslash\{0\}$ with $\Gamma \cap N^*\Sigma=\emptyset$, (\ref{Rho}) extends to a map
\begin{align*} 
 \rho\colon \qquad \D_\Gamma(M,E^*)' \longrightarrow \D_{\iota^*\Gamma}(\Sigma,E^*\oplus E^*)', \qquad u \longmapsto (\chi\mapsto u[\rho^t\chi]),
\end{align*}

where $\iota^*\Gamma := \big\{\big(\sigma,\df\iota|_\sigma^t(\xi)\big)~\big|~\big(\iota(\sigma),\xi\big)\in\Gamma\big\}\subset T^*\Sigma\backslash\{0\}$ contains the projections of $\xi\in\Gamma$ onto $T^*\Sigma$. Let 
\begin{align} \label{L2ProductSigmaE}
 (\chi,\zeta)_\Sigma := \int_\Sigma \big(\braket{\chi_0}{\zeta_0} + \braket{\chi_1}{\zeta_1}\big) \df V_\Sigma, \qquad \chi,\zeta\in\D(\Sigma,E\oplus E),
\end{align}

denote the inner product on $\D(\Sigma,E\oplus E)$ with $\df V_\Sigma$ the induced volume density and $\widetilde{\Theta}:=(\Theta,\Theta)$ the corresponding isomorphism $E\oplus E \rightarrow E^*\oplus E^*$. For $P$ Green-hyperbolic, the exact sequence (\ref{ExactSequence}) provides $\ran G = \ker P\big|_{\CCsc}$ and thus a further bijection $\rho G\colon  \D(M,E)\slash\ker G \rightarrow \D(\Sigma,E\oplus E)$, which transfers $G$ to a Green operator $G_\Sigma$ on the space of initial data  $\D(\Sigma,E\oplus E)$ via
\begin{align} \label{GreenSigma}
 (\rho G\psi_1, G_\Sigma\rho G\psi_2)_\Sigma := (\psi_1,G\psi_2)_M, \qquad \psi_1,\psi_2\in\D(M,E).
\end{align}

It is not hard to deduce the Cauchy evolution operator $U_\Sigma := -G\rho^*G_\Sigma$ mapping initial data $(u_0,u_1)\in\CC(\Sigma,E\oplus E)$ to the solution $u\in\CC(M,E)$ of the corresponding homogeneous Cauchy problem. Moreover, \cite{D1980} and, for the vector-valued case, \cite{BS2019} provide the expression
$u=G^*\big(\rho^*_1 u_0 - \rho^*_0 u_1\big)$, from which uniqueness and surjectivity of $\rho G\colon \D(M,E) \rightarrow \D(\Sigma,E\oplus E)$ lead to the particular expression $G_\Sigma(u_0,u_1) = (-u_1,u_0)$.

\begin{Thm} \label{PositivityThm}
 Let $M$ be a globally hyperbolic Lorentzian manifold, $\pi\colon E\rightarrow M$ a Riemannian vector bundle and $P\colon \CC(M,E) \rightarrow \CC(M,E)$ a linear first- or second-order differential operator, which is of positive propagator type and admits a well-posed Cauchy problem. Assume that the characteristic set and the bicharacteristic relation of $P$ are given by (\ref{CharBiCharP}) and that $\widetilde{Q}_{C_P\backslash\Delta},\widetilde{Q}_\emptyset$ can be chosen as actual Green operators $Q_{C_P\backslash\Delta},Q_\emptyset$.\\ 
 Then there is a real-valued and symmetric bisolution $S$ such that $S-\frac{i}{2}\big(Q_{C_P\backslash\Delta} - Q_\emptyset\big)$ is smooth and
 \begin{align*} 
  S[\psi,\Theta\psi]\geq0, \qquad \psi\in\D(M,E).
 \end{align*}
\end{Thm}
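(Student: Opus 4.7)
By hypothesis the distinguished parametrices are Green operators, so $Q_{C_P\backslash\Delta} - Q_\emptyset$ is annihilated by $P$ in either argument, and the complex-valued bidistribution $S_0 := \frac{i}{2}(Q_{C_P\backslash\Delta} - Q_\emptyset)$ is a bisolution with the prescribed Hadamard wave front set (\ref{WFDistParametrices}). Positive propagator type gives a smooth section $f\in\CC(M\times M, E^*\boxtimes E)$ such that $T := S_0 + f$ satisfies $T[\psi,\Theta\psi]\geq 0$ for all $\psi\in\D(M,E)$, but $T$ itself is not a bisolution because $f$ is merely smooth. The goal is to replace $f$ by a smooth bisolution $u$ so that, after symmetrization, $S$ built out of $S_0 + u$ is the desired real, symmetric, positive bisolution.

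First I would extract the Cauchy data of $T$ on $\Sigma\times\Sigma$. Assumption (\ref{CharBiCharP}) identifies $\WF(S_0)$ with pairs of null covectors, transverse to $N^*(\Sigma\times\Sigma)$ since $\Sigma$ is spacelike, so H\"ormander's criterion legitimates the restrictions $T|_{\Sigma\times\Sigma}$, $\nabla_\mu T|_{\Sigma\times\Sigma}$, $\nabla_\nu T|_{\Sigma\times\Sigma}$, and $\nabla_\nu\nabla_\mu T|_{\Sigma\times\Sigma}$ as distributions on $\Sigma\times\Sigma$ (with the appropriate reduction of the number of data if $P$ is of first order). Feeding these into a distributional version of Theorem \ref{CauchyProblemBisection}, obtained by applying Theorem \ref{CauchyProblemSingularSolution} separately in each argument, yields a bisolution $\widehat T$ on $M\times M$ carrying exactly these Cauchy data. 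The difference $\widehat T - T$ then solves a bi-Cauchy problem whose initial data on $\Sigma\times\Sigma$ vanish and whose inhomogeneities $P_{(1)}(\widehat T - T) = -P_{(1)}f$ and $P_{(2)}(\widehat T - T) = -P_{(2)}f$ are smooth, so Theorem \ref{CauchyProblemBisection} applied to $\widehat T - T$ itself forces $\widehat T - T\in\CC(M\times M, E^*\boxtimes E)$. Consequently $u := \widehat T - S_0$ is a smooth bisolution, and I would form the real symmetric part $S := \frac{1}{2}\bigl(\widehat T + \iota\overline{\widehat T}\bigr)$ exactly in the spirit of Corollary \ref{CorSymGlobalBisolution}, with $\iota$ swapping arguments and applying the $\Theta$-transposition; by construction $S - S_0$ stays smooth.

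For positivity of $S$, I would adapt the approach of section 3.3 of \cite{GW2015} and transfer the problem to Cauchy data. Since $\widehat T$ is a bisolution, it induces a bilinear form $q$ on $\D(\Sigma,E\oplus E)$ via the isomorphism $\rho G\colon V := \D(M,E)/\ker G \to \D(\Sigma,E\oplus E)$ and the identity $G_\Sigma(u_0,u_1) = (-u_1,u_0)$ noted before the theorem, yielding $\widehat T[\psi_1,\Theta\psi_2] = q(\rho G\psi_1,\rho G\psi_2)$ after a Green's-identity rewriting. The Cauchy data of $\widehat T$ coincide with those of $T$ by construction, so the same $q$ is also read off from $T$. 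For any fixed $\psi\in\D(M,E)$, I would then construct a sequence $\psi_n\in\D(M,E)$ with $[\psi_n]=[\psi]$ in $V$ and $\supp\psi_n$ contracting onto $\Sigma$, which is possible by surjectivity of $\rho G$ combined with a thickening construction on $\Sigma$. A Green's-identity manipulation using smoothness of $f$ then gives $T[\psi_n,\Theta\psi_n] = q(\rho G\psi,\rho G\psi) + r_n$ with $r_n\to 0$, whence positivity of $T$ forces $q(\rho G\psi,\rho G\psi)\geq 0$, so $\widehat T[\psi,\Theta\psi]\geq 0$ and after symmetrization $S[\psi,\Theta\psi]\geq 0$.

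The main obstacle is the last step: translating pointwise positivity of $T$ on $\D(M,E)$ into positivity of the Cauchy-data form $q$. One must carefully control the Green's-identity remainder $r_n$ uniformly as supports contract to $\Sigma$ and keep track of the fiberwise $\Theta$ in the resulting quadratic form. This is exactly where Riemannianness of $E$ enters, guaranteeing the positive definiteness of the relevant fiberwise pairings, and it is also the obstruction which, as noted in the paragraph preceding the theorem, forces positivity to fail when the inner product on $E$ is not positive definite.
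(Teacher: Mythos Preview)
Your overall architecture is the paper's: build a smooth bisolution $u$ from the Cauchy data of $f$ on $\Sigma\times\Sigma$, then transfer positivity of $S_0+f$ to $S_0+u$ via the Cauchy surface, following \cite{GW2015}. Two points deserve comment.

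\textbf{Constructing $u$.} Your detour through a ``distributional version of Theorem \ref{CauchyProblemBisection}'' is unnecessary and technically problematic: Theorem \ref{CauchyProblemSingularSolution} requires the family $v(p)$ to have spacelike compact support, which $S_0(p,\cdot)$ does not. The paper sidesteps this entirely. Since $S_0$ is already a bisolution, one simply feeds the \emph{smooth} Cauchy data of $f$ into the smooth Theorem \ref{CauchyProblemBisection} to obtain $u$ directly; then $\widehat T:=S_0+u$ is automatically the bisolution with $T$'s Cauchy data.

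\textbf{Positivity.} Here there is a genuine gap. Writing $r_n=(f-u)[\psi_n,\Theta\psi_n]$, the naive estimate gives only $r_n=O(1)$: for a second-order $P$ the sequence $\psi_n=P(\chi_n G\psi)$ has $\|\psi_n\|_{L^1}\sim n$, and the Cauchy-data matching gives $(f-u)=t^2A+s^2B$ near $\Sigma\times\Sigma$, so $|r_n|\lesssim n^{-2}\cdot\|\psi_n\|_{L^1}^2\sim 1$. ``Green's identity plus smoothness of $f$'' does not close this. What is actually needed is to identify the weak limit $\psi_n\to\mu$ as an order-$\le 1$ distribution supported on $\Sigma$, argue that $\psi_n\otimes\Theta\psi_n\to\mu\otimes\Theta\mu$ in $\D'(M\times M)$, and then observe that $(f-u)[\mu,\Theta\mu]=0$ precisely because the four Cauchy data of $f-u$ vanish and $\mu$ has order $\le 1$. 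This can be made to work but is not what you wrote.

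The paper's route is cleaner and avoids this bookkeeping. One first shows, using the wave front set of $S$ and Theorem 8.2.13 of \cite{H1990}, that $\S\rho^t\colon\D(\Sigma,E^*\oplus E^*)\to\CC(M,E^*)$, so the induced form $S^\Sigma_f[\lambda,\chi]:=S_f[\rho^*\lambda,\rho^t\chi]$ is well-defined. Positivity of $S^\Sigma_f$ then follows by approximating $\rho^*\lambda$ by test functions $\psi_n$ in $\D'_{N^*\Sigma}$ (H\"ormander's density Theorem 8.2.3) and using sequential continuity of $S_f$ there. Finally, since $S^\Sigma_u=S^\Sigma_f$ and $S_u$ descends to $\ran G$, the identity $G=-G\rho^*G_\Sigma\rho G$ yields $S_u[\psi,\Theta\psi]=S^\Sigma_f[G_\Sigma\rho G\psi,\widetilde\Theta G_\Sigma\rho G\psi]\ge 0$. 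Note also that Riemannianness of $E$ enters only through the hypothesis that $P$ is of positive propagator type, not in the transfer argument itself.
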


\begin{proof}
 The desired real-valued bisolution is given by
 \begin{align} \label{S}
  S[\psi,\f]:=\frac{i}{4}\big(Q_{C_P\backslash\Delta} - Q_\emptyset\big)[\psi,\f] + \frac{i}{4}\overline{\big(Q_{C_P\backslash\Delta} - Q_\emptyset\big)[\psi,\f]}, \qquad \psi,\in\D(M,E),~\f\in\D(M,E^*),
 \end{align}
 and we show the claimed properties. With regard to Corollary \ref{CorSymGlobalBisolution} and without loss of generality, we assume $S$ to be symmetric, and furthermore, there is some $f\in\CC(M\times M, E^*\boxtimes E)$ such that $(S + f)[\psi,\Theta\psi]\geq0, ~\psi\in\D(M,E),$ by assumption on $P$. Because this is also true for $\widetilde{f}[\psi_1,\Theta\psi_2] := \frac{1}{2}\big(f[\psi_1,\Theta\psi_2] + f[\psi_2,\Theta\psi_1]\big)$, we assume symmetry of $f$ as well, that is, $f[\psi_1,\Theta\psi_2] = f[\psi_2,\Theta\psi_1]$ for all $\psi_1,\psi_2\in\D(M,E)$.\\
 Recall that Green operators map $\D(M,E^*)$ to $\CC(M,E^*)$, so for fixed $\f\in\CC(M,E^*)$, (\ref{S}) provides a smooth section \mbox{$p\mapsto S(p)[\f]$} in $E^*$. It follows that for each $p\in M$, we obtain a well-defined $E_p^*$-valued distribution $S(p)$, which solves $PS(p)=0$, and hence, $\WF\big(S(p)\big)$ exclusively contains lightlike directions by assumption on $P$. By $\WF\big(S(p)\big)\cap N^*\Sigma = \emptyset$, the restriction of $S(p)$ to $\Sigma$ yields a well-defined distribution $\rho\big(S(p)\big)$ on $\D(\Sigma,E^*\oplus E^*)$ for any Cauchy hypersurface $\Sigma$. This means that, due to Theorem 8.2.13 of \cite{H1990}, the operator $\S$ associated to (\ref{S}) can be applied to $\rho^t\chi \in \D_{N^*\Sigma}(M,E^*)'$, $\chi\in\D(\Sigma,E^*\oplus E^*)$, and for the result, we obtain
 $$\WF\big(\S\rho^t\chi\big) \subset \big\{(p,\xi)~\big|~(p,\xi;q,0)\in\WF(S)\big\} \cup \big\{(p,\xi)~\big|~\exists (q,\zeta)\in\WF(\rho^t\chi):\enspace (p,\xi;q,-\zeta)\in\WF(S)\big\}.$$
 Since $\WF(S) \subset C_P = \big\{(p,\xi)\sim(q,\zeta)\big\}$ and $\WF(\rho^t\chi)\subset N^*\Sigma$, both contributions on the right hand side are empty. Hence, $\S\rho^t$ represents a map $\D(\Sigma,E^*\oplus E^*) \rightarrow\CC(M,E^*)$, so it follows that \mbox{$p\mapsto \rho\big(S(p)\big)[\chi] = \big(\S\rho^t\chi\big)(p)$} is smooth for fixed $\chi$. With the adjoint operator $\rho^* = \Theta^{-1}\rho^t\widetilde{\Theta}$, we eventually obtain a well-defined bidistribution $S^\Sigma\colon \D(\Sigma,E\oplus E)\times\D(\Sigma,E^*\oplus E^*) \rightarrow \R$ via
 \begin{align} \label{InducedBidistributionSigma}
  S^\Sigma[\lambda,\chi] := S\big[\rho^*\lambda,\rho^t\chi\big] = \int_\Sigma \widetilde{\Theta}\rho\Theta^{-1}\Big(S(\cdot)\big[\rho^t\chi\big]\Big)(\sigma) ~ \big(\lambda(\sigma)\big) \df V_\Sigma(\sigma).
 \end{align}
 The bisection $f$ determines smooth and symmetric Cauchy data on $\Sigma\times\Sigma$ and thus a smooth and symmetric bisolution $u$ by Theorem \ref{CauchyProblemBisection} (the first order analogon works completely similarly). Using the short-hand notation $S_f := S+f$ and $S_u := S+u$, this yields $S^\Sigma_u=S^\Sigma_f$ for the corresponding bidistributions (\ref{InducedBidistributionSigma}), and we show that positivity is preserved under the restriction to $\Sigma$, i.e.\ $S^\Sigma_f[\lambda,\widetilde{\Theta}\lambda]\geq 0$ for all $\lambda\in\D(\Sigma,E\oplus E)$. Theorem 8.2.3 of \cite{H1990} provides a sequence $(\psi_n)_{n\in\N}\subset\D(M,E)$ such that $\psi_n \rightarrow \rho^*\lambda$ in $\D_{N^*\Sigma}(M,E)'$, and consequently, $\Theta\psi_n \rightarrow \Theta\rho^*\lambda = \rho^t\widetilde{\Theta}\lambda$, so continuity of $S_f$ ensures
 \begin{align} \label{RestrPositive}
  S^\Sigma_f[\lambda,\widetilde{\Theta}\lambda] = S_f\big[\rho^*\lambda, \rho^t\widetilde{\Theta}\lambda\big] = \lm{n}{\infty} \underbrace{S_f\big[\psi_n,\Theta\psi_n\big]}_{\geq 0} \geq 0.
 \end{align}
 The proof of Theorem 3.3.1 and Proposition 3.4.2 of \cite{BGP2007} show that well-posedness of the Cauchy problem implies the existence of a unique advanced and retarded Green operator and hence exactness of the sequence (\ref{ExactSequence}). Thus, due to $\ker P = \ran G$, $S_u$ does not only descend to a well-defined bilinear form on $\D(M,E)\slash \ker P$ by being a bisolution, but also to $\ran G$ via
 $$S'_u[G\psi_1,\Theta G\psi_2] := S_u[\psi_1,\Theta\psi_2], \qquad \psi_1,\psi_2\in\D(M,E).$$
 By following the lines of Proposition 3.9 of \cite{GW2015} and employing $G=-G\rho^*G_\Sigma\rho G$, this allows us to trace back the claimed positivity property to (\ref{RestrPositive}). More precisely, for all $\psi_1,\psi_2\in\D(M,E)$, we have
 \begin{align*}
  S_u[\psi_1,\Theta\psi_2] & = S'_u[G\psi_1,\Theta G\psi_2] = S'_u[G\rho^*G_\Sigma\rho G\psi_1,\Theta G\rho^*G_\Sigma\rho G\psi_2] \\[2mm] 
                        & = S_u[\rho^*G_\Sigma\rho G\psi_1, \Theta \rho^*G_\Sigma\rho G\psi_2] = S^\Sigma_u[G_\Sigma\rho G\psi_1,\widetilde{\Theta}G_\Sigma\rho G\psi_2] = S^\Sigma_f[G_\Sigma\rho G\psi_1,\widetilde{\Theta}G_\Sigma\rho G\psi_2],
 \end{align*}
 which proves the theorem.
\end{proof}

In the case of formally self-adjoint wave operators, the existence of $G_F$ and $G_{aF}$ is ensured by Theorem \ref{SymHadBisol}, so Theorem \ref{PositivityThm} leads to the final result:

\begin{Thm} \label{MainTheorem}
 Let $M$ be a globally hyperbolic Lorentzian manifold, $\pi\colon E\rightarrow M$ a Riemannian vector bundle and $P\colon \CC(M,E) \rightarrow \CC(M,E)$ a formally self-adjoint wave operator of positive propagator type. Then there exists a bidistribution $S\colon \D(M,E)\times\D(M,E^*) \rightarrow \R$ such that
 \begin{align*} 
  H := S + \frac{i}{2}(G_+-G_-)
 \end{align*}
 yields a Hadamard two-point-function, where $G_\pm$ denotes the advanced and retarded Green operator for $P^t$. This means that $\WF(H)$ has the Hadamard singularity structure (\ref{HConditionWF}) and satisfies
 \begin{align*}
  H[P\psi,\f]=0=H[\psi,P^t\f], \qquad H[\psi,\f]-H[\Theta^{-1}\f,\Theta\psi] = \frac{i}{2}(G_+-G_-)[\psi,\f], \qquad H[\psi,\Theta\psi]\geq 0
 \end{align*}
 for all $\psi\in\D(M,E), ~\f\in\D(M,E^*)$.\\[2mm] 
 Moreover, a Feynman and an anti-Feynman Green operator $G_F,G_{aF}$ are given by (\ref{FeynmanSchwartzKernels}).
\end{Thm}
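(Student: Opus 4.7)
The plan is to combine the local-to-global construction of Theorem \ref{SymHadBisol} with the positivity correction of Theorem \ref{PositivityThm}. Theorem \ref{SymHadBisol} already produces a symmetric bisolution $S_0$ whose associated $H_0 := S_0 + \frac{i}{2}(G_+ - G_-)$ has the Hadamard singularity structure, and identifies the Feynman/anti-Feynman Green operators via (\ref{FeynmanSchwartzKernels}). To promote $S_0$ to a positive bisolution, one modifies it by a smooth symmetric bisolution using Theorem \ref{PositivityThm}; since Hadamardness, the bisolution property, and the Feynman formulas are all preserved under such smooth corrections, the resulting $H$ will satisfy all claimed properties.

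The first step is to verify the hypotheses of Theorem \ref{PositivityThm}. A wave operator has principal symbol $\xi \mapsto g(\xi^\sharp,\xi^\sharp)\cdot \id_E$, yielding precisely (\ref{CharBiCharP}) for $\Char P$ and $C_P$; the Cauchy problem is well-posed on any globally hyperbolic background; and $P$ is formally self-adjoint and of positive propagator type by assumption. The remaining and crucial hypothesis, namely that the distinguished parametrices can be chosen as \emph{actual} Green operators, is supplied by Theorem \ref{SymHadBisol}: the operators $G_F, G_{aF}$ in (\ref{FeynmanSchwartzKernels}) are genuine Green operators for $P^t$, and since $\frac{i}{2}(G_{aF} - G_F) = S_0$ has the Feynman/anti-Feynman wave front set (\ref{WFDistParametrices}) (via Propositions \ref{ExHadBisol} and \ref{HadamardStructure}), they play the role of $Q_\emptyset$ and $Q_{C_P\backslash\Delta}$, respectively.

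Theorem \ref{PositivityThm} then produces a real-valued symmetric bisolution $S$ with $S - S_0 \in \CC(M \times M, E^* \boxtimes E)$ and $S[\psi, \Theta\psi] \geq 0$ for all $\psi \in \D(M,E)$. Setting $H := S + \frac{i}{2}(G_+ - G_-)$, the difference $H - H_0$ is smooth, so $\WF(H) = \WF(H_0)$ is of Hadamard form. The bisolution equations follow from $S$ being a bisolution together with $PG_\pm = G_\pm P^t = \id$, while the Feynman and anti-Feynman formulas (\ref{FeynmanSchwartzKernels}) carry over verbatim: the wave front set of $\pm i H + G_\pm = \pm i S + \frac{1}{2}(G_+ + G_-)$ is unchanged by the smooth correction, and these operators remain Green operators since they differ from those of Theorem \ref{SymHadBisol} only by a smooth bisolution.

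Finally, the commutator identity and positivity follow by direct computation: symmetry of $S$ in the sense of Corollary \ref{CorSymGlobalBisolution} gives $S[\psi_1,\Theta\psi_2] = S[\psi_2,\Theta\psi_1]$, while the antisymmetry of the causal propagator $G_+ - G_-$ built into the symplectic structure (\ref{SymplecticVectorSpace}) yields the asserted formula for $H[\psi,\f] - H[\Theta^{-1}\f,\Theta\psi]$; specializing this antisymmetry to $\psi_1 = \psi_2 = \psi$ moreover forces $(G_+-G_-)[\psi,\Theta\psi]=0$, whence $H[\psi,\Theta\psi] = S[\psi,\Theta\psi] \geq 0$. The main conceptual obstacle in the argument is the hypothesis verification in the second paragraph—one must ensure that the Green operators produced by Theorem \ref{SymHadBisol} legitimately serve as the distinguished parametrices required as input to Theorem \ref{PositivityThm}—while the remaining verifications are direct consequences of structural properties already established earlier in the paper.
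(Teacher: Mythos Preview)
Your proposal is correct and follows the same approach as the paper: the paper's own proof is essentially the single sentence ``the existence of $G_F$ and $G_{aF}$ is ensured by Theorem \ref{SymHadBisol}, so Theorem \ref{PositivityThm} leads to the final result,'' and you have faithfully unpacked exactly this, supplying the hypothesis verification for Theorem \ref{PositivityThm} and the routine checks that the properties of $H$ persist under the smooth correction.
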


Note that, in general, $S$ is far from being unique, i.e. there may be many bidistributions with the required properties. Clearly, this is related to the non-uniqueness of the many choices of smooth sections during the construction, and in most cases, it is not at all obvious, how to find these sections practically. This particularly concerns the choice of the $h_i$'s in Lemma \ref{LemCech} and the $f$ for operators of positive propagator type.\\ 
However, the overall reasoning provides a comparatively constructive alternative to the existence proofs, which are already present in the literature (\cite{BF2014}, \cite{FNW1981}, \cite{GOW2017}). It starts most naturally with the Hadamard condition, so the form of the bidistributions is, up to smooth terms, determined right from the start. It therefore might provide a promising starting point for a possible classification of these states up to unitary equivalence of their respective GNS-representations. This and the identification of pure states in particular would require to investigate the choices of the said smooth sections. \\
Furthermore, the methods used here provide an alternative procedure to the classic deformation arguments since they rely on the ability to make modifications to the metric confined to certain spacetime regions. There are situations, where this is not applicable, for instance, in the case of linearized gravity, where the background spacetime must solve the Einstein equation, or similarly for linearizations of Yang-Mills theories. They also occur if one is restricted to analytic metrics.

\section*{Appendix}
\addcontentsline{toc}{section}{Appendix}
\renewcommand{\thesubsection}{\Alph{subsection}}
\renewcommand{\theequation}{\thesubsection.\arabic{equation}}
\renewcommand{\theDef}{\thesubsection.\arabic{Def}}
\setcounter{subsection}{0}
\setcounter{equation}{0}

\setcounter{Def}{0}

\subsection{Symmetry of the Hadamard coefficients}

In this section, we prove the symmetry of the Hadamard coefficients $U_k$ in a setting $(M,E,P)$ as in Definition \ref{GlobHypGreen} with $P$ a wave operator. This represents an alternative path to the more general case treated in \cite{K2019}, which is more geared to the specific situation of wave operators, and generalizes the well-known approach of Moretti \cite{M1999, M2000} it to sections in $E$.\\
Let $\Omega\subset M$ be a non-empty and convex domain, that is time-orientable, and let $\nabla$ denote the $P$-compatible connection on $E$, meaning that
\begin{align} \label{PCompConn}
 \nabla_{\grad f} s = \frac{1}{2}\big(f\cdot Ps - P(f\cdot s) + \Box f \cdot s\big), \qquad s\in\CC(M,E), ~ f\in\CC(M).
\end{align}

It follows that $P=\Box^\nabla+B$ for some uniquely determined endomorphism field $B$ and \mbox{$\Box^\nabla=\left(\mathrm{tr} \otimes \id_E\right) \circ \nabla^{T^*M \otimes E} \circ \nabla$} the connection-d'Alembert operator (see section 1.5 of \cite{BGP2007}). Then the Hadamard coefficients \mbox{$U_k\in\CC(\Omega\times\Omega,E^*\boxtimes E),~k\in\N_0,$} for $P$ are defined as the unique solutions of the transport equations
\begin{align} \label{TransportEquations}
 \nabla_{\grad\Gamma_p} U_p^k - \left(\frac{1}{2}\Box\Gamma_p - d + 2k\right)U_p^k = 2k\cdot PU_p^{k-1}, \qquad k\in \N,
\end{align}

with $U_0(p,p)=\id_{E_p^*}$ and $U^k_p := U_k(p,\cdot)$ for all $p\in\Omega$ (Proposition 2.3.1 of \cite{BGP2007}). Recall the identification (\ref{FibersBoxBundle}) of $U_k(p,q)$ as homomorphisms $E_q^* \rightarrow E_p^*$ with fiberwise transposed operator \mbox{$U_k(p,q)^t\in\Hom(E_p,E_q)$}. We are going to show symmetry in the sense
\begin{align} \label{SymmetricHCoeff}
 U_k(p,q) = \Theta_p U_k(q,p)^t \Theta_q^{-1}, \qquad p,q\in \Omega,~k\in\N_0.
\end{align}

One can see the Hadamard coefficients as a measure of the deviation of $(M,g,E,P)$ from $(\R^d_{\mathrm{Mink}},\Box)$, and indeed, "adding" $\big(\R, \df s^2, \{0\}, \pd{^2}{s^2}\big)$ does not change them: Let $(\widehat{M},\widehat{g}):=(M\times \R, g+\df s^2\big)$, over which we consider the same vector bundle $E$, and $\widehat{P}:=P - \pd{^2}{s^2}$. For the corresponding Hadamard coefficients $\widehat{U}_k$, we obtain the initial condition $\widehat{U}_0\big((p,s),(p,s)\big)=\id_{E_p^*}$ and the transport equations
\begin{align*}
  & 0 = \widehat{\nabla}_{\grad_{\widehat{g}}\widehat{\Gamma}_{(p,s)}} \widehat{U}_{(p,s)}^k - \left(\frac{1}{2}\Box_{\widehat{g}}\widehat{\Gamma}_{(p,s)} - (d+1) + 2k\right)\widehat{U}_{(p,s)}^k - 2k\widehat{P}\widehat{U}_{(p,s)}^{k-1} \\
  & = \nabla_{\grad_g\Gamma_p} \widehat{U}_{(p,s)}^k + 2(s'-s) \widehat{\nabla}_{\pd{}{s}}\widehat{U}_{(p,s)}^k - \left(\frac{1}{2}\big(\Box_g \Gamma_p + 2\big) - (d+1) +2k\right)\widehat{U}_{(p,s)}^k - 2kP\widehat{U}_{(p,s)}^{k-1} + 2k \pd{^2}{s^2}\widehat{U}_{(p,s)}^{k-1} \\
  & = \nabla_{\grad_g\Gamma_p} \widehat{U}_{(p,s)}^k - \left(\frac{1}{2}\Box_g \Gamma_p - d +2k\right)\widehat{U}_{(p,s)}^k - 2kP\widehat{U}_{(p,s)}^{k-1} + 2(s'-s) \widehat{\nabla}_{\pd{}{s}}\widehat{U}_{(p,s)}^k + 2k \pd{^2}{s^2}\widehat{U}_{(p,s)}^{k-1},
\end{align*}%
which are clearly solved by $U_p^k$ leading to
\begin{align} \label{HCoeffProduct}
 \widehat{U}_k\big((p,s),(q,s')\big) = U_k(p,q), \qquad p,q\in\Omega, ~ s,s'\in\R, ~ k\in\N_0.
\end{align}

We adopt Moretti's approach insofar as we start by considering wave operators with analytical coefficients, so in particular the metric as the principal symbol is assumed to be analytic, and deduce analyticity of $(p,q)\mapsto U_k(p,q)$. We directly conclude that, as a function of $g$, the Levi-Civita connection on $TM$ and, due to (\ref{PCompConn}), the $P$-compatible connection $\nabla$ on $E$ are analytic as well, so the corresponding Christoffel symbols are. Applying basic ODE-theory and the analytic inverse function theorem (Theorem 1.4.3 of \cite{KP1992}) ensure analyticity of the Lorentzian distance $(p,q) \mapsto \Gamma(p,q) = g_p\big(\exp_p^{-1}(q)\big)$, the distortion function $(p,q)\mapsto\mu(p,q)$ and the geodesic considered as a map
\begin{align} \label{GeodesicMap}
 \phi: \qquad [0,1]\times \Omega \times \Omega \longrightarrow \Omega, \qquad (t,p,q) \longmapsto \exp_p\big(t\,\exp_p^{-1}(q)\big) =: \phi_{pq}(t).
\end{align}

\begin{Lem}
 The $\nabla$-parallel transport along $\phi_{pq}$ is analytic as a map
 \begin{align} \label{ParallelTransportAnalytic}
  [0,1] \times \Omega\times \Omega \longrightarrow E^*\boxtimes E, \qquad (t,p,q) \longmapsto \Pi_{\phi_{pq}(t)}^p.
 \end{align}
\end{Lem}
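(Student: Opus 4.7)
The plan is to realize the parallel transport as the fundamental solution of a linear ODE whose coefficients depend analytically on all variables, and then to invoke the classical theorem on analytic dependence of ODE solutions on parameters.

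First, I would fix a point $(t_0,p_0,q_0)\in[0,1]\times\Omega\times\Omega$ and choose a coordinate chart on a neighborhood $U\subset\Omega$ of the compact image $\phi_{p_0q_0}([0,1])$, together with an analytic local frame $(e_1,\dots,e_r)$ trivializing $E$ over $U$. By continuity of $\phi$, the curves $\phi_{pq}([0,1])$ still lie in $U$ for all $(p,q)$ in some neighborhood of $(p_0,q_0)$. In such a trivialization, let $M(t;p,q)\in\Mat_r(\R)$ denote the matrix representing $\Pi^p_{\phi_{pq}(t)}$ with respect to the fixed frame at $p$ and the frame at $\phi_{pq}(t)$. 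Writing $\Gamma^a_{\mu b}$ for the connection coefficients of $\nabla$, the parallel transport condition becomes the linear initial value problem
\begin{align*}
 \dot{M}(t;p,q) + A(t;p,q)\,M(t;p,q)=0, \qquad M(0;p,q)=\id,
\end{align*}
with $A(t;p,q)^a{}_b := \Gamma^a_{\mu b}\bigl(\phi_{pq}(t)\bigr)\,\dot\phi^\mu_{pq}(t)$.

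Next, I would verify that $A$ is jointly real-analytic in $(t,p,q)$. Analyticity of $g$ yields analyticity of the Christoffel symbols of the Levi-Civita connection, and then (\ref{PCompConn}) together with analyticity of the coefficients of $P$ forces the connection coefficients of $\nabla$ on $E$ to be analytic in the base coordinates. Combined with the already established joint analyticity of the geodesic map (\ref{GeodesicMap}), and hence of $\dot\phi$, this gives joint analyticity of $A$ on a neighborhood of $(t_0,p_0,q_0)$.

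Finally, I invoke the classical theorem on analytic dependence on parameters for linear ODEs: because the system is linear in $M$ with jointly real-analytic coefficients, its fundamental solution $M$ is itself jointly real-analytic on a neighborhood of $(t_0,p_0,q_0)$. Since $(t_0,p_0,q_0)$ was arbitrary and the transition functions between charts and between local frames of $E$ are themselves analytic, patching delivers the desired analytic map $(t,p,q)\mapsto\Pi^p_{\phi_{pq}(t)}$ with values in $E^*\boxtimes E$ on all of $[0,1]\times\Omega\times\Omega$. The main technical point is the ODE input itself: to obtain real-analytic dependence one typically complexifies $A$ to a polydisk and shows that the Picard iteration converges uniformly there via Cauchy estimates on a common domain of holomorphy, which requires locally uniform control in $(p,q)$ — routine but the only non-formal step in the argument.
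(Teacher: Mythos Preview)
Your proof is correct and uses the same underlying idea as the paper --- write the parallel transport as the fundamental solution of a linear ODE with analytic coefficients --- but the final step differs. You treat $(p,q)$ together as parameters and invoke the classical theorem on joint analytic dependence of ODE solutions on parameters, obtaining joint analyticity in $(t,p,q)$ in one stroke. The paper instead fixes $p$, deduces analyticity of $(t,q)\mapsto\Pi^p_{\phi_{pq}(t)}$, then uses the algebraic identity $\Pi^r_p=(\Pi^p_r)^{-1}$ to obtain separate analyticity in $p$, and finally appeals to Osgood's Lemma to upgrade separate to joint analyticity. Your route is more direct and avoids the Osgood step; the paper's route relies only on the simplest form of analytic ODE theory (analyticity in the flow variable and in parameters for a \emph{fixed} base point) but pays for this with an extra lemma. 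Both are valid; your argument is slightly cleaner, while the paper's has the minor charm of exhibiting the symmetry $\Pi^r_p=(\Pi^p_r)^{-1}$ explicitly.
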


\begin{proof}
 For fixed $p\in \Omega$ and $e\in E_p$, consider the parallel section $s_p(t,q) := \Pi^p_{\phi_{pq}(t)}e$ in $E$ along $\phi_{pq}$ for all $q\in\Omega$, which therefore satisfies the ODE's
\begin{align} \label{ParallelTransportODE}
 \dot{s}_p^\beta(t,q) = \underbrace{-\Gamma^\beta_{i\alpha}\big(\phi_{pq}(t)\big) \, \dot{\phi}^i_{pq}(t)}_{=:A_p(t,q)^\beta_\alpha} \, s_p^\alpha(t,q), \qquad s_p(0,q) = e.
\end{align}%

The columns of the corresponding fundamental matrix $\Phi_p(t,q)$ are given by $\mathrm{rk}(E)$ linearly independent solutions of (\ref{ParallelTransportODE}), so we have $\dot{\Phi}_p(t,q) = A_p(t,q)\Phi_p(t,q)$ and the solution of (\ref{ParallelTransportODE}) takes the form 
$$s_p(t,q) = \Phi_p(t,q)\Phi_p(0,q)^{-1}s_p(0,q).$$ 

From the definition of $s_p$, we read off $\Pi^p_{\phi_{pq}(t)} = \Phi_p(t,q)\Phi_p(0,q)^{-1}$, and hence, the map $(t,q) \longmapsto \Pi^p_{\phi_{pq}(t)}$ is analytic. Moreover, $\Pi^r_p = \big(\Pi_r^p\big)^{-1}$ implies
$$\Phi_p(1,r)\Phi_p(0,r)^{-1} = \left(\Phi_r(1,p)\Phi_r(0,p)^{-1}\right)^{-1} = \Phi_r(0,p)\Phi_r(1,p)^{-1},$$%
so $p\mapsto\Pi^p_r$ is analytic for fixed $r\in\Omega$ and Osgood's Lemma \cite{O1898} proves the claim.
\end{proof}

\begin{Prop} \label{AnalyticHadamardCoefficients}
 The map $(p,q)\mapsto U_k(p,q)$ is analytic on $\Omega\times\Omega$ for all $k\in \N_0$.
\end{Prop}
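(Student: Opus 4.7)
The plan is to proceed by induction on $k$, exploiting the explicit integral representation (\ref{Uk}) for the Hadamard coefficients together with the analyticity results already collected in this subsection.

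For the base case $k=0$, we have $U_0(p,q) = \Pi_q^p/\sqrt{\mu(p,q)}$. Analyticity of $(p,q)\mapsto \Pi_q^p$ follows from the preceding lemma (specializing (\ref{ParallelTransportAnalytic}) to $t=1$), while analyticity of $\mu$ comes from the analytic inverse function theorem applied to $\exp$, with $\mu>0$ on the convex domain $\Omega\times\Omega$ guaranteeing the analyticity of the square root. Hence $U_0$ is analytic and, because $U_0(p,p)=\mathrm{id}_{E_p^*}$ and $\mu$ is strictly positive, it is pointwise invertible with analytic inverse $\sqrt{\mu(p,q)}\,\Pi_p^q$ throughout $\Omega\times\Omega$.

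For the inductive step, assume $U_{k-1}$ is analytic. Since $g$ is analytic, so are the Christoffel symbols, the Levi-Civita connection, the $P$-compatible connection $\nabla$ on $E$, and consequently the zeroth-order endomorphism field $B$ via (\ref{PCompConn}). Thus $P$ has analytic coefficients in any analytic local trivialization, so $P_{(2)}U_{k-1}$ is analytic on $\Omega\times\Omega$. Composing with the analytic geodesic map $(t,p,q)\mapsto\phi_{pq}(t)$ from (\ref{GeodesicMap}) and multiplying by the analytic $U_0(p,\phi_{pq}(t))^{-1}$ produces an integrand that is jointly real-analytic in $(t,p,q)$ on $[0,1]\times\Omega\times\Omega$.

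The remaining---and, I expect, main---technical point is that integration over the compact interval $[0,1]$ preserves analyticity in the parameters $(p,q)$. This is a classical fact, provable by expanding the integrand around any $(p_0,q_0)\in\Omega\times\Omega$ as a power series whose coefficients are continuous functions of $t$, noting uniform convergence on $[0,1]\times K$ for $K$ a sufficiently small polydisc neighbourhood of $(p_0,q_0)$, and exchanging summation with integration; equivalently, one may complexify the parameters and verify the Cauchy-Riemann conditions by differentiation under the integral sign together with Cauchy estimates. Multiplying the resulting analytic function of $(p,q)$ by the analytic prefactor $-k\,U_0(p,q)$ then yields $U_k$ as an analytic section on $\Omega\times\Omega$, closing the induction.
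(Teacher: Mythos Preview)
Your proof is correct and follows essentially the same approach as the paper: induction on $k$ using the explicit formula (\ref{Uk}), with the base case handled via the analyticity of $\mu$ and the parallel transport lemma, and the inductive step reduced to showing that integration over $[0,1]$ preserves analyticity in the parameters by expanding the integrand as a power series and interchanging sum and integral. The paper's argument is identical in structure and slightly terser in justifying the final interchange.
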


\begin{proof}
 Analyticity of the zeroth Hadamard coefficient can be directly read off from
 $$(p,q) \longmapsto U_0(p,q) = \frac{\Pi^p_q}{\sqrt{\mu(p,q)}},$$%
 and we proceed via induction. By analyticity of $P$, clearly $(p,q) \mapsto P_{(2)}U_{k-1}\big(p,\Phi_{pq}(t)\big)$ is analytic if the prior coefficient $(p,q) \mapsto U_{k-1}(p,q)$ is.
 Similarly, $(t,p,q)\mapsto U_0\big(p,\phi_{pq}(t)\big)^{-1} = \sqrt{\mu\big(p,\phi_{pq}(t)\big)} \cdot \Pi_p^{\phi_{pq}(t)}$ is analytic as a composition of analytic maps (recall that $\mu$ is positive). Therefore, the integrand of
 $$U_k(p,q) = -kU_0(p,q) \int_0^1 t^{k-1} \cdot U_0\big(p,\phi_{pq}(t)\big)^{-1} P_{(2)}U_{k-1}\big(p,\phi_{pq}(t)\big)\df t$$
 is analytic in $(p,q)$ and uniformly continuous in $t$ on $[0,1]$. Hence, taking the power series expression of the integrand, the sum and the integral can be swapped, which results in a uniformly converging power series for $U_k$.
\end{proof}

Now the general case of smooth $P$ is tackled by analytic approximation of the coefficients, for which we quote

\begin{Prop}[Proposition 2.1 of \cite{M1999}] \label{MetricApproxProp}
 Let $M$ be a real, smooth and connected manifold with non-singular metric $g$.
 \begin{itemize}
  \item[(a)] For any local chart $(x,V)$ of $M$ and any relatively compact domain $O$ with $\Obar \subset V$, there is a sequence $\{g^n\}_{n\in\N}$ of real and analytic (with respect to $x$) metrics with the same signature as $g$, which are defined on some neighborhood of $\overline{O}$ such that $g^n \rightarrow g$ in $\CC$, that is, all derivatives of $g^n$ converge uniformly on $\overline{O}$:
  $$\forall i,j=1,\hdots,D, \quad \alpha \in \N_0^D: \qquad \max_{v \in x(\overline{O})} \Big|\big(D^\alpha (g^n\circ x^{-1})_{ij}\big)(v) - \big(D^\alpha (g\circ x^{-1})_{ij}\big)(v)\Big| \longrightarrow 0.$$%
  \item[(b)] For any $(x,V), O, \{g^n\}_{n\in\N}$ as in (a) and additionally any $z \in O$, there is an $n_0\in\N$ and a family $\{N_z^i\}_{i\in\R}$ of open neighborhoods of $z$ such that $N_z^i \subset \overline{N}_z^j \subset O$ for any $j>i$, and $\{N_z^i\}_{i\in\R}$ is a local base of the topology of $M$. Moreover, for all $i \in \R$, both $N_z^i$ and $\overline{N}_z^i$ are common convex neighborhoods of $z$ for all metrics $\{g^n\}_{n>n_0}$ and $g$.
 \end{itemize}
\end{Prop}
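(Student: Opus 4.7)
The plan for (a) is to work entirely in the fixed chart $(x,V)$ and reduce the problem to analytic approximation of finitely many smooth functions, namely the components $g_{ij}\circ x^{-1}\colon x(V)\to\R$. First I would restrict attention to a slightly larger open $O\ssubset O'\ssubset V$, cut the components off smoothly outside $O'$ so that we obtain compactly supported smooth functions $\widetilde{g}_{ij}$ on $\R^D$ that agree with $g_{ij}\circ x^{-1}$ on $x(\overline{O})$, and then convolve with a real analytic mollifier. A convenient choice is the Gaussian heat kernel $K_\epsilon(v)=(4\pi\epsilon)^{-D/2}\,e^{-|v|^2/(4\epsilon)}$ and set $g^n_{ij}\circ x^{-1}:=\widetilde{g}_{ij}\ast K_{1/n}$. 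Each convolution is a real analytic function on $\R^D$, because $K_\epsilon$ extends to an entire function in $v$ and one may differentiate under the integral to check the Cauchy--Riemann equations (or, equivalently, expand $K_\epsilon$ as a convergent power series and interchange sum and integral against the compactly supported $\widetilde{g}_{ij}$). Standard mollifier theory gives $D^\alpha g^n_{ij}\to D^\alpha g_{ij}$ uniformly on $x(\overline{O})$ for every multi-index $\alpha$, which is exactly the $C^\infty$ convergence required.

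To see that signature is preserved, I would use that the set of real symmetric matrices of a fixed signature is open in the space of symmetric matrices. Since $g$ is non-degenerate on the compact set $\overline{O}$, there is a uniform lower bound $\delta>0$ on $|\det g_x|$ for $x\in\overline{O}$, and a uniform bound on the moduli of the eigenvalues. Uniform $C^0$ convergence of the components then forces $g^n$ to have the same signature as $g$ on $\overline{O}$ for all $n$ larger than some threshold. Symmetry of $g^n_{ij}$ is automatic from symmetry of $\widetilde{g}_{ij}$, and smoothness of the approximation together with being defined only on some neighborhood of $\overline{O}$ (rather than on all of $V$) is harmless.

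For (b) the plan is to exploit that the convexity radius at $z$ depends continuously on the germ of the metric in the $C^2$-topology. Concretely, the exponential map $\exp^{g^n}\colon TM\supset U\to M\times M$ is obtained by solving the geodesic ODE whose coefficients are the Christoffel symbols, which are polynomial expressions in $g^n_{ij}$ and $\partial_k g^n_{ij}$ and the inverse metric. By (a), these converge uniformly together with all their derivatives on $\overline{O}$, so the flow maps, and hence $(p,v)\mapsto(p,\exp^{g^n}_p v)$, converge in $C^\infty$ on a neighborhood of the zero section. An appeal to the inverse function theorem with parameters then yields an open $U\ni(z,0)$ on which every $(p,v)\mapsto(p,\exp^{g^n}_p v)$ is a diffeomorphism onto its image for all sufficiently large $n$. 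Choosing $\{N_z^i\}_{i\in\R}$ to be a nested local base of geodesic balls (in $g$-normal coordinates centred at $z$) contained in $O$, each $\overline{N_z^i}$ is convex for $g$ by construction, and by the uniform convergence of the exponential maps, one may pick $n_0=n_0(i)$, resp.\ a fixed $n_0$ for the whole local base if $\{N_z^i\}$ is contained in a fixed small ball, so that each $\overline{N_z^i}$ remains geodesically convex for every $g^n$ with $n>n_0$.

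The main obstacle is the stability statement in (b): while uniform convergence of Christoffel symbols immediately gives $C^\infty$ convergence of geodesics on compact parameter intervals, one must still verify that convexity, i.e.\ unique minimising geodesics joining any two points inside $N_z^i$, persists under perturbation. The cleanest route is to work in $g$-normal coordinates around $z$ and describe convex neighborhoods as sublevel sets of the distance function, then observe that the defining inequalities (bounds on Hessians of the squared distance, non-vanishing of the Jacobian of $\exp$) are open conditions that survive the $C^\infty$-small perturbation $g\rightsquigarrow g^n$. Once this is in place, the construction of the family $\{N_z^i\}$ is just a matter of choosing a nested sequence of such sublevel sets.
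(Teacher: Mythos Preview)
The paper does not give its own proof of this proposition; it is quoted verbatim from Moretti \cite{M1999} (the text preceding the statement reads ``for which we quote''), so there is no proof in the paper to compare against. Your sketch for (a) via convolution with the heat kernel is the standard route and is essentially what Moretti does; it is correct as stated.

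For (b) your strategy is also along the right lines, but there is a genuine gap you should be aware of. You need a \emph{single} $n_0$ such that \emph{every} $N_z^i$ (and its closure) is convex for \emph{all} $g^n$ with $n>n_0$. Your remark that one may take ``a fixed $n_0$ for the whole local base if $\{N_z^i\}$ is contained in a fixed small ball'' is the crux, but it does not follow automatically: a subset of a $g^n$-convex set need not itself be $g^n$-convex, and your $N_z^i$ are $g$-geodesic balls, not $g^n$-geodesic balls. What one actually needs is a uniform statement of the type ``there exists $r_0>0$ and $n_0$ such that for every $n>n_0$ and every $r\leq r_0$, the $g$-ball of radius $r$ about $z$ is $g^n$-convex''. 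This requires controlling not just that $\exp^{g^n}$ is a diffeomorphism on a fixed neighborhood, but that $g^n$-geodesics between points of a small $g$-ball stay inside that ball; equivalently, one needs a quantitative version of Whitehead's convexity lemma with constants uniform in $n$. This does follow from the $C^2$-convergence of the metrics (uniform bounds on Christoffel symbols and their first derivatives give uniform bounds on the second fundamental form of small $g$-spheres with respect to all $g^n$), but it is not a one-line consequence of the inverse function theorem and deserves to be spelled out.
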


\begin{Prop} \label{AnalytAppr}
 Let $O\subset\Omega$ be relatively compact and $\{g^n\}_{n\in\N}$ a sequence of real and analytic metrics defined in a neighborhood of $\Obar$ with the same signature as $g$ such that $O$ and $\Obar$ are convex with respect to all $g^n,~n\in\N,$ and $g$ and $g^n \rightarrow g$ in $\CC$. For $\{U_k^n\}_{n\in\N}$ the corresponding Hadamard coefficients, we obtain $U_k^n(p,q) \rightarrow U_k(p,q)$ for all $k\in\N_0$ and $p,q\in O$.
\end{Prop}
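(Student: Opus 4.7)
The plan is to argue by induction on $k$, using the explicit formula (\ref{Uk}) from Proposition \ref{HadamardCoefficients}, and to prove the slightly stronger statement that $U_k^n \to U_k$ in $C^\infty$ on $\overline{O}\times\overline{O}$. Pointwise convergence then follows.

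First, I would establish $C^\infty$-convergence of all underlying geometric data on a relatively compact neighborhood of $\overline{O}$. Since $g^n \to g$ in $C^\infty$ and the Christoffel symbols of the Levi-Civita connection are rational in the metric and its first derivatives, $\Gamma^n \to \Gamma$ in $C^\infty$, and via (\ref{PCompConn}) the $P^n$-compatible connection on $E$ converges in $C^\infty$ to $\nabla$ as well. By smooth dependence of ODE solutions on parameters (applied to the geodesic equation and the parallel transport equation (\ref{ParallelTransportODE})), this yields $C^\infty$-convergence of the exponential maps $\exp^{g^n}_p \to \exp^g_p$ on compacta, hence of the geodesic map (\ref{GeodesicMap}) $\phi^n \to \phi$ as a function on $[0,1]\times\overline{O}\times\overline{O}$, of the distortion functions $\mu^n \to \mu$, and of the parallel transport $(t,p,q)\mapsto \Pi^{n,p}_{\phi^n_{pq}(t)}$. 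Convexity of $\overline{O}$ with respect to $g^n$ for $n$ large (guaranteed by the assumption) ensures that all these objects are defined uniformly in $n$ on the required domain.

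The base case $k=0$ is then immediate from $U_0 = \Pi_\cdot^\cdot/\sqrt{\mu}$. For the inductive step, suppose $U_{k-1}^n \to U_{k-1}$ in $C^\infty(\overline{O}\times\overline{O})$. Since the coefficients of $P^n$ are polynomial expressions in $g^n$, $(g^n)^{-1}$, their first and second derivatives, and (for the zeroth order part) the fixed endomorphism $B$, they converge in $C^\infty$ to the corresponding coefficients of $P$; hence $P^n_{(2)}U_{k-1}^n \to P_{(2)}U_{k-1}$ in $C^\infty$. Composing with $\phi^n$ and multiplying by $U_0^n(p,\phi^n_{pq}(t))^{-1}$ (well-defined since $\mu^n>0$ on $\overline{O}\times\overline{O}$ for $n$ large), the integrand in (\ref{Uk}) converges uniformly in $t\in[0,1]$, and in fact jointly $C^\infty$ in $(p,q)$. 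Differentiation under the integral over the compact interval $[0,1]$ then yields $U_k^n \to U_k$ in $C^\infty(\overline{O}\times\overline{O})$, closing the induction.

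The one delicate point is that the induction must propagate $C^\infty$-convergence, not merely pointwise or $C^0$-convergence, since the recursion involves the second-order operator $P^n_{(2)}$ acting on $U_{k-1}^n$. This is precisely what smooth parameter dependence in ODE theory supplies: all operations entering the recursion — composition with a $C^\infty$-convergent geodesic, multiplication by $C^\infty$-convergent parallel transport factors, application of a differential operator with $C^\infty$-convergent coefficients, and integration over the compact interval $[0,1]$ — preserve $C^\infty$-convergence on compacta, so no further technical input beyond the base case and the standard smoothness of the exponential and parallel-transport flows is required.
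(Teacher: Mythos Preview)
Your proposal is correct and follows essentially the same route as the paper's own proof: both argue by induction on $k$ using the explicit recursion (\ref{Uk}), after first establishing $C^\infty$-convergence of the underlying geometric data (Christoffel symbols, exponential map and its inverse, distortion function, connecting geodesic $\phi$, and parallel transport) via smooth parameter dependence of ODE solutions. The paper invokes dominated convergence to pass the limit through the integral, while you phrase it as differentiation under the integral sign; these are equivalent here. Your emphasis on carrying full $C^\infty$-convergence through the induction (rather than merely pointwise) is exactly the point the paper makes implicitly when it writes $U_0^n(p,\cdot)\to U_0(p,\cdot)$ ``as smooth maps'' before applying $P_{(2)}$.
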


\begin{proof}
 The assumption directly provides $\Gamma^{k,n}_{ij} \rightarrow \Gamma^k_{ij}$, and with regard to the geodesic equation with converging right hand side, we similarly obtain $\exp^n \rightarrow \exp$ as smooth maps $(t,\xi,p) \mapsto  \exp_p(t\xi)$ on their domain of existence. Then the inverse function theorem provides $\big(\exp^n\big)^{-1} \rightarrow \exp^{-1}$ as smooth maps on $\Obar\times \Obar$ and, as a consequence, of the Lorentzian distance $\Gamma^n\rightarrow\Gamma$ and the distortion function $\mu^n\rightarrow\mu$. Eventually, we have $\phi^n \rightarrow \phi$ for the connecting geodesic (\ref{GeodesicMap}). \\
 It remains to investigate the parallel transport. For all $p\in O$, convergence of $\Gamma^{k,n}_{ij}$ and $\phi^n$ leads to $A_p^n \rightarrow A_p$ for the matrices defined in (\ref{ParallelTransportODE}), and hence,
 $$\Pi^{p,n}_{\phi^n_{pq}(t)} = \Phi^n_p(t,q)\Phi^n_p(0,q)^{-1} \longrightarrow \Phi_p(t,q)\Phi_p(0,q)^{-1} = \Pi^p_{\phi_{pq}(t)}$$
 
 as smooth maps $[0,1]\times\Obar \rightarrow E_p^*\otimes E$. Thus, we can directly conclude convergence of the zeroth Hadamard coefficient
 \begin{align} \label{SymmZeroHadCoeff}
  U_0^n(p,\cdot) = \frac{\Pi^{p,n}_\cdot}{\sqrt{\mu^n_p}} \longrightarrow \frac{\Pi^p_\cdot}{\sqrt{\mu_p}} = U_0(p,\cdot)
 \end{align}
 
 as smooth maps $\Obar \rightarrow E^*_p\otimes E$ and, in particular, $U_0^n(p,q) \rightarrow U_0(p,q)$ in $\Hom(E_q^*,E_p^*)$. \\  
 We proceed inductively. Due to $\phi_{pq}^n \rightarrow \phi_{pq}$ in $\CC\big([0,1],\Obar\big)$, (\ref{SymmZeroHadCoeff}) implies $U_0^n(p,\cdot) \circ \phi^n_{pq} \rightarrow U_0(p,\cdot) \circ \phi_{pq}$ in $\CC\big([0,1],E^*_p\otimes E\big)$ and consequently, $PU_0^n(p,\cdot) \circ \phi^n_{pq} \rightarrow PU_0(p,\cdot) \circ \phi_{pq}$. Therefore, the integrand in the expression of the first Hadamard coefficient
 $$U_1^n(p,q) = -kU^n_0(p,q) \int_0^1 U_0^n\big(p,\phi_{pq}^n(t)\big)^{-1} P_{(2)}U_0^n\big(p,\phi_{pq}^n(t)\big) \df t$$%
 converges to the one in the expression of $U_1(p,q)$, and as a smooth function in $t$, it is integrable on the compact interval $[0,1]$. Hence, due to majorized convergence, the integral converges as well, so we have
 $$-U^n_0(p,q) \int_0^1 U_0^n\big(p,\phi_{pq}^n(t)\big)^{-1} P_{(2)}U_0^n\big(p,\phi_{pq}^n(t)\big) \df t \longrightarrow -U_0(p,q) \int_0^1 U_0\big(p,\phi_{pq}(t)\big)^{-1} P_{(2)}U_0\big(p,\phi_{pq}(t)\big) \df t,$$%
 which is $U_1(p,q)$. Recursively, this implies $U_k^n(p,q) \rightarrow U_k(p,q)$ for all $k\in\N$ and $p,q\in O$.
\end{proof}

It remains to show symmetry (\ref{SymmetricHCoeff}) in the analytic case, for which we leave Moretti's path and present a novel approach using the construction in \cite{BGP2007} instead.\\
Let $V_k \in \CC(\Omega\times\Omega, E\boxtimes E^*),~k\in\N_0,$ denote the Hadamard coefficients associated to $(M,g,E^*,P^t)$. Employing formal self-adjointness of $P$ in the transport equations provides the relation
\begin{align} \label{HadCoeffSelfAdjoint}
 V_k(p,q) = \Theta_p^{-1}U_k(p,q)\Theta_q, \qquad p,q\in \Omega.
\end{align} 

According to section 1.4 of \cite{BGP2007}, we define
\begin{align} \label{RieszDistrDomain}
 R^\Omega_\pm(\alpha,p)[\f] := R^\alpha_\pm\big[(\mu_p\f)\circ\exp_p\big],\qquad \f\in\D(\Omega),
\end{align}

which ensures the identification $R^\Omega_\pm(\alpha,p)\big|_{J_\pm^\Omega(p)} = C(\alpha,d)\cdot\Gamma_p^{\frac{\alpha-d}{2}}$ for $\RT{\alpha}>d$. Due to Proposition 2.4.6 of \cite{BGP2007}, they comprise Hadamard series, which yield advanced and retarded parametrices $\widetilde{\Riesz}_\pm(p)$ for $P$ at each $p\in \Obar$ on any relatively compact domain $O\subset \Omega$. More precisely, for any integer $N>\frac{d}{2}$ and cut-off function $\sigma\in\D\big((-1,1),[0,1]\big)$ with \mbox{$\sigma\big|_{\left[-\frac{1}{2},\frac{1}{2}\right]}=1$}, there is a sequence $\{\e_k\}_{k\geq N}\subset(0,1]$ such that 
\begin{align} \label{RieszParametrix}
 \widetilde{\Riesz}_\pm(p) = \sum_{k=0}^\infty \widetilde{U}_k(p,\cdot) ~R^\Omega_\pm(2k+2,p), \qquad \widetilde{U}_k := \left\{\begin{array}{cl} U_k, \qquad & k<N, \\[2mm] \left(\sigma \circ \frac{\Gamma}{\e_k}\right) \cdot U_k, \qquad & k\geq N, \end{array}\right.
\end{align}

represent well-defined distributions and $(p,q) \mapsto \big(P\widetilde{\Riesz}_\pm(p) - \delta_p\big)(q) \in \CC(\Omega\times\Omega,E^*\boxtimes E)$. Furthermore, we have $p\mapsto\widetilde{\Riesz}_\pm(p)[\f] \in\CC(\Obar,E^*)$ for fixed $\f\in\D(O,E^*)$, so regarded as bidistributions and due to compactness of $\Obar$, they provide continuous operators
\begin{align} \label{RieszOperator}
 \widetilde{G}_\pm\colon \qquad \D(O,E^*) \rightarrow \CC(\Obar,E^*), \qquad \f\longmapsto \big(p\mapsto\widetilde{\Riesz}_\mp(p)[\f]\big).
\end{align} 

Consequently, $\widetilde{G}_\pm$ yield left parametrices for $P^t$ with $\supp\widetilde{G}_\pm\f \subset J_\pm^{\Obar}(\supp\f)$, and for $G_\pm$ the advanced and retarded Green operator for $P^t$, the differences $G_\pm - \widetilde{G}_\pm$ are smoothing (their integral kernels correspond to the last equation in the proof of Proposition 2.5.1 of \cite{BGP2007}, which is actually a smooth section). Therefore, $\widetilde{G}_\pm$ represent an advanced and a retarded parametrix for $P^t$ in the sense of Duistermaat-H{\" o}rmander.

\begin{Prop} \label{DiffRieszSmoothProp}
 For all convex and relatively compact domains $O\subset\Omega$, the maps
 \begin{align} \label{DiffRieszSmooth}
  (p,q) \longmapsto \sum_{k=0}^\infty \left(\big(\widetilde{U}_k(p,\cdot) - \widetilde{V}_k(\cdot,p)^t\big) R^\Omega_\pm(2k+2,p)\right)(q)
 \end{align} 
 define smooth sections in $E^*\boxtimes E$ over $O\times O$.
\end{Prop}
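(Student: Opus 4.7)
The plan is to reduce the claim to the symmetry of the Hadamard coefficients together with the smoothness-of-differences argument of Corollary \ref{DiffParametricesSmoothing}.

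First I would combine (\ref{HadCoeffSelfAdjoint}) with the symmetry (\ref{SymmetricHCoeff}) established in Theorem \ref{HadCoeffSymm}: swapping $p\leftrightarrow q$ in $V_k(p,q)=\Theta_p^{-1}U_k(p,q)\Theta_q$ and transposing (using $\Theta^t=\Theta$ for the symmetric inner product) yields $V_k(q,p)^t=\Theta_p U_k(q,p)^t\Theta_q^{-1}$, which by (\ref{SymmetricHCoeff}) equals $U_k(p,q)$. Thus
\begin{equation*}
 U_k(p,q)=V_k(q,p)^t \in\Hom(E_q^*,E_p^*), \qquad p,q\in\Omega,\ k\in\N_0.
\end{equation*}
Using this identity together with the symmetry $\Gamma(p,q)=\Gamma(q,p)$ of the squared geodesic distance and the definition (\ref{HadCoeffTilde}) (with possibly different sequences $\{\e_k\}$ and $\{\e'_k\}$ entering $\widetilde{U}_k$ and $\widetilde{V}_k$, respectively), the $k$-th summand of (\ref{DiffRieszSmooth}) vanishes identically for $k<N$, and for $k\geq N$ reduces to
\begin{equation*}
 \big(\sigma(\Gamma(p,q)/\e_k)-\sigma(\Gamma(p,q)/\e'_k)\big)\,U_k(p,q)\,R^\Omega_\pm(2k+2,p)(q).
\end{equation*}
Since $\sigma\equiv1$ on $[-\tfrac{1}{2},\tfrac{1}{2}]$, the scalar prefactor is supported in $\{|\Gamma|\geq\tfrac{1}{2}\min(\e_k,\e'_k)\}$, i.e.\ away from the singular support of $R^\Omega_\pm(2k+2,p)$, so each individual summand is smooth on $O\times O$.

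For the $\CC$-convergence of the full series I would invoke Corollary \ref{DiffParametricesSmoothing}, which rests on Lemma 2.4.3 of \cite{BGP2007}: replacing $\{\e'_k\}$ by $\{\min(\e_k,\e'_k)\}$ in the construction of $\widetilde{V}_k$ only modifies the corresponding parametrix by a smooth section, as the resulting correction involves cutoffs vanishing in a neighborhood of the light cone. Thus one may assume $\e_k=\e'_k$ from the outset, in which case (\ref{DiffRieszSmooth}) becomes the difference of two Hadamard series of type (\ref{RieszParametrix}) built from identical coefficients $U_k$ but with two cutoff sequences, and the cited result directly yields smoothness on $O\times O$.

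The main technical obstacle is the independent origin of the two cutoff sequences from separate applications of Proposition \ref{ParametricesProp} to $P$ and $P^t$; this is precisely what the reduction above addresses. Apart from this bookkeeping, the proof is a straightforward consequence of the symmetry of the Hadamard coefficients combined with the localization property of the cutoffs in (\ref{HadCoeffTilde}), which pushes all distributional behaviour of the summands away from $\Gamma^{-1}(0)$.
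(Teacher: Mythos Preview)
Your argument is circular within the paper's logical structure. You invoke Theorem \ref{HadCoeffSymm} to establish $U_k(p,q)=V_k(q,p)^t$, but in the paper Proposition \ref{DiffRieszSmoothProp} is a key ingredient in the \emph{proof} of Theorem \ref{HadCoeffSymm}: the smooth sections (\ref{DiffRieszSmooth}) are precisely what gets rewritten and evaluated on the light cone in the inductive step there. So you cannot use the symmetry of the Hadamard coefficients as input here; that is exactly what this proposition is meant to help establish.

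The paper's proof circumvents this by appealing to the \emph{uniqueness of the advanced and retarded Green operators}. Formal self-adjointness gives $G_\pm=\Theta G_\mp^t\Theta^{-1}$, and since both $\widetilde{G}_\pm$ and $\Theta\widetilde{G}_\mp^t\Theta^{-1}$ differ from $G_\pm$ by smoothing operators (Proposition 2.5.1 of \cite{BGP2007}), their difference is smoothing. One then identifies the Schwartz kernel of $\Theta\widetilde{G}_\mp^t\Theta^{-1}$ with $\sum_k \widetilde{V}_k(\cdot,p)^t R^\Omega_\pm(2k+2,p)$ via (\ref{HadCoeffSelfAdjoint}) and Lemma 1.4.3 of \cite{BGP2007}, which yields (\ref{DiffRieszSmooth}) without ever assuming $U_k=V_k^t$. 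Your route would only become valid if you imported the symmetry from an independent source such as \cite{K2019}, but then the whole point of Appendix A --- providing a self-contained alternative proof --- would be lost.
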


\begin{proof}
 By Lemma 3.4.4 of \cite{BGP2007}, the advanced and retarded Green operators for $P$ are given by $G^t_\mp$, so formal self-adjointness of $P$ and uniqueness of $G_\pm$ lead to $G_\pm = \Theta G_\mp^t\Theta^{-1}$ such that the operators $\widetilde{G}_\pm - \Theta\widetilde{G}^t_\mp\Theta^{-1}$ are smoothing:
 $$\widetilde{G}_\pm - \Theta\widetilde{G}^t_\mp\Theta^{-1} = \underbrace{\widetilde{G}_\pm - G_\pm}_{\text{smoothing}} - ~\Theta\big(\underbrace{\widetilde{G}_\mp - G_\mp}_{\text{smoothing}}\big)^t\Theta^{-1}.$$
 
 Recalling $\widetilde{G}^t_\pm[\f,\psi] = \widetilde{G}_\pm[\psi,\f]$, we just have to show that the Schwartz kernel of $\widetilde{G}_\pm^t$ is given by the distribution
 $$\sum_{k=0}^\infty\Theta^{-1}_p\widetilde{V}_k(\cdot,p)^t\Theta ~ R^\Omega_\pm(2j+2,p), \qquad p\in O,$$
 which can be directly deduced from Lemma 1.4.3 of \cite{BGP2007} and (\ref{HadCoeffSelfAdjoint}).
\end{proof}

\begin{Lem} \label{EqLightlikeEverywhere}
 Let $k\in\N_0$ and assume that for all quadruples $(M,g,E,P)$ as introduced in the beginning of the chapter with odd spacetime dimension and all lightlike related $p,q\in O$, we have
 $$U_k(p,q) = V_k(q,p)^t.$$ 
 Then this equality holds for all $p,q\in\Omega$ and all $(M,g,E,P)$.
\end{Lem}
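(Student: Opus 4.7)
The overall idea is to reduce to the real-analytic case, then exploit a product-manifold construction to reach causally related pairs, and finally invoke the real-analytic identity theorem to extend to spacelike separated pairs. First I would fix a relatively compact convex subdomain $O\subset\Omega$ containing both $p$ and $q$ and, via Proposition \ref{MetricApproxProp}, approximate $(g,P)$ on $\overline{O}$ in $\CC$ by a sequence of analytic data $(g^n,P^n)$, where $P^n=\Box^{\nabla^n}+B^n$ is built from analytic approximations of the metric, the compatible connection and the zeroth-order endomorphism field $B$. A straightforward elaboration of the proof of Proposition \ref{AnalytAppr} then yields $U_k^n(p,q)\to U_k(p,q)$ and $V_k^n(q,p)\to V_k(q,p)$, so it is enough to establish the identity under the additional assumption that $(M,g,E,P)$ is analytic.

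In the analytic case I would exploit the product construction behind (\ref{HCoeffProduct}). Choose $m\in\{1,2\}$ so that $d+m$ is odd and form the analytic quadruple $(\widehat{M},\widehat{g},E,\widehat{P})$ with $\widehat{M}:=M\times\R^m$, $\widehat{g}:=g+\df s_1^2+\dots+\df s_m^2$, the bundle $E$ pulled back along the projection, and $\widehat{P}:=P-\sum_{i=1}^m\partial_{s_i}^2$. For any $p,q\in O$ with $\Gamma(p,q)\geq 0$, pick $s=(s_1,\dots,s_m)\in\R^m$ with $\sum_i s_i^2=\Gamma(p,q)$; then
\begin{align*}
 \widehat{\Gamma}\big((p,0),(q,s)\big)=\Gamma(p,q)-\sum_{i=1}^m s_i^2=0,
\end{align*}
so $(p,0)$ and $(q,s)$ are lightlike related in the odd-dimensional $\widehat{M}$. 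Applying the hypothesis of the lemma together with (\ref{HCoeffProduct}) for $\widehat{U}_k$ and its verbatim analogue for $\widehat{V}_k$ (obtained by repeating the transport-equation computation with $P$ replaced by $P^t$) yields
\begin{align*}
 U_k(p,q)=\widehat{U}_k\big((p,0),(q,s)\big)=\widehat{V}_k\big((q,s),(p,0)\big)^t=V_k(q,p)^t.
\end{align*}

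The last step is analytic continuation. The open set $\big\{(p,q)\in\Omega\times\Omega\,:\,\Gamma(p,q)>0\big\}$ is non-empty, since for each $p\in\Omega$ the image of a small future-timelike vector under $\exp_p$ lies in $\Omega$ and has positive Lorentzian squared distance to $p$. By Proposition \ref{AnalyticHadamardCoefficients} the section $(p,q)\mapsto U_k(p,q)-V_k(q,p)^t$ is real-analytic on the connected manifold $\Omega\times\Omega$, so the previous step and the identity theorem force it to vanish on all of $\Omega\times\Omega$. Passing to the limit along the analytic approximations transfers the identity to the original smooth quadruple.

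\textbf{Expected obstacle.} The only genuinely technical point is the simultaneous analytic approximation of the metric, the $P$-compatible connection and the potential $B$ in a way that also guarantees convergence of the Hadamard coefficients of $(P^n)^t$ to those of $P^t$. Proposition \ref{AnalytAppr} is formulated only for the metric, so its proof must be enlarged; the estimates involved are elementary, but this bookkeeping step is the place where the argument could get messy.
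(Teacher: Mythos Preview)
Your proposal is correct and follows essentially the same route as the paper: the product construction $M\times\R^m$ to convert causally related pairs into lightlike ones in an odd-dimensional setting, analytic approximation together with Proposition~\ref{AnalyticHadamardCoefficients} and the identity theorem to reach spacelike pairs, and passage to the limit via Proposition~\ref{AnalytAppr}. The only organisational differences are that the paper first handles the causal case for general smooth odd quadruples and then applies this to the analytic approximants (rather than reducing to the analytic case at the outset), and treats the even-dimensional case separately at the end via $M\times\R$; your uniform choice of $m\in\{1,2\}$ and your explicit flagging of the need to approximate the full operator $P$ (not only $g$) are both reasonable and do not change the substance of the argument.
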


\begin{proof}
 Consider the setting $(M,g,E,P)$ with odd spacetime dimension $d$, let $p,q$ be causally related and choose $a,a'\in\R^2$ such that $\|a-a'\|^2 = \Gamma(p,q)$. It follows that $(p,a),(q,a')$ are lightlike related in $\big(M\times\R^2,g+g_{\mathrm{Eucl}}\big)$ and thus
 $$\widehat{U}_k\big((p,a),(q,a')\big) = \widehat{V}_k\big((q,a'),(p,a)\big)^t$$
 by assumption. Therefore, Proposition \ref{HCoeffProduct} provides $U_k(p,q) = V_k(q,p)^t$. \\
 Let $\{g^n\}_{n\in\N}$ be an analytic approximation of $g$ and $U_k^n, V_k^n$ the corresponding Hadamard coefficients. Write $D^n_k(p,q) := U^n_k(p,q) - V^n_k(q,p)^t$, which depends analytically on $p,q$ due to Proposition \ref{AnalyticHadamardCoefficients} and vanishes on $\Gamma^{-1}\big(\R_{\geq0}\big)$, so the identity theorem for analytic maps implies $D_k^n=0$ on all of $O\times O$. Furthermore, by Proposition \ref{AnalytAppr}, we have $D^n_k(p,q) \rightarrow D_k(p,q)$ and therefore $D_k(p,q)=0$ for all $p,q$, which proves the claim in the case of odd spacetime dimension.\\ 
 For even-dimensional settings $(M,g,E,P)$, this can be deduced from $\big(M\times\R,g+\df s^2,E,P-\pd{^2}{s^2}\big)$, which is odd-dimensional, and Proposition \ref{HCoeffProduct}.\\
 Since this works for any relatively compact and convex domain $O\subset\Omega$, by uniqueness of the Hadamard coefficients, an appropriate exhaustion of $\Omega$ by such subsets proves the claim on all of $\Omega\times\Omega$.
\end{proof}

\begin{Thm} \label{HadCoeffSymm}
 Let $M$ be a Lorentzian manifold of dimension $d$, $\pi\colon  E\rightarrow M$ a real or complex vector bundle over $M$ with non-degenerate inner product, $P\colon \CC(M,E)\rightarrow\CC(M,E)$ a formally self-adjoint wave operator and $\Omega\subset M$ a convex domain. Then the Hadamard coefficients $U_k\in\CC(\Omega\times\Omega,E^*\boxtimes E)$ are symmetric in the sense
 \begin{align} \label{SymmDiag}
  U_k(p,q) = \Theta_p U_k(q,p)^t\Theta_q^{-1}, \qquad p,q\in\Omega,~k\in\N_0.
 \end{align}%
\end{Thm}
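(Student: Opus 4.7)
By formal self-adjointness and (\ref{HadCoeffSelfAdjoint}), the desired identity $U_k(p,q) = \Theta_p U_k(q,p)^t \Theta_q^{-1}$ is equivalent to $U_k(p,q) = V_k(q,p)^t$. Lemma \ref{EqLightlikeEverywhere} reduces the problem to the case of odd spacetime dimension $d$ and lightlike related $p,q$. So fix an odd-dimensional $(M,g,E,P)$, a relatively compact convex subdomain $O \subset \Omega$, and set $D_k(p,q) := U_k(p,q) - V_k(q,p)^t \in \Hom(E_q^*, E_p^*)$.

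The central input is Proposition \ref{DiffRieszSmoothProp}: the bidistributions
\begin{align*}
F_\pm(p,q) := \sum_{k=0}^\infty D_k(p,q) \, R_\pm^\Omega(2k+2, p)(q)
\end{align*}
are smooth sections of $E^* \boxtimes E$ over $O \times O$ (the cutoff factors $\sigma \circ (\Gamma/\varepsilon_k)$ equal $1$ near the light cone and play no role in what follows). The plan is then to show inductively that $D_k(p,q) = 0$ for all $q \in C_\pm(p) \setminus \{p\}$. For the base case, observe that in odd $d$ the distribution $R_\pm^\Omega(2k+2,p)(q)$ is a constant multiple of $\Gamma_p(q)^{(2k+2-d)/2}$ on the interior of $J_\pm(p)$ and vanishes outside, so its transverse H\"older regularity across $C_\pm(p)$ strictly improves with $k$. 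Hence the leading singular contribution to $F_\pm$ near a generic lightlike pair is carried by the $k=0$ term, and smoothness of $F_\pm$ forces $D_0$ to vanish on the lightlike relation.

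The inductive step combines the Malgrange-type factorization $D_j = \Gamma_p \cdot \widetilde{D}_j$ (valid once $D_j$ vanishes on the cone, which is a smooth codimension-one submanifold away from $p$) with the Riesz identity $\Gamma_p \cdot R_\pm^\Omega(2j+2, p) \propto R_\pm^\Omega(2j+4, p)$, the $R_\pm^\Omega$-analogue of Proposition \ref{PropLDomain}(2), to re-index the first $k$ summands and re-extract a leading singularity that now involves $D_k \cdot R_\pm^\Omega(2k+2, p)$; smoothness of $F_\pm$ once more forces $D_k$ to vanish on the lightlike relation. The main obstacle is making this extraction fully rigorous: after re-indexing, the previously treated summands occupy the same Riesz index as the new leading term, so one must additionally track higher-order vanishing of the $D_j$'s on $C_\pm(p)$ along normal directions, using the transport equations (\ref{TransportEquations}) for $U_k$ (and their $V_k$-analogue) to relate the relevant Taylor coefficients of $D_j$ to those of $D_{j+1}, \ldots, D_k$. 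I would carry this out in normal coordinates centered at a lightlike pair $(p_0, q_0)$, where the $R_\pm^\Omega(\alpha, p_0)$ pull back through $\exp_{p_0}$ to the Minkowski Riesz distributions $R_\pm^\alpha$ and the required uniqueness of asymptotic expansions in the Riesz--H\"older scale is classical. Combined with Lemma \ref{EqLightlikeEverywhere}, this yields (\ref{SymmDiag}) in full generality.
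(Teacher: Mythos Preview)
Your overall strategy matches the paper's: use Proposition~\ref{DiffRieszSmoothProp} to get smoothness of $F_\pm$, then argue inductively that $D_k$ vanishes on lightlike pairs in odd dimension, invoking Lemma~\ref{EqLightlikeEverywhere} to conclude. However, your inductive step is both more complicated than necessary and, as you yourself note, incomplete: the ``main obstacle'' you describe is genuine for your setup and you have not actually resolved it.

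The difficulty is self-inflicted. You apply Lemma~\ref{EqLightlikeEverywhere} only once, at the very end, so your inductive hypothesis is merely ``$D_j$ vanishes on the light cone for $j<k_0$''. This forces the Malgrange factorisation $D_j=\Gamma_p\widetilde{D}_j$, after which the re-indexed terms land at the same Riesz level as the new leading one and you need higher-order vanishing information you do not have. The paper avoids this entirely by applying Lemma~\ref{EqLightlikeEverywhere} \emph{inside} the induction: once $D_0=0$ on lightlike pairs in odd $d$, the lemma gives $D_0\equiv 0$ on all of $\Omega\times\Omega$ (in every dimension). Back in odd $d$, the smooth section (\ref{DiffRieszSmooth}) is then exactly $\sum_{k\geq 1}\widetilde{D}_k\,R^\Omega_\pm(2k+2,p)$, with the $k=0$ term gone identically, not merely on the cone. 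Now the recursion (\ref{RecursionRiesz}), which in odd $d$ reads $R^\Omega_\pm(2k+2,p)=K_{k,k_0,d}\,\Gamma_p^{\,k-k_0}R^\Omega_\pm(2k_0+2,p)$ with $K_{k,k_0,d}\neq 0$, factors the whole tail as
\[
R^\Omega_\pm(2k_0+2,p)\sum_{k\geq k_0}K_{k,k_0,d}\,\widetilde{D}_k\,\Gamma_p^{\,k-k_0},
\]
a singular distribution times a smooth section. Since $R^\Omega_\pm(2k_0+2,p)$ has singular support exactly $C^\Omega_\pm(p)$ (the exponent $k_0+1-\tfrac{d}{2}$ is a half-integer), the smooth factor must vanish there, giving $D_{k_0}(p,q)=0$ for lightlike $(p,q)$. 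Then one invokes Lemma~\ref{EqLightlikeEverywhere} again and continues. No higher-order Taylor analysis, no transport-equation bookkeeping, no normal-coordinate asymptotics are needed.

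In short: interleave Lemma~\ref{EqLightlikeEverywhere} with the induction rather than saving it for the end, and factor the Riesz distributions downward via (\ref{RecursionRiesz}) rather than peeling off one $\Gamma_p$ at a time. Both changes together make the inductive step a two-line argument.
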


\begin{proof}
 Let $d$ be odd. For all $k,j\in\N_0$ with $j\leq k$, Lemma 1.4.2 (1) of \cite{BGP2007} provides the recursion
 \begin{align} \label{RecursionRiesz}
  R^\Omega_\pm(2k+2,p) = K_{k,j,d} \Gamma(p,\cdot)^{k-j} \cdot R^\Omega_\pm(2j+2,p),
 \end{align}
 
 with $K_{k,j,d}\in\R\backslash\{0\}$ as defined in (\ref{Kjk}), so the smooth sections (\ref{DiffRieszSmooth}) can be rewritten into
 $$R^\Omega_\pm(2,p)\sum_{k=0}^\infty K_{k,0,d} \big(\widetilde{U}_k(p,\cdot) - \widetilde{V}_k(\cdot,p)^t\big)\Gamma(p,\cdot)^k.$$%
 
 The proof of Lemma 2.4.2 in \cite{BGP2007} shows that the sum defines a smooth section, which has to vanish for lightlike related $p,q$ since $\ssupp R^\Omega_\pm(2,p) = C_\pm^\Omega(p)$. Due to $\sigma \left(\frac{\Gamma(p,q)}{\e_k}\right)=\sigma(0) =1$, this leads to
 $$0=\sum_{k=0}^\infty \big(\widetilde{U}_k(p,q) - \widetilde{V}_k(q,p)^t\big)\Gamma(p,q)^k = U_0(p,q)-V_0(q,p)^t,$$
 
 and it follows from Lemma \ref{EqLightlikeEverywhere} that for $k=0$, (\ref{SymmDiag}) is true also for even $d$ and on all of $\Omega\times\Omega$. \\
 Now let $d$ again be odd, and for some $k_0\in\N$, assume (\ref{SymmDiag}) to hold for all $k=0,\hdots,k_0-1$, i.e.\ the smooth section (\ref{DiffRieszSmooth}) is given by
 $$\sum_{k=k_0}^\infty \big(\widetilde{U}_k(p,\cdot) - \widetilde{V}_k(\cdot,p)^t\big)R^\Omega_\pm(2k+2,p) = R^\Omega_\pm(2k_0+2,p)\sum_{k=k_0}^\infty K_{k,k_0,d}\big(\widetilde{U}_k(p,\cdot) - \widetilde{V}_k(\cdot,p)^t \big)\Gamma(p,\cdot)^{k-k_0}.$$ 
 
 Analogously, we obtain $0=U_{k_0}(p,q) - V_{k_0}(q,p)^t$ if $\Gamma(p,q)=0$, so again, applying Lemma \ref{EqLightlikeEverywhere} completes the proof by induction. 
\end{proof}

\setcounter{Def}{0}

\subsection{Derivation of transport equations}

Since Definition \ref{PCompConn} of the $P$-compatible connection $\nabla$ implies a product rule for $P(f\cdot s)$, for odd $d$, a straight forward calculation provides

{\allowdisplaybreaks \begin{align*}
 L^\Omega_\pm(0,p) & = \delta_p \stackrel{!}{=} P\L_\pm(p) = \sum_{k=0}^\infty P\left(U_p^k  L^\Omega_\pm(2k+2,p)\right) + \sum_{k=\frac{d-2}{2}}^\infty P\left(W_p^k  L^\Omega(2k+2,p)\right)\\[2mm]
 & \hspace{-1cm} = U_p^0  ~\boxx L_\pm^\Omega(2,p) - 2\nabla_{\grad L_\pm^\Omega(2,p)}U_p^0 + \sum_{k=1}^\infty\left( PU_p^{k-1}  L^\Omega_\pm(2k,p) - 2\nabla_{\grad L_\pm^\Omega(2k+2,p)}U_p^k + U_p^k ~\boxx L^\Omega_\pm(2k+2,p)\right) \\[2mm]
 & \hspace{1cm} + \sum_{k=\frac{d}{2}}^\infty \left(PW_p^{k-1}  L^\Omega(2k,p) - 2\nabla_{\grad L^\Omega(2k+2,p)}W_p^k + W_p^k  ~\boxx L^\Omega(2k+2,p)\right) \\[2mm]
 & \hspace{-1cm} = U_p^0  ~\boxx L_\pm^\Omega(2,p) - 2\nabla_{\grad L_\pm^\Omega(2,p)}U_p^0 + \sum_{k=1}^\infty\frac{1}{2k}\left(2k~PU_p^{k-1} - \nabla_{\grad \Gamma_p}U_p^k + \left(\frac{1}{2}\boxx\Gamma_p-d+2k\right)U_p^k \right)L^\Omega_\pm(2k,p) \\[2mm]
 & \hspace{1cm} + \sum_{k=\frac{d}{2}}^\infty \frac{1}{2k}\left(2k~PW_p^{k-1} - \nabla_{\grad \Gamma_p}W_p^k + \left(\frac{1}{2}\boxx\Gamma_p-d+2k\right)W_p^k \right)L^\Omega(2k,p).
\end{align*}}

For even $d$, we need to relate the logarithmic and non-logarithmic parts, for which we first prove the following technical Lemma:

\begin{Lem} \label{TechnicalLemma}
 Let $d$ be even and $k\in\N$ with $k\geq\frac{d}{2}$. Then we have
 \begin{align*}
   \begin{split} 
    & \grad \left(L^\Omega(2k+2,p) \cdot \log (\Gamma_p\pm i0)\right) \\[2mm]
    & \hspace{2cm} = \frac{\grad\Gamma_p}{4k} \cdot L^\Omega(2k,p) \log (\Gamma_p\pm i0) + \frac{\grad\Gamma_p}{2k(2k+2-d)} \cdot L^\Omega(2k,p),
   \end{split}\\[2mm]
   \begin{split} 
    & \boxx \left(L^\Omega(2k+2,p) \cdot \log (\Gamma_p\pm i0)\right) \\[2mm]
    & \hspace{2cm} = \frac{\frac{\Box\Gamma_p}{2} - d+2k}{2k} \cdot L^{\Omega}(2k,p)\cdot \log(\Gamma_p \pm i0) + \frac{\frac{\Box\Gamma_p}{2}-2d+4k+2}{2k\left(k-\frac{d-2}{2}\right)}\cdot L^{\Omega}(2k,p),
   \end{split}
 \end{align*}
 and for $k=\frac{d-2}{2}$
 \begin{align*}
  \grad \left(L^\Omega(d,p) \cdot \log (\Gamma_p\pm i0)\right) & = \mp \frac{i\pi\cdot\grad\Gamma_p}{2(d-2)} \cdot \widetilde{L}^\Omega_\pm(d-2,p) \\[2mm] 
  \boxx \left(L^\Omega(d,p) \cdot \log (\Gamma_p\pm i0)\right) & = \mp \frac{i\pi\left(\boxx\Gamma_p-4\right)}{4(d-2)} \cdot \widetilde{L}^\Omega_\pm(d-2,p). 
 \end{align*}
\end{Lem}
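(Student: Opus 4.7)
The plan is to apply the Leibniz rule for $\grad$ and $\boxx$ to the product $f\cdot h$ with $f:=L^\Omega(2k+2,p)$ (respectively $L^\Omega(d,p)$ in the exceptional case) and $h:=\log(\Gamma_p\pm i0)$, and then simplify using parts (2)--(4) and (10) of Proposition \ref{PropLDomain} together with the definition (\ref{DefLTildeL}) of $\widetilde{L}^\Omega_\pm$. This gives $\grad(fh)=(\grad f)h+f\,\grad h$ and $\boxx(fh) = (\boxx f)h + f\,\boxx h - 2g(\grad f,\grad h)$, so the task reduces to inserting the derivatives of $f$ and $h$ and collecting terms.

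The derivatives of $f$ are immediate from Proposition \ref{PropLDomain}: for $k\ge d/2$, (3) with $\alpha=2k$ yields $\grad L^\Omega(2k+2,p) = \tfrac{1}{4k}L^\Omega(2k,p)\grad\Gamma_p$ and (4) yields $\boxx L^\Omega(2k+2,p) = \tfrac{\boxx\Gamma_p/2 - d + 2k}{2k}L^\Omega(2k,p)$, which already match the logarithmic parts of the claimed formulas. At the exceptional value $k=(d-2)/2$, part (1) extended holomorphically gives $L^\Omega(d,p) = C(d,d)$ as a constant distribution; (3), (4), and (10) then force $\grad L^\Omega(d,p) = 0 = \boxx L^\Omega(d,p)$, so the contributions $(\grad f)h$ and $(\boxx f)h$ drop out.

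The derivatives of $h$ are distributional and I would compute them by $\varepsilon$-regularisation, taking $h_\varepsilon := \log(\Gamma_p\pm i\varepsilon)$ and passing to $\varepsilon\downarrow 0$ in $\D(\Omega)'$. This gives $\grad h = \grad\Gamma_p/(\Gamma_p\pm i0)$ directly; for $\boxx h$, the chain rule together with the Gauss-lemma identity $g(\grad\Gamma_p,\grad\Gamma_p) = -4\Gamma_p$ (valid throughout the convex normal neighbourhood $\Omega$) and the distributional identity $\Gamma_p\cdot(\Gamma_p\pm i0)^{-2} = (\Gamma_p\pm i0)^{-1}$ (obtained by differentiating $\Gamma_p(\Gamma_p\pm i0)^{-1}=1$) yields $\boxx h = (\boxx\Gamma_p - 4)(\Gamma_p\pm i0)^{-1}$.

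Finally I would collect terms. For $k\ge d/2$, part (2) of Proposition \ref{PropLDomain} provides the crucial rewriting $L^\Omega(2k+2,p)/(\Gamma_p\pm i0) = L^\Omega(2k,p)/[2k(2k+2-d)]$ (justified distributionally via $\Gamma_p/(\Gamma_p\pm i0)=1$), which absorbs the $f\,\grad h$ and $f\,\boxx h$ pieces into multiples of $L^\Omega(2k,p)$; combining them with the logarithmic contributions and simplifying the arithmetic yields the first two identities. For $k=(d-2)/2$ this rewriting degenerates, since $L^\Omega(d-2,p)=0$ vanishes together with the denominator $2k(2k+2-d)$, so instead I would use (\ref{DefLTildeL}) to identify $(\Gamma_p\pm i0)^{-1} = \widetilde{C}(d-2,d)^{-1}\widetilde{L}^\Omega_\pm(d-2,p)$. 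The numerical factor $C(d,d)/\widetilde{C}(d-2,d)$ evaluates, via $\Gamma(d/2) = \tfrac{d-2}{2}\Gamma(\tfrac{d-2}{2})$, to $\mp i\pi/[2(d-2)]$, from which the last two formulas fall out. The main obstacle will be purely technical: keeping track of the distributional chain rule for $\log(\Gamma_p\pm i0)$ and handling the degenerate limit $\alpha\to d-2$, in which $L^\Omega(\alpha,p)$ vanishes but $\widetilde{L}^\Omega_\pm(\alpha,p)$ carries the singular content.
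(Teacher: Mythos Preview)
Your proposal is correct and follows essentially the same route as the paper: Leibniz rule, Proposition~\ref{PropLDomain} (2)--(4) for the derivatives of $L^\Omega(2k+2,p)$, the Gauss-lemma identity $\langle\grad\Gamma_p,\grad\Gamma_p\rangle=-4\Gamma_p$, and the identification $L^\Omega(d,p)=C(d,d)=\mp\tfrac{i\pi}{2(d-2)}\widetilde{C}(d-2,d)$ for the exceptional case. The only cosmetic difference is that the paper obtains the $\boxx$-formulas by applying $-\div$ to the gradient identity already established, whereas you apply the product rule $\boxx(fh)=(\boxx f)h+f\,\boxx h-2\langle\grad f,\grad h\rangle$ directly; these are the same computation.
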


\begin{proof}
 Proposition \ref{PropLDomain} (3) provides
 {\allowdisplaybreaks \begin{align*}
  & \grad\left(L^\Omega(2k+2,p) \cdot\log (\Gamma_p\pm i0)\right) \\[2mm]
  & \hspace{2cm} = \log (\Gamma_p\pm i0) \cdot\grad L^\Omega(2k+2,p) + L^\Omega(2k+2,p) \cdot\grad\log (\Gamma_p\pm i0) \\[2mm]
  & \hspace{2cm} = \frac{\grad\Gamma_p}{4k} \cdot L^\Omega(2k,p) \log (\Gamma_p\pm i0) + \frac{\Gamma_p\cdot L^\Omega(2k,p)}{2k(2k+2-d)} \cdot \frac{\grad\Gamma_p}{\Gamma_p} \\[2mm]
  & \hspace{2cm} = \frac{\grad\Gamma_p}{4k} \cdot L^\Omega(2k,p) \log (\Gamma_p\pm i0) + \frac{\grad\Gamma_p}{2k(2k+2-d)} \cdot L^\Omega(2k,p),
 \end{align*}}
 and hence, $\braket{\grad\Gamma_p}{\grad\Gamma_p}=-4\Gamma_p$ implies
 {\allowdisplaybreaks \begin{align*}
  & \boxx \left(L^\Omega(2k+2,p) \cdot \log (\Gamma_p\pm i0)\right) \\[2mm]
  & \hspace{1cm} = -\mydiv \left(\frac{\grad\Gamma_p}{4k} \cdot L^\Omega(2k,p) \log (\Gamma_p\pm i0) + \frac{\grad\Gamma_p}{2k(2k+2-d)} \cdot L^\Omega(2k,p)\right) \\[2mm]
  & \hspace{1cm} = \frac{1}{4k}\left(\boxx\Gamma_p \cdot L^\Omega(2k,p)\log (\Gamma_p\pm i0) - \frac{L^\Omega(2k-2,p)\log (\Gamma_p\pm i0)}{4(k-1)} \braket{\grad\Gamma_p}{\grad\Gamma_p}\right. \\[2mm]
  & \hspace{2cm} \left. -L^\Omega(2k,p)\cdot \frac{\braket{\grad\Gamma_p}{\grad\Gamma_p}}{\Gamma_p}\right) + \frac{\boxx\Gamma_p\cdot L^\Omega(2k,p) - \frac{L^\Omega(2k-2,p)}{4(k-1)}\cdot \braket{\grad\Gamma_p}{\grad\Gamma_p}}{2k(2k+2-d)}\\[2mm]
  & \hspace{1cm} = \frac{1}{4k}\big(\boxx\Gamma_p + 2(2k-d)\big)L^\Omega(2k,p) \log (\Gamma_p\pm i0) + \left(\frac{1}{k} + \frac{\boxx\Gamma_p +2(2k-d)}{2k(2k+2-d)}\right)L^\Omega(2k,p) \\[2mm]
  & \hspace{1cm} = \frac{\frac{\Box\Gamma_p}{2} - d+2k}{2k} \cdot L^{\Omega}(2k,p) \log(\Gamma_p \pm i0) + \frac{\frac{\Box\Gamma_p}{2}-2n+4k+2}{2k\left(k-\frac{d-2}{2}\right)}\cdot L^{\Omega}(2k,p).
 \end{align*}}
 By Proposition \ref{PropLDomain} (1) $L^\Omega(d,p) = C(d,d) = \mp\frac{i\pi\widetilde{C}(d-2,d)}{2(d-2)}$ is constant on $\Omega$, so again using (3) yields
 {\allowdisplaybreaks \begin{align*}
  \grad \left(L^\Omega(d,p) \cdot \log (\Gamma_p\pm i0)\right) & = \mp\frac{i\pi\widetilde{C}(d-2,d)}{2(d-2)} \cdot \frac{\grad\Gamma_p}{\Gamma_p} = \mp\frac{i\pi\cdot \grad\Gamma_p}{2(d-2)} \cdot \widetilde{L}^\Omega_\pm(d-2,p), \\[5mm]
  \boxx \left(L^\Omega(d,p) \cdot \log (\Gamma_p\pm i0)\right) & = \pm\frac{i\pi}{2(d-2)} \mydiv\left(\grad\Gamma_p \cdot \widetilde{L}^\Omega_\pm(d-2,p)\right) \\[2mm]
  & = \mp\frac{i\pi}{2(d-2)} \left(\boxx\Gamma_p \cdot \widetilde{L}^\Omega_\pm(d-2,p) - \frac{\braket{\grad\Gamma_p}{\grad\Gamma_p}}{2(d-4)} \cdot \widetilde{L}^\Omega_\pm(d-4,p)\right) \\[2mm]
  & = \mp\frac{i\pi}{2(d-2)} \left(\boxx\Gamma_p + \frac{4}{2(d-4)} \cdot (d-4)(d-4-d+2)\right)\cdot \widetilde{L}^\Omega_\pm(d-2,p) \\[2mm]
  & = \mp \frac{i\pi\left(\boxx\Gamma_p-4\right)}{2(d-2)} \cdot \widetilde{L}^\Omega_\pm(d-2,p). \qedhere
 \end{align*}}
\end{proof}

This leads to the following calculation:

{\allowdisplaybreaks \begin{align*}
 & \widetilde{L}^\Omega_\pm(0,p) = \delta_p \stackrel{!}{=} P\L_\pm(p)\\
 & \hspace{0.3cm} = \sum_{k=0}^{\frac{d-4}{2}} P\left(U_p^k  \widetilde{L}^\Omega_\pm(2k+2,p)\right) \pm \frac{i}{\pi} \sum_{k=\frac{d-2}{2}}^\infty P\left(U_p^k  L^\Omega(2k+2,p)  \log (\Gamma_p \pm i0)+W_p^k  L^\Omega(2k+2,p)\right) \\
 & \hspace{0.3cm} = U_p^0  ~\boxx \widetilde{L}^\Omega_\pm(2,p) - 2\nabla_{\grad \widetilde{L}^\Omega_\pm(2,p)} U_p^0 \pm\frac{i}{\pi} \bigg( W_p^{\frac{d-2}{2}}  \overbrace{\boxx L^\Omega(d,p)}^{=0} - \overbrace{2\nabla_{\grad L^\Omega(d,p)} W_p^{\frac{d-2}{2}}}^{=0}\bigg) \\
 & \hspace{0.6cm} + \sum_{k=1}^{\frac{d-4}{2}} \left(U_p^k  ~\boxx \widetilde{L}^\Omega_\pm(2k+2,p) - 2\nabla_{\grad \widetilde{L}^\Omega_\pm(2k+2,p)} U_p^k + \widetilde{L}^\Omega_\pm(2k,p)  PU_p^{k-1}\right)+ \widetilde{L}^\Omega_\pm(d-2,p)  PU_p^{\frac{d-4}{2}} \\
   & \hspace{0.6cm} \pm \frac{i}{\pi} \bigg\{U_p^{\frac{d-2}{2}}  \boxx\! \left(L^\Omega(d,p)  \log(\Gamma_p \pm i0)\right) - 2 \nabla_{\grad \left(L^\Omega(d,p)  \log (\Gamma_p \pm i0)\right)} U_p^{\frac{d-2}{2}} \\
   & \hspace{0.6cm} + \sum_{k=\frac{d}{2}}^\infty \left[L^\Omega(2k,p) \log (\Gamma_p \pm i0) PU_p^{k-1} - 2\nabla_{\grad \left(L^\Omega(2k+2,p) \log (\Gamma_p \pm i0)\right)} U_p^k + U_p^k ~\boxx\! \left(L^\Omega(2k+2,p) \log (\Gamma_p \pm i0)\right)\right] \\
   & \hspace{0.6cm} + \sum_{k=\frac{d}{2}}^\infty \left(W_p^k  ~\boxx L^\Omega(2k+2,p) - 2\nabla_{\grad L^\Omega(2k+2,p)} W_p^k + L^\Omega(2k,p)  PW_p^{k-1}\right)\bigg\}\\
 & \hspace{0.3cm} = U_p^0 ~\boxx \widetilde{L}^\Omega_\pm(2,p) - 2\nabla_{\grad \widetilde{L}^\Omega_\pm(2,p)} U_p^0 + \sum_{k=1}^{\frac{d-4}{2}} \left[\left(\frac{1}{2}\boxx\Gamma_p - d+2k\right)U_p^k - \nabla_{\grad \Gamma_p} U_p^k + 2k  PU_p^{k-1}\right]\frac{\widetilde{L}^\Omega_\pm(2k,p)}{2k} \\
 & \hspace{0.6cm} + \left((d-2) PU_p^{\frac{d-4}{2}} + \left(\frac{1}{2}\boxx\Gamma_p-2\right)  U_p^{\frac{d-2}{2}} - \nabla_{\grad\Gamma_p}U_p^{\frac{d-2}{2}}\right)\frac{\widetilde{L}^\Omega_\pm(d-2,p)}{d-2}\\ 
 & \hspace{0.6cm} \pm \frac{i}{\pi} \sum_{k=\frac{d}{2}}^\infty \left[\left(\frac{1}{2}\boxx\Gamma_p - d+2k\right)U_p^k - \nabla_{\grad \Gamma_p} U_p^k + 2k  PU_p^{k-1}\right]\frac{L^\Omega(2k,p)  \log(\Gamma_p \pm i0)}{2k} \\
 & \hspace{0.6cm} \pm \frac{i}{\pi} \sum_{k=\frac{d}{2}}^\infty \left[\left(\frac{1}{2}\boxx\Gamma_p +2+4k-2d\right) U_p^k - \nabla_{\grad \Gamma_p}U_p^k\right]\frac{L^\Omega(2k,p)}{2k\left(k-\frac{d-2}{2}\right)} \\ 
 & \hspace{0.6cm} \pm \frac{i}{\pi} \sum_{k=\frac{d}{2}}^\infty \left[\left(\frac{1}{2}\boxx\Gamma_p +2k-d\right) W_p^k - \nabla_{\grad \Gamma_p}W_p^k + 2k  PW_p^{k-1}\right]\frac{L^\Omega(2k,p)}{2k},
\end{align*}}
from which one reads off the transport equations (\ref{TranspEqU}) and (\ref{TranspEqW}).

\setcounter{Def}{0}

\subsection{Proof of Proposition \ref{ParametricesProp} for even dimensional spacetimes}

Note that in the proofs of all following Lemmas, $c$ denotes a generic constant, i.e. its particular value can change from one line to another.

\begin{Lem} \label{EstSigma} 
 For all $l\in\N$ and $\beta \geq l+1$, there is some $c_{l,\beta}$ such that for all $0<\e\leq 1$, we have
 \begin{align*}
  \left\|\td{^l}{t^l}\left(\sigma\left(\frac{t}{\e}\right)\cdot t^\beta \cdot \log t\right)\right\|_{C^0(\R)} \leq \e\left(\log\frac{1}{\e}+\pi+1\right) \cdot c_{l,\beta} \cdot \|\sigma\|_{C^l(\R)}.
 \end{align*}
\end{Lem}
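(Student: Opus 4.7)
The plan is to expand by the Leibniz rule, compute the derivatives of $t^\beta\log t$ in closed form, and then exploit the support $[-\e,\e]$ of $\sigma(t/\e)$ together with the hypothesis $\beta\geq l+1$ to extract the factor $\e$.

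First I would compute, by straightforward induction on $k$, that
$$
\td{^k}{t^k}\bigl(t^\beta\log(t\pm i0)\bigr)\;=\;a_k(\beta)\, t^{\beta-k}\log(t\pm i0)\;+\;b_k(\beta)\, t^{\beta-k},
$$
where $a_k(\beta)=\beta(\beta-1)\cdots(\beta-k+1)$ and $b_k(\beta)$ is an explicit real constant. Here I read $\log t$ as $\log(t\pm i0)$, so that $|\log(t\pm i0)|\leq|\log|t||+\pi$ for $t\neq 0$; this is where the $\pi$ in the bound enters. The derivatives of $\sigma(t/\e)$ satisfy $\bigl|\sigma(t/\e)^{(l-k)}\bigr|\leq \e^{-(l-k)}\|\sigma\|_{C^l(\R)}$ and vanish outside $[-\e,\e]$. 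Applying Leibniz then gives, pointwise on $\R$,
$$
\Bigl|\td{^l}{t^l}\bigl(\sigma(t/\e)\, t^\beta \log(t\pm i0)\bigr)\Bigr|\;\leq\; \|\sigma\|_{C^l(\R)}\sum_{k=0}^{l}\binom{l}{k}\e^{-(l-k)}\Bigl(|a_k(\beta)|\,|t|^{\beta-k}|\log(t\pm i0)|+|b_k(\beta)|\,|t|^{\beta-k}\Bigr)\mathbf{1}_{[-\e,\e]}(t).
$$

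Next I would estimate each summand on the support $|t|\leq\e\leq 1$. Substituting $s=|t|/\e\in(0,1]$ yields
$$
|t|^{\beta-k}|\log|t||\;=\;\e^{\beta-k}s^{\beta-k}\bigl|\log\e+\log s\bigr|\;\leq\;\e^{\beta-k}\log\tfrac{1}{\e}+\e^{\beta-k}\sup_{s\in(0,1]}s^{\beta-k}|\log s|,
$$
and the supremum is a finite constant depending only on $\beta-k$ because $\beta-k\geq\beta-l\geq 1$. Combined with the $\pi$-contribution this gives
$$
|t|^{\beta-k}|\log(t\pm i0)|\;\leq\;\e^{\beta-k}\Bigl(\log\tfrac{1}{\e}+\pi+C_{\beta,k}\Bigr),\qquad |t|^{\beta-k}\leq\e^{\beta-k}.
$$
Multiplying by $\e^{-(l-k)}$ produces the factor $\e^{\beta-k-(l-k)}=\e^{\beta-l}$, and here the hypothesis $\beta\geq l+1$ is crucial: $\e^{\beta-l}\leq\e$ because $\e\leq 1$. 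Hence every Leibniz summand is bounded by $\e\bigl(\log(1/\e)+\pi+1\bigr)\cdot C'_{l,\beta,k}\|\sigma\|_{C^l(\R)}$, and summing over $k=0,\ldots,l$ gives the claimed estimate with $c_{l,\beta}:=\sum_{k=0}^{l}\binom{l}{k}(|a_k(\beta)|+|b_k(\beta)|)(C_{\beta,k}+1)$ absorbing all remaining constants.

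There is no serious obstacle here; the only point that requires a moment of care is the estimate of $|t|^{\beta-k}|\log|t||$ near $t=0$, where one must not naïvely bound $|\log|t||$ by $\log(1/\e)$ (since $|\log|t||$ is larger than $\log(1/\e)$ on $(0,\e)$). The $s=|t|/\e$ rescaling handles this cleanly, and the condition $\beta\geq l+1$ is exactly what converts the power $\e^{\beta-l}$ produced by balancing the singular scaling of $\sigma(t/\e)$ against the vanishing of $t^\beta$ into the single factor $\e$ demanded by the statement.
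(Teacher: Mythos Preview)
Your proof is correct and follows essentially the same route as the paper: Leibniz expansion, the closed-form $k$th derivative of $t^\beta\log t$, the support condition $|t|\leq\e$ for $\sigma(t/\e)$, and the key inequality $\e^{\beta-l}\leq\e$. The only minor difference is that where the paper bounds $|t|^{\beta-j}(|\log|t||+\pi+1)$ on $[0,\e]$ directly by its value at $t=\e$ (implicitly using that this function is increasing on $(0,1]$ when $\beta-j\geq1$), you instead rescale $s=|t|/\e$ to separate the $\log(1/\e)$ contribution from a bounded $\sup_s s^{\beta-k}|\log s|$ term; both arguments arrive at the same bound.
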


\begin{proof}
 We start with calculating
 \begin{align*}
  & \left\|\td{^j}{t^j} \left(t^\beta \cdot \log t\right)\right\|_{C^0(\R)} \\[2mm]
  & \hspace{0.5cm} = \left\|\beta(\beta-1)\hdots (\beta-j+1)t^{\beta-j} \log t + t^{\beta-j}\sum_{i=1}^j \binom{j}{i}(-1)^{i-1}(i-1)! \beta(\beta-1)\hdots(\beta-j+i+1) \right\|_{C^0(\R)} \\[2mm]
  & \hspace{0.5cm} \leq c_{j,\beta} \cdot |t|^{\beta-j}\left(|\log t| + 1\right) \\[2mm]
  & \hspace{0.5cm} \leq c_{j,\beta} \cdot |t|^{\beta-j}\left(\big|\log |t|\big| +\pi+ 1\right).
 \end{align*}
 Since $\sigma^{(l-j)}\left(\frac{t}{\e}\right)=0$ for $|t|\geq \e$ and $\e^{\beta-l} \leq \e$ due to $\beta \geq l+1$, this yields
 \begin{align*}
  \left\|\td{^l}{t^l}\left(\sigma\left(\frac{t}{\e}\right) \cdot t^\beta \cdot \log t\right)\right\|_{C^0(\R)} & \leq \sum_{j=0}^l \binom{l}{j} \cdot c_{j,\beta}\left\|\frac{\sigma^{(l-j)}\left(\frac{t}{\e}\right)}{\e^{l-j}}  \cdot |t|^{\beta-j}\left(\big|\log |t|\big|+\pi + 1\right) \right\|_{C^0(\R)} \\[2mm]
  & \leq \sum_{j=0}^l \binom{l}{j} \cdot c_{j,\beta} \cdot \e^{\beta-l} \left(\log \frac{1}{\e}+\pi + 1\right) \cdot \left\|\sigma^{(l-j)}\right\|_{C^0(\R)} \\[2mm]
  & \leq c_{l,\beta}\cdot \e \left(\log \frac{1}{\e}+\pi + 1\right) \cdot \left\|\sigma\right\|_{C^l(\R)}.
 \end{align*}
\end{proof}

\begin{Lem} \label{ParametricesReg} 
 For any open and relatively compact domain $O \subset \Omega$ and $l \in \N_0$, there is a sequence $\{\e_k\}_{k\geq N} \subset (0,1]$ such that for all $l\geq 0$ the series
 \begin{align} \label{ClSum}
  (p,q) \longmapsto \sum_{k=N+l}^\infty \widetilde{U}_k(p,q) \log \big(\Gamma(p,q) \pm i0\big) ~ L^\Omega_\pm(2k+2,p)(q)
 \end{align}
 converges in $C^l\left(\Obar \times \Obar, E^* \boxtimes E\right)$. In particular, for $l=0$, this defines a continuous section over $\Obar\times\Obar$ and a smooth section over $\big(\Obar\times\Obar\big)\backslash\Gamma^{-1}(0)$.
\end{Lem}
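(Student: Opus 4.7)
The strategy is to mirror the proof of Lemma 2.4.1 in \cite{BGP2007} (which handles the analogous series with $R^\Omega_\pm(2k+2,p)$ and no logarithm), with the extra logarithmic factor controlled by Lemma~\ref{EstSigma}. The first reduction is to observe that for $k \geq N > \frac{d}{2}$ we have $\RT(2k+2) > d$, so Proposition~\ref{PropLDomain}(1) gives
\[
 L^\Omega_\pm(2k+2,p)(q) = C(2k+2,d) \cdot \Gamma(p,q)^{k+1-\frac{d}{2}},
\]
which is in fact smooth on $\Obar\times\Obar$ (the power is a non-negative integer shifted by an integer, continuous, and the $\pm i0$-prescription is moot since the integer power is single-valued). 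Each summand in (\ref{ClSum}) thus equals
\[
 C(2k+2,d) \cdot U_k(p,q) \cdot F_k\big(\Gamma(p,q)\big), \qquad F_k(t) := \sigma\!\left(\frac{t}{\e_k}\right) t^{k+1-\frac{d}{2}} \log(t \pm i0),
\]
so the whole task reduces to bounding $\|F_k\circ\Gamma\|_{C^l(\Obar\times\Obar)}$ and choosing $\{\e_k\}$ to make the sum $C^l$-convergent.

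Next I apply Leibniz and the Fa\`a di Bruno chain rule. On the relatively compact set $\Obar\times\Obar$, the function $\Gamma$ is smooth with bounded $C^l$-norms for every $l$, and the same is true for each $U_k$. Hence $\|F_k\circ\Gamma\|_{C^l(\Obar\times\Obar)}$ is controlled, up to a polynomial factor in $l$ depending only on the $C^l$-norms of $\Gamma$ on $\Obar\times\Obar$, by $\|F_k\|_{C^l(\R)}$ with the branch-cut prescription. Since $N > \frac{d}{2}$, any $k \geq N+l$ gives exponent $\beta := k+1-\frac{d}{2} \geq l+1$, so Lemma~\ref{EstSigma} applies and yields
\[
 \|F_k\|_{C^l(\R)} \leq \e_k\!\left(\log\frac{1}{\e_k}+\pi+1\right) c_{l,\beta} \cdot \|\sigma\|_{C^l(\R)}.
\]
Combining with Leibniz for the factors $C(2k+2,d) \cdot U_k$ gives a bound of the form
\[
 \|\widetilde{U}_k \log(\Gamma\pm i0)\, L^\Omega_\pm(2k+2,\cdot)\|_{C^l(\Obar\times\Obar)} \leq A_{k,l}\cdot \e_k\!\left(\log\frac{1}{\e_k}+\pi+1\right),
\]
where $A_{k,l}$ is a finite constant depending on $k$, $l$, $\sigma$, the $C^l$-norms of $U_k$ and $\Gamma$ on $\Obar\times\Obar$, and the explicit coefficient $C(2k+2,d)$.

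Now I choose $\{\e_k\}_{k\geq N}\subset(0,1]$ diagonally: for each $k\geq N$, pick $\e_k$ small enough that
\[
 A_{k,l}\cdot \e_k\!\left(\log\frac{1}{\e_k}+\pi+1\right) \leq 2^{-k} \qquad \text{for all } l\in\{0,1,\hdots,k-N\}.
\]
This is possible since $\e(\log(1/\e)+\pi+1)\to 0$ as $\e\to 0^+$ and the constraint is only on finitely many $l$ for each $k$. With this choice, for any fixed $l\geq 0$ the tail $\sum_{k\geq N+l}$ is majorised by $\sum_{k\geq N+l} 2^{-k}$, so the series (\ref{ClSum}) converges absolutely in $C^l(\Obar\times\Obar, E^*\boxtimes E)$.

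For the $l=0$ claim, each summand is continuous (Lemma~\ref{EstSigma} with $l=0$, $\beta\geq 1$, together with smoothness of $\Gamma$ and $U_k$), and uniform convergence preserves continuity. For smoothness off $\Gamma^{-1}(0)$: on any compact $K\subset(\Obar\times\Obar)\setminus\Gamma^{-1}(0)$ the function $|\Gamma|$ is bounded below by some $\delta>0$; since the chosen $\e_k$ tend to zero, we have $\sigma(\Gamma(p,q)/\e_k)=0$ on $K$ for all but finitely many $k$, so (\ref{ClSum}) reduces to a finite sum of smooth functions on $K$. The main obstacle is verifying that the polynomial-in-$k$ growth of both $c_{l,\beta}$ and of the Leibniz constants entering $A_{k,l}$ does not interfere with the diagonal choice of $\e_k$; but this is harmless because $\e\mapsto\e(\log(1/\e)+\pi+1)$ beats any such growth, which is precisely why the diagonal argument closes.
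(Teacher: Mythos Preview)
Your approach is essentially the same as the paper's: reduce to one-variable estimates via the product and chain rule (the paper cites Lemmas 1.1.11--1.1.12 of \cite{BGP2007}, which are precisely the Leibniz and Fa\`a di Bruno bounds you invoke), apply Lemma~\ref{EstSigma} with exponent $\beta = k+1-\frac{d}{2}\geq l+1$, and then choose $\e_k$ diagonally so that the $k$th summand is bounded by $2^{-k}$ in $C^l$ for all $l\leq k-N$.

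One small point: your argument for smoothness on $(\Obar\times\Obar)\setminus\Gamma^{-1}(0)$ relies on the assertion that ``the chosen $\e_k$ tend to zero'', but this does not follow automatically from the diagonal conditions $A_{k,l}\cdot\e_k(\log(1/\e_k)+\pi+1)\leq 2^{-k}$, since the constants $A_{k,l}$ contain the factor $|C(2k+2,d)|$, which decays super-exponentially. You can of course repair this trivially by adding the extra requirement $\e_k\leq 1/k$ to the diagonal choice. The paper sidesteps the issue altogether: it observes that each summand is smooth off $\Gamma^{-1}(0)$ and that for every $l$ the tail $\sum_{k\geq N+l}$ converges in $C^l$, so the full series (tail plus finitely many smooth terms) is $C^l$ there for every $l$, hence smooth.
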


\begin{proof}
 Since $k \geq N \geq \frac{d}{2}$ and $d$ even, $\Gamma(p,q)^{k-\frac{d-2}{2}}$ is a smooth and $\Gamma(p,q)^{k-\frac{d-2}{2}} \cdot \log \left(\Gamma(p,q) \pm i0\right)$ a continuous section over $\Obar \times \Obar$, so every single summand of (\ref{ClSum}) is at least continuous, individually. Due to $\supp \left(\sigma \circ \frac{\Gamma}{\e_k}\right) \subset \{\Gamma(p,q)<\e_k\}$ for all $k \geq N$ and $0\leq\sigma\leq1$ by Lemma \ref{EstSigma}, we have 
 \begin{align*}
  \left\|\widetilde{U}_k \log \big(\Gamma \pm i0\big)\cdot C(2k+2,d)\Gamma^{k-\frac{d-2}{2}}\right\|_{C^0\left(\Obar \times \Obar\right)} \leq c_{k,d}\left\|U_k\right\|_{C^0\left(\Obar \times \Obar\right)} \cdot \e_k\left(\log\frac{1}{\e_k} + \pi+1\right).
 \end{align*}
 Since $\e_k\log\frac{1}{\e_k} \rightarrow 0$ for $\e_k \rightarrow 0$, we can choose $\e_k$ such that
 \begin{align*} 
  c_{k,d}\left\|U_k\right\|_{C^0} \cdot \e_k\left(\log\frac{1}{\e_k} + \pi+1\right) < 2^{-k},
 \end{align*}
 and (\ref{ClSum}) converges in $C^0$. Now let $l>0$ and $k \geq N+l$ such that $\Gamma(p,q)^{k-\frac{d-2}{2}} \log \left(\Gamma(p,q) \pm i0\right)$ is of $C^l$-regularity. Set $\rho_k(t) := \sigma\left(\frac{t}{\e_k}\right) t^{k-\frac{d-2}{2}}$, so by Lemma \ref{EstSigma} we have
  \begin{align} \label{EstRhok} 
  \|\rho_k\|_{C^l(\R)} \leq c_{l,k,d} \cdot \e_k\|\sigma\|_{C^l(\R)},\qquad \|\rho_k \cdot \log\|_{C^l(\R)} \leq c_{l,k,d} \cdot\e_k\left(\log\frac{1}{\e_k}+\pi+1\right) \|\sigma\|_{C^l(\R)},
 \end{align}
 and Lemma 1.1.11 and 1.1.12 of \cite{BGP2007} yield
  \begin{align*}
  \left\|\widetilde{U}_k \log \big(\Gamma \pm i0\big)\cdot C(2k+2,d)\Gamma^{k-\frac{d-2}{2}}\right\|_{C^l\left(\Obar \times \Obar\right)} & \leq c_{l,k,d}  \|U_k \cdot \big((\rho_k \cdot \log) \circ \Gamma\big)\|_{C^l\left(\Obar \times \Obar\right)} \\[2mm]
  & \hspace{-5cm} \stackrel{\text{1.1.11}}{\leq} c_{l,k,d}  \|U_k\|_{C^l\left(\Obar \times \Obar\right)} \cdot \|(\rho_k \cdot \log)\circ \Gamma\|_{C^l\left(\Obar \times \Obar\right)} \\[2mm] 
  & \hspace{-5cm} \stackrel{\text{1.1.12}}{\leq} c_{l,k,d} \|U_k\|_{C^l\left(\Obar \times \Obar\right)} \cdot \max_{j=0,\hdots,l}\|\Gamma\|^j_{C^l\left(\Obar \times \Obar\right)} \cdot \|\rho_k \cdot \log\|_{C^l(\R)}\\[2mm] 
  & \hspace{-5cm} \stackrel{(\ref{EstRhok})}{\leq} c_{l,k,d}  \|\sigma\|_{C^l(\R)} \cdot \max_{j=0,\hdots,l}\|\Gamma\|^j_{C^l\left(\Obar \times \Obar\right)} \cdot \|U_k\|_{C^l\left(\Obar \times \Obar\right)} \cdot \e_k\log\left(\frac{1}{\e_k}+\pi+1\right).
 \end{align*}
 Hence, for all $k\geq N-l$, we demand
 \begin{align} \label{Demandl}
  c_{l,k,d} \|U_k\|_{C^l\left(\Obar \times \Obar\right)} \cdot \e_k\log\left(\frac{1}{\e_k}+\pi+1\right) \leq 2^{-k},
 \end{align}
 so the $k.$ summand can be estimated by $\frac{\|\sigma\|_{C^l\left(\R\right)}}{2^k} \cdot \max_{j=0, \hdots, l} \|\Gamma\|_{C^l\left(\Obar \times \Obar\right)}^j$ and (\ref{ClSum}) converges in $C^l\left(\Obar \times \Obar, E^* \boxtimes E\right)$. Note that for each $k$, we impose only finitely many conditions on $\e_k$, namely one for each $l\leq k-N$, which are satisfied by some positive number. Hence, for each $k$, there is a sufficiently small number $\e_k>0$ such that (\ref{Demandl}) is fulfilled for all $l\leq k-N$.\\
 Since all summands are smooth on $\left(\Obar \times \Obar\right)\backslash\Gamma^{-1}(0)$ and the series converges in all $C^l$-norms, it defines a smooth section on $\left(\Obar \times \Obar\right)\backslash\Gamma^{-1}(0)$.
\end{proof}

Thus, we showed that (\ref{Parametrices}) yield well-defined distributions with singular support on the light cone, i.e.\ property (i). Furthermore, (iii) follows from Proposition \ref{PropLDomain} (6). We proceed with (ii):

\begin{Lem}
 The sequence $\big(\e_k\big)_{k\geq N}$ can be chosen such that
 $$P_{(2)}\widetilde{\L}_\pm(p)=\delta_p + K_\pm(p,\cdot)$$ 
 for some $K_\pm\in \CC\left(\Obar \times \Obar, E^* \boxtimes E\right)$.
\end{Lem}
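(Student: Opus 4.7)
The strategy is to apply $P_{(2)}$ term by term to the series (\ref{Parametrices}) defining $\widetilde{\L}_\pm(p)$, use the transport equations (\ref{TranspEqU})--(\ref{TranspEqW}) (which were derived precisely for this purpose) to produce formal cancellations, and identify the leading term as $\delta_p$. The surviving contributions will come from two sources: (a) terms carrying a derivative of the cut-off $\sigma\bigl(\Gamma/\e_k\bigr)$, which are supported in $\{\e_k/2 \leq |\Gamma| \leq \e_k\}$ and hence are smooth because they live off the singular support; (b) the truncation/re-indexing residues in the tail $k \geq N$, where the cut-off Hadamard coefficients $\widetilde{U}_k,\widetilde{W}_k$ no longer satisfy (\ref{TranspEqU})--(\ref{TranspEqW}) exactly. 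The plan is to show that both families of remainders assemble into a smooth section on $\Obar \times \Obar$ once $\{\e_k\}_{k\geq N}$ is chosen sufficiently small.

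For odd $d$ the argument is a direct transcription of Lemma 2.4.4 of \cite{BGP2007} with $R^\Omega_\pm$ replaced by $L^\Omega_\pm$. The $P$-compatible product rule (\ref{PCompConn}) together with Proposition \ref{PropLDomain}(2)--(4) reduces the formal computation to the one carried out in the derivation of the transport equations, producing the isolated term $U_0(p,p) \cdot L^\Omega_\pm(0,p) = \delta_p$, by Proposition \ref{PropLDomain}(8)--(9) and $U_0(p,p)=\id$, plus a telescoping sum of transport residues. For $k<N$ the residues vanish by (\ref{TranspEqU})--(\ref{TranspEqW}); for $k \geq N$ they split into the contributions of types (a) and (b) above, with the latter of the schematic form $(\sigma\circ\Gamma/\e_k)\cdot PU_{k-1} \cdot L^\Omega_\pm(2k,p)$, whose regularity improves with $k$ thanks to Proposition \ref{PropLDomain}(6)--(7).

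For even $d$ the same scheme applies, with the extra bookkeeping provided by Lemma \ref{TechnicalLemma}: it converts $\boxx\bigl(L^\Omega(2k+2)\log(\Gamma\pm i0)\bigr)$ into a logarithmic piece (absorbed by the transport equation for $W_k$, integer $k$, via the modified right-hand side in (\ref{TranspEqW})) plus a non-logarithmic correction that is matched by the companion transport for $U_k$. The remaining tail estimates are obtained by promoting Lemma \ref{ParametricesReg} from $\widetilde{\L}_\pm(p)$ to $P\widetilde{\L}_\pm(p)$: applying a second-order operator merely raises the required regularity of $U_k, W_k$ by two, so the index shift $k \geq N+l$ in the convergence statement of Lemma \ref{ParametricesReg} is replaced by $k \geq N+l+2$, while the $C^l$-estimates flow through verbatim by means of Lemma \ref{EstSigma}.

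Convergence in every $C^l$ norm is then enforced by a diagonal choice of $\{\e_k\}$: for each $k \geq N$ we impose only finitely many conditions, one for every $l$ with $l \leq k-N-2$, each of the form
\[
 c_{l,k,d} \cdot \|U_k\|_{C^{l+2}(\Obar\times\Obar)} \cdot \e_k\bigl(\log(1/\e_k)+\pi+1\bigr) \leq 2^{-k},
\]
and analogously for $W_k$, all of which are simultaneously satisfiable because $\e\log(1/\e) \to 0$ as $\e\downarrow 0$. The resulting kernel $K_\pm(p,\cdot) := P_{(2)}\widetilde{\L}_\pm(p) - \delta_p$ is then smooth on $\Obar\times\Obar$. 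The main technical obstacle is strictly bookkeeping: tracking the several species of pieces produced when $P$ acts on the three types of summands (the $U$-part, the $W$-part, and the $\log$-part in even dimensions), verifying that the transport equations annihilate the singular on-diagonal contributions, and confirming that every off-diagonal remainder has a summable $C^l$-bound for each $l \in \N_0$.
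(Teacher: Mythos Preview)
Your proposal is correct and follows essentially the same approach as the paper: apply $P_{(2)}$ term by term, separate the derivative-of-cutoff contributions (your type (a), the paper's $\Sigma_1,\Sigma_2$) from the telescoping residues (your type (b), the paper's $\Sigma_3$), and enforce $C^l$-convergence for every $l$ by a diagonal choice of $\{\e_k\}$ using the estimates of Lemma~\ref{EstSigma}. The only point the paper makes more explicit than you do is that the telescoping residues are most cleanly organized via an Abel resummation into factors $(\sigma_k-\sigma_{k+1})$, which vanish in a full neighborhood of $\Gamma^{-1}(0)$ and hence render each summand of $\Sigma_3$ individually smooth; your schematic form $\sigma_k\cdot PU_{k-1}\cdot L^\Omega_\pm(2k,p)$ still works for the summability estimate, but the paper's regrouping makes the smoothness of $K_\pm$ at $\Gamma^{-1}(0)$ immediate.
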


\begin{proof}
 Let $\sigma_k:=\sigma \circ \frac{\Gamma}{\e_k}\in\CC(\Omega\times\Omega)$, so $\widetilde{U}_k = \sigma_k\cdot U_k$ for all $k\geq N$ and $\supp\sigma_k\subset\big\{\Gamma(p,q)\leq\e_k\big\}$. Due to Lemma 1.1.10 of \cite{BGP2007}, we can exchange $P$ with the sum, so the transport equations (\ref{TranspEqU}) imply
 \begin{align*}
  & \sum_{k=N}^\infty P_{(2)}~\sigma_k \big(U_k \log(\Gamma\pm i0) + W_k\big) L^\Omega_\pm(2k+2) \\[2mm]
  & \hspace{3cm} = \sum_{k=N}^\infty\left(\boxx_{(2)}\sigma_k - 2\nabla_{\grad_{\!(2)}\sigma_k} + \sigma_k P_{(2)}\right)\big(U_k \log(\Gamma\pm i0) + W_k\big) L^\Omega_\pm(2k+2) \\[2mm]
  & \hspace{3cm} =: \Sigma_1 + \Sigma_2 + \sum_{k=N}^\infty\sigma_k P_{(2)}\big(U_k \log(\Gamma\pm i0) + W_k\big) L^\Omega_\pm(2k+2).
 \end{align*}
 Recall that the transport equations (\ref{TranspEqU}) and (\ref{TranspEqW}) are derived from the requirement that $P_{(2)}$ applied to (\ref{HadamardSeries}) is a telescoping series, that is,
 \begin{align*} 
  \begin{split}
   & P_{(2)}\big(U_k \log(\Gamma\pm i0) + W_k\big) L^\Omega_\pm(2k+2) \\[2mm]
   & \hspace{0.5cm} = \big(\log(\Gamma\pm i0)\cdot P_{(2)}U_k + P_{(2)}W_k\big)L^\Omega_\pm(2k+2)  - \big(\log(\Gamma\pm i0)\cdot P_{(2)}U_{k-1} + P_{(2)}W_{k-1}\big)L^\Omega_\pm(2k).
  \end{split}
 \end{align*}
 Hence, the right hand side becomes
 \begin{align*}
  \Sigma_1 + \Sigma_2 - \sigma_N \big(\log(\Gamma \pm i0)\cdot P_{(2)}U_{N-1} + P_{(2)}W_{N-1}\big)L^{\Omega}_\pm(2N) + \Sigma_3
 \end{align*}
 with 
 $$\Sigma_3 := \sum_{k=N}^\infty (\sigma_k-\sigma_{k+1})\big(\log(\Gamma \pm i0) \cdot P_{(2)}U_k  + P_{(2)}W_k\big) L^{\Omega}_\pm(2k+2).$$
 Then, for $K_\pm(p,\cdot) := P_{(2)}\widetilde{\L}_\pm(p) - \delta_p$, the transport equations (\ref{TranspEqU}) for $U_k$ yield
 
 \begin{align} \label{DefKpm}
  K_\pm = (1-\sigma_N)\big(\log(\Gamma \pm i0)\cdot P_{(2)}U_{N-1} + P_{(2)}W_{N-1}\big)L^{\Omega}_\pm(2N) + \Sigma_1+\Sigma_2+\Sigma_3.
 \end{align}
 
 On the right hand side, every summand individually yields a smooth section, since both $1-\sigma_N$ and $\sigma_k - \sigma_{k+1}$ as well as all derivatives of $\sigma_k$ vanish in a neighborhood of $\Gamma^{-1}(0)$, which contains the singular support of $\widetilde{\L}_\pm$. Thus, $K_\pm$ vanishes on $\Gamma^{-1}(0)$ to arbitrary order and it remains to show convergence in all $C^l$-norms, which again for the $W$-part is provided by the proof of Lemma 2.4.3 of \cite{BGP2007}. Therefore, we concentrate on
 $$\sum_{k=N}^\infty\left(\boxx_{(2)}\sigma_k - 2\nabla_{\grad_{\!(2)}\sigma_k} + \sigma_k P_{(2)}\right)U_k \log(\Gamma\pm i0)~L^\Omega_\pm(2k+2) =: \Sigma_1'+\Sigma_2'+\Sigma_3'.$$
 For fixed $l\in \N_0$, let $k \geq 2(l+1)+N$ and $S_k := \left\{\frac{\e_k}{2} \leq \left|\Gamma(p,q)\right| \leq \e_k\right\}$. Then Lemma 1.1.12 of \cite{BGP2007} implies for the $k.$ summand of $\Sigma_2'$
 {\allowdisplaybreaks \begin{align*}
  & \left\|\nabla_{\grad_{\!(2)}\sigma_k} U_k\log\big(\Gamma\pm i0\big) ~ L^{\Omega}_\pm(2k+2)\right\|_{C^l\left(\Obar \times \Obar\right)} = \left\|\nabla_{\grad_{\!(2)}\sigma_k} U_k\log\big(\Gamma\pm i0\big) ~ L^{\Omega}_\pm(2k+2)\right\|_{C^l\left(\Obar \times \Obar \cap S_k\right)} \\[2mm]
  & \hspace{0.5cm} \leq c_{l,d}\|U_k\|_{C^{l+1}\left(\Obar \times \Obar \cap S_k\right)} \cdot \|\sigma_k\|_{C^{l+1}\left(\Obar \times \Obar \cap S_k\right)} \cdot \left\|\Gamma^{k-\frac{d-2}{2}} \log\big(\Gamma\pm i0\big)\right\|_{C^{l+1}\left(\Obar \times \Obar \cap S_k\right)} \\[2mm]
  & \hspace{0.5cm} \leq c_{l,d} \|U_k\|_{C^{l+1}\left(\Obar \times \Obar \cap S_k\right)} \cdot \frac{\|\sigma\|_{C^{l+1}(\R)}}{\e_k^{l+1}} \cdot \left\|t \mapsto t^{k-\frac{d-2}{2}}\cdot \log t\right\|_{C^{l+1}\left(\left[\frac{\e_k}{2},\e_k\right]\right)}\cdot \max_{j=0,\cdots, l+1} \|\Gamma\|^{2j}_{C^{l+1}\left(\Obar \times \Obar \cap S_k\right)} \\[2mm]
  & \hspace{0.5cm} \leq c_{l,k,d}\|U_k\|_{C^{l+1}\left(\Obar \times \Obar \cap S_k\right)} \frac{\|\sigma\|_{C^{l+1}(\R)}}{\e_k^{l+1}} \max_{j=0,\cdots, l+1} \|\Gamma\|^{2j}_{C^{l+1}\left(\Obar \times \Obar \cap S_k\right)}  \max_{t\in\left(\left[\frac{\e_k}{2},\e_k\right]\right)} |t|^{k-\frac{d-2}{2}-(l+1)}\left(|\log t|+1\right) \\[2mm]
  & \hspace{0.5cm} \leq c_{l,k,d}\|U_k\|_{C^{l+1}\left(\Obar \times \Obar\right)}\cdot \e_k\left(\log\frac{1}{\e_k}+\pi+1\right)\|\sigma\|_{C^{l+1}(\R)} \cdot \max_{j=0,\cdots, l+1} \|\Gamma\|^{2j}_{C^{l+1}\left(\Obar \times \Obar\right)},
 \end{align*}}
 so we additionally demand
 $$c_{l,k,d}\cdot \left\|U_k\right\|_{C^{l+1}\left(\Obar \times \Obar\right)} \cdot \e_k\left(\log\frac{1}{\e_k}+\pi+1\right) \leq 2^{-k}.$$
 Then, for all $l$, the $C^l$-norm of almost all summands (without the first $2(l+1)+N$) of $\Sigma_2'$ is bounded by $2^{-k}\cdot \|\sigma\|_{C^{l+1}(\R)} \cdot \max_{j=0,\cdots, l+1} \|\Gamma\|^{2j}_{C^{l+1}\left(\Obar \times \Obar\right)}$ and thus, we have convergence in $C^l$ for all $l$, i.e.\ $\Sigma_2'$ defines a smooth section in $E^* \boxtimes E$ over $\Obar \times \Obar$ .\\
 The treatment for $\Sigma_1'$ is completely identical, so we directly turn to the $k.$ summand of $\Sigma_3'$:
 \begin{align*}
  & \left\|(\sigma_k-\sigma_{k+1}) L^{\Omega}_\pm(2k+2,\cdot)\log\left(\Gamma \pm i0\right) \cdot P_{(2)}U_k\right\|_{C^l\left(\Obar \times \Obar\right)}\\[2mm]
  & \hspace{1cm} \leq c_{l,k,d} \left(\left\|\sigma_k~\Gamma^{l+1} \log\left(\Gamma \pm i0\right)\right\|_{C^l} + \left\|\sigma_{k+1}~ \Gamma^{l+1} \log\left(\Gamma \pm i0\right)\right\|_{C^l}\right)\left\|\Gamma^{k-N-l}\right\|_{C^l} \cdot \left\|P_{(2)}U_k\right\|_{C^l}. 
 \end{align*}
 Set $\rho_{kl}(t) := \sigma\left(\frac{t}{\e_k}\right) \cdot t^{l+1} \log t$, so we have $\sigma_k ~ \Gamma^{l+1} \log\left(\Gamma \pm i0\right) = \rho_{kl} \circ \Gamma$. Then again Lemma 1.1.12 of \cite{BGP2007} and Lemma \ref{EstRhok} yield
 \begin{align*}
  \left\|\sigma_k~\Gamma^{l+1} \log\left(\Gamma \pm i0\right)\right\|_{C^l\left(\Obar \times \Obar\right)} & \leq c_l \cdot \|\rho_{kl}\|_{C^l(\R)} \cdot \max_{j=0, \hdots, l} \|\Gamma\|_{C^l\left(\Obar \times \Obar\right)} \\[2mm]
  & \leq c_{k,l} \cdot \e_k\left(\log\frac{1}{\e_k}+\pi+1\right)\|\sigma\|_{C^l(\R)} \cdot \max_{j=0, \hdots, l} \|\Gamma\|_{C^l\left(\Obar \times \Obar\right)},
 \end{align*}
 so we obtain
 \begin{align*}
  & \|\Sigma_3'\|_{C^l\left(\Obar \times \Obar\right)} \leq c_{l,k,d} \left(\e_k\left(\log\frac{1}{\e_k}+\pi+1\right) + \e_{k+1}\left(\log\frac{1}{\e_{k+1}}+\pi+1\right)\right)\|\sigma\|_{C^l(\R)} \\[2mm]
  & \hspace{6cm} \cdot \max_{j=0, \hdots, l} \|\Gamma\|_{C^l\left(\Obar \times \Obar\right)} \cdot \left\|\Gamma^{k-N-l}\right\|_{C^l\left(\Obar \times \Obar\right)} \cdot \left\|P_{(2)}U_k\right\|_{C^l\left(\Obar \times \Obar\right)}.
 \end{align*}
 Hence, for all $k\geq N+l$ we demand
 $$c_{l,k,d} \cdot \e_k\left(\log\frac{1}{\e_k}+\pi+1\right)\cdot \left\|P_{(2)}U_k\right\|_{C^l\left(\Obar \times \Obar\right)} \cdot \left\|\Gamma^{k-N-l}\right\|_{C^l\left(\Obar \times \Obar\right)} \leq 2^{-k-1}$$
 as well as for all $k \geq N+l+1$ that
 $$c_{l,k-1,d} \cdot \e_k\left(\log\frac{1}{\e_k}+\pi+1\right)\cdot \left\|P_{(2)}U_{k-1}\right\|_{C^l\left(\Obar \times \Obar\right)} \cdot \left\|\Gamma^{k-N-l-1}\right\|_{C^l\left(\Obar \times \Obar\right)} \leq 2^{-k-2}.$$
 Then the $C^l$-norm of almost all summands of $\Sigma_3'$ (without the first $l+N$) is bounded by $2^{-k} \cdot \|\sigma\|_{C^l(\R)} \cdot \max_{j=0, \hdots, l} \|\Gamma\|_{C^l\left(\Obar \times \Obar\right)}$, so the series converges in all $C^l$-norms and is therefore smooth. Note that for each $k$ we again added only finitely many conditions.
\end{proof}

Finally, we show that the $\e_k$'s can be chosen such that for all $p\in\Obar$, the parametrices $\widetilde{\L}_\pm(p)$ are distributions of degree at most $\kappa_d$.

\begin{Lem} \label{ApproxFundSolSmooth}
 There is a sequence $(\e_k)_{k\geq N}\subset(0,1]$, for which we find some $C>0$ such that
 \begin{align*} 
  \left|\widetilde{\L}_\pm(p)[\f]\right| \leq C \cdot \|\f\|_{C^{\kappa_d}(\Omega)}, \qquad p \in \Obar,~\f \in \D\left(\Omega,E^*\right).
 \end{align*}
 Furthermore, for fixed $\f \in \D\left(\Omega,E^*\right)$, the map $p \mapsto \widetilde{\L}_\pm(p)[\f]$ is smooth.
\end{Lem}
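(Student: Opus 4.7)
The plan is to adapt the argument of Lemma 2.4.4 of \cite{BGP2007} (Riesz case) to the present situation, with the additional work required to handle $\widetilde{L}^\Omega_\pm$ and, in even dimension, the logarithmic terms. Write $\widetilde{\L}_\pm(p)$ as a finite sum over $k<N$ plus the tails. The first finite sum is harmless: by Proposition \ref{PropLDomain} (5) each summand $L^\Omega_\pm(2k+2,p)\big[U_k(p,\cdot)\f\big]$, resp.\ its $\widetilde{L}$-analogue and the $W_k$-analogue, is bounded by a constant (uniform in $p\in\Obar$) times $\|U_k(p,\cdot)\f\|_{C^{\kappa_d}(\Omega)}\leq C\|U_k\|_{C^{\kappa_d}(\Obar\times\Omega)}\|\f\|_{C^{\kappa_d}(\Omega)}$.

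For the tails $k\geq N$, first consider the non-logarithmic summands $L^\Omega_\pm(2k+2,p)\big[\widetilde{U}_k(p,\cdot)\f\big]$ in odd dimensions. Using again the order estimate of Proposition \ref{PropLDomain} (5) together with Lemma 1.1.11/12 of \cite{BGP2007}, bound this by $C\cdot\|(\sigma\circ\Gamma/\e_k)\,U_k\|_{C^{\kappa_d}(\Obar\times\Omega)}\cdot\|\f\|_{C^{\kappa_d}(\Omega)}$. The cut-off factor has derivatives that grow like $\e_k^{-\kappa_d}$, but it is supported in $\{|\Gamma|\leq\e_k\}$, and combining these two contributions with the bound on $U_k$ yields, after the usual chain-rule and product-rule estimates, a constant $\widetilde{c}_{k,d}\,\|U_k\|_{C^{\kappa_d}}$ of the same type that already appeared in Lemma \ref{ParametricesReg}. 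Thus it suffices to impose, for each $k\geq N$, the additional constraint that $\e_k$ be so small that this constant is bounded by $2^{-k}$; only finitely many such conditions are added to those already in Lemma \ref{ParametricesReg}. The $\widetilde{L}^\Omega_\pm$-summands with $k<d/2-1$ in even dimensions are handled identically.

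For the logarithmic tail in even dimensions, the summand is \mbox{$L^\Omega(2k+2,p)\big[\widetilde{U}_k(p,\cdot)\log(\Gamma_p\pm i0)\f\big]$}, and the coefficient $\widetilde{U}_k\log(\Gamma_p\pm i0)$ belongs to $C^{\kappa_d}$ (since $k\geq N\geq\kappa_d/2$). Applying Lemma \ref{EstSigma} exactly as in the proof of Lemma \ref{ParametricesReg}, the $C^{\kappa_d}$-norm of this product is bounded by $c_{k,d}\,\|U_k\|_{C^{\kappa_d}}\cdot\e_k\big(\log(1/\e_k)+\pi+1\big)$, so shrinking $\e_k$ further makes the $k$-th summand dominated by $2^{-k}\|\f\|_{C^{\kappa_d}(\Omega)}$. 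The $\widetilde{W}_k$-tail is treated verbatim as in Lemma 2.4.4 of \cite{BGP2007}. Summing the geometric majorant yields the required uniform bound $|\widetilde{\L}_\pm(p)[\f]|\leq C\|\f\|_{C^{\kappa_d}(\Omega)}$.

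Smoothness of $p\mapsto\widetilde{\L}_\pm(p)[\f]$ for fixed $\f\in\D(\Omega,E^*)$ follows by showing convergence of the series in $C^l(\Obar,E^*)$ for every $l\in\N_0$. Each summand is smooth in $p$ by Proposition \ref{PropLDomain} (6), and the argument of the previous paragraph, repeated with the $C^{\kappa_d+l}$-norm in place of $C^{\kappa_d}$, imposes on $\e_k$ an additional finite family of smallness conditions (one for each $l\leq k-N$), all satisfiable simultaneously since only finitely many are required for any single $k$. The main obstacle, as in Lemmas \ref{EstSigma} and \ref{ParametricesReg}, is to balance the $\e_k^{-\kappa_d-l}$-blow-up of cut-off derivatives against the gain obtained from the shrinking support and from the increasing regularity of $L^\Omega_\pm(2k+2,p)$ for large $k$; the choice $\e_k\log(1/\e_k)\to 0$ sufficiently fast is exactly what makes this balance work.
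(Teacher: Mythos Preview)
Your overall strategy (finite head plus a summable tail) is right, but the tail estimates contain a genuine gap. You split each summand as $L^\Omega_\pm(2k+2,p)$ acting on the test section $\widetilde{U}_k(p,\cdot)\f$ (or $\widetilde{U}_k(p,\cdot)\log(\Gamma_p\pm i0)\f$) and then invoke the order-$\kappa_d$ estimate of Proposition~\ref{PropLDomain}~(5). This forces you to control $\|\widetilde{U}_k\|_{C^{\kappa_d}}$, respectively $\|\widetilde{U}_k\log(\Gamma_p\pm i0)\|_{C^{\kappa_d}}$. The first quantity is of order $\e_k^{-\kappa_d}\|U_k\|_{C^{\kappa_d}}$; the support constraint $|\Gamma|\leq\e_k$ does \emph{not} compensate this blow-up, because $U_k$ does not vanish on $\Gamma^{-1}(0)$ --- there is no factor $\Gamma^\beta$ in $\widetilde{U}_k$ to produce the $\e_k^\beta$ gain. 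Worse, the second quantity is not even finite: $\sigma_k\equiv 1$ near $\Gamma^{-1}(0)$, so $\widetilde{U}_k\log(\Gamma_p\pm i0)$ has a genuine logarithmic singularity and is not continuous, let alone $C^{\kappa_d}$. Your appeal to Lemma~\ref{EstSigma} does not apply here because that lemma requires the factor $t^\beta$ with $\beta\geq l+1$, and this factor sits in $L^\Omega(2k+2,p)=C(2k+2,d)\,\Gamma_p^{k+1-d/2}$, which you have placed on the distribution side.

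The paper's fix is to group differently: for $k\geq N>\tfrac{d}{2}$ the ``distribution'' $L^\Omega(2k+2,p)$ is actually the continuous function $C(2k+2,d)\,\Gamma_p^{k+1-d/2}$, so the entire logarithmic tail $f:=\sum_{k\geq (d-2)/2}\widetilde{U}_k\log(\Gamma_p\pm i0)\,L^\Omega(2k+2,p)$ is precisely the $C^0$-section whose convergence was already established in Lemma~\ref{ParametricesReg}. The bound is then immediate: $|f(p)[\f]|\leq \|f\|_{C^0(\Obar\times\Obar)}\cdot\vol(\Obar)\cdot\|\f\|_{C^0}$, and smoothness in $p$ follows because for each $l$ one splits $f$ into a finite smooth sum and a $C^l$-remainder (again by Lemma~\ref{ParametricesReg}). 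The same regrouping underlies the BGP2007 argument for the non-logarithmic parts: the smallness needed for summability comes from $\Gamma_p^{k+1-d/2}$ restricted to $\{|\Gamma_p|\leq\e_k\}$, not from $\sigma_k$ alone.
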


\begin{proof}
 We show the claim only for the logarithmic part, i.e.\ $f:=\sum^\infty_{k=\frac{d-2}{2}}\widetilde{U}_k \log(\Gamma\pm i0) L^\Omega_\pm(2k+2)$, since for the other two sums the proof of Lemma 2.4.4 of \cite{BGP2007} applies identically. By Lemma \ref{ParametricesReg}, we have $f\in C^0(\Obar\times\Obar,E^*\boxtimes E)$ and thus,
 $$|f(p)[\f]| \leq \|f\|_{C^0(\Obar\times\Obar)} \cdot \vol(\Obar) \cdot\|\f\|_{C^0(\Obar\times\Obar)} \leq \|f\|_{C^{\kappa_d}(\Obar\times\Obar)} \cdot \vol(\Obar) \cdot\|\f\|_{C^{\kappa_d}(\Obar\times\Obar)}$$
 for all $p\in O$ and $\f\in\D(O,E^*)$, so the constant can be chosen via $C:=\|f\|_{C^{\kappa_d}(\Obar\times\Obar)} \cdot \vol(\Obar)$.\\ 
 Since Proposition \ref{PropLDomain} (6) directly applies also to $\log(\Gamma_p\pm i0) L^\Omega_\pm(2k+2,p)$ and $U_k\f$ is smooth on $O\times O$ with $\supp (U_p^k)\f$ compact, for every $k\geq \frac{d-2}{2}$, the map
 $$p\longmapsto \widetilde{U}_p^k \log(\Gamma_p\pm i0) L^\Omega_\pm(2k+2,p)[\f], \qquad \f\in\D(O,E^*),$$
 is smooth. Therefore, also $\sum_{k=\frac{d-2}{2}}^{l-1}\widetilde{U}_p^k \log(\Gamma_p\pm i0) L^\Omega_\pm(2k+2,p)[\f]$ is smooth for all $l>\frac{d}{2}$ and the remaining term $\sum_{k=l}^\infty\widetilde{U}_p^k \log(\Gamma_p\pm i0) L^\Omega_\pm(2k+2,p)[\f]$ is $C^l$ by Lemma \ref{ParametricesReg}. This holds for all $l$ and hence, $p\mapsto f(p)[\f]$ is smooth.
\end{proof}

Lemma \ref{ApproxFundSolSmooth} shows properties (iv) and (v) of $\widetilde{\L}_\pm$, so Proposition \ref{ParametricesProp} is proved also for even $d$.

\end{spacing}

\end{document}